\numberwithin{equation}{section}
\theoremstyle{plain}
\newtheorem{theorem}{Theorem}[section]
\newtheorem{prop}[theorem]{Proposition}
\newtheorem{cor}[theorem]{Corollary}
\newtheorem{lemma}[theorem]{Lemma}
\theoremstyle{definition}
\theoremstyle{remark}
\newtheorem{rk}[theorem]{Remark}
\newtheorem{rks}[theorem]{Remarks}
\DeclareMathOperator{\supp}{supp}
\newcommand{\Mbar}{\overline{M}}
\newcommand{\bR}{\mathbb{R}}
\newcommand{\bB}{\mathbb{B}}
\newcommand{\bN}{\mathbb{N}}
\newcommand{\bH}{\mathbb{H}}
\newcommand{\cC}{\mathcal{C}}
\newcommand{\gbar}{\overline{g}}
\newcommand{\ghat}{\widehat{g}}
\renewcommand{\hbar}{\overline{h}}
\newcommand{\gammatil}{\widetilde{\gamma}}
\newcommand{\Qtil}{\widetilde{Q}}
\newcommand{\Ttil}{\widetilde{T}}
\newcommand{\gtil}{\tilde{g}}
\def\into{\hookrightarrow}
\newcommand{\phibar}{\overline{\phi}}
\newcommand{\phitil}{\widetilde{\phi}}
\newcommand{\psitil}{\widetilde{\psi}}
\newcommand{\sigmatil}{\widetilde{\sigma}}
\newcommand{\util}{\widetilde{u}}
\newcommand{\thetatil}{\widetilde{\theta}}
\newcommand{\gbrev}{\breve{g}}
\newcommand{\Ebrev}{\breve{E}}
\newcommand{\Pbrev}{\breve{P}}
\newcommand{\wbrev}{\breve{w}}
\newcommand{\phat}{\hat{p}}
\newcommand{\ScriPlus}{\mathscr{I}^+}
\DeclareMathOperator{\tr}{tr}
\DeclareMathOperator{\divg}{div}
\DeclareMathOperator{\vol}{Vol}
\DeclareMathOperator{\id}{Id}
\DeclareMathOperator{\Ker}{Ker}
\DeclareMathOperator{\im}{Im}
\newcommand{\lie}{\mathcal{L}}
\newcommand{\rlie}{\mathring{\mathcal{L}}}
\newcommand{\hg}{\widehat{g}}
\newcommand{\riem}{\mathrm{R}}
\newcommand{\riemuddd}[4]{\riem^{#1}_{\phantom{#1} #2 #3 #4}}
\newcommand{\ric}{\mathrm{Ric}}
\newcommand{\ricdd}[2]{\ric_{#1 #2}}
\newcommand{\ricud}[2]{\ric^{#1}_{\phantom{#1} #2}}
\newcommand{\ricuu}[2]{\ric^{#1 #2}}
\newcommand{\scal}{\mathrm{Scal}}
\newcommand{\hscal}{\widehat{\scal}}
\newcommand{\grad}[1]{\nabla_{#1}}
\begin{document}
\title[Non CMC solutions on an AH manifold]{A large class of non constant mean curvature solutions of the Einstein constraint equations on an asymptotically hyperbolic manifold}
\author[Romain Gicquaud and Anna Sakovich]{Romain Gicquaud and Anna Sakovich}
\date{\today}
\keywords{Asymptotically hyperbolic manifold, Constraint equations, Conformal method, Non constant mean curvature}                                   
\subjclass[2000]{35J47, 53C21, 83C05}
\address{Current Address: \newline
Laboratoire de Math\'ematiques et de Physique Th\'eorique \newline
UFR Sciences et Technologie\newline
Facult\'e Fran\c cois Rabelais\newline
Parc de Grandmont \newline
37300 Tours, \newline \newline
Institutionen f\"or Matematik\newline
Kungliga Tekniska H\"ogskolan\newline
100 44 Stockholm\newline
Sweden}

\begin{abstract} We construct solutions of the constraint equation with non constant mean curvature on an asymptotically hyperbolic manifold by the conformal method. Our approach consists in decreasing a certain exponent appearing in the equations, constructing solutions of these sub-critical equations and then in letting the exponent tend to its true value. We prove that the solutions of the sub-critical equations remain bounded which yields solutions of the constraint equation unless a certain limit equation admits a non-trivial solution. Finally, we give conditions which ensure that the limit equation admits no non-trivial solution.
\end{abstract}

\maketitle
\tableofcontents

\section{Introduction}\label{secIntro}
An important issue in General Relativity is the study of the Cauchy problem. Given a globally hyperbolic space-time $(\mathcal{M}, g)$ satisfying the vacuum Einstein equations \[\ricdd{\mu}{\nu} - \frac{\scal}{2} g_{\mu\nu} = 0,\] and a spacelike hypersurface $M \subset \mathcal{M}$, two natural quantities can be defined: the induced metric on $M$ (still denoted by $g$), which can be understood as the magnetic part\footnote{More exactly, these are the Christoffel symbols of the induced metric that can be considered as the magnetic part, the metric $g$ is the analogue of the potential vector.} of the gravitational field at time ``t = 0'', and the second fundamental form of the embedding $M \subset \mathcal{M}$, which corresponds to the electric part. General Relativity is a constrained Hamiltonian system: the induced metric $g$ and $K$ cannot be arbitrary, they are linked by the following two equations:

\begin{align}
\label{eqHamiltonianConstraint}
\scal_g + \left(\tr_g K \right)^2 - \left| K \right|^2_g & = 0 \qquad \text{(Hamiltonian constraint),}\\
\label{eqMomentumConstraint}
\nabla^i K_{ij} - \nabla_j \left( \tr_g K \right) & = 0 \qquad \text{(Momentum constraint).}
\end{align}

These equations can be seen either as a consequence of the Gauss and Codazzi equations (see e.g. \cite{BartnikIsenberg}), or as a consequence of the Hamiltonian analysis of the Einstein-Hilbert action (described e.g. in \cite{Carlip}). If the initial data $(g,K)$ on $M$ satisfies (\ref{eqHamiltonianConstraint})-(\ref{eqMomentumConstraint}), then the Cauchy problem is well posed. This is the content of the celebrated Choquet-Bruhat and Choquet-Bruhat--Geroch theorems \cite{CB1, CB2}, which assert that in this case there exists an essentially unique space-time $(\mathcal{M}, g)$ which is the ``time evolution'' of $(M, g, K)$. We refer the reader to \cite{FriedrichRendall} for a recent review of this problem, see also \cite{Hawking}, \cite{Wald} and \cite{ChoquetBruhat}. As a consequence, much work has been done in constructing and classifying triples $(M, g, K)$ which satisfy the constraint equations \eqref{eqHamiltonianConstraint} and \eqref{eqMomentumConstraint}. Most of the results are summarized in \cite{BartnikIsenberg}.\\

The principal method used to construct solutions of the constraint equations is the conformal method, also known as the Choquet-Bruhuat--Lichnerowicz--York method, which will be described in Section \ref{secConfMeth}. We refer the reader to \cite{BartnikIsenberg} and \cite{ChoquetBruhat} for other constructions of solutions of the constraint equations. This method has proved to be very successful for the construction of constant mean curvature (CMC) solutions of the constraint equations, i.e. such that $\tr K$ is constant on $M$, because in this case the problem is reduced to solving an essentially uncoupled system of elliptic partial differential equations. However, less is known when the mean curvature is non constant. Most of the results deal with the case when $d (\tr K)$ is small.\\

In three recent articles \cite{HNT1, HNT2} and \cite{MaxwellNonCMC}, a new construction of non-CMC solutions of the constraint equations on compact manifolds has been introduced. It is based on a generalization of the monotony method to systems of PDE and it still requires another smallness condition (smallness of the TT-tensor $\sigma_0$, see Section \ref{secConfMeth}). The ``global super-solution'' used in this construction is close to zero so the method applies straightforwardly neither to the asymptotically Euclidean nor to the asymptotically hyperbolic cases since on these manifolds the conformal factor $\phi$ has to satisfy $\phi \to 1$ at infinity. Another drawback of constructing a conformal factor that is close to zero is that it creates ``small'' universes. Nevertheless, the aforesaid method will play an important role in this work since we will adapt it to construct solutions of the sub-critical constraint equations.\\

In this paper we will be interested in asymptotically isotropic initial data, that is to say solutions $(M, g, K)$ of the constraint equations such that $(M, g)$ is a conformally compact manifold and $K \pm g \to 0$ at infinity. As a consequence of the Hamiltonian constraint, this implies that the sectional curvature of $g$ tends to $-1$ at infinity. Conformally compact manifolds whose sectional curvature tends to $-1$ at infinity are called asymptotically hyperbolic manifolds. The construction of asymptotically isotropic CMC solutions by the conformal method has been developed in \cite{AnderssonChrusciel}, \cite{Gicquaud} and \cite{Sakovich}, while the case $d (\tr K)$ small is studied in \cite{IsenbergPark}. These initial data are of particular interest in the numerical study of gravitational radiation which is well-defined only at future null infinity ($\ScriPlus$), see e.g. the recent article \cite{BuchmanPfeifferBardeen}.\\

The approach which we follow is similar to the one introduced in \cite{DahlGicquaudHumbert}. It is inspired by the solution of the Yamabe problem (see e.g. \cite{LeeParker}): decrease a certain exponent that appears in the equations so that they become sub-critical and study how the sub-critical solutions behave when the exponent tends to its true value. Note however that, contrary to the Yamabe problem, the constraint equations do not come from a variational principle, thus there is no ``good'' interpretation of their criticality. Our main result (Theorem \ref{thmMainTheorem}) is an analogue of \cite{DahlGicquaudHumbert}: if a certain limit equation \eqref{eqLimit} admits no non-trivial solution, then the set of solutions of the equations of the conformal method \eqref{eqLichnerowicz} and \eqref{eqVector} is non empty and compact. We also give conditions ensuring that the limit equation admits no non-trivial solution (Propositions \ref{propLnNearCMC} and \ref{propLinftyNearCMC}). In particular, if the manifold on which one wants to solve the equations of the conformal method is Einstein, Proposition \ref{propLinftyNearCMC} provides a large upper bound for the $L^\infty$-norm of the variation of the (prescribed) mean curvature under which the limit equation admits no non-trivial solution (see Remark \ref{rkEinstein}).\\

The outline of this article is as follows. In Section \ref{secPreliminaries} we introduce the main notions to be used throughout the paper. In particular, Section \ref{secConfMeth} describes the conformal method which we use for constructing solutions of the constraint equations. Section \ref{secAH} defines the class of asymptotically hyperbolic manifolds together with weighted H\"older and Sobolev spaces and a new class of weighted local Sobolev spaces. With all those preliminaries at hand, we can then state the main result of this article, Theorem \ref{thmMainTheorem}, whose proof occupies the next three sections. In Section \ref{secSubCritSolutions} we prove that the sub-critical equations admit solutions (Proposition \ref{propSubCrit}) and study their properties (Propositions \ref{propSubCritBehav} and \ref{propSubCritBehav2}). We also show that the method we have developed yields a new proof of existence and uniqueness of ``near CMC'' solutions of the equations of the conformal method (Theorem \ref{thmNearCMC}). The second step in the proof of Theorem \ref{thmMainTheorem} (Section \ref{secBlowUp}) consists in letting the regularization parameter tend to zero and in studying the behavior of the solutions. The main result of this section is Corollary \ref{corFundamentalCorollary}. In Section \ref{secHigherRegularity}, we show that allowing for more regularity of the conformal data, one obtains solutions with higher regularity. Finally, in Section \ref{secLimit}, we give conditions under which the limit equation admits no non-zero solution. Appendix \ref{secFredholm} contains the proof of the Fredholm theorem for the weighted local Sobolev spaces, while in  Appendix \ref{secNoL2CKV} we present the proofs of $L^2$-estimate at infinity for the vector Laplacian and of the fact that there are no $L^2$ conformal Killing vector fields on asymptotically hyperbolic manifolds. As a corollary, we prove the positivity of a certain Sobolev constant which appears in the study of the limit equation.\\

\noindent\textit{Acknowledgments.} We are grateful to Mattias Dahl, Emmanuel Humbert and Laurent V\'eron for useful discussions and advices. We are also grateful to Harald Pfeiffer for pointing us the reference \cite{BuchmanPfeifferBardeen}. We also thank Erwann Delay and Eric Bahuaud for their interest in this work. Finally, we thank Piotr Chru{\'s}ciel and both referees for their useful remarks which have improved the presentation of this paper.

\section{Preliminaries}\label{secPreliminaries}
\subsection{The conformal method}\label{secConfMeth}
A natural way to understand the constraint equations \eqref{eqHamiltonianConstraint} and \eqref{eqMomentumConstraint} is to consider the Hamiltonian constraint \eqref{eqHamiltonianConstraint} as a scalar equation for the metric and the momentum constraint \eqref{eqMomentumConstraint} as a vectorial equation for the second fundamental form $K$. As a consequence, to construct solutions $(M, \hg, \widehat{K})$ of the system \eqref{eqHamiltonianConstraint}-\eqref{eqMomentumConstraint}, we will look for $\hg$ in the conformal class of a given metric $g$, i.e. in the form $\hg = \phi^\kappa g$, where $\kappa = \frac{4}{n-2}$. In order to understand the structure of solutions of the momentum constraint \eqref{eqMomentumConstraint}, we decompose $\widehat{K}$ as $\widehat{K} = \tau \ghat + \widehat{\sigma}$, where $\tau = \frac{1}{n} \mathrm{tr}_{\hg} \widehat{K}$ is the mean curvature of the hypersurface $M \subset \mathcal{M}$ and $\widehat{\sigma}$ is a symmetric traceless 2-tensor. The equation \eqref{eqMomentumConstraint} then becomes

\[\ghat^{ik}\widehat{\nabla}_k \widehat{\sigma}_{ij} - (n-1) \widehat{\nabla}_j\tau = 0.\]

This equation still involves $\phi$ in the term $\ghat^{ik}\widehat{\nabla}_k \widehat{\sigma}_{ij}$. Setting $\widehat{\sigma} = \phi^{-2} \sigma$, one obtains the following equation to be solved for $\sigma$:

\[\phi^{-\kappa-2} g^{ik}\nabla_k \sigma_{ij} - (n-1) \nabla_j\tau = 0.\]

To solve this equation, one has to freeze some degrees of freedom of $\sigma$. We decompose $\sigma$ as a sum $\sigma = \sigma_0 + \sigma_1$ of a particular solution $\sigma_1$ and a solution  $\sigma_0$ of the homogeneous problem \[\nabla^i \sigma_{0ij} = 0.\] Note that a 2-tensor which is symmetric, traceless and divergence-free is called a \textbf{TT-tensor}. A construction of TT-tensors will be given in Corollary \ref{corTTTensors}. As for $\sigma_1$, it can be chosen as the traceless part of the Lie derivative of the metric in the direction of the dual of some 1-form $\psi$:

\[\sigma_{1ij} =  \rlie_{\psi^\sharp} g_{ij} = \grad{i} \psi_j + \grad{j} \psi_i - \frac{2}{n} \nabla^k \psi_k g_{ij}.\]

We will simplify the notation and write $\lie \psi = \rlie_{\psi^\sharp} g$. The momentum constraint \eqref{eqMomentumConstraint} now reads

\[g^{ik}\nabla_k \left(\lie \psi\right)_{ij} = (n-1) \nabla_j\tau.\]

Remark that the decomposition of the set of $L^2$ symmetric traceless 2-tensors into $L^2$ TT-tensors and tensors of the form $\lie \psi$ is an orthogonal decomposition. We denote by $\Delta_L$ the operator appearing on the left-hand side of the above equation: \[\Delta_L \psi_j = g^{ik}\nabla_k \left(\lie \psi\right)_{ij} = \Delta \psi_j + \nabla_i \nabla_j \psi^i - \frac{2}{n} \nabla_j \nabla_k \psi^k.\] It can be checked that this operator is elliptic. The equation \[\Delta_L \psi = (n-1) \phi^{\kappa+2} d\tau\] is called the \textbf{vector equation}.\\

We now write the Hamiltonian constraint \eqref{eqHamiltonianConstraint} in terms of $\phi$ and $\sigma$. By the conformal transformation law of scalar curvature, we get

\begin{align*}
0 & = \hscal - \left|\widehat{K}\right|^2_{\hg} + \left(\tr_{\hg} \widehat{K}\right)^2\\
	& = \phi^{-\kappa-1}\left(-\frac{4(n-1)}{n-2} \Delta \phi + \scal~\phi\right) - n \tau^2 - \left|\widehat{\sigma}\right|^2_{\hg} + n^2 \tau^2\\
	& = \phi^{-\kappa-1}\left(-\frac{4(n-1)}{n-2} \Delta \phi + \scal~\phi\right) + n (n-1) \tau^2 - \left|\sigma\right|_g^2 \phi^{-4-2\kappa}.
\end{align*}

Multiplying by $\phi^{\kappa+1}$, we finally obtain:

\[- \frac{4(n-1)}{n-2}\Delta\phi + \scal~\phi + n(n-1)\tau^2 \phi^{\kappa+1} - \left|\sigma\right|_g^2 \phi^{-3-\kappa} = 0.\]

This equation is known as the \textbf{Lichnerowicz equation}. Remark that if $\sigma = 0$, this equation corresponds to the prescribed scalar curvature equation with $\hscal = - n(n-1) \tau^2$.\\

Let us now summarize the conformal method:

\begin{itemize}
\item Fix a Riemannian manifold $(M, g)$, a function $\tau : M \to \bR$ and a symmetric, traceless, divergence-free 2-tensor $\sigma_0$.\\
\item Solve the following system in the unknowns $\psi$ and $\phi$:

\begin{align}
\label{eqLichnerowicz}
-\frac{4(n-1)}{n-2} \Delta \phi + \scal~\phi + n(n-1) \tau^2 \phi^{\kappa+1} & = \left|\sigma\right|^2_g \phi^{-\kappa-3},\\
\label{eqVector}
\Delta_L \psi & = (n-1) \phi^{\kappa+2} d \tau,
\end{align}
where $\phi : M \to \bR_+^*$ is a (strictly) positive function and $\sigma = \sigma_0 + \lie \psi$, where $\psi \in \Gamma(M, T^*M)$ is a 1-form.\\

\item Then $\hg = \phi^\kappa g$ and $\widehat{K} = \tau \hg + \phi^{-2} \sigma$ solve the constraint equations \eqref{eqHamiltonianConstraint}-\eqref{eqMomentumConstraint}.
\end{itemize}

When $\tau$ is a constant function (or, equivalently, when the hypersurface $M \subset \mathcal{M}$ has constant mean curvature), the vector equation does not involve $\phi$. In this case the solution to the constraint equations can be obtained by first solving the vector equation \eqref{eqVector}, and then the Lichnerowiz equation \eqref{eqLichnerowicz}. This particular case is described in \cite{AnderssonChrusciel}, see also \cite{Gicquaud} and \cite{Sakovich} for further results.

\subsection{Asymptotically hyperbolic manifolds and weighted spaces}\label{secAH}
In this section, we recall some of the main definitions and theorems of \cite{LeeFredholm} and also introduce a new class of function spaces defined on asymptotically hyperbolic manifolds.

\subsubsection{Definition}\label{defAH}
Let $\Mbar$ be a smooth compact manifold with boundary $\partial M$. We denote by $M$ the interior of $\Mbar$. A \textbf{defining function} for $\partial M$ is a smooth function $\rho : \Mbar \to [0;~\infty)$ such that $\rho^{-1}(0) = \partial M$ and $d\rho \neq 0$ along $\partial M$. A Riemannian metric $g$ on $M$ is called $C^{l, \beta}$-\textbf{conformally compact} if $\rho^2 g$ extends to a $C^{l, \beta}$ metric $\gbar$ on $\Mbar$. A simple calculation proves that if $g$ is $C^{l, \beta}$-conformally compact with $l + \beta \geq 2$ then the sectional curvature of $g$ satisfies \[\sec_g = -|d\rho|^2_{\gbar} + O(\rho)\] in a neighborhood of $\partial M$. As a consequence, $g$ is said to be \textbf{asymptotically hyperbolic} if $g$ is conformally compact and is such that $|d\rho|^2_{\gbar} = 1$ along $\partial M$, that is to say, if $\sec_g \to -1$ at infinity.\\

In what follows, we will denote by $M_\mu$ and $K_\mu$ the sets  $M_\mu = \{p \in M \,|\, \rho(p) < \mu\}$ and $K_\mu = M \setminus M_\mu = \{p \in M \,|\, \rho(p) \geq \mu\}$, where $\mu>0$. We remark that $K_\mu$ is a compact subset of $M$.

\subsubsection{Weighted function spaces and Fredholm theorems}
Let $l\geq 2$ be an integer and assume that $0\leq \beta<1$. In this section, we consider a $C^{l, \beta}$-asymptotically hyperbolic manifold $(M, g)$, together with an $O(n)$-subbundle $E$ of some tensor bundle $\left(T^*M\right)^{\otimes p} \otimes \left(T_*M\right)^{\otimes q}$  invariant under parallel translation (such a bundle will be called a geometric bundle). Below we introduce three types of weighted spaces of sections $u \in \Gamma(M, E)$, the first being

\begin{itemize}
\item\textbf{Weighted Sobolev spaces:} Let $0 \leq k \leq l$ be an integer, let $1 \leq p \leq \infty$ be a real number, and let $\delta \in \bR$. The weighted Sobolev space $W^{k, p}_\delta(M, E)$ is the set of sections $u \in \Gamma(M, E)$ such that $u \in W^{k, p}_{loc}$ and such that the norm
\[\left\| u \right\|_{W^{k, p}_\delta(M, E)} = \sum_{i=0}^k \left( \int_M \left| \rho^{-\delta} \nabla^{(i)} u \right|^p_g d\mu_g\right)^{\frac{1}{p}}\] is finite.
\end{itemize}

Before introducing the other two function spaces, we need to recall the definition of special charts on $M$ known as M\"obius charts (see \cite[Chapter 2]{LeeFredholm} for more details). Denote by $\mathbb{H}^n$ the hyperbolic space seen as the half-space $\{(x_1, \ldots, x_n) \in \bR^n \, \vert \, x_1 > 0 \}$ in $\mathbb{R}^n$ endowed with the metric $\gbrev = \frac{1}{x_1^2} g_{eucl}$. We first select a finite number of smooth coordinate charts $(\Omega, \Phi)$, where $\Phi = (\rho, \theta^1, \cdots, \theta^{n-1})$, on $\Mbar$ such that their domains of definition $\Omega$ cover $\partial M$. We complete this system of charts by adding a finite number of charts whose domains of definition are precompact in $M$ (i.e. do not intersect $\partial M$). Let $B_r$ be the hyperbolic ball centered at $(1,0, \cdots, 0)$ of radius $r$ in $\mathbb{H}^n$. If $p_0 \in M$ is in the reciprocal image by one of the chosen charts $(\Omega, \Phi)$, $p_0 = \Phi^{-1}(\rho_0, \theta^1_0\cdots, \theta^{n-1}_0)$, we define a M\"obius chart $\Phi^r_{p_0} : \left(\Phi_{p_0}\right)^{-1}\left(B_r\right) \to B_r$ in a neighborhood of $p_0$ by

\[\Phi_{p_0}\left(p\right) = \left(\frac{\rho(p)}{\rho_0}, \frac{\theta^1(p)-\theta^1_0}{\rho_0}, \cdots, \frac{\theta^{n-1}(p)-\theta^{n-1}_0}{\rho_0} \right).\]
Let us now define the following two classes of function spaces:

\begin{itemize}
\item\textbf{Weighted local Sobolev spaces:} Let $0 \leq k \leq l$ be an integer, let $1 \leq p \leq \infty$ be a real number, and let $\delta \in \bR$. The weighted local Sobolev space $X^{k, p}_\delta(M, E)$ is the set of sections $u \in \Gamma(M, E)$ such that $u \in W^{k, p}_{loc}$ and such that the norm
\[\left\| u \right\|_{X^{k, p}_\delta(M, E)} = \sup_{p_0 \in M} \rho^{-\delta}(p_0) \left\| \left(\left(\Phi^1_{p_0}\right)^{-1}\right)^* u\right\|_{W^{k, p}(B_1)}\] is finite.

\item\textbf{Weighted H\"older spaces:} Let an integer $k \geq 0$ and $0 \leq \alpha < 1$ be such that $k + \alpha \leq l + \beta$, and let $\delta \in \bR$. The weighted H\"older space $C^{k, \alpha}_\delta(M, E)$ is the set of sections $u \in \Gamma(M, E)$ such that $u \in C^{k, \alpha}_{loc}$ and such that the norm
\[\left\|u\right\|_{C^{k, \alpha}_\delta(M, E)} = \sup_{p_0 \in M} \rho^{-\delta}(p_0) \left\| \left(\left(\Phi^1_{p_0}\right)^{-1}\right)^* u\right\|_{C^{k, \alpha}(B_1)}\]
is finite.
\end{itemize}

We now state some properties of the weighted local Sobolev spaces that will be useful for us. These spaces stand in some sense halfway between weighted Sobolev spaces and H\"older spaces. The proof of some results will be omitted being straightforward variants of their counterparts for H\"older or Sobolev spaces (see \cite{LeeFredholm}).

\begin{lemma}[A first density result]\label{lmDensity}
Let $(M, g)$ be a $C^{l, \beta}$-asymptotically hyperbolic manifold with $l + \beta \geq 2$ and let $E \to M$ be a geometric tensor bundle. Assume that  $k \in \bN$, $k \leq l$, $p \in (1; \infty)$, and $\delta \in \bR$. Then $C^{l, \beta}_{loc}(M, E) \cap X^{k, p}_\delta(M, E)$ is dense in $X^{k, p}_\delta(M, E)$.
\end{lemma}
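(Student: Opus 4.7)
The plan is to approximate $u$ by a standard mollification carried out chart by chart in Möbius coordinates, exploiting the fact that the Möbius atlas makes the local geometry of $(M,g)$ uniform across the whole manifold. First, I select a countable set $\{p_i\}_{i\in\bN}\subset M$ such that the Möbius balls $(\Phi_{p_i})^{-1}(B_{1/2})$ already cover $M$, while the enlarged family $\{(\Phi_{p_i})^{-1}(B_{1})\}$ is uniformly locally finite in the sense that every $(\Phi_{p_0})^{-1}(B_{1})$ meets at most $N=N(n)$ of the $(\Phi_{p_i})^{-1}(B_{1})$. Such a cover is obtained by a standard Vitali-type argument in this setting, see \cite{LeeFredholm}. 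I then fix a subordinate partition of unity $\{\chi_i\}$ whose Möbius-coordinate representatives are uniformly bounded in $C^\infty(B_1)$, and trivialize $E$ in each chart so that tensor components are ordinary scalar functions to which Euclidean mollification applies.

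Given $\epsilon>0$ small, for each $i$ I pull $\chi_i u$ back to $B_1\subset\bR^n$ via $\Phi^1_{p_i}$, extend it by zero, convolve with a standard Euclidean mollifier $\eta_\epsilon$, and push the result forward to $M$; call the resulting smooth section $(\chi_i u)^\epsilon$. It is supported inside $(\Phi_{p_i})^{-1}(B_1)$ as soon as $\epsilon$ is small enough. The sum
\[
u^\epsilon := \sum_i (\chi_i u)^\epsilon
\]
is locally finite and therefore defines a $C^\infty$ section of $E$. Uniform local finiteness of the cover together with the uniform $C^\infty$ bounds on the $\chi_i$ show that $u^\epsilon\in X^{k,p}_\delta(M,E)$ with $\|u^\epsilon\|_{X^{k,p}_\delta}\leq C\|u\|_{X^{k,p}_\delta}$, and hence $u^\epsilon\in C^{l,\beta}_{loc}(M,E)\cap X^{k,p}_\delta(M,E)$.

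To prove $u^\epsilon\to u$ in $X^{k,p}_\delta$ as $\epsilon\to 0$, fix $p_0\in M$ and note that only $N$ of the $\chi_i$ contribute in the chart $\Phi^1_{p_0}$. The transition maps between Möbius charts at nearby points are diffeomorphisms whose derivatives of every order are bounded independently of the basepoint, and $\rho(p_0)$ is comparable to $\rho(p_i)$ whenever the corresponding unit Möbius balls meet. Hence
\[
\rho^{-\delta}(p_0)\|u-u^\epsilon\|_{W^{k,p}(B_1)} \leq C \sum_{i} \rho^{-\delta}(p_i)\|\chi_i u-(\chi_i u)^\epsilon\|_{W^{k,p}(B_1)},
\]
where the right-hand sum has at most $N$ nonzero terms and $C$ is independent of $p_0$.

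The main obstacle is to force this bound to tend to zero \emph{uniformly} in $p_0$: for each fixed $i$ the $W^{k,p}$ convergence of mollification is classical, but the family $\{\rho^{-\delta}(p_i)\chi_i u\}_i$ is only bounded in $W^{k,p}(B_1)$ rather than equicontinuous, and no rate of convergence that is uniform over $p_0$ follows from boundedness alone. I would handle this by a diagonal argument in the spirit of the density results in \cite{LeeFredholm}: given $\eta>0$, cut $u$ off to a compact core $K_\mu$ (with $\mu=\mu(\eta)$) on which mollification converges uniformly because only finitely many indices $i$ are involved, and control the tail contribution in $M_\mu$ by observing that $(\chi_i u)^\epsilon-\chi_i u$ has $W^{k,p}$ norm bounded by $2\|\chi_i u\|_{W^{k,p}(B_1)}$, so the product with the weight can be made uniformly small by combining the diagonal choice of $\epsilon=\epsilon(i)$ with the uniform $C^\infty$ bounds on the $\chi_i$ in Möbius coordinates.
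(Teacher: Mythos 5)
Your overall strategy (localize to a uniformly locally finite Möbius cover, approximate in each chart, reassemble with a partition of unity whose representatives are uniformly bounded) is the same as the paper's, which reduces to $\delta=0$ via multiplication by $\rho^\delta$ and then defines $\tilde u = \sum_i \phi_i v_i$ where each $v_i \in C^{l,\beta}(\tilde B_i, E)$ is chosen so that $\|u - v_i\|_{W^{k,p}(\tilde B_i)} < \epsilon$.

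However, your final paragraph does not close as written, and this is where the actual content of the lemma sits. You set up $u^\epsilon = \sum_i (\chi_i u)^\epsilon$ with a single mollification parameter, correctly observe that uniform convergence in $p_0$ cannot follow from mere boundedness, and then try to patch with a core/tail split. The core on $K_\mu$ is fine (finitely many charts). But the ``tail control'' step is a non sequitur: the trivial bound $\rho^{-\delta}(p_i)\|(\chi_i u)^\epsilon - \chi_i u\|_{W^{k,p}(B_1)} \le 2\rho^{-\delta}(p_i)\|\chi_i u\|_{W^{k,p}(B_1)}$ only gives $O(\|u\|_{X^{k,p}_\delta})$, not smallness, and it does not become small in the tail -- that is precisely the phenomenon that forces the paper to introduce $X^{k,p}_{\delta^+}$ as a proper subspace afterward. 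The diagonal choice $\epsilon = \epsilon(i)$ that you mention is the right fix, but once you commit to it the core/tail split and the trivial bound are both superfluous: simply choose $\epsilon(i)$ for each $i$ so that the per-chart error, weighted by $\rho^{-\delta}(p_i)$, is below a fixed tolerance $\eta$, and then the uniform local finiteness bound $\rho^{-\delta}(p_0)\|u - u^\epsilon\|_{W^{k,p}(B_1)} \le C\sum_{i}\rho^{-\delta}(p_i)\|\chi_i u - (\chi_i u)^{\epsilon(i)}\|_{W^{k,p}(B_1)} \le CN\eta$ finishes the estimate. This is exactly the paper's argument in disguise, with $v_i$ playing the role of $(\chi_i u)^{\epsilon(i)}$ (and with a single fixed partition of unity rather than a cut-off built into each term). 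Two minor remarks: you should claim the $\chi_i$ uniformly bounded in $C^{l,\beta}_0$ rather than $C^\infty$, since the connection of $g$ is only $C^{l-1,\beta}$; and the resulting approximant lands in $C^{l,\beta}_{loc}$, as the statement requires, but not generally in $C^{l,\beta}_\delta$, which is why this is labelled only a ``first'' density result.
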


\begin{proof} We remark first that multiplication by $\rho^\delta$ defines an isomorphism from $X^{k, p}_0(M, E)$ to $X^{k, p}_\delta(M, E)$. By \cite[Lemma 2.2]{LeeFredholm}, $M$ can be covered by a countable number of M\"obius charts $B_i = \left(\Phi_{x_i}\right)^{-1} (B_1)$ centered at $x_i$ and of ``radius one'', and having the following property: if $\tilde{B}_i = \left(\Phi_{x_i}\right)^{-1} (B_2)$ then there exists a number $N \geq 1$ such that for each $i$ the set $\{j : \tilde{B}_i \bigcap \tilde{B}_j \neq \emptyset\}$ contains at most $N$ elements. Let $\psi: \bH^n \to \bR$ be a smooth cut-off function, such that $0 \leq \psi \leq 1$, $\psi = 1$ on $B_1$, and $\psi = 0$ outside $B_2$. For each index $i$ we set $\psi_i = \psi \circ \Phi_{x_i}$ on $\tilde{B}_i$, and $\psi_i = 0$ outside $\tilde{B}_i$. Define also \[\phi_i = \frac{\psi_i}{\sum_j \psi_j}.\] Since the $\psi_i$ are uniformly bounded in $C^{l, \beta}_0(M, \bR)$, and are such that $1 \leq \sum_j \psi_j \leq N$, we see that the functions $\phi_i$ are also uniformly bounded in $C^{l, \beta}_0(M, \bR)$. Now let us consider $u \in X^{k, p}_0(M, E)$ and $\epsilon > 0$. For each $i$ select $v_i \in C^{l, \beta}(\tilde{B}_i, E)$ such that $\left\|u - v_i\right\|_{W^{k, p}(\tilde{B}_i, E)} < \epsilon$ and define \[\tilde{u} = \sum_i \phi_i v_i.\] By construction $\tilde{u} \in C^{l, \beta}_{loc}$. We obtain the following estimate (here the constant $C$ can vary from line to line but is independent of $u$ and $\epsilon$):
\begin{align*}
\left\|u - \tilde{u}\right\|_{X^{k, p}_0(M, E)}
	& \leq C \sup_i \left\|u - \tilde{u}\right\|_{W^{k, p}(B_i, E)}\\
	& \leq C \sup_i \sum_j \left\|\phi_j (u-v_j)\right\|_{W^{k, p}(B_i, E)}\\
	& \leq C \sup_i \sum_{j | B_i \cap \tilde{B}_j \neq \emptyset} \left\|\phi_j\right\|_{C^k(\tilde{B}_j, \bR)} \left\|u-v_j\right\|_{W^{k, p}(\tilde{B}_j, E)}\\
	& \leq C \epsilon.
\end{align*}
This concludes the proof of the lemma.
\end{proof}

In fact we have not been able to prove density of $C^{l, \beta}_\delta(M, E)$ in $X^{k, p}_\delta(M, E)$. The problem is that for this we need to smooth $u \in X^{k, p}_\delta(M, E)$ an infinite number of times (once in each M\"obius chart), while we cannot control the $C^{l, \beta}$-norm of each smoothing well enough to ensure that when we glue them together, the result is in $C^{l, \beta}_\delta(M, E)$. This issue is linked to the nature of $X^{k, p}_\delta$-tensors: they locally look like Sobolev tensors but globally behave as H\"older tensors. Having this idea in mind together with the fact that the closure of the set of smooth compactly supported functions in $C^0(\bR^n)$ is the set of continuous functions tending to zero at infinity, we introduce the spaces $X^{k, p}_{\delta^+}(M, E)$ as follows. We select a smooth cut-off function $\chi : \bR_+ \to \bR$ such that $\chi = 1$ on $\left[0; \frac{1}{2}\right]$ and $\chi(r) = 0$ for any $r \geq 1$ and define

\begin{equation}\label{eqDefXkpDeltaPlus}
X^{k, p}_{\delta^+}(M, E) = \left\{u \in X^{k, p}_\delta(M, E) \,:\, \left\| \chi\left(\frac{\rho}{\mu}\right) u \right\|_{X^{k, p}_\delta(M, E)} \to 0\text{ as } \mu \to 0^+\right\}.
\end{equation}
This space is naturally endowed with the $X^{k, p}_\delta$-norm.

\begin{prop}[A second density result]\label{propDensity}
Let $(M, g)$ be a $C^{l, \beta}$-asymptotically hyperbolic manifold with $l + \beta \geq 2$ and let $E \to M$ be a geometric tensor bundle. Assume that  $k \in \bN$, $k \leq l$, $p \in (1; \infty)$, and $\delta \in \bR$. Then $X^{k, p}_{\delta^+}(M, E)$ is the closure of the set $C^{l, \beta}_c(M, E)$ of compactly supported sections of $E$ in $X^{k, p}_\delta(M, E)$.
\end{prop}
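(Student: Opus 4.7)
The plan is to show $X^{k,p}_{\delta^+}(M, E) = \overline{C^{l,\beta}_c(M, E)}$, where the closure is taken in $X^{k,p}_\delta(M, E)$, by establishing the two inclusions separately.

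For the inclusion $\overline{C^{l,\beta}_c(M, E)} \subset X^{k,p}_{\delta^+}(M, E)$, I will first verify that $C^{l,\beta}_c(M, E) \subset X^{k,p}_{\delta^+}(M, E)$: if $v \in C^{l,\beta}_c(M,E)$ is supported in $K_{\mu_0}$, then for every $\mu < \mu_0/2$ the cutoff $\chi(\rho/\mu)$ vanishes on $\supp(v)$, hence $\chi(\rho/\mu) v \equiv 0$ and the defining condition of \eqref{eqDefXkpDeltaPlus} is trivially satisfied. To conclude this direction I will show that $X^{k,p}_{\delta^+}(M, E)$ is closed in $X^{k,p}_\delta(M, E)$. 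The key technical lemma I plan to establish is that multiplication by $\chi(\rho/\mu)$ is a bounded operator on $X^{k,p}_\delta(M, E)$ with operator norm independent of $\mu$. In a M\"obius chart centered at $p_0$ with $\rho(p_0) = \rho_0$, the pullback of $\rho$ equals $\rho_0 \tilde x_1$, so differentiating $\chi(\rho_0 \tilde x_1/\mu)$ produces factors $(\rho_0/\mu)^j$. These factors only contribute on the support of $\chi^{(j)}(\rho_0 \tilde x_1 / \mu)$, which forces $\rho_0 \tilde x_1/\mu \in [1/2, 1]$; since $\tilde x_1$ is bounded between two positive constants on $B_1$, this entails $\rho_0 \sim \mu$ and hence all derivatives are uniformly bounded. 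Once this is in hand, closedness follows from a standard $\epsilon/2$ argument.

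For the reverse inclusion $X^{k,p}_{\delta^+}(M, E) \subset \overline{C^{l,\beta}_c(M, E)}$, given $u \in X^{k,p}_{\delta^+}(M, E)$ and $\epsilon > 0$, I choose $\mu > 0$ so small that $\|\chi(\rho/\mu) u\|_{X^{k,p}_\delta} < \epsilon/2$ and set $v = (1 - \chi(\rho/\mu)) u$. Since $\chi \equiv 1$ on $[0, 1/2]$, $v$ is supported in the compact set $K_{\mu/2}$. It remains to approximate $v$ by a compactly supported $C^{l,\beta}$ section in the $X^{k,p}_\delta$-norm. For this I revisit the construction in Lemma \ref{lmDensity}: the approximation $\tilde v = \sum_i \phi_i v_i$ only requires $v_i \neq 0$ on those $\tilde B_i$ that meet $\supp(v)$. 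Because $K_{\mu/2}$ is compact and the M\"obius ball cover is locally finite, only finitely many $\tilde B_i$ intersect $\supp(v)$, and each such $\tilde B_i$ is precompact in $M$. The resulting $\tilde v$ therefore lies in $C^{l,\beta}_c(M, E)$ and satisfies $\|v - \tilde v\|_{X^{k,p}_\delta} < \epsilon/2$, which combined with the previous estimate gives $\|u - \tilde v\|_{X^{k,p}_\delta} < \epsilon$.

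The main obstacle is the uniform-in-$\mu$ boundedness of the multiplication operator by $\chi(\rho/\mu)$: the intrinsic scaling built into the M\"obius charts is precisely what absorbs the $1/\mu$ factors produced by differentiation, and this is the phenomenon that makes $X^{k,p}_{\delta^+}$ behave more like a space of continuous functions vanishing at infinity than like a weighted Sobolev space. Every other ingredient is a fairly routine combination of truncation and the density result already provided by Lemma \ref{lmDensity}.
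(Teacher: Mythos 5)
Your proof is correct and follows essentially the same three-step route as the paper (closedness of $X^{k,p}_{\delta^+}$ via an $\epsilon/2$ argument, the trivial inclusion $C^{l,\beta}_c(M,E)\subset X^{k,p}_{\delta^+}(M,E)$, and density by truncation followed by approximation on a compact set). The one genuine addition is your justification of the uniform $C^k$-bound on $\chi(\rho/\mu)$ via the scaling $\rho = \rho_0\tilde x_1$ in M\"obius charts and the localization of $\supp\chi^{(j)}$: the paper asserts this uniform bound without proof, and your explanation — that the factors $(\rho_0/\mu)^j$ are absorbed because $\rho_0\sim\mu$ on the support of $\chi^{(j)}(\rho_0\tilde x_1/\mu)$ — is the correct and useful way to see it; your re-use of the Lemma \ref{lmDensity} partition-of-unity construction for the compact approximation step is equivalent to the paper's appeal to norm equivalence on compacta plus standard density.
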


\begin{proof}
We first prove that $X^{k, p}_{\delta^+}(M, E)$ is closed as a subspace of $X^{k, p}_\delta(M, E)$. Let $\{u_i\}_i$ be an arbitrary sequence of elements of $X^{k, p}_{\delta^+}(M, E)$ converging to $u \in X^{k, p}_\delta(M, E)$. Choose an arbitrary $\epsilon > 0$. We remark that the functions $\chi\left(\frac{\rho}{\mu}\right)$ are uniformly bounded in $C^k(M, \bR)$. Thus there exists a constant $C > 0$ such that 

\begin{align*}
\left\| \chi\left(\frac{\rho}{\mu}\right) u \right\|_{X^{k, p}_\delta(M, E)}
	& \leq \left\| \chi\left(\frac{\rho}{\mu}\right) u_i \right\|_{X^{k, p}_\delta(M, E)} + \left\| \chi\left(\frac{\rho}{\mu}\right) (u - u_i) \right\|_{X^{k, p}_\delta(M, E)}\\
	& \leq \left\| \chi\left(\frac{\rho}{\mu}\right) u_i \right\|_{X^{k, p}_\delta(M, E)} + C \left\| u - u_i \right\|_{X^{k, p}_\delta(M, E)}.
\end{align*}
Let $i$ be such that $\left\| u - u_i \right\|_{X^{k, p}_\delta(M, E)} < \frac{\epsilon}{2 C}$ and let $\mu > 0$ be such that $\left\| \chi\left(\frac{\rho}{\mu}\right) u_i \right\|_{X^{k, p}_\delta(M, E)} < \frac{\epsilon}{2}$, then \[\left\| \chi\left(\frac{\rho}{\mu}\right) u \right\|_{X^{k, p}_\delta(M, E)} < \epsilon.\] Since $\epsilon$ was chosen arbitrarily, this proves that $u \in X^{k, p}_{\delta^+}(M, E)$. We conclude that $X^{k, p}_{\delta^+}(M, E)$ is a closed subspace of $X^{k, p}_\delta(M, E)$.\\

It is obvious that $C^{l, \beta}_c(M, E)$ is a subspace of $X^{k, p}_{\delta^+}(M, E)$. Next we show that for arbitrary $u \in X^{k, p}_{\delta^+}(M, E)$ and $\epsilon > 0$, there exists $\util \in C^{l, \beta}_c(M, E)$ such that $\left\|u - \util\right\|_{X^{k, p}_\delta(M, E)} < \epsilon$. Select $\mu > 0$ large enough so that $\left\| \chi\left(\frac{\rho}{\mu}\right) u \right\|_{X^{k, p}_\delta(M, E)} < \frac{\epsilon}{2}$. Then the section $\left(1-\chi\left(\frac{\rho}{\mu}\right)\right) u$ has its support in $K_{\mu/2}$ and is $\frac{\epsilon}{2}$-close to $u$ in $X_\delta^{k,p}(M,E)$-norm. Finally, remark that for tensors which have their supports contained in a (fixed) compact subset of $M$, the standard $W^{k, p}(M, E)$-norm and $X^{k, p}_\delta(M, E)$-norm are equivalent. Hence, from standard density results, there exists a section $\util \in C^{l, \beta}_c(M, E)$ such that \[\left\|\left(1-\chi\left(\frac{\rho}{\mu}\right)\right) u - \util\right\|_{X^{k, p}_\delta(M, E)} < \frac{\epsilon}{2}.\] By the triangle inequality we have $\left\|u - \util\right\|_{X^{k, p}_\delta(M, E)} < \epsilon$. Consequently, $C^{l, \beta}_c(M, E)$ is dense in $X^{k, p}_{\delta^+}(M, E)$.
\end{proof}

\begin{prop}[Weighted Sobolev embedding and Rellich Theorem]\label{propRellich} Let $(M, g)$ be a $C^{l, \beta}$-asymptotically hyperbolic manifold with $l + \beta \geq 2$ and $E \to M$ a geometric tensor bundle. 
\begin{itemize}
\item {\sc (Sobolev Spaces)} If $p, q \in (1; \infty)$, $k, j \in \bN$, $\alpha \in (0; 1)$ are such that $k + \alpha \leq l + \beta$, and if $\delta, \delta' \in \bR$, we have continuous inclusions
\[
\begin{aligned}
W^{k, p}_\delta(M, E) &\hookrightarrow C^{j, \alpha}_{\delta'} (M, E),&& \text{if $k - \frac{n}{p} \geq j + \alpha$ and $\delta' \leq \delta$,}\\
W^{k, p}_\delta(M, E) &\hookrightarrow W^{j, q}_{\delta'} (M, E),&& \text{if $k - \frac{n}{p} \geq j - \frac{n}{q}$ and $\delta' \leq \delta$.}
\end{aligned}
\]
\item {\sc (Local Sobolev Spaces)} If $p, q \in (1; \infty)$, $k, j \in \bN$, $\alpha \in (0; 1)$ are such that $k + \alpha \leq l + \beta$, and $\delta, \delta' \in \bR$, we have continuous inclusions
\[
\begin{aligned}
X^{k, p}_\delta(M, E) &\hookrightarrow C^{j, \alpha}_{\delta'} (M, E),&& \text{if $k - \frac{n}{p} \geq j + \alpha$ and $\delta' \leq \delta$,}\\
X^{k, p}_\delta(M, E) &\hookrightarrow X^{j, q}_{\delta'} (M, E),&& \text{if $k - \frac{n}{p} \geq j - \frac{n}{q}$ and $\delta' \leq \delta$.}
\end{aligned}
\]
\end{itemize}
Furthermore each of the embeddings is compact if both inequalities are strict.
\end{prop}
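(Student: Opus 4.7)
The continuous and compact embeddings for the standard weighted Sobolev spaces $W^{k,p}_\delta$ are established in \cite{LeeFredholm}, so I would concentrate on the analogous statements for the local spaces $X^{k,p}_\delta$. The guiding idea is that the pullback of $g$ by a M\"obius chart $\Phi^1_{p_0}$ is a $C^{l,\beta}$-perturbation of the hyperbolic model metric $\gbrev$ on $B_1 \subset \bH^n$, with constants uniform in $p_0$ (cf.\ \cite[Lemma 2.1]{LeeFredholm}); this reduces matters to the classical Euclidean Sobolev and Rellich theorems applied on $B_1$ with constants independent of $p_0$, so the only remaining task is to juggle the weight $\rho^{-\delta}$.

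For the continuous embedding I would argue pointwise. Fix $p_0 \in M$ and write $B_1(p_0) = (\Phi^1_{p_0})^{-1}(B_1)$. The classical Sobolev embedding on $B_1$ gives
\[
\|u\|_{W^{j,q}(B_1(p_0))} \leq C\, \|u\|_{W^{k,p}(B_1(p_0))}, \qquad \|u\|_{C^{j,\alpha}(B_1(p_0))} \leq C\, \|u\|_{W^{k,p}(B_1(p_0))},
\]
with $C$ independent of $p_0$. Multiplying by $\rho^{-\delta'}(p_0)$ and rewriting the right-hand side as $\rho^{\delta-\delta'}(p_0)\,\rho^{-\delta}(p_0)\|u\|_{W^{k,p}(B_1(p_0))}$ yields
\[
\rho^{-\delta'}(p_0)\,\|u\|_{W^{j,q}(B_1(p_0))} \leq C\, \rho^{\delta-\delta'}(p_0)\,\|u\|_{X^{k,p}_\delta(M,E)},
\]
and analogously for the H\"older norm. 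Since $\rho$ is bounded on $\Mbar$ and $\delta \geq \delta'$, the factor $\rho^{\delta-\delta'}$ is uniformly bounded, and taking the supremum in $p_0$ gives the claimed continuous inclusions.

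For the Rellich part, assume both inequalities strict and let $\{u_n\}$ be bounded in $X^{k,p}_\delta(M,E)$. Since $B_1$ has finite hyperbolic diameter, the scaling built into the definition of $\Phi^1_{p_0}$ produces a universal constant $c \in (0,1)$ with $c\,\rho(p_0) \leq \rho(p) \leq c^{-1}\rho(p_0)$ for all $p \in B_1(p_0)$. Given $\epsilon > 0$, the strict inequality $\delta > \delta'$ combined with the estimate above lets me choose $\mu > 0$ so small that
\[
\sup_{p_0 \in M_\mu} \rho^{-\delta'}(p_0)\,\|u_n - u_m\|_{W^{j,q}(B_1(p_0))} < \epsilon/2
\]
uniformly in $n, m$. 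On the complementary set $K_\mu$, every ball $B_1(p_0)$ lies inside the slightly enlarged compact set $K_{c\mu}$, on which the classical Rellich theorem extracts a subsequence convergent in $W^{j,q}(K_{c\mu},E)$ (respectively in $C^{j,\alpha}(K_{c\mu},E)$ in the H\"older case); since $\rho^{-\delta'}$ is bounded on $K_\mu$, this subsequence is Cauchy for the $X^{j,q}_{\delta'}$-seminorm restricted to $p_0 \in K_\mu$. A diagonal extraction along $\mu_k \to 0$ then produces a subsequence Cauchy in $X^{j,q}_{\delta'}(M,E)$. The only real obstacle is the bookkeeping involved in splitting the global supremum defining the $X$-norm into the near-boundary tail (controlled by the strict weight inequality) and the compact piece (controlled by classical Rellich); this is underwritten by the universal multiplicity of the M\"obius covering \cite[Lemma 2.2]{LeeFredholm} and by the uniform comparability of $\rho$ across each ball $B_1(p_0)$, both intrinsic to the M\"obius construction.
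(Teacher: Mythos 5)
Your argument is sound, and it is worth noting that the paper does not actually supply a proof of this proposition: it only remarks, a few lines above the statement, that "the proof of some results will be omitted being straightforward variants of their counterparts for H\"older or Sobolev spaces (see \cite{LeeFredholm})." Your write-up is the natural way to carry out that variant for the new $X^{k,p}_\delta$ spaces: uniform comparability of the pullback metric with $\gbrev$ on each $B_1$ gives the localized Euclidean embedding with $p_0$-independent constants, the weight factor $\rho^{\delta-\delta'}(p_0)$ is handled by boundedness of $\rho$, and compactness follows from the tail estimate near $\partial M$ (using strict $\delta>\delta'$) plus classical Rellich on the enlarged compacta $K_{c\mu}$ (using the strict exponent inequality). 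The only place where the bookkeeping deserves a sentence more of care is the final extraction: rather than diagonalizing over a shrinking sequence $\mu_k\to 0$, it is cleaner to first extract a single subsequence convergent in $W^{j,q}_{loc}(M,E)$ via an exhaustion of $M$ by compacts, and then observe that for any $\epsilon>0$ the uniform tail bound on $M_\mu$ (valid for all indices at once) together with $W^{j,q}(K_{c\mu})$-Cauchyness on the compact piece gives Cauchyness in the sup-norm defining $X^{j,q}_{\delta'}$; this is equivalent to what you wrote but avoids re-extracting at each scale. With that small remark, your proof is correct and essentially what the paper has in mind.
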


\begin{lemma}[$X^{k, p}_\delta$-spaces as Banach algebras]\label{lmBanachAlgebra}
Let $(M, g)$ be a $C^{l, \beta}$-asymptotically hyperbolic manifold with $l + \beta \geq 2$, and let $E_1 \to M$ and $E_2 \to M$ be two geometric tensor bundles. Let $k \in \bN$ be an integer, $p \in (1; \infty)$ and $\delta_1, \delta_2 \in \bR$. If $k p > n$, then the map
\[
\begin{array}{rccc}
\otimes : & X^{k, p}_{\delta_1}(M, E_1) \times X^{k, p}_{\delta_2}(M, E_2) & \to & X^{k, p}_{\delta_1+\delta_2}(M, E_1 \otimes E_2)\\
					& (u, v) & \mapsto &  u \otimes v
\end{array}
\]
is a continuous bilinear map. In particular, $X^{k, p}_0(M, \bR)$ is a unital Banach algebra and $X^{k, p}_\delta(M, \bR)$ are non-unital Banach algebras for any $\delta > 0$.
\end{lemma}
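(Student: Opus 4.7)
The plan is to reduce everything to the standard Banach algebra property of $W^{k,p}(B_1)$ in a fixed Möbius ball. Since $kp > n$, we have the Sobolev embedding $W^{k,p}(B_1) \hookrightarrow L^\infty(B_1)$, from which the usual Leibniz-rule computation gives a constant $C_0 > 0$ such that
\[\|fh\|_{W^{k,p}(B_1)} \leq C_0 \|f\|_{W^{k,p}(B_1)} \|h\|_{W^{k,p}(B_1)}\]
for scalar functions on $B_1 \subset \bH^n$ equipped with the fixed hyperbolic metric. The same estimate holds componentwise for tensor-valued sections, using $|u \otimes v|_g \leq |u|_g |v|_g$; crucially, in every Möbius chart the pullback of $g$ is uniformly $C^{l,\beta}$-close to $\gbrev$ on $B_1$ (this is the whole point of M\"obius charts, cf.\ \cite{LeeFredholm}), so the constant $C_0$ can be taken independent of $p_0 \in M$.

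Given this, I would proceed as follows. For $u \in X^{k,p}_{\delta_1}(M, E_1)$ and $v \in X^{k,p}_{\delta_2}(M, E_2)$ and any $p_0 \in M$, write $\util = ((\Phi^1_{p_0})^{-1})^* u$ and $\vtil = ((\Phi^1_{p_0})^{-1})^* v$ on $B_1$. By the definition of the $X^{k,p}_\delta$-norm,
\[\|\util\|_{W^{k,p}(B_1)} \leq \rho(p_0)^{\delta_1} \|u\|_{X^{k,p}_{\delta_1}}, \qquad \|\vtil\|_{W^{k,p}(B_1)} \leq \rho(p_0)^{\delta_2} \|v\|_{X^{k,p}_{\delta_2}}.\]
Since pullback commutes with tensor product, the uniform Banach algebra inequality yields
\[\|\util \otimes \vtil\|_{W^{k,p}(B_1)} \leq C_0 \rho(p_0)^{\delta_1+\delta_2} \|u\|_{X^{k,p}_{\delta_1}} \|v\|_{X^{k,p}_{\delta_2}}.\]
Multiplying by $\rho(p_0)^{-(\delta_1+\delta_2)}$ and taking the supremum over $p_0$ gives the desired bound
\[\|u \otimes v\|_{X^{k,p}_{\delta_1+\delta_2}(M, E_1 \otimes E_2)} \leq C_0 \|u\|_{X^{k,p}_{\delta_1}(M, E_1)} \|v\|_{X^{k,p}_{\delta_2}(M, E_2)}.\]

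For the algebra statements at the end, specialize to $E_1 = E_2 = \bR$. For $\delta = 0$, the constant function $\mathbf{1}$ lies in $X^{k,p}_0(M, \bR)$ since in every Möbius chart its pullback is the constant $1$ on $B_1$, whose $W^{k,p}(B_1)$-norm equals $\vol(B_1)^{1/p}$, independent of $p_0$; this gives the unit. For $\delta > 0$, the constant function has $X^{k,p}_\delta$-norm $\sup_{p_0} \rho(p_0)^{-\delta} \vol(B_1)^{1/p} = +\infty$ since $\rho \to 0$ at infinity, so no constant can serve as a unit; nevertheless the algebra structure (without unit) follows from the bilinear continuity just established.

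The only real point to check carefully is the uniformity of the Banach algebra constant across M\"obius charts, and this is immediate from the uniform bounds on the metric components and their derivatives in M\"obius coordinates provided by \cite[Lemma 2.1]{LeeFredholm}. No density argument is needed since the inequality is purely pointwise-in-$p_0$, and the supremum in the $X^{k,p}_\delta$-norm respects it.
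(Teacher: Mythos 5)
Your argument is correct and is essentially the same as the paper's: both reduce to the uniform Banach algebra property of $W^{k,p}(B_1)$ in a Möbius ball (the paper simply cites \cite[Theorem 5.23]{Adams} for that step, whereas you resketch it via the Sobolev embedding and the Leibniz rule), then use that the tensor product commutes with pullback and take the supremum over base points. Your additional verification of the unitality assertions, which the paper leaves implicit, is a small bonus but not a different route.
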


\begin{proof}
Let $p_0 \in M$ be arbitrary. It follows from \cite[Theorem 5.23]{Adams} that the tensor map $C^\infty(B_1, E_1) \times C^\infty(B_1, E_2) \to C^\infty(B_1, E_1 \otimes E_2)$ extends to a continuous bilinear map $\otimes : W^{k, p}(B_1, E_1) \times W^{k, p}(B_1, E_2) \to W^{k, p}(B_1, E_1 \otimes E_2)$ (here we have blurred the distinction between $E_1$, $E_2$ and their pull-back by the M\"obius chart). Hence there exists a constant $C > 0$ such that for any $p_0 \in M$
\begin{align*}
\left\| \left(\left(\Phi^1_{p_0}\right)^{-1}\right)^* (u \otimes v)\right\|_{W^{k, p}(B_1, E_1 \otimes E_2)}
	& \leq C \left\| \left(\left(\Phi^1_{p_0}\right)^{-1}\right)^* u\right\|_{W^{k, p}(B_1, E_1)} \left\| \left(\left(\Phi^1_{p_0}\right)^{-1}\right)^* v\right\|_{W^{k, p}(B_1, E_2)}\\
	& \leq C \rho^{\delta_1}(p_0) \left\|u\right\|_{X^{k, p}_{\delta_1}(M, E_1)} \rho^{\delta_2}(p_0) \left\|v\right\|_{X^{k, p}_{\delta_2}(M, E_2)}\\
	& \leq C \rho^{\delta_1+\delta_2}(p_0) \left\|u\right\|_{X^{k, p}_{\delta_1}(M, E_1)} \left\|v\right\|_{X^{k, p}_{\delta_2}(M, E_2)}.\\
\end{align*}
Taking the supremum over all $p_0 \in M$, we get that $u \otimes v$ belongs to $X^{k, p}_{\delta_1+\delta_2}(M, E_1 \otimes E_2)$ and that
\[\left\|u \otimes v\right\|_{X^{k, p}_{\delta_1+\delta_2}(M, E_1 \otimes E_2)} \leq C \left\|u\right\|_{X^{k, p}_{\delta_1}(M, E_1)} \left\|v\right\|_{X^{k, p}_{\delta_2}(M, E_2)}.\]
\end{proof}

We now state two weighted analogs of the Young inequality:

\begin{lemma}\label{lmYoung1}
Let $(M, g)$ be an asymptotically hyperbolic manifold and let $u, v : M \to \bR$ be two measurable functions. Let $p, q, r \in [1; \infty)$ be such that $\frac{1}{p}+\frac{1}{q} = \frac{1}{r}$ and assume that $u \in L^p_{\delta_1}(M, \bR)$, $v \in L^q_{\delta_2}(M, \bR)$ for $\delta_1, \delta_2 \in \bR$. Then $u v \in L^r_{\delta_1 + \delta_2}(M, \bR)$ and \[\left\| u v \right\|_{L^r_{\delta_1 + \delta_2}} \leq \left\| u \right\|_{L^p_{\delta_1}} \left\| v \right\|_{L^q_{\delta_2}}.\]
\end{lemma}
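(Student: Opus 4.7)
The statement is essentially a weighted H\"older inequality in disguise. The plan is to reduce it to the standard (unweighted) H\"older inequality on $(M,g)$ by absorbing the weight factor $\rho^{-\delta}$ into the integrands before applying H\"older.

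More precisely, recalling that $\|u\|_{L^p_\delta} = \left(\int_M |\rho^{-\delta_1} u|^p\, d\mu_g\right)^{1/p}$ and similarly for $v$, I would first rewrite
\[
\rho^{-(\delta_1+\delta_2)}\, uv \;=\; (\rho^{-\delta_1} u)\cdot(\rho^{-\delta_2} v),
\]
so that
\[
\|uv\|_{L^r_{\delta_1+\delta_2}}^r \;=\; \int_M |\rho^{-\delta_1} u|^r \, |\rho^{-\delta_2} v|^r\, d\mu_g.
\]
Next, the hypothesis $\frac{1}{p}+\frac{1}{q}=\frac{1}{r}$ is equivalent to $\frac{r}{p}+\frac{r}{q}=1$, so the exponents $p/r$ and $q/r$ are H\"older conjugate (and both $\geq 1$). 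Applying the classical H\"older inequality to the functions $|\rho^{-\delta_1} u|^r$ and $|\rho^{-\delta_2} v|^r$ with these exponents yields
\[
\int_M |\rho^{-\delta_1} u|^r \, |\rho^{-\delta_2} v|^r\, d\mu_g
\;\leq\; \left(\int_M |\rho^{-\delta_1} u|^p\, d\mu_g\right)^{r/p} \left(\int_M |\rho^{-\delta_2} v|^q\, d\mu_g\right)^{r/q}.
\]
Taking the $r$-th root of both sides gives exactly
\[
\|uv\|_{L^r_{\delta_1+\delta_2}} \;\leq\; \|u\|_{L^p_{\delta_1}}\, \|v\|_{L^q_{\delta_2}},
\]
and in particular $uv \in L^r_{\delta_1+\delta_2}$, as claimed.

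There is essentially no obstacle here: the asymptotically hyperbolic structure plays no role beyond providing the measure $d\mu_g$ and the defining function $\rho$, and the result holds on any measure space once the weights are understood. The only mild point to check is that the endpoint cases (e.g.\ $r=1$, or $p=r$ forcing $q=\infty$) are handled correctly by standard H\"older; but since the statement restricts to $p,q,r \in [1,\infty)$ with $\frac{1}{p}+\frac{1}{q}=\frac{1}{r}$, both $p/r$ and $q/r$ lie in $[1,\infty]$ with one possibly equal to $\infty$, and the classical inequality covers all such cases without modification.
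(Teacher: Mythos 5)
Your proof is correct and is essentially identical to the paper's: both split the weight as $\rho^{-(\delta_1+\delta_2)}uv=(\rho^{-\delta_1}u)(\rho^{-\delta_2}v)$ and apply the classical H\"older inequality with the conjugate pair $p/r$, $q/r$. (A minor aside: under the hypotheses $p,q,r\in[1,\infty)$ with $\tfrac1p+\tfrac1q=\tfrac1r$, one in fact has $r<p$ and $r<q$, so both exponents $p/r$ and $q/r$ are finite and strictly greater than $1$; the $\infty$ endpoint you mention never arises.)
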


\begin{proof}
Indeed, let $\alpha = \frac{p}{r}$  and $\beta = \frac{q}{r}$ so that $\frac{1}{\alpha} + \frac{1}{\beta} = 1$. Then by the H\"older inequality we have
\begin{align*}
\left(\int_M \left|u v\right|^r \rho^{-(\delta_1 + \delta_2) r} d\mu_g\right)^{\frac{1}{r}}
	& \leq \left(\int_M \left|u\right|^{\alpha r} \rho^{-\alpha \delta_1 r} d\mu_g\right)^{\frac{1}{\alpha r}}
				 \left(\int_M \left|v\right|^{\beta r}  \rho^{-\beta  \delta_2 r} d\mu_g\right)^{\frac{1}{\beta r}}\\
	&  =   \left(\int_M \left|u\right|^p \rho^{-p \delta_1} d\mu_g\right)^{\frac{1}{p}} \left(\int_M \left|v\right|^q \rho^{-q \delta_2} d\mu_g\right)^{\frac{1}{q}}\\
	&  =   \left\|u\right\|_{L^p_{\delta_1}} \left\|v\right\|_{L^q_{\delta_2}}.
\end{align*}
\end{proof}

\begin{lemma}\label{lmYoung2}
Let $(M, g)$ be an asymptotically hyperbolic manifold and let $u, v : M \to \bR$ be two measurable functions. Let $p, q, r \in [1; \infty)$ be such that $\frac{1}{p}+\frac{1}{q} = \frac{1}{r}$ and assume that $u \in X^{0,p}_{\delta_1}(M, \bR)$, $v \in L^q_{\delta_2}(M, \bR)$ for  $\delta_1, \delta_2 \in \bR$. Then $u v \in L^r_{\delta'}(M, \bR)$ for any $\delta'$ such that $\delta' + \frac{n-1}{r} < \delta_1 + \delta_2 + \frac{n-1}{q}$, and \[\left\| u v \right\|_{L^r_{\delta'}} \leq C \left\| u \right\|_{X^{0, p}_{\delta_1}} \left\| v \right\|_{L^q_{\delta_2}}\] for some constant $C > 0$ independent of $u$ and $v$.
\end{lemma}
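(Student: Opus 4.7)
The plan is to combine a local H\"older inequality on M\"obius charts with a summability estimate for the chart centers. First, I would cover $M$ by a countable collection of M\"obius balls $B_i = (\Phi^1_{x_i})^{-1}(B_1)$ with bounded overlap $N$, as constructed in the proof of Lemma \ref{lmDensity}. Two standard features of such a cover will be used repeatedly: on each $B_i$ the defining function satisfies $\rho(p) \sim \rho(x_i)$ for all $p \in B_i$, and the measure $d\mu_g$ pulled back through the M\"obius chart is uniformly comparable to the hyperbolic measure on $B_1$.

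Combining the bounded overlap with the first comparison yields
\[
\int_M |uv|^r \rho^{-\delta' r} d\mu_g \leq C \sum_i \rho(x_i)^{-\delta' r} \int_{B_i} |uv|^r d\mu_g.
\]
On each ball, H\"older's inequality with the conjugate exponents $p/r$ and $q/r$ (note $r/p + r/q = 1$) splits the integral on the right into an $L^p$-integral of $u$ times an $L^q$-integral of $v$. The defining property of the $X^{0,p}_{\delta_1}$-norm gives $\|u\|_{L^p(B_i)} \leq C \rho(x_i)^{\delta_1} \|u\|_{X^{0,p}_{\delta_1}}$. For $v$, reinserting the weight by means of $\rho \sim \rho(x_i)$ on $B_i$ yields $\|v\|^q_{L^q(B_i)} \leq C \rho(x_i)^{\delta_2 q} \tilde{a}_i$, where $\tilde{a}_i := \int_{B_i}|v|^q \rho^{-\delta_2 q} d\mu_g$ satisfies $\sum_i \tilde{a}_i \leq N \|v\|^q_{L^q_{\delta_2}}$ by the bounded overlap. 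The problem thus reduces to estimating
\[
\sum_i \rho(x_i)^{(\delta_1 + \delta_2 - \delta') r} \tilde{a}_i^{r/q}.
\]

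Applying the discrete H\"older inequality with the same conjugate pair $p/r$ and $q/r$ bounds this sum by
\[
\left(\sum_i \rho(x_i)^{(\delta_1 + \delta_2 - \delta') p}\right)^{r/p} \left(\sum_i \tilde{a}_i\right)^{r/q} \leq C \|v\|^r_{L^q_{\delta_2}} \left(\sum_i \rho(x_i)^{s}\right)^{r/p}
\]
with $s := (\delta_1 + \delta_2 - \delta') p$. Using $1/p = 1/r - 1/q$, the threshold $s > n - 1$ is equivalent to the hypothesis $\delta' + (n-1)/r < \delta_1 + \delta_2 + (n-1)/q$ stated in the lemma.

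The heart of the argument is therefore the counting estimate $\sum_i \rho(x_i)^s < \infty$ whenever $s > n - 1$. The number of chart centers $x_i$ with $\rho(x_i)$ in a dyadic slab $[\rho_0/2, \rho_0]$ is comparable to $\rho_0^{-(n-1)}$, since each M\"obius ball has hyperbolic volume uniformly bounded above and below, while the slab $\{p : \rho(p) \in [\rho_0/2, \rho_0]\}$ has hyperbolic volume of order $\rho_0^{-(n-1)}$. Consequently the contribution of this slab to $\sum_i \rho(x_i)^s$ is of order $\rho_0^{s-(n-1)}$, and summing over the dyadic scales $\rho_0 = 2^{-k}$ produces a geometric series that converges precisely when $s > n - 1$. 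The interior of $M$, where $\rho$ is bounded away from zero, contributes only finitely many charts and hence a finite additional sum; assembling these pieces completes the proof.
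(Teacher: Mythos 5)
Your proof is correct and follows essentially the same route as the paper: cover by Möbius balls, local Hölder on each ball, $X^{0,p}_{\delta_1}$ bound for $u$, then a discrete Hölder inequality with exponents $p/r$ and $q/r$ reducing matters to the convergence of $\sum_i \rho(x_i)^{s}$ with $s=(\delta_1+\delta_2-\delta')p>n-1$. The only cosmetic difference is the last step: you establish the counting estimate by a dyadic-slab volume argument, whereas the paper compares $\sum_i \rho_i^s$ with $\int_M\rho^s\,d\mu_g \lesssim \int_{\Mbar}\rho^{s-n}\,d\mu_{\gbar}$; these are the same fact expressed two ways.
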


\begin{proof}
As in the proof of Lemma \ref{lmDensity}, we cover $M$ by a uniformly locally finite set of M\"obius balls $B_i$. Set $\rho_i = \rho(x_i)$ and remark that on each $B_i$ we have $c^{-1} \rho_i \leq \rho \leq c \rho_i$ for some constant $c > 0$ independent of $i$. From this, letting $C$ denote a positive constant that can vary from line to line, but which is independent of $u$ and $v$, we deduce that
\begin{align*}
\int_M \left|u v\right|^r \rho^{-r \delta'} d\mu_g
	& \leq C \sum_i \rho_i^{-r \delta'} \int_{B_i} \left|u v\right|^r d\mu_g\\
	& \leq C \sum_i \rho_i^{-r \delta'} \left(\int_{B_i} u^p d\mu_g\right)^{\frac{r}{p}} \left(\int_{B_i} v^q d\mu_g\right)^{\frac{r}{q}}\qquad\text{(by the H\"older inequality)}\\
	& \leq C \sum_i \rho_i^{-r \delta'} \rho_i^{r \delta_1} \left\|u\right\|_{X^{0, p}_{\delta_1}}^r \left(\int_{B_i} v^q d\mu_g\right)^{\frac{r}{q}}\\
	& \leq C \left\|u\right\|_{X^{0, p}_{\delta_1}}^r \sum_i \rho_i^{r (\delta_1 + \delta_2 - \delta')} \left(\rho_i^{-q \delta_2} \int_{B_i} v^q d\mu_g\right)^{\frac{r}{q}}\\
	& \leq C \left\|u\right\|_{X^{0, p}_{\delta_1}}^r \left(\sum_i \left(\rho_i^{-q \delta_2} \int_{B_i} v^q d\mu_g\right)^{\alpha \frac{r}{q}} \right)^{\frac{1}{\alpha}}
				\left( \sum_i \rho_i^{\beta r (\delta_1 + \delta_2 - \delta')}\right)^{\frac{1}{\beta}},
\end{align*}
where $\alpha > 1$ and $\beta$ is such that $\frac{1}{\alpha} + \frac{1}{\beta} = 1$. Setting $\alpha = \frac{q}{r}$, we get
\begin{align*}
\int_M \left|u v\right|^r \rho^{-r \delta'} d\mu_g
	& \leq C \left\|u\right\|_{X^{0, p}_{\delta_1}}^r \left(\sum_i \rho_i^{-q \delta_2} \int_{B_i} v^q d\mu_g \right)^{\frac{r}{q}}
				 \left( \sum_i \rho_i^{\beta r (\delta_1 + \delta_2 - \delta')}\right)^{\frac{1}{\beta}}\\
	& \leq C \left\|u\right\|_{X^{0, p}_{\delta_1}}^r \left\|v\right\|_{L^q_{\delta_2}}^r \left( \sum_i \rho_i^{\beta r (\delta_1 + \delta_2 - \delta')}\right)^{\frac{1}{\beta}}.
\end{align*}

To conclude the proof, it only remains to show that the last sum converges under the assumption of the lemma. Indeed,
\[\sum_i \rho_i^{\beta r (\delta_1+\delta_2-\delta')} \leq C \int_M \rho^{\beta r (\delta_1+\delta_2-\delta')} d\mu_g \leq C \int_{\Mbar} \rho^{\beta r (\delta_1+\delta_2-\delta') - n} d\mu_{\gbar}.\]
The last integral converges provided that $\beta r (\delta_1+\delta_2-\delta') - n > -1$. We recall that \[1 = \frac{1}{\alpha} + \frac{1}{\beta} = \frac{r}{q} + \frac{1}{\beta},\] hence $\beta r = p$. Thus the condition becomes \[\delta_1+\delta_2-\delta' > \frac{n-1}{p},\] which is equivalent to the assumption of the lemma, since $\frac{n-1}{p} = \frac{n-1}{r} - \frac{n-1}{q}$.
\end{proof}

\begin{lemma}[Embedding of $X^{k, p}_\delta$ into $W^{k,q}_{\delta'}$]\label{lmEmbeddingXIntoL}
Let $(M, g)$ be an asymptotically hyperbolic manifold and let $E\to M$ be a geometric tensor bundle. Let $p, q \in [1; \infty)$ be such that $q \leq p$, $\delta, \delta' \in \bR$ and assume that $u \in X^{k,p}_{\delta}(M, E)$ for some integer $k$ such that $0\leq k\leq l$. Then $u \in W^{k,q}_{\delta'}(M, E)$ for any $\delta'$ such that $\delta' + \frac{n-1}{q} < \delta$, and \[\left\| u \right\|_{W^{k,q}_{\delta'}} \leq C \left\|u\right\|_{X^{k, p}_{\delta}}\] for some constant $C > 0$ independent of $u$.
\end{lemma}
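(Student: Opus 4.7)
The plan is to mimic the proof of Lemma \ref{lmYoung2} by localizing on a uniformly locally finite cover of $M$ by unit Möbius balls $B_j = (\Phi^1_{x_j})^{-1}(B_1)$ centered at points $x_j$, with $\rho_j := \rho(x_j)$. Two standard facts will be repeatedly used: first, that $\rho \asymp \rho_j$ on $B_j$, with constants independent of $j$; second, that the volume of $B_j$ with respect to $g$ is bounded by a fixed constant (since $B_j$ is isometric to a fixed hyperbolic ball in $\bH^n$ via the Möbius chart). It therefore suffices to estimate $\int_M \lvert \nabla^{(i)} u \rvert^q \rho^{-q\delta'} d\mu_g$ for each $0\le i\le k$ and sum over $i$.

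First I would pass from the global integral to a sum of local ones: using uniform local finiteness and the comparison $\rho \asymp \rho_j$ on $B_j$,
\[
\int_M \lvert \nabla^{(i)} u \rvert^q \rho^{-q\delta'}\, d\mu_g \le C \sum_j \rho_j^{-q\delta'} \int_{B_j} \lvert \nabla^{(i)} u \rvert^q\, d\mu_g.
\]
Next, on each $B_j$ I apply Hölder's inequality with exponents $p/q \ge 1$ and its conjugate (this step is trivial when $q = p$); since $\vol_g(B_j)$ is uniformly bounded, this gives
\[
\int_{B_j} \lvert \nabla^{(i)} u \rvert^q\, d\mu_g \le C \left(\int_{B_j} \lvert \nabla^{(i)} u \rvert^p\, d\mu_g\right)^{q/p}.
\]
The definition of $X^{k,p}_\delta$, together with $\rho \asymp \rho_j$ on $B_j$, controls the right-hand side by $C\, \rho_j^{q\delta} \lVert u \rVert^q_{X^{k,p}_\delta}$.

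Combining these estimates yields
\[
\int_M \lvert \nabla^{(i)} u \rvert^q \rho^{-q\delta'}\, d\mu_g \le C\, \lVert u \rVert^q_{X^{k,p}_\delta} \sum_j \rho_j^{q(\delta-\delta')},
\]
and the proof reduces to checking convergence of the series on the right under the hypothesis $\delta' + (n-1)/q < \delta$. As in the proof of Lemma \ref{lmYoung2}, by uniform local finiteness one has
\[
\sum_j \rho_j^{q(\delta-\delta')} \le C \int_M \rho^{q(\delta-\delta')}\, d\mu_g \le C \int_{\Mbar} \rho^{q(\delta-\delta') - n}\, d\mu_{\gbar},
\]
and the last integral converges precisely when $q(\delta-\delta') - n > -1$, i.e.\ when $\delta - \delta' > (n-1)/q$, which is exactly the stated assumption. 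Summing over $0 \le i \le k$ gives the desired inequality $\lVert u \rVert_{W^{k,q}_{\delta'}} \le C \lVert u \rVert_{X^{k,p}_\delta}$.

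I do not anticipate a genuine obstacle: the proof is a direct adaptation of the pattern already established for Lemma \ref{lmYoung2}, and the only point requiring care is matching the exponent balance between the localization step (Hölder on $B_j$) and the summability condition, which indeed forces the critical weight gap $(n-1)/q$ appearing in the hypothesis.
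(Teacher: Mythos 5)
Your proof is correct and follows essentially the same argument as the paper: cover $M$ by a uniformly locally finite family of unit M\"obius balls, use H\"older together with the uniform volume bound on each ball to pass from $L^q$ to $L^p$ locally, pull out the $X^{k,p}_\delta$ norm via $\rho\asymp\rho_j$, and reduce to the convergence of $\sum_j \rho_j^{q(\delta-\delta')}$, which holds exactly when $\delta-\delta'>(n-1)/q$. The paper phrases the higher-order case $k\geq 1$ by applying the $k=0$ argument to each $\nabla^i u\in X^{0,p}_\delta$, whereas you handle all $0\leq i\leq k$ uniformly in a single estimate before summing; this is a cosmetic difference only.
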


\begin{proof}
We will treat the case $q < p$, the equality case being simpler. Since the proof goes as for the previous lemma, we will use the same notation.

First let $k=0$. Remark that for each $i$ by the H\"older inequality we have
\[\int_{B_i} |u|_g^q d\mu_g \leq \left(\int_{B_i} 1^\alpha d\mu_g \right)^\frac{1}{\alpha} \left(\int_{B_i} |u|_g^{\beta q} d\mu_g \right)^{\frac{1}{\beta}} \leq C \left(\int_{B_i} |u|_g^p d\mu_g \right)^{\frac{q}{p}},\] since the volume of $B_i$ is uniformly bounded. Thus
\begin{align*}
\left\| u \right\|_{L^q_{\delta'}}
	& \leq \sum_i \rho_i^{-q\delta'} \int_{B_i} |u|_g^q d\mu_g\\
	& \leq C \sum_i \rho_i^{-q\delta'} \left(\int_{B_i} |u|_g^p d\mu_g \right)^{\frac{q}{p}}\\
	& \leq C \sum_i \rho_i^{-q\delta'} \rho^{q \delta} \left\|u\right\|^q_{X^{0, p}_\delta}\\
	& \leq C \left( \sum_i \rho_i^{q(\delta - \delta')} \right) \left\|u\right\|^q_{X^{0, p}_\delta}.
\end{align*}

Arguing as in the proof of the previous lemma, we conclude that the sum converges under the assumptions we have made.\\

Now assume that $k\geq 1$. In this case the statement follows from the fact that $\nabla^i u\in X_\delta^{0,p}(M,E \otimes (T^*M)^i)\subset L_{\delta'}^p(M,E\otimes (T^*M)^i)$ for all integer $i$ such that $0\leq i \leq k$.
\end{proof}

\subsection{Statement of the main result}

Having defined the setup of the article, we can now state our main result:

\begin{theorem}\label{thmMainTheorem}
Let $(M, g)$ be a $C^{l, \beta}$-asymptotically hyperbolic manifold with constant scalar curvature, where $l \in \bN$, $l \geq 2$ and $\beta \in [0; 1)$. Let $\sigma_0 \in X^{1, p}_\delta$, where $p \in (n; \infty)$, and $\delta \in (0; n)$, and assume that $\tau : M \to \bR$ is a positive function such that $\tau - 1\in X^{1, p}_\delta$. If the limit equation
\[\Delta_L \psi = \lambda \sqrt{\frac{n-1}{n}} \left| \lie \psi \right|_g \frac{d \tau}{\tau}\]
admits no non-zero solution $\psi \in W^{1, 2}$ for any $\lambda \in (0; 1]$, then the set of solutions $(\phi, \psi) \in X^{2, p}_+ \times X^{2, p}_\delta$ of the constraint equations \eqref{eqLichnerowicz}-\eqref{eqVector} is non-empty and compact. Moreover, any such function $\phi$ satisfies $\phi - 1 \in X^{2, p}_\delta$.\\

Furthermore,
\begin{itemize}
\item if $\sigma_0 \in X^{k-1, p}_\delta$ and $\tau - 1 \in X^{k-1, p}_\delta$ for some $1\leq k \leq l$, then any solution $(\phi, \psi) \in X^{2, p}_+ \times X^{2, p}_\delta$ belongs to $\left(1 + X^{k, p}_\delta\right) \times X^{k, p}_\delta$, and the set $\{(\phi-1, \psi)\} \subset X^{k, p}_\delta \times X^{k, p}_\delta$ is compact.
\item if $\sigma_0 \in C^{k-1, \alpha}_\delta$ and $\tau - 1 \in C^{k-1, \alpha}_\delta$ for some $k \geq 1$ and $\alpha \in (0; 1)$ such that $k + \alpha \leq l + \beta$, then any solution $(\phi, \psi) \in X^{2, p}_+ \times X^{2, p}_\delta$ belongs to $\left(1 + C^{k, \alpha}_\delta\right) \times C^{k, \alpha}_\delta$, and the set $\{(\phi-1, \psi)\} \subset C^{k, \alpha}_\delta \times C^{k, \alpha}_\delta$ is compact.
\end{itemize}
\end{theorem}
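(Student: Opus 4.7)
The approach is a continuation method inspired by \cite{DahlGicquaudHumbert}: regularize the Lichnerowicz equation \eqref{eqLichnerowicz} into a subcritical variant depending on a parameter $\epsilon > 0$, solve the resulting coupled system for each small $\epsilon$, and then pass to the limit $\epsilon \to 0^+$. For the subcritical step I would fix $\epsilon \in (0, \kappa)$, replace $\kappa$ throughout \eqref{eqLichnerowicz} by $\kappa - \epsilon$, and solve the regularized coupled system for $(\phi_\epsilon, \psi_\epsilon) \in (1 + X^{2,p}_\delta)\times X^{2,p}_\delta$ by a Schauder fixed-point argument inspired by the monotony method of \cite{HNT1, HNT2, MaxwellNonCMC}: given $\psi \in X^{2,p}_\delta$, use explicit constant sub- and super-solutions close to $1$ to solve the scalar subcritical Lichnerowicz equation for $\phi$, then plug $\phi$ into \eqref{eqVector} and solve for $\psi$ via the isomorphism property of $\Delta_L$ on weighted spaces (Appendix \ref{secNoL2CKV}). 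This would be the content of Proposition \ref{propSubCrit}, with Propositions \ref{propSubCritBehav}--\ref{propSubCritBehav2} providing the needed pointwise bounds on $\phi_\epsilon$.

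The core of the proof is the blow-up analysis as $\epsilon \to 0^+$. I argue by contradiction: if $\gamma_i := \|\psi_{\epsilon_i}\|_{X^{2,p}_\delta} \to \infty$, observe that wherever $\phi_{\epsilon_i}$ is large the two nonlinear terms of \eqref{eqLichnerowicz} must balance,
\[
n(n-1)\,\tau^2\, \phi_{\epsilon_i}^{\kappa+1} \;\sim\; \big|\sigma_0 + \lie\psi_{\epsilon_i}\big|^2\, \phi_{\epsilon_i}^{-\kappa-3}
\qquad \Longrightarrow \qquad
\phi_{\epsilon_i}^{\kappa+2} \;\sim\; \frac{\big|\sigma_0 + \lie\psi_{\epsilon_i}\big|}{\sqrt{n(n-1)\,\tau^2}}.
\]
Inserting this into \eqref{eqVector} and dividing by $\gamma_i$, the rescaled sequence $\widetilde{\psi}_i := \psi_{\epsilon_i}/\gamma_i$ asymptotically satisfies
\[
\Delta_L \widetilde{\psi}_i \;\sim\; \sqrt{\tfrac{n-1}{n}}\; \Big|\lie\widetilde{\psi}_i + \tfrac{1}{\gamma_i}\sigma_0\Big|\, \frac{d\tau}{\tau}.
\]
Using the compact embeddings of Proposition \ref{propRellich}, the embedding $X^{0,p}_\delta \hookrightarrow L^2$ of Lemma \ref{lmEmbeddingXIntoL}, and the Fredholm theory for $\Delta_L$ from Appendix \ref{secFredholm}, a subsequence of $\widetilde{\psi}_i$ converges weakly in $W^{1,2}$ to a non-zero $\psi_\infty$ solving the limit equation with some $\lambda \in (0,1]$ (the case $\lambda < 1$ is meant to absorb mismatches between the blow-up rates of $\phi$ and $\psi$ caused by the linear terms $-\tfrac{4(n-1)}{n-2}\Delta\phi + \scal\cdot\phi$). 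This contradicts the hypothesis, so $(\phi_\epsilon-1, \psi_\epsilon)$ must stay bounded in $X^{2,p}_\delta \times X^{2,p}_\delta$, which is Corollary \ref{corFundamentalCorollary}.

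Granted the uniform bound, Proposition \ref{propRellich} extracts a convergent subsequence whose limit $(\phi, \psi)$ solves \eqref{eqLichnerowicz}--\eqref{eqVector}, proving non-emptiness. Compactness of the full solution set follows by running the same blow-up dichotomy on an arbitrary sequence of genuine solutions. For the higher-regularity statements I would bootstrap: once $(\phi, \psi) \in (1+X^{2,p}_\delta)\times X^{2,p}_\delta$ with $\phi > 0$ is known, the Banach-algebra property of Lemma \ref{lmBanachAlgebra} places the source terms $\tau^2 \phi^{\kappa+1}$, $|\sigma|^2 \phi^{-\kappa-3}$ and $\phi^{\kappa+2}\, d\tau$ in the right target spaces, and iterating the Fredholm isomorphism for $\Delta$ and $\Delta_L$ on the appropriate $X^{k,p}_\delta$ (or $C^{k,\alpha}_\delta$) scales yields the required regularity; compactness in the finer scales then follows from the compact part of Proposition \ref{propRellich}.

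The main obstacle is the blow-up step. Three points look delicate: verifying that $\widetilde{\psi}_i$ admits a \emph{non-zero} weak limit in $W^{1,2}$ (the $X^{2,p}_\delta$-mass could in principle escape to the conformal boundary, where the weight $\rho^{-\delta}$ is most sensitive, and preventing this should require the $L^2$-estimate at infinity for $\Delta_L$ provided in Appendix \ref{secNoL2CKV}); rigorously justifying the asymptotic balance in the Lichnerowicz equation, which requires uniform control of $\Delta\phi_\epsilon$ and $\scal\cdot\phi_\epsilon$ after rescaling; and accommodating all possible rates of blow-up of $\phi$ relative to $\psi$ within the one-parameter family of limit equations, which is exactly why the hypothesis is stated uniformly for $\lambda \in (0,1]$.
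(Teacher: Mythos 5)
Your overall architecture matches the paper's (sub-critical regularization, Schauder fixed point, rescaled blow-up alternative, limit equation), but there is a concrete error in the regularization itself that would make the whole machine fail to start. You propose to replace $\kappa$ by $\kappa-\epsilon$ throughout the Lichnerowicz equation, turning it into
\[-\tfrac{4(n-1)}{n-2}\Delta\phi+\scal\,\phi+n(n-1)\tau^2\phi^{\kappa+1-\epsilon}=\left|\sigma\right|_g^2\phi^{-\kappa-3+\epsilon}.\]
This goes the wrong way. The heuristic scaling is $\phi\sim\lambda$, $\lie\psi\sim\lambda^{\kappa+2}$ from the unchanged vector equation; then the term $\tau^2\phi^{\kappa+1-\epsilon}\sim\lambda^{\kappa+1-\epsilon}$ is \emph{weaker} than before, while $|\sigma|^2\phi^{-\kappa-3+\epsilon}\sim\lambda^{\kappa+1+\epsilon}$ is \emph{stronger}. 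So your regularization makes the $|\sigma|^2$ term dominate as $\lambda\to\infty$, which is exactly the opposite of what is needed for an a priori bound. Concretely, the Schauder step would require a constant super-solution $\Lambda$ satisfying roughly $\Lambda\geq|\sigma|^{1/(\kappa+2-\epsilon)}$, combined with $|\sigma|\lesssim\Lambda^{\kappa+2}$ from the vector equation, i.e.\ $\Lambda\geq\Lambda^{(\kappa+2)/(\kappa+2-\epsilon)}$ with exponent $>1$: there is no closing $\Lambda$, so $C_\Lambda$ is never stable. The paper instead leaves the Lichnerowicz equation untouched and weakens the vector-equation source to $\phi^{\kappa+2-\epsilon}d\tau$ (see equation \eqref{eqSubCritConstraint}), which gives $|\sigma|\lesssim\Lambda^{\kappa+2-\epsilon}$ and hence $\Lambda\geq\Lambda^{(\kappa+2-\epsilon)/(\kappa+2)}$, closable since the exponent is $<1$. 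Any correct regularization must decrease, not increase, the effective power of $\phi$ feeding into $\sigma$ relative to the damping term $\tau^2\phi^{\kappa+1}$ of the Lichnerowicz equation; damping the $\phi^{-\kappa-3}$ factor to $\phi^{-\kappa-3-\epsilon}$ would also work, but ``$\kappa\to\kappa-\epsilon$ throughout'' does not.

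Two secondary remarks. First, you take $\gamma_i=\|\psi_{\epsilon_i}\|_{X^{2,p}_\delta}$ as the blow-up rate, whereas the paper's energy $\gamma_\beta(\phi,\psi)=\int_M\rho^{-2\beta}|\lie\psi|_g^2\,d\mu_g$ is tailored to the Moser-type iteration in Proposition~\ref{propBoundPhi}: multiplying the Lichnerowicz equation by $\rho^{-2\beta'}(\phi-\phi_-)^{\kappa+3+q}$ and integrating by parts naturally produces weighted $L^2$ quantities of $\lie\psi$, not the full $X^{2,p}_\delta$ norm. You would have to re-derive the key bound $\|\phi\|_{L^\infty}\leq C\gammatil^{1/(2\kappa+4)}$ under your different normalization, and it is not clear the iteration closes with the $X^{2,p}_\delta$ norm in the denominator. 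Second, you correctly identify the three delicate points (non-vanishing of the weak limit, justification of the balance, and the role of $\lambda\in(0,1]$), but the paper resolves them with a specific device that your sketch does not mention: the rescaled function $\phitil_i=\gammatil_i^{-1/(2\kappa+4)}\phi_i$ converges in $L^\infty$ to the explicit limit $\phitil_\infty=(|\sigmatil_\infty|_g^2/(n(n-1)\tau^2))^{1/(2\kappa+4)}$, proved via $C^2$-regular mollified sub/super-solutions $u_\epsilon,v_\epsilon$ of the rescaled Lichnerowicz equation, and $\lambda$ arises precisely as $\lim\gammatil_i^{-\epsilon_i/(2\kappa+4)}$, bounded below by $\eta$ from Proposition~\ref{propSubCritBehav}. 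Without that explicit mechanism, ``accommodating all possible rates of blow-up'' remains a hope rather than a proof.
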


\begin{rks}~
\begin{enumerate}
 \item The assumption on scalar curvature is only needed to give the optimal decay at infinity of the solutions. It could be weakened to $\scal+n(n-1) \in C^{k-2, \alpha}_\delta$.
 \item Compactness of the set of solutions holds when assuming that the limit equation \eqref{eqLimit} only admits no non-zero solution for $\lambda = 1$. 
\end{enumerate}
\end{rks}

The idea of the proof is similar to the one in \cite{DahlGicquaudHumbert}. Let us discuss the strategy.\\

The main difficulty of dealing with the coupled system comes from the fact that there is a competition between the two leading terms in the Lichnerowicz equation \eqref{eqLichnerowicz}, namely, between $n(n-1)\tau^2 \phi^{\kappa+1}$ and $\left|\lie\psi\right|_g^2 \phi^{-\kappa-3}$. Intuitively, this can be illustrated as follows. Assume that we can define a quantity $\lambda$ corresponding to the magnitude of $\phi$ (later we will define it rigorously as $\lambda = \gamma_\beta(\phi, \psi)^{\frac{1}{2\kappa+4}}$ with $\gamma_\beta(\phi, \psi)$ being the energy to be defined in Section \ref{secBlowUp}). Then, from the vector equation \eqref{eqVector}, one deduces that $\psi$ (and hence $\lie \psi$) has order $\lambda^{\kappa+2}$. Consequently, $\left|\lie\psi\right|_g^2 \phi^{-\kappa-3} \simeq \lambda^{2(\kappa+2) - \kappa - 3}$ has order $\lambda^{\kappa+1}$, which is the same as that of $n(n-1)\tau^2 \phi^{\kappa+1}$, while other terms of the Lichnerowicz equation have lower order. This remark shows that it may be impossible to get an a priori estimate for the solutions of the conformally formulated constraint equations  \eqref{eqLichnerowicz}-\eqref{eqVector}.\\

To remedy this situation, we will favor the term $n(n-1)\tau^2 \phi^{\kappa+1}$ in the Lichnerowicz equation by slightly decreasing the exponent of $\phi$ in the vector equation, thus introducing the subcritical equations \eqref{eqSubCritConstraint}. For these equations, it is possible to obtain an a priori bound for the solutions, and then solve them by a fixed point method, see Section \ref{secSubCritSolutions}.\\

The next step consists in letting the regularization parameter $\epsilon$ tend to zero. Using a simple compactness argument, one can show that if the solutions of the subcritical equations are uniformly bounded then there is a solution of the constraint equations. In order to measure the magnitude of the solutions, we define the notion of energy, associated to a pair $(\phi, \psi)$ of solutions of the subcritical equations, by formula \eqref{eqDefEnergy}. Then, in Proposition \ref{propBoundPhi}, we show that the energy defined in such a way indeed measures the magnitude of $(\phi, \psi)$. As a consequence, if the energy of the solutions of the subcritical equations remains bounded as $\epsilon$ tends to zero, then there exists a solution of the constraint equations.\\

How can we ensure that the energy remains bounded? Assume by contradiction that the energy is unbounded. Then, returning to our heuristic analysis, we see that all the non-dominant terms of the Lichnerowicz equation are beaten by $n(n-1)\tau^2 \phi^{\kappa+1}$ and $\left|\lie\psi\right|_g^2 \phi^{-\kappa-3}$, which means that they disappear in the limit, leaving the direct relation \[n(n-1)\tau^2 \phi^{\kappa+1}  =\left|\lie\psi\right|_g^2 \phi^{-\kappa-3}\] between $\phi$ and $\psi$. Plugging this relation into the vector equation, we obtain a non-zero solution of the limit equation \eqref{eqLimit}: \[\Delta_L \psi = \lambda \sqrt{\frac{n-1}{n}} \left| \lie \psi \right|_g \frac{d \tau}{\tau},\] where the coefficient $\lambda \in (0; 1]$ appears because the subcritical equations do not behave perfectly well under the rescaling of the unknowns. This is the content of Proposition \ref{propUnboundedEnergy}.\\

Consequently, if the limit equation does not admit non-zero solutions for any $\lambda \in (0; 1]$, then the energy remains bounded as $\epsilon$ tends to zero. This completes the proof of Theorem \ref{thmMainTheorem}.

\section{Sub-critical equations}\label{secSubCritSolutions}
In this section, we study the following system:

\begin{equation}
\label{eqSubCritConstraint}
\left\lbrace
\begin{aligned}
-\frac{4(n-1)}{n-2} \Delta \phi + \scal~\phi + n(n-1) \tau^2 \phi^{\kappa+1} & = \left|\sigma\right|^2_g \phi^{-\kappa-3} \quad \text{(Lichnerowicz equation),}\\
\Delta_L \psi & = (n-1) \phi^{\kappa+2 - \epsilon} d \tau \quad \text{($\epsilon$-Vector equation),}
\end{aligned}
\right.
\end{equation}
where $0 < \epsilon < 1$ is a constant. The approach we follow is similar to the one introduced in \cite{HNT1, HNT2} and \cite{MaxwellNonCMC}. The main ingredient is the following variant of the Schauder fixed point theorem (see \cite[Corollary 11.2]{GilbargTrudinger}):

\begin{prop}[Modified Schauder fixed point theorem]\label{propSchauder}
Let $X$ be a Banach space, let $C \subset X$ be a closed convex subset, and let $F : C \to X$ be a continuous function such that $F(C) \subset C$, and $F(C)$ is precompact in $X$. Then $F$ admits (at least) one fixed point in $C$.
\end{prop}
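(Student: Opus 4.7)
The plan is to reduce the statement to the classical Schauder fixed point theorem, which guarantees a fixed point for a continuous self-map of a nonempty, convex, compact subset of a Banach space. The only feature that needs to be supplied is a compact convex subset on which $F$ acts, since the hypothesized $C$ is merely closed and convex. The key ingredient is Mazur's theorem: in a Banach space, the closed convex hull of a precompact set is itself compact.

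First, I would set $K := \overline{\mathrm{co}}(F(C))$, the closure (in $X$) of the convex hull of $F(C)$. Since $F(C)$ is precompact by hypothesis, Mazur's theorem implies that $K$ is compact; it is also convex and closed by construction. Because $C$ is convex and contains $F(C)$, it contains the algebraic convex hull $\mathrm{co}(F(C))$; and since $C$ is closed in $X$, it also contains the closure of this hull, i.e.\ $K \subset C$.

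Next, I would verify that $F$ restricts to a continuous self-map of $K$. Continuity is immediate from the continuity of $F : C \to X$ and from $K \subset C$. For the self-map property, observe that $F(K) \subset F(C) \subset K$: the first inclusion holds because $K \subset C$, and the second because $F(C)$ is contained in its own closed convex hull $K$. Assuming $C$ is nonempty (otherwise the conclusion is vacuous), $K$ is a nonempty convex compact subset of the Banach space $X$, so the classical Schauder fixed point theorem applied to $F|_K : K \to K$ yields a point $x \in K \subset C$ with $F(x) = x$, which is the desired conclusion.

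There is essentially no obstacle; the whole argument is the standard packaging used to pass from the compactness-type hypothesis ``$F(C)$ precompact'' to the classical ``domain compact'' setting. The one nontrivial external input is Mazur's theorem, which relies on the completeness of $X$ (precompact plus complete implies relatively compact); once that is granted, the remainder is a purely set-theoretic verification that the right compact convex subset has been produced.
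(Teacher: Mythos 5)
Your proof is correct and is exactly the argument the paper uses: apply the classical Schauder theorem to the closed convex hull of $F(C)$, which is compact by Mazur's theorem and contained in $C$ because $C$ is closed and convex. You have merely spelled out the details that the paper leaves as a one-line remark.
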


This proposition is a simple consequence of the classical Schauder theorem applied to the closure of the convex hull of $F(C)$ (which is a subset of $C$ since $C$ is closed and convex). In our case, $F$ will be the composition of the map $\phi \mapsto \psi_\phi$, where $\psi_\phi$ is the unique solution $\psi$ of the $\epsilon$-vector equation, with the map $\psi \mapsto \sigma = \sigma_0 + \lie \psi \mapsto \Phi_\sigma$, where $\Phi_\sigma$ is the unique solution of the Lichnerowicz equation. These two equations will be studied in the next two subsections \ref{secVector} and \ref{secLichnerowicz}. Compactness of the image will be a consequence of the weighted Rellich Theorem (Proposition \ref{propRellich}), the required loss of decay at infinity being allowed by the assumption $d\tau \in X^{0, p}_\delta$ (see Theorem \ref{thmMainTheorem}). $X$ will be the space $L^\infty(M, \bR)$ of bounded functions over $M$, and $C$ will be the subset of functions that are between two well chosen barrier functions. The construction of solutions of the sub-critical equations together with their main properties are described in Subsection \ref{secSubCrit}. Finally, in Subsection \ref{secRemarkAlmostCMC}, we show how the techniques that we have developed in this section lead to an improvement of the results of \cite{IsenbergPark} in the ``near CMC'' case.

\subsection{The vector equation}\label{secVector}
We first formulate the main isomorphism theorem for the vector Laplacian on an asymptotically hyperbolic manifold. 

\begin{prop}[Isomorphism theorem for the vector Laplacian]\label{propIsomVectLaplacian} Let $(M, g)$ be a $C^{l, \beta}$-asymptotically hyperbolic manifold with $l + \beta \geq 2$. Then the vector Laplacian is an isomorphism between the following spaces:
\[
\left\lbrace
\begin{array}{rll}
W^{m+2, p}_{\delta}			 & \to W^{m, p}_{\delta} & \text{if $m+2\leq l+\beta$ and $\left|\delta+\frac{n-1}{p}-\frac{n-1}{2}\right| < \frac{n+1}{2}$};\\
X^{m+2, p}_{\delta}			 & \to X^{m, p}_{\delta} & \text{if $m+2\leq l+\beta$ and $\delta \in (-1, n)$};\\
C^{m+2, \alpha}_\delta & \to C^{m, \alpha}_\delta & \text{if $m+2+\alpha \leq l+\beta$ and $\delta \in (-1, n)$}.\\
\end{array}
\right.
\]
\end{prop}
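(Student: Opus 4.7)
The plan is to combine the Fredholm theory for geometric elliptic operators on asymptotically hyperbolic manifolds with an injectivity argument, treating the three function space settings in parallel. The vector Laplacian $\Delta_L$ is a formally self-adjoint, second-order geometric elliptic operator, so the first step is to compute its indicial roots at the conformal infinity by reducing to the hyperbolic model. A direct calculation on $\bH^n$ shows that the indicial roots of $\Delta_L$ acting on 1-forms are $-1$ and $n$. The stated weight ranges are exactly the intervals between these roots: for the Hölder and $X^{k,p}_\delta$ cases one obtains $\delta \in (-1, n)$ directly, while the condition $\left|\delta+\tfrac{n-1}{p}-\tfrac{n-1}{2}\right| < \tfrac{n+1}{2}$ in the Sobolev case rewrites as $-1-\tfrac{n-1}{p} < \delta < n-\tfrac{n-1}{p}$, which is the same range once the $L^p$-volume shift is accounted for.

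With the indicial roots identified, I would invoke the Fredholm theorems of \cite{LeeFredholm} for the $W^{m,p}_\delta$ and $C^{m,\alpha}_\delta$ cases, and Theorem \ref{thmFredholm} of Appendix \ref{secFredholm} for the $X^{k,p}_\delta$ case, to conclude that $\Delta_L$ is Fredholm of index zero in each setting throughout the indicated weight range. Since $\Delta_L$ is formally self-adjoint in the $L^2$ pairing and the stated ranges are symmetric about the critical weight, the cokernel is canonically identified with the kernel in the dual range, so it suffices to establish that $\Delta_L$ is injective in each of the three spaces.

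For injectivity, suppose $\psi$ lies in one of the three spaces and satisfies $\Delta_L \psi = 0$. Elliptic regularity upgrades $\psi$ to a smooth 1-form, and the embedding results (Proposition \ref{propRellich} and Lemma \ref{lmEmbeddingXIntoL}) together with the $L^2$-estimate at infinity proved in Appendix \ref{secNoL2CKV} show that $\lie\psi \in L^2(M, T^*M \otimes T^*M)$. An integration by parts (justified by an exhaustion argument using cut-offs supported near $\binf$) then gives
\[
0 = \int_M \langle \Delta_L \psi, \psi\rangle \, d\mu_g = -\frac{1}{2}\int_M |\lie \psi|^2_g \, d\mu_g,
\]
so $\lie \psi = 0$, i.e., $\psi^\sharp$ is an $L^2$ conformal Killing vector field. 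The absence of such vector fields on an asymptotically hyperbolic manifold (Appendix \ref{secNoL2CKV}) forces $\psi = 0$, completing the proof.

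The main obstacle is the justification of the $L^2$ integration by parts near the lower endpoint $\delta = -1$ of the Fredholm range: there $\psi$ a priori only decays as slowly as $\rho^{-1}$, and neither $\psi$ nor $|\lie \psi|$ is automatically square-integrable against $d\mu_g$. This is precisely why the $L^2$-estimate at infinity of Appendix \ref{secNoL2CKV} is needed: it bootstraps the a priori weighted-space bound on $\psi$ into square-integrability of $\lie \psi$ near infinity. Once that estimate is in hand, the argument above closes uniformly across the three scales of function spaces, and the remaining work is routine bookkeeping.
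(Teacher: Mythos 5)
Your overall framing agrees with the paper: the proof reduces to computing the indicial radius ($R=\tfrac{n+1}{2}$, which gives the three stated weight windows), verifying the $L^2$-estimate at infinity, and showing the $L^2$-kernel of $\Delta_L$ is trivial, after which \cite[Theorem C]{LeeFredholm} and Theorem~\ref{thmFredholmLocalSobolev} finish the argument. The paper then proves the two lemmas in Appendix~\ref{secNoL2CKV}; you cite them.

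There is however a genuine gap in your injectivity step, specifically where you claim that for $\psi$ in one of the weighted spaces with $\Delta_L\psi=0$, ``the embedding results together with the $L^2$-estimate at infinity show that $\lie\psi\in L^2$.'' That chain of implications does not hold for $\delta$ in the lower part of the Fredholm range. If $\psi\in X^{m+2,p}_\delta$ with, say, $-1<\delta<\tfrac{n-1}{2}$, then $\psi$ and $\lie\psi$ decay only like $\rho^\delta$, the volume form behaves like $\rho^{-n}$, and $\int |\lie\psi|^2\,d\mu_g \sim \int \rho^{2\delta-n}\,d\rho$ diverges: the a priori decay is not enough. Lemma~\ref{lmEmbeddingXIntoL} and Proposition~\ref{propRellich} move between scales of spaces at a fixed weight (or lose weight), and the $L^2$-estimate at infinity is an inequality $\|u\|_{L^2}\leq C\|\Delta_L u\|_{L^2}$ valid for \emph{compactly supported} $u$ outside a compact set; neither provides the needed decay improvement for an honest solution of $\Delta_L\psi=0$. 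The step your ``bootstrap'' is trying to reproduce is precisely the content already built into the Fredholm theorems: \cite[Theorem C]{LeeFredholm} and Theorem~\ref{thmFredholmLocalSobolev} assert directly that, whenever $|\delta-\tfrac{n-1}{2}|<R$, the kernel at weight $\delta$ coincides with the $L^2$-kernel. The correct argument is therefore to invoke that identification and then only verify $\Ker_{L^2}\Delta_L=\{0\}$; there is nothing further to justify about weights near $\delta=-1$.

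You also defer the two essential ingredients, the $L^2$-estimate and the non-existence of $L^2$ conformal Killing vector fields, as citations. In the paper these \emph{are} the content of Appendix~\ref{secNoL2CKV}: the estimate follows from the Bochner identity~\eqref{eqBochnerVectLaplacian} and the Ricci bound $\ric\leq -\tfrac{n-1}{2}g$ near infinity, while the triviality of the $L^2$-kernel requires more than an integration by parts, namely boundary regularity to show the associated conformal Killing field $X$ extends in $C^1$ to $\Mbar$ and vanishes to first order on $\partial M$, followed by extension by zero and Maxwell's low-regularity unique continuation result. Calling the remaining work ``routine bookkeeping'' understates this; it is where the analytic work of the proposition actually resides.
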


The proof of this proposition is carried out in Appendix \ref{secNoL2CKV}. Its first corollary is the existence of TT-tensors:

\begin{cor}[Construction of TT-tensors]\label{corTTTensors}
Let $(M, g)$ be a $C^{l, \beta}$-asymptotically hyperbolic manifold with $l+\beta \geq 2$ and $\sigma_0$ a symmetric traceless 2-tensor. Then,
\begin{enumerate}
\item if $\sigma_0 \in W^{k-1, p}_\delta$ with $2 \leq k \leq l$, $1 < p < \infty$, and $\left|\delta + \frac{n-1}{p} - \frac{n-1}{2} \right| < \frac{n+1}{2}$, then there exists a unique 1-form $\psi \in W^{k, p}_\delta$ such that $\sigma = \sigma_0 + \lie \psi \in W^{k-1, p}_\delta$ is transverse.
\item if $\sigma_0 \in X^{k-1, p}_\delta$ with $2 \leq k \leq l$, $1 < p < \infty$, and $\delta \in (-1, n)$, then there exists a unique 1-form $\psi \in X^{k, p}_\delta$ such that $\sigma = \sigma_0 + \lie \psi \in X^{k-1, p}_\delta$ is transverse.
\item if $\sigma_0 \in \mathcal{C}^{k-1, \alpha}_\delta$ with $2 \leq k + \alpha \leq l$, $0 < \alpha < 1$, and $\delta \in (-1, n)$, then there exists a unique 1-form $\psi \in \mathcal{C}^{k, \alpha}_\delta$ such that $\sigma = \sigma_0 + \lie \psi \in X^{k-1, \alpha}_\delta$ is transverse.
\end{enumerate}
\end{cor}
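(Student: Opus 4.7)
The proof is essentially a direct application of the isomorphism theorem for the vector Laplacian (Proposition \ref{propIsomVectLaplacian}). The plan is as follows: by definition, $\sigma = \sigma_0 + \lie \psi$ is transverse if and only if $\nabla^i \sigma_{ij} = 0$, which amounts to
\[
\Delta_L \psi = -\divg \sigma_0,
\]
since $\nabla^i(\lie\psi)_{ij} = \Delta_L \psi_j$. So the task reduces to solving this linear equation in the appropriate functional space in each of the three cases.

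For case (1), since $\sigma_0 \in W^{k-1,p}_\delta$, the divergence $\divg \sigma_0$ lies in $W^{k-2,p}_\delta$. Under the weight condition $\left|\delta + \frac{n-1}{p} - \frac{n-1}{2}\right| < \frac{n+1}{2}$ and the regularity condition $k \leq l$, Proposition \ref{propIsomVectLaplacian} provides a unique $\psi \in W^{k,p}_\delta$ with $\Delta_L\psi = -\divg\sigma_0$. Then $\lie\psi \in W^{k-1,p}_\delta$ automatically (the Lie derivative costs one derivative), so $\sigma = \sigma_0 + \lie\psi$ belongs to $W^{k-1,p}_\delta$ and is transverse by construction. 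Cases (2) and (3) are handled identically, replacing $W^{k-1,p}_\delta$ by $X^{k-1,p}_\delta$ and $\mathcal{C}^{k-1,\alpha}_\delta$ respectively, and invoking the corresponding isomorphism statement from Proposition \ref{propIsomVectLaplacian} (whose weight condition is the simpler $\delta \in (-1, n)$ in these two cases).

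Uniqueness in each case follows directly from injectivity of $\Delta_L$ on the relevant space, which is part of the isomorphism statement. The only mild point of care is to verify that $\divg$ maps the source space boundedly into the target of $\Delta_L$; this is immediate from the definitions of the weighted Sobolev, local Sobolev, and H\"older spaces together with the fact that covariant differentiation of an element of one of these spaces produces an element of the same space with the regularity index decreased by one. I do not anticipate any genuine obstacle: once Proposition \ref{propIsomVectLaplacian} is granted, the corollary is a one-line consequence modulo bookkeeping of indices. (I note that the statement ``$\sigma \in X^{k-1,\alpha}_\delta$'' in part (3) appears to be a typo for $\mathcal{C}^{k-1,\alpha}_\delta$, which is what the argument actually gives.)
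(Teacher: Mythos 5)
Your proof is correct and follows exactly the route the paper intends: since $\Delta_L\psi = \divg(\lie\psi)$ by definition, transversality of $\sigma_0+\lie\psi$ is equivalent to $\Delta_L\psi=-\divg\sigma_0$, and one simply invokes the isomorphism of Proposition \ref{propIsomVectLaplacian} in each of the three scales (the paper omits the proof entirely as immediate). Your observation that the target space in part (3) should read $\mathcal{C}^{k-1,\alpha}_\delta$ rather than $X^{k-1,\alpha}_\delta$ is also well taken.
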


The second corollary of this proposition is the continuous dependence of $\psi$ with respect to $\phi$ in the vector equation:

\begin{prop}[Solution of the vector equation]\label{propVectEq}
Let $(M, g)$ be a $C^{l, \beta}$-asymptotically hyperbolic manifold with $l+\beta \geq 2$ and $\tau : M \to \bR$ a function such that $\tau \to 1$ at infinity. Define $\psi_\phi$ as the solution of \[\Delta_L \psi = (n-1) \phi^{\kappa+2-\epsilon} d \tau.\] Then the map $\phi \mapsto \psi_\phi$ is well defined and locally Lipschitz continuous when seen as a map between the following spaces:
\[
\begin{aligned}
X^{k-2, \infty}_+ \to W^{k, p}_\delta, & \text{ if $\tau - 1 \in W^{k-1, p}_\delta$ with $2 \leq k \leq l$, $1 < p < \infty$, and $\left|\delta + \frac{n-1}{p} - \frac{n-1}{2} \right| < \frac{n+1}{2}$;}\\
X^{k-2, \infty}_+ \to X^{k, p}_\delta, & \text{ if $\tau - 1 \in X^{k-1, p}_\delta$ with $2 \leq k \leq l$, $1 < p < \infty$, and $\left|\delta - \frac{n-1}{2} \right| < \frac{n+1}{2}$;}\\
C^{k-2, \alpha}_+ \to C^{k, \alpha}_\delta, & \text{ if $\tau - 1 \in C^{k-1, \alpha}_\delta$ with $2 \leq k + \alpha \leq l$, $0 < \alpha < 1$, and $\left|\delta - \frac{n-1}{2} \right| < \frac{n+1}{2}$.}
\end{aligned}
\]
Here the index $+$ is understood as follows:
\[X^{k-2, \infty}_+ = \left\{\phi \in X^{k-2, \infty}_0 : \exists\, \epsilon > 0 \text{ such that } \phi \geq \epsilon \text{ a.e.} \right\}\] with a similar definition for the $C^{k-2, \alpha}_+$ spaces. Note that $X^{k-2, \infty}_+$ (resp. $C^{k-2, \alpha}_+$) is an open subset in $X^{k-2, \infty}_0$ (resp. $C^{k-2, \alpha}_0$).
\end{prop}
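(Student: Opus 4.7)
The plan is to factor the map $\phi \mapsto \psi_\phi$ as the composition
\[
\phi \,\longmapsto\, \phi^{\kappa+2-\epsilon} \,\longmapsto\, (n-1)\,\phi^{\kappa+2-\epsilon}\, d\tau \,\longmapsto\, \Delta_L^{-1}\bigl((n-1)\,\phi^{\kappa+2-\epsilon}\, d\tau\bigr).
\]
The last arrow is a bounded linear map from $W^{k-2,p}_\delta$ (resp.\ $X^{k-2,p}_\delta$, $C^{k-2,\alpha}_\delta$) into $W^{k,p}_\delta$ (resp.\ $X^{k,p}_\delta$, $C^{k,\alpha}_\delta$) directly by the isomorphism theorem for the vector Laplacian (Proposition \ref{propIsomVectLaplacian}), since the indices imposed in the statement all fall inside its isomorphism range. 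It therefore suffices to show that the nonlinear map $\phi \mapsto \phi^{\kappa+2-\epsilon} d\tau$ is locally Lipschitz with values in the appropriate target space.

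For the composition with the power, I would use that membership in $X^{k-2,\infty}_+$ provides both a uniform positive lower bound $\phi \geq \epsilon_0 > 0$ and an upper bound, so that $s \mapsto s^{\kappa+2-\epsilon}$ is smooth on a compact subinterval of $(0,\infty)$ containing the range of $\phi$. A classical Moser-type nonlinear composition estimate then gives, in each Möbius chart, local Lipschitz continuity of the map $\phi \mapsto \phi^{\kappa+2-\epsilon}$ as a map $W^{k-2,\infty}(B_1) \to W^{k-2,\infty}(B_1)$ (resp.\ $C^{k-2,\alpha}(B_1) \to C^{k-2,\alpha}(B_1)$), with a constant depending only on $\epsilon_0$ and $\|\phi\|_\infty$. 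Taking the supremum over centers of Möbius charts yields local Lipschitz continuity of $\phi \mapsto \phi^{\kappa+2-\epsilon}$ from $X^{k-2,\infty}_+$ into $X^{k-2,\infty}_0$, and similarly from $C^{k-2,\alpha}_+$ into $C^{k-2,\alpha}_0$.

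For the multiplication by $d\tau$, the hypothesis $\tau - 1 \in W^{k-1,p}_\delta$ (resp.\ $X^{k-1,p}_\delta$, $C^{k-1,\alpha}_\delta$) immediately gives $d\tau \in W^{k-2,p}_\delta$ (resp.\ $X^{k-2,p}_\delta$, $C^{k-2,\alpha}_\delta$). By the Leibniz rule applied inside each Möbius ball, the derivatives up to order $k-2$ of $\phi^{\kappa+2-\epsilon}\, d\tau$ are finite linear combinations of products of a derivative of $\phi^{\kappa+2-\epsilon}$, which is uniformly bounded in $L^\infty$ by the previous step, and a derivative of $d\tau$, which lies in $L^p_\delta$ or in the corresponding weighted Hölder or local Sobolev space. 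Combining this bilinear continuity estimate with the Lipschitz estimate from the previous step and composing with $\Delta_L^{-1}$ yields the desired local Lipschitz continuity of $\phi \mapsto \psi_\phi$.

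The main technical point is the first arrow: verifying that $\phi \mapsto \phi^{\kappa+2-\epsilon}$ is locally Lipschitz with respect to the Möbius-chart-uniform $X^{k-2,\infty}_0$ norm. This is where the positive lower bound built into the definition of $X^{k-2,\infty}_+$ (resp.\ $C^{k-2,\alpha}_+$) is essential, since without it the function $s \mapsto s^{\kappa+2-\epsilon}$ would only be Hölder continuous near $s = 0$ and neither boundedness nor Lipschitz continuity could be expected.
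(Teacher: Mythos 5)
Your proposal matches the paper's (implicit) approach exactly: the paper labels this statement a ``second corollary'' of Proposition~\ref{propIsomVectLaplacian} and gives no further argument, and the intended reading is precisely your factorization through $\phi \mapsto \phi^{\kappa+2-\epsilon} \mapsto (n-1)\phi^{\kappa+2-\epsilon}d\tau \mapsto \Delta_L^{-1}(\cdot)$, with the isomorphism theorem handling the last arrow and a chart-wise Moser/Leibniz estimate handling the first two. One small remark: since $\kappa+2-\epsilon>1$, the function $s\mapsto s^{\kappa+2-\epsilon}$ is actually Lipschitz (even $C^1$) at $s=0$, so the lower bound on $\phi$ is not needed to control the zeroth-order difference; it becomes essential precisely when $k-2\ge 1$, because the \emph{higher} derivatives of $s^{\kappa+2-\epsilon}$ blow up at $s=0$, which is what forces the positivity assumption in $X^{k-2,\infty}_+$.
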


\subsection{The Lichnerowicz equation}\label{secLichnerowicz}
In this section we continue our construction of the mapping $F$ by studying the Lichnerowicz equation \[-\frac{4(n-1)}{n-2} \Delta \phi + \scal~\phi + n(n-1) \tau^2 \phi^{\kappa+1} - \left|\sigma\right|_g^2 \phi^{-\kappa-3} = 0,\] where $\sigma = \sigma_0 + \lie \psi \in L^\infty$. The construction is based on the monotony method.

\begin{prop}\label{propMonotony} Let $(M, g)$ be a $C^{l, \beta}$-asymptotically hyperbolic manifold with $l+\beta \geq 2$.
Assume that there exist two (strictly) positive functions $\phi_\pm \in X^{2, p}_0$, where $p > n$,  such that
\[
\left\lbrace
\begin{aligned}
-\frac{4(n-1)}{n-2} \Delta \phi_+ + \scal~\phi_+ + n(n-1) \tau^2 \phi_+^{\kappa+1} - \left|\sigma\right|_g^2 \phi_+^{-\kappa-3} & \geq 0,\\
-\frac{4(n-1)}{n-2} \Delta \phi_- + \scal~\phi_- + n(n-1) \tau^2 \phi_-^{\kappa+1} - \left|\sigma\right|_g^2 \phi_-^{-\kappa-3} & \leq 0,
\end{aligned}
\right.
\]
and such that $\phi_- \leq \phi_+$. Then there exists a solution $\phi \in X^{2, p}_0$ of the Lichnerowicz equation such that $\phi_- \leq \phi \leq \phi_+$.
\end{prop}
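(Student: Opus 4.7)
The strategy is the classical monotone iteration (Perron) method. Since $p > n$, the embedding $X^{2,p}_0 \hookrightarrow C^0_0$ ensures $\phi_\pm \in L^\infty$, and the strict positivity of $\phi_-$ yields constants $0 < m \leq M$ with $m \leq \phi_- \leq \phi_+ \leq M$ everywhere on $M$. Under the standing hypotheses $\scal$, $\tau$ and $|\sigma|_g$ are also in $L^\infty$. I would first rewrite the Lichnerowicz equation as a fixed-point problem: pick a constant $A > 0$ so large that
\[
u \mapsto F_A(u) := (A - \scal)\, u - n(n-1)\, \tau^2 u^{\kappa+1} + |\sigma|_g^2\, u^{-\kappa-3}
\]
is non-decreasing in $u$ on $[m, M]$ at every point of $M$; a sufficient choice is
\[
A \geq \|\scal\|_\infty + n(n-1)(\kappa+1) \|\tau\|_\infty^2 M^\kappa + (\kappa+3) \|\sigma\|_\infty^2 m^{-\kappa-4}.
\]
Setting $L_A := -\tfrac{4(n-1)}{n-2}\Delta + A$, the equation reads $L_A \phi = F_A(\phi)$, and the sub-/super-solution inequalities become $L_A \phi_- \leq F_A(\phi_-)$ and $L_A \phi_+ \geq F_A(\phi_+)$.

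The next step is to invoke that $L_A : X^{2,p}_0 \to X^{0,p}_0$ is an isomorphism. Uniqueness is immediate from the maximum principle since $A > 0$, and existence follows from the Fredholm theory for weighted local Sobolev spaces proved in Appendix \ref{secFredholm} (parallel to Proposition \ref{propIsomVectLaplacian}; the indicial analysis for $-\Delta + A$ at $\delta = 0$ presents no obstruction once $A > 0$). One then iterates, setting $\phi_0 := \phi_-$ and $\phi_{k+1} := L_A^{-1}\bigl(F_A(\phi_k)\bigr)$. The sub-solution inequality combined with the maximum principle gives $\phi_0 \leq \phi_1$, and the induction $\phi_- \leq \phi_k \leq \phi_{k+1} \leq \phi_+$ then propagates by the monotonicity of $F_A$ on $[m, M]$ (together with the super-solution inequality for the upper bound).

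Finally, the sequence $\{\phi_k\}$ is monotone and pointwise bounded between $m$ and $M$, hence converges pointwise to some $\phi$ with $m \leq \phi \leq M$. By dominated convergence, $F_A(\phi_k) \to F_A(\phi)$ in $L^p_{\mathrm{loc}}$, and interior elliptic estimates applied to $L_A \phi_{k+1} = F_A(\phi_k)$ upgrade this to $\phi_{k+1} \to \phi$ in $W^{2,p}_{\mathrm{loc}}$, so that $\phi$ solves $L_A \phi = F_A(\phi)$ strongly. Since $F_A(\phi)$ is bounded and the M\"obius balls have uniformly bounded volume, $F_A(\phi) \in L^\infty \subset X^{0,p}_0$; the isomorphism then yields $\phi \in X^{2,p}_0$, which is the desired solution.

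The main obstacle I anticipate is securing the isomorphism of $L_A$ on $X^{2,p}_0$ at the borderline weight $\delta = 0$: although the analogous statement for the vector Laplacian (Proposition \ref{propIsomVectLaplacian}) and the general Fredholm theorem of Appendix \ref{secFredholm} strongly suggest it holds, one must verify that the shift by $A > 0$ really trivializes both kernel and cokernel at this specific weight, and if not, one would fall back on the Schauder fixed-point argument of Proposition \ref{propSchauder} applied to the closed convex set $\{u \in X^{2,p}_0 : \phi_- \leq u \leq \phi_+\}$, with compactness supplied by a slight weight loss through Proposition \ref{propRellich}.
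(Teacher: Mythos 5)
Your monotone-iteration argument is correct and is precisely the standard proof the paper is pointing to when it cites \cite[Prop.\ 2.1]{Gicquaud} and \cite[Prop.\ 5.1]{Sakovich} in lieu of a written argument. The concern you flag at the end is in fact unfounded: writing $L_A = \tfrac{4(n-1)}{n-2}(-\Delta + c)$ with $c = \tfrac{(n-2)A}{4(n-1)} > 0$, the indicial equation $\delta(\delta-n+1) = c$ has roots $\tfrac{n-1 \pm \sqrt{(n-1)^2+4c}}{2}$, hence indicial radius $R = \tfrac12\sqrt{(n-1)^2+4c} > \tfrac{n-1}{2}$, so $\delta = 0$ lies strictly inside the Fredholm window $|\delta - \tfrac{n-1}{2}| < R$ of Theorem \ref{thmFredholmLocalSobolev}; the maximum principle (Lemma \ref{lmAPrioriX2p}) kills the $L^2$-kernel and zero index then gives the isomorphism, so the Schauder fallback is never needed.
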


The proof of this proposition is standard, see e.g. \cite[Proposition 2.1]{Gicquaud} or \cite[Proposition 5.1]{Sakovich} for similar statements.\\

We will assume that $\tau > 0$ everywhere on $M$, $\tau \to 1$ at infinity and $\tau \in X^{1, p}_0$, where $p > n$. By \cite[Theorem 1.1]{Gicquaud}, there exists a unique function $\phi_- > 0$, $\phi_- \in X^{2, p}_0$ such that

\begin{equation}
\label{eqPrescScal}
\left\lbrace
\begin{aligned}
-\frac{4 (n-1)}{n-2} \Delta \phi_- + \scal~\phi_- + n(n-1) \tau^2 \phi_-^{\kappa+1} & = 0,\\
\phi_- & \to 1 \quad\text{at}~\partial M.\\
\end{aligned}
\right.
\end{equation}
Note that $\phi_-$ is the solution of the prescribed scalar curvature equation $\hscal = -n(n-1) \tau^2$, where $\hscal$ is the scalar curvature of $(M,\ghat)$ with $\ghat = \phi_-^\kappa g$. The existence of $\phi_-$ will be used in the proof of the following statement.

\begin{prop}[Solution of the Lichnerowicz equation]\label{propLichEq}
Let $(M, g)$ be a $C^{l, \beta}$-asymptotically hyperbolic manifold with $l+\beta \geq 2$ and let $\sigma$ be a symmetric traceless 2-tensor. Assume that $\sigma \in L^\infty$. Then there exists a unique solution $\phi$  of the Lichnerowicz equation \eqref{eqLichnerowicz} such that $\phi \in X^{2, p}_+$, where $p > n$ \footnote{The condition that $\phi$ is uniformly bounded from below ensures that $\phi$, seen as a conformal factor, does not collapse the boundary at infinity.}, and there exist constants $a>0$ and $b > 0$, independent of $\sigma$ and such that $\phi \leq a \left(\left\|\sigma\right\|_{L^\infty}\right)^{\frac{1}{\kappa+2}} + b$. Furthermore, the map $L^\infty \ni \sigma \mapsto \phi \in L^\infty$ is Lipschitz-continuous on any subset of the form $\{\left\|\sigma\right\|_{L^\infty} \leq C\}$.
\end{prop}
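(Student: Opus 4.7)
The strategy is to invoke the monotony method of Proposition \ref{propMonotony}, constructing a sub-solution $\phi_-$ and a super-solution $\phi_+ \geq \phi_-$ for \eqref{eqLichnerowicz}. A sub-solution comes for free by taking $\phi_-$ to be the positive solution of the prescribed scalar curvature equation \eqref{eqPrescScal}: substituting $\phi_-$ into the Lichnerowicz operator yields exactly $-|\sigma|_g^2 \phi_-^{-\kappa-3} \leq 0$. Since $\phi_- \to 1$ at infinity and $\phi_- > 0$ is continuous on $M$, one has $\phi_- \geq c_0 > 0$ globally. For the super-solution I use the ansatz $\phi_+ = A \phi_-$ with a constant $A \geq 1$. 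Using \eqref{eqPrescScal} to cancel the linear part, the Lichnerowicz operator applied to $\phi_+$ collapses to
\[n(n-1)\tau^2 \phi_-^{\kappa+1}\bigl(A^{\kappa+1} - A\bigr) - A^{-\kappa-3}|\sigma|_g^2 \phi_-^{-\kappa-3},\]
which is non-negative as soon as $A^{2\kappa+4} - A^{\kappa+4} \geq c_1 \|\sigma\|_{L^\infty}^2$ for a constant $c_1 > 0$ depending only on lower bounds of $\tau$ and $\phi_-$. Choosing $A = a\|\sigma\|_{L^\infty}^{1/(\kappa+2)} + b$ with $a, b$ large enough produces a super-solution $\phi_+ \in X^{2,p}_0$ satisfying $\phi_- \leq \phi_+ \leq a'\|\sigma\|_{L^\infty}^{1/(\kappa+2)} + b'$. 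Proposition \ref{propMonotony} then yields the desired solution $\phi \in X^{2,p}_+$ obeying the claimed $L^\infty$ bound.

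Uniqueness of $\phi \in X^{2,p}_+$ is a consequence of the strict monotonicity in $\phi > 0$ of the nonlinearity $\phi \mapsto n(n-1)\tau^2\phi^{\kappa+1} - |\sigma|_g^2 \phi^{-\kappa-3}$: given two positive solutions $\phi_1, \phi_2$, I subtract the two Lichnerowicz equations, multiply by $u := \phi_1 - \phi_2$, and integrate over $M$, using that the pointwise quantities $n(n-1)\tau^2(\phi_1^{\kappa+1}-\phi_2^{\kappa+1})(\phi_1-\phi_2)$ and $-|\sigma|_g^2(\phi_1^{-\kappa-3}-\phi_2^{-\kappa-3})(\phi_1-\phi_2)$ are both non-negative and combine with the invertibility of the conformal Laplacian on $X^{2,p}_+$-functions decaying to zero to force $u \equiv 0$. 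In particular, any solution satisfies $\phi \geq \phi_- \geq c_0 > 0$, giving a uniform positive lower bound.

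For the Lipschitz continuity of $\sigma \mapsto \phi$ on $\{\|\sigma\|_{L^\infty} \leq C\}$, I apply the mean value theorem to the difference of the Lichnerowicz equations for $(\sigma_1, \phi_1)$ and $(\sigma_2, \phi_2)$, obtaining
\[-\frac{4(n-1)}{n-2}\Delta u + V u = \bigl(|\sigma_1|_g^2 - |\sigma_2|_g^2\bigr)\phi_1^{-\kappa-3}, \qquad u := \phi_1 - \phi_2,\]
with $V = \scal + (\kappa+1) n(n-1)\tau^2 \bar{\phi}^\kappa + (\kappa+3)|\sigma_2|_g^2 \widetilde{\phi}^{-\kappa-4}$ for some $\bar{\phi}, \widetilde{\phi}$ pointwise between $\phi_1, \phi_2$. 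The a priori two-sided bounds $c_0 \leq \phi_i \leq a C^{1/(\kappa+2)} + b$ (uniform on $\{\|\sigma_i\|_{L^\infty} \leq C\}$) give uniform control of $V$ and of $\phi_1^{-\kappa-3}$, so the right-hand side is bounded in $L^\infty$ by $C'(C)\|\sigma_1 - \sigma_2\|_{L^\infty}$. Invertibility of $-\frac{4(n-1)}{n-2}\Delta + V$ on the appropriate space (by the same structure as in the uniqueness argument) then gives $\|u\|_{L^\infty} \leq L(C)\|\sigma_1 - \sigma_2\|_{L^\infty}$. The main obstacle throughout is calibrating the super-solution so that $\|\phi\|_{L^\infty}$ scales precisely as $\|\sigma\|_{L^\infty}^{1/(\kappa+2)}$ rather than faster; this precise scaling is the reason for the ansatz $\phi_+ = A\phi_-$ and will play a decisive role in the blow-up analysis of Section \ref{secBlowUp}.
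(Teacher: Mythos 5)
Your existence argument is correct and slightly different from the paper's: the paper takes $\phi_+ = \Lambda$ a constant, while you take $\phi_+ = A\phi_-$, which cleanly exploits the fact that $\phi_-$ solves \eqref{eqPrescScal} so that the linear-in-$\phi$ part cancels. Both ansatzes produce the same $\|\sigma\|_{L^\infty}^{1/(\kappa+2)}$ scaling and are essentially interchangeable.

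The uniqueness and continuity arguments, however, have a genuine gap, and it is the same gap in both. After subtracting the Lichnerowicz equations directly and applying the mean value theorem, the zeroth-order coefficient of $u = \phi_1 - \phi_2$ is
\[V \;=\; \scal \;+\; (\kappa+1)\,n(n-1)\,\tau^2\,\bar\phi^{\kappa} \;+\; (\kappa+3)\,|\sigma|_g^2\,\widetilde\phi^{-\kappa-4}.\]
You then invoke an unspecified ``invertibility of the conformal Laplacian'' to conclude, but on an asymptotically hyperbolic manifold $\scal$ is negative (and can be \emph{arbitrarily} negative in the interior of $M$, since the hypotheses of the proposition impose no lower bound beyond conformal compactness), while the only lower bound you control on the positive terms is via $\bar\phi \geq \phi_-$, with $\phi_-$ possibly small. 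There is no reason for $V > 0$, so neither a maximum principle nor an energy estimate applied to $-\frac{4(n-1)}{n-2}\Delta u + Vu$ will close the argument. Moreover, ``multiply by $u$ and integrate over $M$'' requires controlling boundary terms at infinity and the integrability of $|\nabla u|^2$, which is delicate here since you only know $u \in X^{2,p}_0 \subset L^\infty$ with no decay.

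The paper's key device, which you are missing, is the substitution $u = \log\phi$. In the logarithmic form of the Lichnerowicz equation, $\scal$ appears purely additively (no multiplication by a function of $u$), so it \emph{cancels} when you subtract the equations for $u_1$ and $u_2$. The resulting coefficient of $u_1 - u_2$ is
\[c \;=\; n(n-1)\kappa\tau^2 \int_0^1 e^{\kappa u_\lambda}\,d\lambda \;+\; (\kappa+4)|\sigma|_g^2\int_0^1 e^{-(\kappa+4)u_\lambda}\,d\lambda \;\geq\; n(n-1)\kappa\,\tau_0^2\,\phi_-^\kappa \;\geq\; c_0 > 0,\]
which is manifestly bounded below by a positive constant. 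One then applies a Cheng--Yau type maximum principle (Lemma \ref{lmAPrioriX2p}) rather than an integral identity, so no decay or integrability of $u$ is needed. An alternative repair of your approach would be to conformally change to $\gtil = \phi_-^\kappa g$, in which $\tilde{\scal} = -n(n-1)\tau^2$ and $\phitil_i = \phi_i/\phi_- \geq 1$, so that the analogous $V$ is bounded below by $n(n-1)\kappa\tau^2 > 0$; but as you state it, with raw $\scal$ in the potential and an integral argument, the proof does not go through.
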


The proof of this proposition relies on the following a priori estimate:

\begin{lemma}[An a priori estimate for $X^{2, p}_0$ functions]\label{lmAPrioriX2p} Let $(M, g)$ be a $C^{l, \beta}$-asymptotically hyperbolic manifold with $l+\beta \geq 2$. Suppose that functions $f \in L^\infty$ and $u \in W^{2, p}_{loc} \bigcap L^\infty$, where $p > n$, satisfy \[ - \Delta u + \langle b, \nabla u\rangle_g + c u = f,\] where $b$ is a bounded 1-form such that $|b| \leq A$, and $c \in L^\infty$ is bounded from below by a constant $c_0 > 0$, i.e. $c \geq c_0 > 0$. Then \[\left\|u \right\|_{L^\infty} \leq \frac{1}{c_0} \left\|f\right\|_{L^\infty}.\]
\end{lemma}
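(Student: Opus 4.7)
My plan is to run a maximum principle argument based on the Omori--Yau maximum principle, which is available on $(M,g)$ since an asymptotically hyperbolic manifold is complete and has sectional (hence Ricci) curvature bounded below: the sectional curvature tends to $-1$ near $\partial M$ and is bounded on every compact subset of $M$ by the $C^{l,\beta}$-conformal compactness assumption. Setting $K = c_0^{-1}\left\|f\right\|_{L^\infty}$ and $v = u - K$, the pointwise inequality $cK \geq c_0 K = \left\|f\right\|_{L^\infty} \geq f$ turns the hypothesis into the differential inequality
\[-\Delta v + \langle b, \nabla v\rangle_g + c\, v \leq 0 \quad \text{a.e.\ on } M,\]
and the goal becomes $\sup_M v \leq 0$. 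The corresponding lower bound $u \geq -K$ then follows by running the same argument on $\widetilde{v} = -u - K$, which satisfies an inequality of the same form because $cK \geq \left\|f\right\|_{L^\infty} \geq -f$.

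Suppose for contradiction that $\sup_M v > 0$. Omori--Yau provides a sequence $x_k \in M$ with $v(x_k) \to \sup_M v$, $|\nabla v(x_k)|_g \to 0$, and $\limsup_k \Delta v(x_k) \leq 0$. Evaluating the differential inequality at $x_k$ and rearranging gives
\[c(x_k)\, v(x_k) \leq \Delta v(x_k) - \langle b(x_k), \nabla v(x_k)\rangle_g;\]
the right-hand side has $\limsup \leq 0$, its first term by the choice of the sequence and its second term because $|b|_g \leq A$ while $|\nabla v(x_k)|_g \to 0$. Hence $\limsup_k c(x_k) v(x_k) \leq 0$. On the other hand, for $k$ sufficiently large $v(x_k) > 0$ and $c \geq c_0 > 0$ force $c(x_k) v(x_k) \geq c_0\, v(x_k) \to c_0 \sup_M v > 0$, a contradiction. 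Combining with the lower bound yields $\left\|u\right\|_{L^\infty} \leq c_0^{-1}\left\|f\right\|_{L^\infty}$.

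The main obstacle I anticipate is justifying Omori--Yau in our regularity class: classically it is stated for $C^2$ functions, whereas here we only have $u \in W^{2,p}_{loc}$ with $p > n$, hence $u \in C^{1,\alpha}_{loc}$ but not necessarily $C^2$. I would handle this by a mollification step in the M\"obius charts of Section \ref{secAH}: the uniform bounded geometry of $(M,g)$ lets one regularize $v$ chart-wise to obtain $v_\varepsilon \in C^\infty$ with $v_\varepsilon \to v$ locally in $W^{2,p}$, apply the classical Omori--Yau of Yau to $v_\varepsilon$ (whose commutator with the Laplacian can be controlled chart-wise using the uniform curvature bounds), and pass $\varepsilon \to 0$, using the $W^{2,p} \hookrightarrow C^{1,\alpha'}$ convergence to handle the gradient term and the a.e.\ convergence of $\Delta v_\varepsilon$ to $\Delta v$ for the Laplacian term. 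An alternative route is to invoke a Sobolev-regularity formulation of the Omori--Yau principle available on complete manifolds of bounded geometry, and otherwise the argument proceeds verbatim.
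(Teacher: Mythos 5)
Your proposal takes the same high-level strategy as the paper (shift $u$ by $c_0^{-1}\|f\|_{L^\infty}$ and show the shifted function is nonpositive via a maximum-principle-at-infinity argument), but the route through the classical Omori--Yau principle does not close in this regularity class, and the obstruction is more serious than you acknowledge.

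The difficulty is not merely that $u$ is only $W^{2,p}_{loc}$ rather than $C^2$. The coefficient $c$ and the right-hand side $f$ are only $L^\infty$ (in the application in Proposition \ref{propLichEq}, $c$ contains $|\sigma|_g^2$ with $\sigma\in L^\infty$, so it is genuinely not continuous), and the PDE holds only a.e.\ on $M$. Your argument evaluates the equation and the coefficient $c$ at the Omori--Yau points $x_k$; these form a measure-zero set, on which an a.e.\ identity gives no information, and $c(x_k)$ is not even well-defined. The mollification fix does not repair this: applying Omori--Yau to $v_\varepsilon$ produces a sequence $x_k^\varepsilon$ that depends on $\varepsilon$ and typically escapes to $\partial M$, so there is no compactness to pass to a joint limit; and $L^p_{loc}$ (or a.e.) convergence of $\Delta v_\varepsilon$ to $\Delta v$ again says nothing at specific points. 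Moreover, mollifying $v$ does not mollify the inequality --- $v_\varepsilon$ satisfies the differential inequality only up to commutator errors which depend on $\|v\|_{W^{2,p}}$ locally and on derivatives of the metric, and you have not shown these vanish uniformly along the (escaping) Omori--Yau sequence. In short, a pointwise maximum principle is the wrong tool here because neither the data nor the equation admit pointwise meaning.

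The paper avoids all of this by running the argument entirely in weak (integral) form: following Graham--Lee and Gilbarg--Trudinger (Theorem 8.1), it introduces localizing test functions $h_i$ in M\"obius charts, integrates by parts, absorbs the first-order term via Cauchy--Schwarz, and derives a contradiction from a volume lower bound on the level set where the near-maximum is achieved. Because every step is an integral inequality, $W^{2,p}_{loc}$ regularity of $u$ and $L^\infty$ regularity of $b$, $c$, $f$ are exactly enough, and no pointwise evaluation of the PDE or the coefficients is ever required. If you want to keep a maximum-principle flavor, you would need a version adapted to $W^{2,p}$ functions with measurable coefficients (e.g.\ a weak Omori--Yau in the spirit of Pigola--Rigoli--Setti, or an ABP-type argument), but as written the proposal has a genuine gap.
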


\begin{proof}
Consider $\util = u - \frac{1}{c_0} \left\|f\right\|_{L^\infty}$. It is obvious that $\util$ satisfies \[- \Delta \util + \langle b, \nabla \util\rangle_g + c \util = f - \frac{c}{c_0} \left\|f\right\|_{L^\infty} \leq 0.\] \\

First, let us show that $\util \leq 0$. Assume by contradiction that $\lambda = \sup_M \util > 0$. Denote \[F_i = \lambda + \frac{1}{i} - \util,\] then $F_i$ satisfies \[-\Delta F_i + \langle b, \nabla F_i\rangle_g + c F_i \geq c \left(\lambda + \frac{1}{i}\right).\] As in the Cheng and Yau maximum principle (see e.g. \cite[Theorem 3.5]{GrahamLee}), select a sequence $p_i \in M$ such that $\util(p_i) \geq \lambda - \frac{1}{i}$, and for each $i$ select a M\"obius chart centered at each $p_i$: $\Phi^1_{p_i}(p) = \left(\theta^1_i, \ldots, \theta^n_i\right)$ centered at $p_i$. Let $g_i(p) = 1 - \left[\left(\theta^1_i(p)\right)^2 + \cdots + \left(\theta^n_i(p)\right)^2\right]$. Due to the fact that $(M, g)$ is $C^2$-asymptotically hyperbolic, both $\left|\nabla g_i \right|_g$ and $\left|\Delta g_i\right|$ are bounded independently of $i$ (see \cite[proof of Theorem 3.5]{GrahamLee} for details). Let $\frac{1}{i} \leq \epsilon_i \leq \frac{3}{i}$ be a sequence to be chosen later and define \[h_i = \max \left\{0, \epsilon_i - \frac{F_i}{g_i}\right\}.\] Since $F_i \geq \frac{1}{i}$, it is easy to see that $h_i$ has compact support. We extend $h_i$ by zero outside the M\"obius chart. Multiplying the differential inequality for $F_i$ by $h_i$ and integrating by parts we find that
\begin{align*}
\left(\lambda + \frac{1}{i}\right) \int_M c~h_i d\mu_g
	& \leq \int_M \left(\left\langle \nabla h_i, \nabla F_i\right\rangle_g + h_i \left\langle b, \nabla F_i\right\rangle_g + c~h_i F_i\right) d\mu_g\\
	&  =   \int_M \left(\left\langle \nabla h_i, \nabla \left(g_i\frac{F_i}{g_i}\right)\right\rangle_g + h_i \left\langle b, \nabla \left(g_i\frac{F_i}{g_i}\right)\right\rangle_g + c~h_i F_i\right) d\mu_g\\
	&  =   \int_M \left(- g_i \left|\nabla h_i\right|_g^2 + \frac{F_i}{g_i} \left\langle \nabla h_i, \nabla g_i \right\rangle_g - h_i g_i \left\langle b, \nabla h_i \right\rangle_g + h_i \frac{F_i}{g_i} \left\langle b, \nabla g_i\right\rangle_g + c~h_i F_i\right) d\mu_g\\
	&  =   \int_M \left(- g_i \left|\nabla h_i\right|_g^2 + (\epsilon_i - h_i) \left\langle \nabla h_i, \nabla g_i \right\rangle_g - h_i g_i \left\langle b, \nabla h_i \right\rangle_g\right.\\
	&  \left. + h_i \frac{F_i}{g_i} \left\langle b, \nabla g_i\right\rangle_g + c~h_i F_i\right) d\mu_g\\
	&  =   \int_M \left(- g_i\left|\nabla h_i\right|_g^2-\epsilon_i h_i \Delta g_i-\left\langle\nabla\frac{h_i^2}{2}, \nabla g_i\right\rangle_g - h_i g_i \left\langle b, \nabla h_i \right\rangle_g\right.\\
	&  \left. + h_i \frac{F_i}{g_i} \left\langle b, \nabla g_i\right\rangle_g + c~h_i F_i\right) d\mu_g\\
	&  =   \int_M \left[- g_i\left|\nabla h_i\right|_g^2 - h_i g_i \left\langle b, \nabla h_i \right\rangle_g + h_i \frac{F_i}{g_i} \left\langle b, \nabla g_i\right\rangle_g + \left(\frac{h_i^2}{2} - \epsilon_i h_i\right) \Delta g_i + c~h_i F_i\right]d\mu_g.
\end{align*}
We now estimate the last three terms in the following way. Since $0 \leq h_i \leq \epsilon_i$, there exists a constant $C > 0$ independent of $i$ such that \[\int_M \left(\frac{h_i^2}{2} - \epsilon_i h_i\right) \Delta g_i d\mu_g \leq C \epsilon_i \int_M h_i d\mu_g.\] On $\supp h_i$ we have $F_i \leq \epsilon_i g_i \leq \epsilon_i$, hence the last term also satisfies \[\int_M c~h_i F_i d\mu_g \leq C \epsilon_i \int_M h_i d\mu_g\] for some other constant $C > 0$ (in what follows, $C$ can vary from line to line but remains independent of $i$). The third term can be estimated similarly. As a consequence, we obtain the following inequality:
\[c_0 \left(\lambda + \frac{1}{i}\right) \int_M h_i d\mu_g \leq \int_M \left(- g_i\left|\nabla h_i\right|_g^2 - h_i g_i \left\langle b, \nabla h_i \right\rangle_g \right)+ C \epsilon_i \int_M h_i d\mu_g.\]

Now remark that, since $F_i \geq \frac{1}{i}$, $\frac{1}{i \epsilon_i} \leq g_i \leq 1$ on $\supp h_i \supset \supp |\nabla h_i|$. From this we deduce that 
\[c_0 \left(\lambda + \frac{1}{i}\right) \int_M h_i d\mu_g \leq - \frac{1}{i \epsilon_i} \int_M \left|\nabla h_i\right|_g^2 d\mu_g + A \int_M h_i |\nabla h_i| d\mu_g + C \epsilon_i \int_M h_i d\mu_g.\]

When $i$ is large enough, this implies that \[\frac{1}{3} \int_M \left|\nabla h_i\right|_g^2 d\mu_g \leq A \int_M h_i |\nabla h_i| d\mu_g.\] Select such an $i$. The proof now goes as in \cite[Theorem 8.1]{GilbargTrudinger}. It easy to see that $\bar{\epsilon}=\inf \frac{F_i}{g_i}$ satisfies $\frac{1}{i}\leq \bar{\epsilon}< \frac{3}{i}$. Note that we have left a large freedom in the choice of the sequence $\epsilon_i$ and that all we have done so far does not rely on this sequence. We now assume that $\epsilon_i> \bar{\epsilon}$ so that that $\supp h_i$ is a non empty set. We introduce  $\Gamma_i(\epsilon_i) = \supp |\nabla h_i| \subset \supp h_i$, and note that \[\left\|\nabla h_i\right\|_{L^2(M)} \leq 3 A \left\| h_i\right\|_{L^2(\Gamma_i(\epsilon_i))}\] by the Cauchy-Schwarz inequality. Hence
\begin{align*}
\left\|h_i \right\|_{L^{\frac{2n}{n-2}}(M)} & \leq C \left\| \nabla h_i \right\|_{L^2(M)} \qquad\text{(Sobolev embedding)}\\
  & \leq C \left\| h_i\right\|_{L^2(\Gamma_i(\epsilon_i))}\\
  & \leq C \left\|h_i \right\|_{L^{\frac{2n}{n-2}}(\Gamma_i(\epsilon_i))} \vol(\Gamma_i(\epsilon_i))^{\frac{1}{n}} \qquad\text{(H\"older inequality)}.\\
\end{align*}
Since $h_i \neq 0$ by assumption, this implies that $\vol(\Gamma_i(\epsilon_i))$ is bounded from below by a constant independent of $\epsilon_i$. We now let $\epsilon_i \rightarrow \bar{\epsilon}$, and deduce from the above inequality that $\frac{F_i}{g_i}$ attains its maximum on a set of positive measure. But $\nabla \frac{F_i}{g_i} = 0$ on this set, which contradicts the lower bound on the volume of $\Gamma_i(\epsilon_i)$. We conclude that $\util \leq 0$.\\

Finally, note that if we replace $u$ by $-v$ in the equation, the first part of the proof yields $v- \frac{1}{c_0} \left\|f\right\|_{L^\infty}\leq 0$. Hence $u \geq -\frac{1}{c_0} \left\|f\right\|_{L^\infty}$, which completes the proof.
\end{proof}

We now return to the proof of the proposition:\\

\begin{proof}[Proof of Proposition \ref{propLichEq}]~\\

\noindent $\bullet$ \emph{Existence:} Select $\phi_+ = \Lambda$, a constant. Then $\phi_+$ is a super-solution on $M$ of the Lichnerowicz equation if and only if
\[\scal~\Lambda + n(n-1) \tau^2 \Lambda^{\kappa+1} - \left|\sigma\right|_g^2 \Lambda^{-\kappa-3} \geq 0\quad\text{on $M$},\] that is to say, if \[n(n-1) \tau^2 \Lambda^{2 \kappa+ 4} + \scal~\Lambda^{\kappa + 4} \geq \left|\sigma\right|_g^2\quad\text{on $M$.}\] Let $s_0 = \min_M \scal$ and $\tau_0 = \min_M \tau$. $\tau_0 > 0$ by assumption (since $\tau \neq 0$ on $M$ and $\tau \to 1$ at $\partial M$). The preceding inequality will be true if \[n(n-1) \tau_0^2 \Lambda^{2 \kappa + 4} + s_0 \Lambda^{\kappa + 4} \geq \left\|\sigma\right\|_{L^\infty}^2\quad\text{on $M$.}\] If $\Lambda \geq \left(\frac{-2 s_0}{n(n-1) \tau_0^2}\right)^{\frac{1}{\kappa}} = b$ then $n(n-1) \tau_0^2 \Lambda^{2 \kappa+4} + s_0 \Lambda^{\kappa + 4} \geq n(n-1) \frac{\tau_0^2}{2} \Lambda^{2 \kappa + 4}$. We conclude that $\phi_+ = \Lambda$ is a super-solution if \[\Lambda \geq \max\left\{b, \left( \frac{2 \left\|\sigma \right\|_{L^\infty}^2}{n(n-1) \tau_0^2}\right)^{\frac{1}{2\kappa+4}}\right\}.\] Then, by the monotony method (Proposition \ref{propMonotony}), there exists a solution $\phi$ such that $\phi \geq \phi_-$, where $\phi_-$ is the solution of \eqref{eqPrescScal}, and $\left\|\phi\right\|_{L^\infty} \leq \Lambda \leq a \left( \left\| \sigma \right\|_{L^\infty} \right)^{\frac{1}{\kappa+2}} + b$, where $a = \left(\frac{2}{n(n-1) \tau_0^2}\right)^{\frac{1}{2\kappa+4}}$.\\

\noindent $\bullet$ \emph{Uniqueness:} Set $u=\log \phi$. Then the Lichnerowicz equation \eqref{eqLichnerowicz} becomes 
\[- \frac{4(n-1)}{n-2} \left(\Delta u + \left| d u \right|^2_g\right) + \scal + n(n-1) \tau^2 e^{\kappa u} - \left|\sigma\right|_g^2 e^{-(4+\kappa) u} = 0.\]

Let $\phi_1$ and $\phi_2$ be positive solutions of \eqref{eqLichnerowicz}. Subtracting the equations for $u_1=\log \phi_1$ and $u_2=\log \phi_2$, we obtain

\begin{align*}
0 & = - \frac{4(n-1)}{n-2} \left(\Delta \left(u_1 - u_2\right) + \left| d u_1 \right|^2_g - \left| d u_2 \right|^2_g\right) + n(n-1) \tau^2 \left(e^{\kappa u_1} - e^{\kappa u_2}\right)\\
	& - \left|\sigma\right|_g^2 \left(e^{-(4+\kappa) u_1} - e^{-(4+\kappa) u_2}\right).
\end{align*}

Define $u_\lambda = \lambda u_2 + (1 - \lambda) u_1$. Then the previous equation becomes

\begin{align*}
0 & = - \frac{4(n-1)}{n-2} \left(\Delta \left(u_1 - u_2\right) + \left\langle d (u_1 + u_2), d (u_1 - u_2) \right\rangle_g\right)\\
	&   + \left(n(n-1)\kappa\tau^2\int_0^1 e^{\kappa u_\lambda} d\lambda+(\kappa+4)\left|\sigma\right|_g^2 \int_0^1 e^{-(\kappa+4) u_\lambda} d\lambda \right) (u_1 -u_2).
\end{align*}

Since $\phi_1$ and $\phi_2$ are uniformly bounded from below, i.e. $\phi_1, \phi_2 \geq \epsilon > 0$, and since $\tau$ is also bounded from below, there exists a certain constant $c_0 > 0$ such that \[n(n-1)\kappa\tau^2\int_0^1 e^{\kappa u_\lambda} d\lambda+(\kappa+4)\left|\sigma\right|_g^2 \int_0^1 e^{-(\kappa+4) u_\lambda} d \lambda \geq c_0.\] We are now in a position to apply Lemma \ref{lmAPrioriX2p} with $f = 0$ to conclude that $u_1 = u_2$. This proves uniqueness.\\

\noindent $\bullet$ \emph{Continuity:} The continuous dependency of $\phi \in L^\infty$ with respect to $\sigma \in L^\infty$ follows the same line. Let $\sigma_1, \sigma_2 \in L^\infty$ and denote by $\phi_1$ and $\phi_2$ the corresponding solutions of the Lichnerowicz equation \eqref{eqLichnerowicz} and by $u_1$ and $u_2$ their logarithms. We obtain the following equation:

\begin{align*}
0 & = - \frac{4(n-1)}{n-2} \left(\Delta \left(u_1 - u_2\right) + \left| d u_1 \right|^2_g - \left| d u_2 \right|^2_g \right)\\
	& + n(n-1) \tau^2 \left(e^{\kappa u_1} - e^{\kappa u_2}\right) - \left|\sigma_1\right|_g^2 e^{-(4+\kappa) u_1} + \left|\sigma_2\right|_g^2 e^{-(4+\kappa) u_2}\\
	& = - \frac{4(n-1)}{n-2} \left(\Delta \left(u_1 - u_2\right) + \left\langle d (u_1 + u_2), d (u_1 - u_2) \right\rangle_g\right)\\
	& + n(n-1) \tau^2 \left(e^{\kappa u_1} - e^{\kappa u_2}\right) - \left|\sigma_1\right|_g^2 \left(e^{-(4+\kappa) u_1} - e^{-(4+\kappa) u_2}\right)
		- \left(\left|\sigma_1\right|_g^2 - \left|\sigma_2\right|_g^2\right) e^{-(4+\kappa) u_2}\\
	& = - \frac{4(n-1)}{n-2} \left(\Delta \left(u_1 - u_2\right) + \left\langle d (u_1 + u_2), d (u_1 - u_2) \right\rangle_g\right)\\
	& + \left(n(n-1)\kappa \tau^2 \int_0^1 e^{\kappa u_\lambda} d\lambda + (\kappa+4) \left|\sigma_1\right|_g^2 \int_0^1 e^{-(\kappa+4) u_\lambda} d\lambda \right) (u_1-u_2)\\
	& + \left(\left|\sigma_1\right|_g^2 - \left|\sigma_2\right|_g^2\right) e^{-(4+\kappa) u_2}.
\end{align*}
From the existence part of the proof, we infer that $\phi_1, \phi_2 \geq \phi_-$. As a consequence, we have \[n(n-1)\kappa \tau^2 \int_0^1 e^{\kappa u_\lambda} d\lambda + (\kappa+4) \left|\sigma_1\right|_g^2 \int_0^1 e^{-(\kappa+4) u_\lambda} d\lambda \geq n(n-1)\kappa \tau^2 \phi_-^\kappa \geq c_0 > 0.\] Thus Lemma \ref{lmAPrioriX2p} implies that
\[\left\|u_1 - u_2\right\|_{L^\infty}
\leq \frac{1}{c_0} \left\|\left(\left|\sigma_1\right|_g^2 - \left|\sigma_2\right|_g^2\right) e^{-(4+\kappa) u_2}\right\|_{L^\infty}
\leq \frac{1}{c_0} \left\|\left(\left|\sigma_1\right|_g^2 - \left|\sigma_2\right|_g^2\right) \phi_-^{-(4+\kappa)}\right\|_{L^\infty}.\]
$\phi_-$ being bounded from below, we obtain that there exists a constant $A > 0$ such that \[\left\|u_1 - u_2\right\|_{L^\infty} \leq A \left\|\sigma_1 + \sigma_2\right\|_{L^\infty} \left\|\sigma_1 - \sigma_2\right\|_{L^\infty}.\] This proves the proposition.
\end{proof}

\begin{rk}
 In the proof of Proposition \ref{propLichEq}, we used the fact that $\tau$ is bounded from below by some positive constant. In the general case, when $\tau$ is allowed to have zeros or to change sign, it can be shown that the existence of a solution to the Lichnerowicz equation is equivalent to the existence of a solution to the corresponding prescribed scalar curvature equation \eqref{eqPrescScal}. We give a proof which slightly differs from \cite[Proposition 4.2]{MaxwellRoughCompact}:

 If $\phi$ is a bounded positive solution to the Lichnerowicz equation \eqref{eqLichnerowicz} then $\phi$ is a natural super-solution for the prescribed scalar curvature equation \eqref{eqPrescScal}. A sub-solution to Equation \eqref{eqPrescScal} can be constructed as in \cite[Proof of Theorem 3.1]{Gicquaud}. By Proposition \ref{propMonotony}, this yields a bounded positive solution to Equation \eqref{eqPrescScal}.

 To carry out the proof in the reverse direction, assume that Equation \eqref{eqPrescScal} admits a positive solution $\phi_-$. Then $\phi_-$ is a sub-solution of the Lichnerowicz equation. To construct a super-solution, we introduce the metric $\gtil = \phi_-^\kappa g$. The Lichnerowicz equation becomes
 \[-\frac{4(n-1)}{n-2} \widetilde{\Delta} \phitil + n(n-1) \tau^2 \left(\phitil^{\kappa+1}-\phitil\right) = \left|\sigmatil\right|_{\gtil}^2 \phitil^{-\kappa-3},\] where $\widetilde{\Delta}$ is the Laplacian associated to the metric $\gtil$, $\sigmatil = \phi_-^{-2} \sigma$ and $\phitil = \phi_-^{-1} \phi$. Let $u$ be the unique bounded solution of \[-\frac{4(n-1)}{n-2} \widetilde{\Delta} u + n(n-1) \kappa \tau^2 u = \left|\sigmatil\right|_{\gtil}^2,\] then it is easy to check that $\phi_+ = \phi_- (1+u)$ is a super-solution of the Lichnerowicz equation. Indeed, we remark that from the maximum principle $u \geq 0$ and that, by convexity, $(1+u)^{\kappa+1} - (1+u) \geq \kappa u$. Finally, from Proposition \ref{propMonotony} it follows that the Lichnerowicz equation admits a solution. 

Note that in the argument above we have been quite sloppy regarding the existence and uniqueness of $u$. The problem is that $\gtil$ might not be asymptotically hyperbolic in the sense of Definition \ref{defAH}, since the low regularity assumptions of Proposition \ref{propLichEq} do not guarantee enough smoothness of the conformal factor $\phi_-$. One possible way to get around this difficulty it is to rewrite the equation that $u$ satisfies in terms of the metric $g$ and to notice that $\phi_- \to 1$, $|d\phi_-|_g\to 0$ at infinity (see \cite[Proof of Theorem 3.1]{Gicquaud}). We leave the details as an exercise to the reader and refer to Section \ref{secHigherRegularity} for the method to prove isomorphism theorems for non-geometric operators.
\end{rk}

\subsection{Solutions of the sub-critical equations}\label{secSubCrit}
In this section we construct solutions of the sub-critical constraint equations \eqref{eqSubCritConstraint}. Recall that the proof is based on Proposition \ref{propSchauder}. Select $\tau : M \to \bR$ such that $\tau - 1 \in X^{1, p}_\delta$ for some $\delta \in \bR$ satisfying $0 < \delta < n$, and $p > n$, and such that $\tau > 0$ on $M$. Let also $\sigma_0 \in X^{1, p}_\delta$ be a TT-tensor (see Corollary \ref{corTTTensors}). Define $C_\Lambda \subset L^\infty$ as $C_\Lambda = \{ \phi \in L^\infty | \phi_- \leq \phi \leq \Lambda\}$, where $\phi_-$ is the solution of the prescribed scalar curvature equation \eqref{eqPrescScal} described in the previous subsection, and $\Lambda > 1$ is a constant to be chosen later. We define a map $F_\epsilon : C_\Lambda \to L^\infty$ as follows. First, select $\phi \in C_\Lambda$. By Proposition \ref{propVectEq}, there exists a unique $\psi \in X^{2, p}_\delta(M, T^*M)$ which solves the $\epsilon$-vector equation. To $\psi$ we can associate $\sigma = \sigma_0 + \lie \psi \in X^{1, p}_\delta(M, S_2^0(M))$, which is then used to construct a solution $\phitil \in X^{2, p}_0(M, \bR) \subset L^\infty(M, \bR)$ of the Lichnerowicz equation \eqref{eqSubCritConstraint}. Finally, we define $F_\epsilon(\phi) = \phitil$. Our construction can be summarized by the following diagram:

\[
\begin{array}{cccccccccc}
F_\epsilon: & C_\Lambda & \to & X^{2, p}_\delta(M, T^*M) & \to & X^{1, p}_\delta(M, S_2^0(M)) & \to & L^\infty(M, S_2^0(M)) & \to & L^\infty(M).\\
	    & \phi & \mapsto	& \psi & \mapsto	& \sigma & \mapsto	& \sigma & \mapsto	& \phitil
\end{array}
\]

The mapping $\sigma \in X^{1, p}_\delta \to \sigma \in L^\infty$ is compact by the Rellich theorem (Proposition \ref{propRellich}). From this it is easy to notice that the image $F_\epsilon \left(C_\Lambda\right)$ is precompact in $L^\infty$. Propositions \ref{propVectEq} and \ref{propLichEq} also imply that $F_\epsilon$ is Lipschitz, and hence continuous. It only remains to check that if $\Lambda > 1$ is well chosen then $F_\epsilon \left(C_\Lambda\right) \subset C_\Lambda$. Indeed, if $\phi \in C_\Lambda$, then \[\left\|\psi\right\|_{X^{2, p}_\delta} \leq C \left\|\phi^{\kappa+2-\epsilon} d\tau \right\|_{X^{0, p}_\delta} \leq C \left\| \phi \right\|^{\kappa+2-\epsilon}_{L^\infty}  \left\| d\tau \right\|_{X^{0, p}_\delta} \leq C \Lambda^{\kappa+2-\epsilon} \left\| d\tau \right\|_{X^{0, p}_\delta}\] (here the constant $C$ can vary from line to line but is independent of $\phi$ and $\epsilon$). Therefore
\begin{align*}
\left\| \sigma \right\|_{L^\infty}
	& \leq \left\| \sigma_0 \right\|_{L^\infty} + \left\| \lie \psi \right\|_{L^\infty}\\
	& \leq \left\| \sigma_0 \right\|_{L^\infty} + C \left\| \lie \psi \right\|_{X^{1, p}_\delta} \qquad\text{(Sobolev injection)}\\
	& \leq \left\| \sigma_0 \right\|_{L^\infty} + C \left\| \psi \right\|_{X^{2, p}_\delta}\\
	& \leq \left\| \sigma_0 \right\|_{L^\infty} + C \Lambda^{\kappa+2-\epsilon} \left\| d\tau \right\|_{X^{0, p}_\delta}.
\end{align*}
From Proposition \ref{propLichEq}, we deduce that
\begin{align*}
\sup_M F_\epsilon (\phi)
	&  =   \sup_M \phitil\\
	& \leq a \left(\left\| \sigma \right\|_{L^\infty}\right)^{\frac{1}{\kappa+2}} + b \qquad\text{(Proposition \ref{propLichEq})}\\
	& \leq a \left(\left\| \sigma_0 \right\|_{L^\infty} + C \Lambda^{\kappa+2-\epsilon} \left\| d\tau \right\|_{X^{0, p}_\delta} \right)^{\frac{1}{\kappa+2}} + b.
\end{align*}

As a consequence, $\phitil \leq \Lambda$ provided that $a \left( \left\| \sigma_0 \right\|_{L^\infty} + C \Lambda^{\kappa+2-\epsilon} \left\| d\tau \right\|_{X^{0, p}_\delta} \right)^{\frac{1}{\kappa+2}} + b \leq \Lambda$ which is always true if $\Lambda > 1$ is large enough. Since, by construction, we have $\phitil \geq \phi_-$, we conclude that $C_\Lambda$ is stable. Applying Proposition \ref{propSchauder}, we obtain the following result:

\begin{prop}[Solutions of the sub-critical equations]\label{propSubCrit}
Let $(M, g)$ be a $C^{l, \beta}$-asymptotically hyperbolic manifold with $l+\beta \geq 2$. Suppose that $\tau : M \to \bR$ is a positive function such that $\tau \in 1 + X^{1, p}_\delta$, where $p > n$ and $0<\delta<n$, and let $\sigma_0 \in X^{1, p}_\delta$ be a TT-tensor. Then for any $0 < \epsilon < 1$, there exists a solution $(\phi, \psi)$, such that $\phi\in X_0^{2,p}$ and $\psi\in X_\delta^{2,p}$, of the $\epsilon$-sub-critical constraint equations \eqref{eqSubCritConstraint}.
\end{prop}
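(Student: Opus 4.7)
My plan is to apply the modified Schauder fixed point theorem (Proposition \ref{propSchauder}) to the composite map
\[F_\epsilon : C_\Lambda \to L^\infty, \qquad \phi \mapsto \psi_\phi \mapsto \sigma_0 + \lie \psi_\phi \mapsto \phitil,\]
where $C_\Lambda = \{\phi \in L^\infty : \phi_- \leq \phi \leq \Lambda\}$, with $\phi_-$ being the solution to the prescribed scalar curvature equation \eqref{eqPrescScal}, and $\Lambda > 1$ a constant to be chosen. This is exactly the construction sketched just before the statement. The set $C_\Lambda$ is clearly closed and convex in $L^\infty$.

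First I would invoke Proposition \ref{propVectEq} to produce $\psi_\phi \in X^{2,p}_\delta$ as the unique solution of the $\epsilon$-vector equation, noting that $\phi \in C_\Lambda$ indeed lies in $X^{0,\infty}_+$ so that the hypothesis $\phi \geq \phi_- > 0$ is satisfied pointwise. Then Proposition \ref{propLichEq} produces the unique solution $\phitil = F_\epsilon(\phi) \in X^{2,p}_+ \subset L^\infty$ of the Lichnerowicz equation with right-hand side built from $\sigma = \sigma_0 + \lie \psi_\phi$. Continuity of $F_\epsilon$ on $C_\Lambda$ follows from the Lipschitz continuity statements in those two propositions, after composing with the continuous embedding $X^{1,p}_\delta \hookrightarrow L^\infty$ (using $p > n$, Proposition \ref{propRellich}).

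Next I would verify the stability $F_\epsilon(C_\Lambda) \subset C_\Lambda$. The lower bound $F_\epsilon(\phi) \geq \phi_-$ is automatic from the monotony method, since $\phi_-$ is a sub-solution of the Lichnerowicz equation for any nonzero $\sigma$. For the upper bound, Proposition \ref{propVectEq} combined with the Banach algebra property of $X^{1,p}_\delta$ (Lemma \ref{lmBanachAlgebra}) gives
\[\|\psi_\phi\|_{X^{2,p}_\delta} \leq C \Lambda^{\kappa+2-\epsilon} \|d\tau\|_{X^{0,p}_\delta},\]
hence $\|\sigma\|_{L^\infty} \leq \|\sigma_0\|_{L^\infty} + C \Lambda^{\kappa+2-\epsilon} \|d\tau\|_{X^{0,p}_\delta}$ by Sobolev embedding. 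Proposition \ref{propLichEq} then yields
\[\|F_\epsilon(\phi)\|_{L^\infty} \leq a\bigl(\|\sigma_0\|_{L^\infty} + C\Lambda^{\kappa+2-\epsilon}\|d\tau\|_{X^{0,p}_\delta}\bigr)^{1/(\kappa+2)} + b.\]
The crucial point — and the only place where sub-criticality is used — is that the exponent $\tfrac{\kappa+2-\epsilon}{\kappa+2} < 1$, so the right-hand side grows strictly sub-linearly in $\Lambda$. Therefore for $\Lambda$ large enough (depending on $\epsilon$) the inequality $\|F_\epsilon(\phi)\|_{L^\infty} \leq \Lambda$ holds, and $C_\Lambda$ is stable under $F_\epsilon$.

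Finally, precompactness of $F_\epsilon(C_\Lambda)$ in $L^\infty$ follows because the full mapping factors through $X^{1,p}_\delta \hookrightarrow L^\infty$ (for $\sigma$) and through $X^{2,p}_0 \hookrightarrow L^\infty$ (for $\phitil$), and both inclusions are compact by the weighted Rellich theorem (Proposition \ref{propRellich}) in view of the strict inequalities $1 - n/p > 0$ and $2 - n/p > 0$ and the permitted loss of decay (allowed by $\delta > 0$ when passing from $X^{1,p}_\delta$ to $L^\infty = L^\infty_0$). Applying Proposition \ref{propSchauder} yields a fixed point $\phi \in C_\Lambda$ of $F_\epsilon$, and the associated $\psi = \psi_\phi \in X^{2,p}_\delta$ then gives the desired solution of \eqref{eqSubCritConstraint}. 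The main obstacle is really the stability verification: without sub-criticality ($\epsilon = 0$) the exponent becomes $1$ and one cannot close the estimate to find a stable $\Lambda$, which is precisely why the sub-critical regularization is introduced.
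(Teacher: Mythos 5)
Your proposal is correct and reproduces the paper's proof essentially verbatim: the same map $F_\epsilon$ on the same set $C_\Lambda$, the same three verifications (continuity via Propositions \ref{propVectEq} and \ref{propLichEq}, compactness via the Rellich embedding $X^{1,p}_\delta\hookrightarrow L^\infty$, stability via the sub-linear growth of the a priori bound in $\Lambda$), and the same observation that the exponent $\frac{\kappa+2-\epsilon}{\kappa+2}<1$ is exactly what closes the stability estimate. The only (harmless) overkill is invoking Lemma \ref{lmBanachAlgebra}: the bound $\|\phi^{\kappa+2-\epsilon}\,d\tau\|_{X^{0,p}_\delta}\leq\|\phi\|_{L^\infty}^{\kappa+2-\epsilon}\|d\tau\|_{X^{0,p}_\delta}$ is simply multiplication by an $L^\infty$ factor and does not need the Banach algebra structure of $X^{1,p}_\delta$.
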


Let us recall the notations $M_\mu=\rho^{-1}(0;\mu)$ and $K_\mu=M\setminus M_\mu$. We will prove the following proposition regarding the behavior at infinity of solutions of the constraint equations:

\begin{prop}[Global behavior of the solutions]\label{propSubCritBehav} Assume further that $(M, g)$ has constant scalar curvature, $\scal = -n(n-1)$, that $\tau-1 \in X^{1, p}_\delta$, and that $\sigma_0 \in X^{1, p}_\delta$ for a certain $\delta$ such that $0 < \delta < n$ and $p > n$.
\begin{itemize}
\item There exists a constant $\eta \in (0; 1)$ such that any solution $(\phi, \psi)$ of the $\epsilon$-sub-critical constraint equations \eqref{eqSubCritConstraint}, with $\phi$ uniformly bounded from below, satisfies  $\phi_- \leq \phi \leq \eta^{-\frac{1}{\epsilon}}$.\\
\item There exists $\mu > 0$ such that any pair $(\phi, \psi)$ which is a solution of Equation \eqref{eqSubCritConstraint} with $\phi$ uniformly bounded from below satisfies $\phi \leq 1 + k \left(\frac{\rho}{\mu}\right)^{\delta'}$ on $M_\mu$, where $k = \max \left\{ \left\| \phi \right\|_{L^\infty} - 1, k_0 \right\}$, $k_0 > 0$ is some fixed constant (independent of $\phi$, $\psi$ and $\epsilon$), and $\delta'$ is such that $0 < \delta' \leq \frac{\delta}{\kappa+2}$.
\end{itemize}
\end{prop}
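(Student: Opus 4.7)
For the first statement, the plan is to iterate the a priori estimates for the Lichnerowicz and vector equations. Proposition \ref{propLichEq} gives $\|\phi\|_{L^\infty} \leq a \|\sigma\|_{L^\infty}^{1/(\kappa+2)} + b$, where $a$ and $b$ depend only on $\min_M \tau$ and $\scal$. The isomorphism theorem for $\Delta_L$ on $X^{2,p}_\delta$ (Proposition \ref{propIsomVectLaplacian}) together with the Sobolev embedding $X^{1,p}_\delta \hookrightarrow L^\infty$ (available since $p > n$) then yields
\[
\|\sigma\|_{L^\infty} \leq \|\sigma_0\|_{L^\infty} + C\|\lie \psi\|_{X^{1,p}_\delta} \leq \|\sigma_0\|_{L^\infty} + C \|\phi\|_{L^\infty}^{\kappa+2-\epsilon},
\]
with $C$ independent of $\epsilon$ and of the particular solution. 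Substituting this bound back into the first estimate and isolating the large-$M$ regime for $M := \|\phi\|_{L^\infty}$ produces an inequality of the form $M^{\epsilon/(\kappa+2)} \leq 2C_1$. Setting $\eta := (2C_1)^{-(\kappa+2)} \in (0,1)$ (after enlarging $C_1$ so that $2 C_1 > 1$) gives $M \leq \eta^{-1/\epsilon}$.

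For the second statement, the strategy is to exhibit a super-solution of the form $\phi_+ := 1 + k (\rho/\mu)^{\delta'}$ on $M_\mu$ and to compare it with $\phi$. The choice $k \geq \|\phi\|_{L^\infty} - 1$ immediately gives $\phi_+ = 1 + k \geq \phi$ on $\{\rho = \mu\}$, while standard AH asymptotic analysis of the Lichnerowicz equation (the $|\sigma|^2$ and $\tau^2 - 1$ inhomogeneities vanish at infinity, so $\scal\,\phi + n(n-1)\phi^{\kappa+1} = 0$ forces $\phi \to 1$) ensures $\phi - \phi_+ \to 0$ as $\rho \to 0$. The task is to check that
\[
L(\phi_+) := -\tfrac{4(n-1)}{n-2}\Delta\phi_+ + \scal\,\phi_+ + n(n-1)\tau^2\phi_+^{\kappa+1} - |\sigma|^2\phi_+^{-\kappa-3} \geq 0
\]
on $M_\mu$. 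Using $(1+v)^{\kappa+1} \geq 1 + (\kappa+1)v$ and $(1+v)^{-\kappa-3} \leq 1$ for $v := k(\rho/\mu)^{\delta'} \geq 0$, the AH expansion $-\Delta\rho^{\delta'} = \delta'(n-1-\delta')\rho^{\delta'} + O(\rho^{\delta'+1})$, and the pointwise bounds $|\tau^2 - 1| \leq C_\tau \rho^\delta$ and $|\sigma|^2 \leq C_\sigma \rho^{2\delta}$ via $X^{1,p}_\delta \hookrightarrow C^{0,\alpha}_\delta$ (with $C_\sigma \leq C(1 + \|\phi\|_{L^\infty}^{2(\kappa+2-\epsilon)})$), one obtains
\[
L(\phi_+) \geq A\,k\mu^{-\delta'}\rho^{\delta'} - C_\tau \rho^\delta - C_\sigma \rho^{2\delta},
\]
where $A = \tfrac{4(n-1)}{n-2}\delta'(n-1-\delta') + n(n-1)\kappa > 0$ for $0 < \delta' < n-1$. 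The tame error $C_\tau \rho^\delta$ is absorbed by the leading term as soon as $k \geq (C_\tau/A)\mu^\delta$, which determines an admissible $k_0$ after $\mu$ is shrunk.

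The delicate term $C_\sigma \rho^{2\delta}$ is large for small $\epsilon$, and the linearization used above is too weak to handle it alone; the remedy is to retain the full nonlinear contribution of $n(n-1)\tau^2\phi_+^{\kappa+1}$ and use the pointwise balance $\phi_+^{2\kappa+4} \sim |\sigma|^2/(n(n-1)\tau^2)$, which gives $\phi_+ - 1 \sim |\sigma|^{1/(\kappa+2)} \sim \rho^{\delta/(\kappa+2)}$. This is exactly where the condition $\delta' \leq \delta/(\kappa+2)$ enters: any faster decay rate for the barrier is incompatible with the $|\sigma|^2$ contribution in the regime where $\|\sigma\|_{L^\infty}$ may blow up as $\epsilon \to 0$. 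Once $\phi_+$ is validated as a super-solution, the comparison $w := \phi - \phi_+ \leq 0$ on $M_\mu$ follows by subtracting $L(\phi) = 0$ from $L(\phi_+) \geq 0$ and applying the mean value theorem to each nonlinear term: one obtains $L'(w) \leq 0$ where $L'$ is a linear elliptic operator whose zeroth-order coefficient equals $n(n-1)\kappa + O(\rho^\delta) + O(|\sigma|^2)$ and is hence uniformly positive on $M_\mu$ for $\mu$ small. Combined with $w \leq 0$ on $\{\rho = \mu\}$ and $w \to 0$ at infinity, the maximum principle of Lemma \ref{lmAPrioriX2p} delivers $w \leq 0$ on $M_\mu$. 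The main obstacle will be precisely this uniform verification of the super-solution property as $\epsilon \to 0$, and the threshold exponent $\delta/(\kappa+2)$ is exactly where the nonlinear compensation coming from $\phi_+^{\kappa+1}$ succeeds.
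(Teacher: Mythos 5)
Your argument for the first bullet (the uniform $\eta^{-1/\epsilon}$ bound) is essentially the paper's argument, merely repackaged: you substitute the vector-equation estimate $\|\sigma\|_{L^\infty}\leq\|\sigma_0\|_{L^\infty}+C\|\phi\|_{L^\infty}^{\kappa+2-\epsilon}$ into the a priori bound of Proposition~\ref{propLichEq}, whereas the paper constructs a constant super-solution $\Lambda$ directly; the inequalities produced are the same and the conclusion $\|\phi\|_{L^\infty}^{\epsilon/(\kappa+2)}\leq C$ follows identically. That part is fine.

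For the second bullet there is a genuine gap, and it sits exactly where you flag it: your displayed estimate for $L(\phi_+)$ uses only the linearization $(1+v)^{\kappa+1}\geq 1+(\kappa+1)v$ and $(1+v)^{-\kappa-3}\leq 1$, and you yourself observe that the resulting leading term $Ak\mu^{-\delta'}\rho^{\delta'}$ cannot absorb $C_\sigma\rho^{2\delta}$ once $C_\sigma\sim k^{2\kappa+4}$ is unbounded (as $\epsilon\to 0$): one would need $k\gtrsim k^{2\kappa+4}\mu^{2\delta}$, which fails for large $k$. You then say the remedy is to ``retain the full nonlinear contribution of $n(n-1)\tau^2\phi_+^{\kappa+1}$ and use the pointwise balance $\phi_+^{2\kappa+4}\sim|\sigma|^2$,'' but this is only a heuristic: you never produce a valid super-solution estimate with the nonlinear term in play, and if you drop the linear part entirely you can no longer absorb the Laplacian and the $\tau^2-1$ error that your $A\,k\mu^{-\delta'}\rho^{\delta'}$ term was handling. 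The missing ingredient is an interpolated convexity inequality of the form
\[
(1+v)^{\kappa+1}\ \geq\ 1+(\kappa+1)\,t\,v+(1-t)\,v^{\kappa+1},\qquad t\in[0;1],\ v\geq0,
\]
which splits the contribution of $n(n-1)\tau^2\phi_+^{\kappa+1}$ into a linear piece $(\kappa+1)tv_+$ (kept to make the $\Delta v_+$ and $\tau^2-1$ terms nonnegative for $t$ close to $1$ and $\mu$ small) and a genuinely nonlinear piece $(1-t)v_+^{\kappa+1}$. Combined with $\phi_+^{-\kappa-3}\leq v_+^{-\kappa-3}$, the second piece reduces the super-solution inequality to $n(n-1)\tau^2(1-t)\,v_+^{2\kappa+4}\geq|\sigma|^2$, i.e.\ to $k^{2\kappa+4}(\rho/\mu)^{(2\kappa+4)\delta'}\gtrsim k^{2\kappa+4}\rho^{2\delta}$, which holds on $M_\mu$ precisely because $(2\kappa+4)\delta'\leq 2\delta$. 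Without this split you cannot make the barrier argument rigorous, so as written the proof of the second statement is incomplete.
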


\begin{proof}Before giving the proof of this proposition, we prove the following inequality which will be needed at some point:

\begin{equation}
\label{eqInegConvex}
\forall v \geq 0, t \in [0; 1],\quad\left(1+v\right)^{\kappa+1} \geq 1 + (\kappa+1) t v + (1 - t) v^{\kappa+1}.
\end{equation}

\begin{itemize}
\item if $\kappa \geq 1$, one has \[\kappa (\kappa+1) \left(1+v\right)^{\kappa-1} \geq \kappa (\kappa+1) v^{\kappa-1}.\] Integrating this inequality with respect to $v$ twice, we see that \[\left(1+v\right)^{\kappa+1} \geq 1 + (\kappa+1) v + v^{\kappa+1}.\] Inequality \eqref{eqInegConvex} follows.\\
\item if $\kappa < 1$, then, due to the fact that the function $x \mapsto x^\kappa$ is concave, it is easy to show that
\[(\kappa+1) \left(1+v\right)^{\kappa} \geq (\kappa+1) \left(t + (1-t) v\right)^{\kappa} \geq (\kappa+1) \left[ t + (1-t) v^\kappa\right].\] Integrating this inequality with respect to $v$, one obtains \eqref{eqInegConvex}.
\end{itemize}

As the proof of Proposition \ref{propLichEq} shows, $\phi \geq \phi_-$, which establishes the first part of the first inequality. We now prove the second part. Set $v_+ = k \left(\frac{\rho}{\mu}\right)^{\delta'}$ and $\phi_+ = 1 + v_+$. We shall now find a condition which ensures that $\phi_+$ is a super-solution of the Lichnerowicz equation \eqref{eqLichnerowicz} with $\sigma = \sigma_0 + \lie \psi$ on $M\setminus K_\mu$. That is, $\phi_+$ should satisfy \[-\frac{4(n-1)}{n-2} \Delta \phi_+ - n(n-1)~\phi_+ + n(n-1) \tau^2 \phi_+^{\kappa+1} - \left|\sigma\right|_g^2 \phi_+^{-\kappa-3} \geq 0 \qquad\text{on $M_\mu$},\] where $\sigma = \sigma_0 + \lie \psi$.\\

By the inequality \eqref{eqInegConvex}, $\phi_+$ is a super-solution provided that \[-\frac{4(n-1)}{n-2} \Delta \phi_+ - n(n-1)~\phi_+ + n(n-1) \tau^2 \left(1 + (\kappa+1) t v_+ + (1 - t) v_+^{\kappa+1} \right) - \left|\sigma\right|_g^2 \phi_+^{-\kappa-3} \geq 0,\] for some $t \in [0; 1]$ to be determined. Decompose this inequality as follows:
\begin{align*}
0 & \leq \underbrace{-\frac{4(n-1)}{n-2} \Delta \phi_+ - n(n-1)~\phi_+ + n(n-1) \tau^2 \left(1 + (\kappa+1) t v_+\right)}_{(1)}\\
	&		 + \underbrace{n(n-1) \tau^2 (1 - t) v_+^{\kappa+1} - \left|\sigma\right|_g^2 \phi_+^{-\kappa-3}}_{(2)}.
\end{align*}

We prove that each term is positive provided that $\mu$ is small enough. A simple calculation shows that
\begin{align*}
\Delta \rho^{\delta'}
	& = \rho^2 \left( \overline{\Delta} \rho^{\delta'} - (n-2) \left\langle \frac{\overline{\nabla} \rho}{\rho}, \overline{\nabla} \rho^{\delta'} \right\rangle_{\gbar} \right)\\
	& = \delta'(\delta'-n+1) \rho^{\delta'} |\overline{\nabla}\rho|^2_{\gbar} + \delta' \rho^{\delta'+1} \overline{\Delta} \rho\\
	& = \delta'(\delta'-n+1) \rho^{\delta'} + \rho^{\delta'} O (\rho).\\
\end{align*}
We compute:
\begin{align*}
(1)	& =    -\frac{4(n-1)}{n-2} \Delta \phi_+ - n(n-1)~\phi_+ + n(n-1) \tau^2 \left(1 + (\kappa+1) t v_+\right)\\
		& =		 -\frac{4(n-1)}{n-2} \Delta v_+ - n(n-1)~\left(1 + v_+\right) + n(n-1) \tau^2 \left(1 + (\kappa+1) t v_+\right)\\
		& =		 -\frac{4(n-1)}{n-2} \Delta v_+ + n(n-1)~\left(\tau^2 (\kappa+1) t - 1\right) v_+ + n(n-1) \left(\tau^2 - 1\right)\\
		& =		 \left[-\frac{4(n-1)}{n-2} \delta' (\delta' - n + 1) + n(n-1)~\left(\tau^2(\kappa+1) t - 1\right) + o(1)\right] v_+ + n(n-1) \left(\tau^2 - 1\right)\\
		& =		 \left[\frac{4(n-1)}{n-2} (\delta' + 1) (n - \delta') + n(n-1) (\kappa+1) (\tau^2 t-1) + o(1)\right] v_+ + n(n-1) \left(\tau^2 - 1\right).\\
\end{align*}
Remark that the $o(1)$ term does not depend on $v_+$. If $\mu > 0$ is small enough and $t$ is close enough to $1$, then the bracketed term can be assumed to be greater than a given small positive constant on $M_\mu = \rho^{-1}(0; \mu)$. As a consequence, if $k$ is greater than a certain $k_0$, we can assume that $(1)$ is non-negative on $M \setminus K_\mu$.\\

We now turn our attention to the second term $(2)$. Using the decay properties of $\sigma_0$ and $d \tau$, it is easy to argue that there exist constants $\alpha, \beta > 0$ independent of the pair $(\phi, \psi)$ such that \[\left|\sigma\right|_g^2 \leq \left(\alpha + \beta~k^{2 \kappa + 4 - 2 \epsilon}\right) \rho^{2\delta} \leq \left(\alpha + \beta~k^{2 \kappa + 4}\right) \rho^{2\delta},\] if we assume that $k\geq 1$. Obviously, $(2)$ will be positive provided that \[n(n-1) \tau^2 (1-t)v_+^{2 \kappa + 4} \geq \left(\alpha + \beta~k^{2 \kappa + 4}\right) \rho^{2\delta}.\] To show that this inequality holds, it is enough to note that since $\delta \geq (\kappa + 2) \delta'$, we have \[n(n-1) \tau^2 (1-t) \frac{\rho^{(2\kappa+4)\delta'}}{\mu^{(2\kappa+4)\delta'}} \geq \left(\alpha + \beta~k^{2 \kappa + 4}\right) \rho^{2\delta}\qquad \text{on $M_\mu,$}\] provided that $\mu > 0$ is small enough.\\

We have proved that if $\mu > 0$ is small enough (but independent of $(\phi, \psi)$ !), then $\phi_+$ is a super-solution of the Lichnerowicz equation on $M_\mu$. Since $\phi_+ \geq \phi$ on $\partial(M\setminus K_\mu) = \rho^{-1}(\mu)$, the proof of Proposition \ref{propLichEq} shows that $\phi \leq \phi_+$ on $M \setminus K_\mu$.\\

We finally prove the first part of the proposition, namely, the existence of a uniform constant $\lambda$, such that $\phi_-\leq \phi\leq \lambda^{\frac{1}{\epsilon}}$. Recall that there exist constants $\alpha, \beta > 0$ such that $|\sigma|_g^2 \leq \alpha + \beta \left\|\phi\right\|_{L^\infty}^{2\kappa+4-2\epsilon}$. $\Lambda > 0$ will then be a super-solution for the Lichnerowicz equation if \[-n(n-1) \Lambda + n(n-1) \tau_0^2 \Lambda^{\kappa+1} \geq \left(\alpha + \beta \left\|\phi\right\|_{L^\infty}^{2\kappa+4-2\epsilon}\right) \Lambda^{-\kappa-3},\] where $\tau_0 = \inf_M \tau > 0$, that is to say if \[-n(n-1) \Lambda^{\kappa+4} + n(n-1) \tau_0^2 \Lambda^{2\kappa+4} \geq \alpha + \beta \left\|\phi\right\|_{L^\infty}^{2\kappa+4-2\epsilon}.\] If $\Lambda$ is large enough (say $\Lambda \geq \Lambda_0 > 1$), this inequality is true provided \[\frac{n(n-1)}{2} \tau_0^2 \Lambda^{2\kappa+4} \geq \beta \left\|\phi\right\|_{L^\infty}^{2\kappa+4-2\epsilon},\] that is, if $\Lambda \geq C \left\| \phi \right\|_{L^\infty}^{1 - \frac{\epsilon}{\kappa+2}}$ for some constant $C > 0$ independent of $\epsilon$, $\psi$, and $\phi$. Select now $\Lambda = \max \left\{ \Lambda_0, C \left\| \phi \right\|_{L^\infty}^{1 - \frac{\epsilon}{\kappa+2}}\right\}$, which is obviously a super-solution of the Lichnerowicz equation. However, since $\phi \to 1$ at infinity, there exists a compact subset $K'$ such that on $M \setminus K'$, $\phi \leq \Lambda_0 \leq \Lambda$. By a technique similar to the proof of Proposition \ref{propLichEq} (applied on a compact subset), we conclude that $\phi \leq \Lambda$ on $M$. In particular, if $\left\|\phi\right\|_{L^\infty}$ is larger than $\Lambda_0$, then $\left\|\phi\right\|_{L^\infty} \leq C \left\| \phi \right\|_{L^\infty}^{1 - \frac{\epsilon}{\kappa+2}}$. This proves the proposition.
\end{proof}

\begin{rks}
\begin{enumerate}
\item This proposition essentially asserts that when $\epsilon \to 0$, the blow-up phenomenon occurs for all solutions at a certain rate and in a compact subset instead of shifting towards the infinity.
\item The assumption that $(M, g)$ has constant scalar curvature is imposed to remove any spurious dependence of the behavior at infinity of the functions $\phi$ with respect to the scalar curvature. This assumption causes no loss of generality, see \cite{AnderssonChruscielFriedrich}.
\end{enumerate}
\end{rks}

In the next proposition, we give a more precise description of the decay for the solutions of the Lichnerowicz equation.

\begin{prop}[Individual behavior of the solutions]\label{propSubCritBehav2} Under the assumptions of Proposition \ref{propSubCritBehav}, there exists a constant $\mu > 0$ such that
\[\phi_- \leq \phi \leq 1 + \left\|\phi\right\|_{L^\infty}^{2 \kappa+4} \left(\frac{\rho}{\mu}\right)^\delta \quad\text{ on } M_\mu.\]
\end{prop}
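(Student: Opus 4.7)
The plan is to mimic the barrier construction from Proposition \ref{propSubCritBehav}, but with the sharper ansatz $\phi_+ = 1 + v_+$, where $v_+ = k (\rho/\mu)^\delta$ and $k = \|\phi\|_{L^\infty}^{2\kappa+4}$. The improvement over the exponent $\delta' \leq \delta/(\kappa+2)$ obtained there comes from using the convexity inequality \eqref{eqInegConvex} at its extreme $t=1$, namely $(1+v_+)^{\kappa+1} \geq 1 + (\kappa+1)v_+$, instead of splitting it with $t<1$. This way $|\sigma|_g^2 \phi_+^{-\kappa-3}$ will be absorbed by the \emph{linear} part $(\kappa+1) v_+$ rather than by the super-linear remainder $v_+^{\kappa+1}$ used in Proposition \ref{propSubCritBehav}; the decay of $v_+$ itself will match the $\rho^{2\delta}$ decay of $|\sigma|_g^2$ only through the smallness of the region $M_\mu$.

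Plugging $\phi_+$ into the Lichnerowicz equation, using $\scal = -n(n-1)$, $\phi_+^{-\kappa-3} \leq 1$, and the expansion $\Delta \rho^\delta = \delta(\delta-n+1)\rho^\delta + O(\rho^{\delta+1})$, the super-solution inequality reduces to
\[
\left[\frac{4(n-1)}{n-2}\delta(n-1-\delta) + n(n-1)\bigl((\kappa+1)\tau^2 - 1\bigr) + o(1)\right] v_+ + n(n-1)(\tau^2 - 1) - |\sigma|_g^2 \geq 0
\]
on $M_\mu$. At $\rho = 0$ one has $\tau = 1$, so the bracketed coefficient equals $\frac{4(n-1)}{n-2}(\delta+1)(n-\delta) > 0$ since $\delta \in (0,n)$; by continuity it is bounded below by some constant $C_1 > 0$ on $M_\mu$ for all $\mu$ small enough. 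The Sobolev embedding $X^{1,p}_\delta \hookrightarrow C^{0,\alpha}_\delta$ (valid because $p > n$) gives $|\tau^2 - 1| \leq C_2 \rho^\delta$, and combining the vector-equation estimate $\|\psi\|_{X^{2,p}_\delta} \leq C \|\phi\|_{L^\infty}^{\kappa+2-\epsilon} \|d\tau\|_{X^{0,p}_\delta}$ (from the proof of Proposition \ref{propSubCrit}) with $\sigma_0 \in X^{1,p}_\delta$ yields $|\sigma|_g^2 \leq (\alpha_0 + \beta_0 k)\rho^{2\delta}$. Dividing through by $\rho^\delta$ and using $\rho \leq \mu$, the inequality boils down to
\[
\frac{C_1 k}{\mu^\delta} \geq C_2 + (\alpha_0 + \beta_0 k)\mu^\delta,
\]
which holds for every $k \geq 1$ provided $\mu$ is chosen small enough, depending only on the fixed data $g, \tau, \sigma_0$.

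Since $\phi \geq \phi_-$ and $\phi_- \to 1$ at infinity, $\|\phi\|_{L^\infty} \geq 1$ and thus $k \geq 1$. On the boundary $\rho^{-1}(\mu)$ we have $\phi_+ = 1 + k \geq 1 + (\|\phi\|_{L^\infty} - 1) = \|\phi\|_{L^\infty} \geq \phi$, and at conformal infinity $\phi, \phi_+ \to 1$. A comparison argument on $M_\mu$ modelled on the uniqueness part of the proof of Proposition \ref{propLichEq} then yields $\phi \leq \phi_+$ on $M_\mu$. The main delicate point is that $\mu$ must be genuinely uniform in $(\phi,\psi,\epsilon)$; this is ensured because $C_1$ and the constants $C_2, \alpha_0, \beta_0$ depend only on the fixed data and on the uniform bounds $\tau \geq \tau_0 > 0$ and $\sigma_0 \in X^{1,p}_\delta$ from the hypotheses, not on the specific sub-critical solution.
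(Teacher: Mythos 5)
Your argument is correct and is essentially the same as the paper's: same barrier $\phi_+ = 1 + k(\rho/\mu)^\delta$ with $k = \|\phi\|_{L^\infty}^{2\kappa+4}$, same use of $(1+v_+)^{\kappa+1}\geq 1+(\kappa+1)v_+$ and $\phi_+^{-\kappa-3}\leq 1$, same indicial coefficient $\frac{4(n-1)}{n-2}(\delta+1)(n-\delta)$, same $|\sigma|_g^2$ bound, and the same comparison argument on $M_\mu$. The only minor slip is the line ``$1+k \geq 1 + (\|\phi\|_{L^\infty}-1)$'', which misreads $k$; the intended (and correct) inequality $\phi_+ = 1+k \geq \|\phi\|_{L^\infty}^{2\kappa+4} \geq \|\phi\|_{L^\infty}\geq\phi$ on $\rho^{-1}(\mu)$ follows directly from $\|\phi\|_{L^\infty}\geq 1$.
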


Remark that this proposition regards in some sense the individual behavior at infinity of the solutions, meaning that the upper bound for the function $\phi-1$ is proportional to $\left\|\phi\right\|_{L^\infty}^{2\kappa+4}$. See also \cite[Theorem 3.3]{Gicquaud}. Note also that the proposition remains true when $\epsilon = 0$, i.e. for solutions of the (critical) constraint equations.

\begin{proof} From the proof of Proposition \ref{propLichEq}, we know that $\phi \geq \phi_-$. Therefore we only need to prove the second inequality. Arguing as in the proof of the previous proposition, one has \[\left|\sigma\right|^2_g \leq \left(\alpha + \beta~\left\|\phi\right\|_{L^\infty}^{2 \kappa + 4}\right) \rho^{2\delta}.\] Since $\phi \geq \phi_-$ and $\phi_- \to 1$ at infinity, it follows that $\left\|\phi\right\|_{L^\infty} \geq 1$. Define $v_+ = k \left(\frac{\rho}{\mu}\right)^\delta$ and $\phi_+ = 1 + v_+$. It suffices to prove that if $k = \left\|\phi\right\|_{L^\infty}^{2\kappa+4}$ and $\mu > 0$ is small enough, then $\phi_+$ is a super-solution of the Lichnerowicz equation \eqref{eqLichnerowicz} on $M_\mu$. Indeed, in this case we have $\phi_+ \geq \phi$ on $\partial M_\mu = \rho^{-1}(\mu)$ and the proof of Proposition \ref{propLichEq} ensures that $\phi \leq \phi_+$ on $M_\mu$.

Since $\phi_+ \geq 1$, $\phi_+$ will be a super-solution provided that
\[-\frac{4(n-1)}{n-2} \Delta \phi_+ - n(n-1) \phi_+ + n(n-1) \tau^2 \phi_+^{\kappa+1} \geq \left|\sigma\right|^2_g.\] By convexity, one has $\phi_+^{\kappa+1} = (1+v_+)^{\kappa+1} \geq 1 + (\kappa+1) v_+$. Hence it will be enough to show that \[-\frac{4(n-1)}{n-2} \Delta v_+ - n(n-1) v_+ + n(n-1) (\kappa+1) \tau^2 v_+ + n(n-1) (\tau^2 - 1) \geq \left|\sigma\right|^2_g.\] Computing as in the previous proposition, we see that this inequality can be rewritten as \[\left[\frac{4(n-1)}{n-2} (\delta+1)(n-\delta) + o(1)\right] v_+ + n(n-1) (\tau^2 - 1) \geq \left|\sigma\right|^2_g,\] where the $o(1)$-term depends only on $\tau$ and $\rho$. Since $\delta \in (0; n)$, we can choose $\mu > 0$ small enough so that $\frac{4(n-1)}{n-2} (\delta+1)(n-\delta) + o(1) \geq \epsilon > 0$ for some constant $\epsilon$ on $M_\mu$ (remark that $\epsilon$ 	and $\mu$ can be chosen independent of $(\phi, \psi)$). We finally deduce that if \[\epsilon v_+ + n(n-1) (\tau^2-1) \geq \left(\alpha + \beta~\left\|\phi\right\|_{L^\infty}^{2 \kappa + 4}\right) \rho^{2\delta},\] then $\phi_+$ is a super-solution on $M_\mu$. Diminishing $\mu$ if necessary, it is easy to see that this inequality can be satisfied. This completes the proof of the proposition.
\end{proof}

\subsection{A remark on almost CMC solutions}\label{secRemarkAlmostCMC}

In this section we show how the technique that we have developed gives an extension of the result of \cite{IsenbergPark} by enlarging the range of decay at infinity for $\tau$ and $\sigma_0$. The main difference between the theorem below and Theorem \ref{thmMainTheorem} is that in the almost CMC case we can prove that the solution is unique.

\begin{theorem}[Existence and uniqueness of solutions for the almost CMC constraint equations]\label{thmNearCMC} Let $(M, g)$ be a $C^{l, \beta}$-asymptotically hyperbolic manifold with $l + \beta \geq 2$. Suppose that $p \in (n; \infty)$ and $\delta \in (0; n)$ are such that $\left|\delta + \frac{n-1}{p} - \frac{n-1}{2}\right| < \frac{n+1}{2}$ and let $\sigma_0 \in X^{2, p}_\delta$ be a TT-tensor. Let $\tau : M \to \bR$ be a positive function such that $\tau - 1 \in X^{1, p}_\delta$. Then there exists a positive constant $\lambda$ depending on $(M, g)$, $p$, $\delta$, $\left\|\sigma_0\right\|_{L^\infty}$ and $\min_M \tau$ (which is strictly positive by assumption) such that if \[\left\|d\tau\right\|_{X^{0, p}_\delta} < \lambda,\]  then the system \eqref{eqLichnerowicz}-\eqref{eqVector} admits a unique solution $(\phi, \psi)$, such that $\phi \in X^{2, p}_0(M, \bR)$, $\phi > 0$, $\phi \to 1$ at infinity, and $\psi \in X^{2, p}_\delta(M, T^*M)$. 
\end{theorem}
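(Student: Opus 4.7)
The plan is to upgrade the Schauder fixed point argument used for the sub-critical equations into a Banach (contraction) fixed point argument, taking advantage of the smallness of $\|d\tau\|_{X^{0, p}_\delta}$. Set $\epsilon = 0$ and work directly with the critical vector equation $\Delta_L \psi = (n-1) \phi^{\kappa+2} d\tau$. Exactly as in Section \ref{secSubCrit}, define a map $F : C_\Lambda \to L^\infty$ by $F(\phi) = \phitil$, where $\phitil \in X^{2, p}_0$ is the unique solution of the Lichnerowicz equation \eqref{eqLichnerowicz} associated to $\sigma = \sigma_0 + \lie \psi_\phi$ and $\psi_\phi \in X^{2, p}_\delta$ is the unique solution of the vector equation \eqref{eqVector} with right-hand side $(n-1) \phi^{\kappa+2} d\tau$. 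The existence and continuity of $\phi \mapsto \psi_\phi$ follows from Proposition \ref{propVectEq} (here we use the assumption $|\delta + \frac{n-1}{p} - \frac{n-1}{2}| < \frac{n+1}{2}$ to ensure the isomorphism property), and $\sigma \mapsto \phitil$ is handled by Proposition \ref{propLichEq}.

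First I would establish stability, i.e. find $\Lambda > 1$ such that $F(C_\Lambda) \subset C_\Lambda$. Using Proposition \ref{propVectEq} and the Sobolev embedding $X^{1, p}_\delta \hookrightarrow L^\infty$, one has
\[\|\sigma\|_{L^\infty} \leq \|\sigma_0\|_{L^\infty} + C \|\psi_\phi\|_{X^{2, p}_\delta} \leq \|\sigma_0\|_{L^\infty} + C \Lambda^{\kappa+2} \|d\tau\|_{X^{0, p}_\delta}.\]
Combining this with the bound $\|\phitil\|_{L^\infty} \leq a \|\sigma\|_{L^\infty}^{1/(\kappa+2)} + b$ from Proposition \ref{propLichEq}, stability holds provided
\[a \bigl(\|\sigma_0\|_{L^\infty} + C \Lambda^{\kappa+2} \|d\tau\|_{X^{0, p}_\delta}\bigr)^{1/(\kappa+2)} + b \leq \Lambda,\]
which is satisfied for $\Lambda$ large enough and $\|d\tau\|_{X^{0, p}_\delta}$ small enough; a convenient choice is to first fix $\Lambda_0 := 2(a \|\sigma_0\|_{L^\infty}^{1/(\kappa+2)} + b)$ and then impose that $\|d\tau\|_{X^{0, p}_\delta}$ is small enough so that the above holds with $\Lambda = \Lambda_0$.

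The key new step is the contraction estimate. For $\phi_1, \phi_2 \in C_{\Lambda_0}$, the difference $\psi_{\phi_1} - \psi_{\phi_2}$ solves $\Delta_L(\psi_{\phi_1} - \psi_{\phi_2}) = (n-1) (\phi_1^{\kappa+2} - \phi_2^{\kappa+2}) d\tau$, so Proposition \ref{propVectEq} and the mean value inequality $|\phi_1^{\kappa+2} - \phi_2^{\kappa+2}| \leq (\kappa+2) \Lambda_0^{\kappa+1} |\phi_1 - \phi_2|$ yield
\[\|\psi_{\phi_1} - \psi_{\phi_2}\|_{X^{2, p}_\delta} \leq C \Lambda_0^{\kappa+1} \|\phi_1 - \phi_2\|_{L^\infty} \|d\tau\|_{X^{0, p}_\delta}.\]
By Sobolev embedding, the associated tensors $\sigma_1, \sigma_2$ satisfy $\|\sigma_1 - \sigma_2\|_{L^\infty} \leq C \Lambda_0^{\kappa+1} \|\phi_1 - \phi_2\|_{L^\infty} \|d\tau\|_{X^{0, p}_\delta}$. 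The Lipschitz estimate in Proposition \ref{propLichEq} (combined with the uniform upper and lower bounds on $\phitil_1, \phitil_2$ to pass from $\|u_1 - u_2\|_{L^\infty}$ to $\|\phitil_1 - \phitil_2\|_{L^\infty}$) then gives
\[\|F(\phi_1) - F(\phi_2)\|_{L^\infty} \leq C' \Lambda_0^{\kappa+1} \|\phi_1 - \phi_2\|_{L^\infty} \|d\tau\|_{X^{0, p}_\delta},\]
with $C'$ depending only on $(M, g)$, $p$, $\delta$, $\|\sigma_0\|_{L^\infty}$ and $\min_M \tau$. Choosing $\lambda > 0$ so small that $C' \Lambda_0^{\kappa+1} \lambda < 1/2$ and so that the stability condition holds forces $F$ to be a contraction on the closed convex subset $C_{\Lambda_0}$ of the Banach space $L^\infty$, and the Banach fixed point theorem provides a unique fixed point $\phi \in C_{\Lambda_0}$.

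The main obstacle — or at least the point requiring some care — is uniqueness across all admissible solutions, not just within $C_{\Lambda_0}$. The standard trick is an a priori bound: combining the vector equation estimate with Proposition \ref{propLichEq} gives for any solution $(\phi, \psi) \in X^{2, p}_0 \times X^{2, p}_\delta$ with $\phi \to 1$ at infinity, the inequality
\[\|\phi\|_{L^\infty} \leq a \bigl(\|\sigma_0\|_{L^\infty} + C \|\phi\|_{L^\infty}^{\kappa+2} \|d\tau\|_{X^{0, p}_\delta}\bigr)^{1/(\kappa+2)} + b.\]
When $\|d\tau\|_{X^{0, p}_\delta} < \lambda$, a direct analysis of this inequality shows $\|\phi\|_{L^\infty} \leq \Lambda_0$, so every solution lies in $C_{\Lambda_0}$ and uniqueness of the fixed point applies. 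The regularity $\phi - 1 \in X^{2, p}_\delta$ (hence $\phi \to 1$ at infinity) and $\psi \in X^{2, p}_\delta$ follow from the regularity theorem for the Lichnerowicz equation using the decay of $\sigma$ and $\tau - 1$, in the spirit of Section \ref{secHigherRegularity} and Proposition \ref{propSubCritBehav2}.
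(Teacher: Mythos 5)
Your proposal is correct and follows essentially the same route as the paper: both define the same fixed-point map $F$ from $C_\Lambda$ to $L^\infty$, establish stability of $C_\Lambda$ under $F$ using the $L^\infty$ bound from Proposition \ref{propLichEq} together with the vector-equation estimate, derive the contraction estimate from the Lipschitz continuity of $\sigma \mapsto \phitil$ and the mean-value bound $|\phi_1^{\kappa+2} - \phi_2^{\kappa+2}| \leq (\kappa+2)\Lambda^{\kappa+1}|\phi_1-\phi_2|$, and use an a priori bound on $\max_M \phi$ to conclude that every admissible solution lies in $C_\Lambda$, so uniqueness of the Banach fixed point gives global uniqueness. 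The paper makes the a priori bound precise by observing that the function $m \mapsto m - a\bigl(\|\sigma_0\|_{L^\infty} + Cm^{\kappa+2}\|d\tau\|_{X^{0,p}_\delta}\bigr)^{1/(\kappa+2)} - b$ is increasing (for $\|d\tau\|_{X^{0,p}_\delta}$ small), but this is exactly the ``direct analysis'' you gesture at, and the ordering of the steps is the only real difference.
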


\begin{proof}
The proof goes as in Proposition \ref{propSubCrit} except that we work directly with the (critical) constraint equations \eqref{eqLichnerowicz}-\eqref{eqVector}. In particular, first we define a map $F$ from $C_\Lambda = \left\{ \phi \in L^\infty | \phi_- \leq \phi \leq \Lambda \right\}$ to $L^\infty$ and seek a condition that will ensure that $F$ maps $C_\Lambda$ into itself. The same calculation as in Subsection \ref{secSubCrit} shows that $C_\Lambda$ is stable provided that \[a \left( \left\| \sigma_0 \right\|_{L^\infty} + C \Lambda^{\kappa+2} \left\| d\tau \right\|_{X^{0, p}_\delta} \right)^{\frac{1}{\kappa+2}} + b \leq \Lambda,\] where the constants $a, b$ and $C$ are the same as before. As a consequence, if $a \left( C \left\| d\tau \right\|_{X^{0, p}_\delta} \right)^{\frac{1}{\kappa+2}} < 1$, then, for any $\Lambda > 0$ large enough, $F$ maps $C_\Lambda$ into itself.\\

Next we show that, if $(\phi, \psi)$ is a solution of the constraint equations, then the function $\phi$ belongs to $C_\Lambda$. Indeed, remark that the function \[f: m \mapsto m - a \left( \left\| \sigma_0 \right\|_{L^\infty} + C m^{\kappa+2} \left\| d\tau \right\|_{X^{0, p}_\delta} \right)^{\frac{1}{\kappa+2}} - b\] is increasing. Let $m = \max_M \phi$. Repeating the calculation at the beginning of Subsection \ref{secSubCrit}, we see that the following inequality holds: \[m \leq a \left( \left\| \sigma_0 \right\|_{L^\infty} + C m^{\kappa+2} \left\| d\tau \right\|_{X^{0, p}_\delta} \right)^{\frac{1}{\kappa+2}} + b.\] Hence $f(m) \leq 0$. Since $f(\Lambda) \geq 0$ by assumption, we obtain that $m \leq \Lambda$, hence  $\phi \in C_\Lambda$.\\

Now it remains to prove existence and uniqueness of a solution $(\phi, \psi)$ such that $\phi \in C_\Lambda$. We will do it by showing that the function $F: C_\Lambda \to C_\Lambda$ is a contracting map provided that $\left\|d\tau\right\|_{X^{0, p}_\delta}$ is small enough. First note that for any $\phi \in C_\Lambda$, the associated $\sigma$ is uniformly bounded in the $L^\infty$-norm by some constant $K$ depending only on $\left\|\sigma_0\right\|_{L^\infty}$, $\Lambda$ and $\left\|d\tau\right\|_{X^{0, p}_\delta}$. The function mapping $\sigma \in L^\infty$ to the solution $\phitil \in L^\infty$ of the Lichnerowicz equation \eqref{eqLichnerowicz} is Lipschitz on the set $\left\{ \sigma \in L^\infty \vert \left\|\sigma\right\| \leq K\right\}$, which means that there exists a constant $C > 0$ such that for all $\sigma_1$ and $\sigma_2$ in this set, one has \[\left\|\phitil_1 - \phitil_2\right\|_{L^\infty} \leq C \left\|\sigma_1 - \sigma_2\right\|_{L^\infty},\] where $\phitil_1$ (resp. $\phitil_2$) is the solution of the Lichnerowicz equation \eqref{eqLichnerowicz} associated to $\sigma_1$ (resp. $\sigma_2$). Hence, if $\phi_1, \phi_2 \in C_\Lambda$, one obtains (the constant $C$ can vary from line to line)

\begin{align*}
\left\|F(\phi_1) - F(\phi_2)\right\|_{L^\infty}
	& \leq C \left\|\sigma_1 - \sigma_2\right\|_{L^\infty}\\
	& \leq C \left\|\lie \psi_1 - \lie \psi_2\right\|_{L^\infty}\\
	& \leq C \left\|\psi_1 - \psi_2\right\|_{X^{2, p}_\delta}\\
	& \leq C \left\|\phi_1^{\kappa+2} - \phi_2^{\kappa+2}\right\|_{L^\infty} \left\|d\tau\right\|_{X^{0, p}_\delta}\\
	& \leq C \Lambda^{\kappa+1} \left\|d\tau\right\|_{X^{0, p}_\delta} \left\|\phi_1 - \phi_2\right\|_{L^\infty}.\\
\end{align*}

We conclude that if $C \Lambda^{\kappa+1} \left\|d\tau\right\|_{X^{0, p}_\delta} < 1$, then $F$ is a contracting map. By the Banach fixed point theorem, there exists a unique solution $(\phi, \psi)$ of the constraint equations such that $\phi \in C_\Lambda$. This completes the proof.
\end{proof}

\section{Convergence of the subcritical solutions and compactness}\label{secBlowUp}

In this section we will always assume given a $C^{l, \beta}$-asymptotically hyperbolic manifold $(M, g)$ with $l+\beta \geq 2$ and constant scalar curvature $\scal = -n(n-1)$, a positive function $\tau : M \to \bR$, such that $\tau - 1 \in X^{1, p}_\delta$ for some $p > n$ and $\delta \in (0; n)$, and a TT-tensor $\sigma_0 \in X^{1, p}_\delta$. Without loss of generality, we can assume that $\tau$ is non-constant: $d\tau \neq 0$.\\

In order to study the behavior of the solutions of the $\epsilon$-sub-critical equations when $\epsilon$ tends to zero we introduce a quantity that measures their growth. If $(\phi, \psi)$ is a solution of the $\epsilon$-sub-critical equations \eqref{eqSubCritConstraint}, we define the \textbf{energy} of $(\phi, \psi)$ by

\begin{equation}
\label{eqDefEnergy}
\gamma_\beta(\phi, \psi) = \int_M \rho^{-2 \beta} \left| \lie \psi \right|_g^2 d \mu_g,
\end{equation}
where $\beta$ satisfies $\beta + \frac{n-1}{2} < \delta$ (so that the integral converges) and $\left|\beta\right| < \frac{n+1}{2}$. The value of $\beta$ will appear to be not relevant since Proposition \ref{propSubCritBehav} asserts that everything ``happens" in a compact subset. To avoid difficulties with small energies, we set \[\gammatil_\beta(\phi, \psi) = \max \left\{\gamma_\beta(\phi, \psi), 1\right\}.\]

Our first aim is to prove the following proposition:

\begin{prop}\label{propBoundPhi}
There exists a constant $C > 0$ such that for any $\epsilon \in [0; 1)$, and for any solution $(\phi, \psi)$ of the $\epsilon$-constraint equations, we have \[\left\| \phi \right\|_{L^\infty} \leq C  \gammatil^{\frac{1}{2\kappa+4}}_\beta(\phi, \psi),\]
for any $\beta$ such that $-1<\beta+\frac{n-1}{2}<0$.
\end{prop}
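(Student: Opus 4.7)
The plan is to show that whenever $M := \|\phi\|_{L^\infty}$ is large, the energy $\gamma_\beta(\phi,\psi)$ grows at least like $M^{2\kappa+4}$; since the case $M \leq C$ is automatic (as $\gammatil_\beta \geq 1$ then absorbs the constant), this will yield $M \leq C\gammatil_\beta^{1/(2\kappa+4)}$. The argument has three movements: a pointwise lower bound on $|\lie\psi|_g$ at the maximum point of $\phi$ extracted from the Lichnerowicz equation, the propagation of that pointwise bound to a ball of fixed radius via an $\epsilon$-uniform H\"older estimate for $\lie\psi$, and integration to recover an $L^2$ lower bound.

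First, let $p^*$ be a point where $\phi$ attains its supremum $M$ (which exists as $\phi\to 1$ at infinity, so for $M>1$ the supremum is realized on a compact set). By Proposition \ref{propSubCritBehav} there is $\mu>0$ independent of $(\phi,\psi,\epsilon)$ such that, once $M$ is sufficiently large, $\phi<M$ on $M_{\mu/2}$; hence $p^*\in K_{\mu/2}$. Applying the weak maximum principle to the Lichnerowicz equation in \eqref{eqSubCritConstraint} at $p^*$ gives $-\Delta\phi(p^*)\geq 0$, so with $\scal=-n(n-1)$,
\[
|\sigma(p^*)|_g^2 \geq n(n-1)\tau_0^2 M^{2\kappa+4} + \scal\, M^{\kappa+4} \geq \tfrac{1}{2}n(n-1)\tau_0^2 M^{2\kappa+4}
\]
for $M$ large, where $\tau_0:=\min_M\tau>0$. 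Combined with the $L^\infty$-bound on $\sigma_0$ coming from $X^{1,p}_\delta\hookrightarrow L^\infty$ and the relation $\sigma=\sigma_0+\lie\psi$, this produces $|\lie\psi(p^*)|_g\geq c M^{\kappa+2}$.

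Second, I would obtain an $\epsilon$-uniform H\"older bound on $\lie\psi$. The crude estimate $\|\phi^{\kappa+2-\epsilon}d\tau\|_{X^{0,p}_\delta}\leq M^{\kappa+2}\|d\tau\|_{X^{0,p}_\delta}$ (valid for $\epsilon\in[0,1)$ and $M\geq 1$), combined with the $X^{2,p}_\delta\to X^{0,p}_\delta$ isomorphism for $\Delta_L$ of Proposition \ref{propIsomVectLaplacian} (our $\delta\in(0,n)$ lies in the required range $(-1,n)$), yields $\|\psi\|_{X^{2,p}_\delta}\leq CM^{\kappa+2}$ uniformly in $\epsilon$. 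Since $p>n$, the weighted Sobolev embedding of Proposition \ref{propRellich} then gives $\|\lie\psi\|_{C^{0,\alpha}_\delta}\leq CM^{\kappa+2}$ for some $\alpha\in(0,1-n/p)$; on the fixed compact set $K_{\mu/2}$ this controls the standard H\"older norm of $\lie\psi$ by the same quantity, up to a constant depending only on $\mu$.

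Third, the pointwise bound $|\lie\psi(p^*)|_g\geq cM^{\kappa+2}$ together with $\|\lie\psi\|_{C^{0,\alpha}(K_{\mu/2})}\leq CM^{\kappa+2}$ produces a radius $r_0>0$, depending only on $c$, $C$, $\alpha$, and $\mu$, such that $|\lie\psi|_g\geq (c/2)M^{\kappa+2}$ on the geodesic ball $B(p^*,r_0)$, which (after shrinking $r_0$ if necessary) lies in $K_{\mu/4}$ where $\rho$ is bounded below. Integrating,
\[
\gamma_\beta(\phi,\psi) \geq \int_{B(p^*,r_0)} \rho^{-2\beta} |\lie\psi|_g^2\, d\mu_g \geq c' M^{2\kappa+4},
\]
which is the claim. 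The main obstacle I anticipate is the localization of $p^*$ in a compact set independent of $(\phi,\psi,\epsilon)$---supplied by Proposition \ref{propSubCritBehav}---and ensuring that the elliptic constant for $\Delta_L$ in Proposition \ref{propIsomVectLaplacian} is independent of $\epsilon\in[0,1)$, which causes no trouble since $\epsilon$ enters only through the source term and $\kappa+2-\epsilon\leq \kappa+2$.
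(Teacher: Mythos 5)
Your argument is correct and takes a genuinely different (and more direct) route than the paper's. The paper runs a De~Giorgi--Moser type bootstrap: it tests the Lichnerowicz equation against increasing powers of $\phi - \phi_-$ (Lemma~\ref{lmEstimatePhi}), progressively improves the $L^p$-integrability and the weight of the rescaled $\psitil$ through Steps~1--3, and once $\sigmatil$ becomes H\"older it finishes with an integral version of the maximum principle (testing the rescaled Lichnerowicz equation against $\phitil^{\kappa+3}\max(\phitil-\lambda,0)^\epsilon$, sending $\epsilon\to 0^+$ and then $\lambda\to\sup\phitil$). Your proof short-circuits this by working pointwise at the maximum $p^*$ of $\phi$: the Lichnerowicz equation forces $|\lie\psi(p^*)|_g \gtrsim M^{\kappa+2}$, the vector equation together with the $X^{2,p}_\delta$-isomorphism for $\Delta_L$ gives an $\epsilon$-uniform bound $\|\lie\psi\|_{C^{0,\alpha}(K_{\mu/2})}\lesssim M^{\kappa+2}$, and the lower bound therefore propagates to a ball of $M$-independent radius around $p^*$ and integrates to $\gamma_\beta\gtrsim M^{2\kappa+4}$. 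This is considerably shorter and isolates the mechanism clearly: the competition between $n(n-1)\tau^2\phi^{\kappa+1}$ and $|\sigma|^2\phi^{-\kappa-3}$ at the maximum is what forces $|\lie\psi|$ to be large there. Two small points should be made explicit. First, the pointwise inequality $-\Delta\phi(p^*)\geq 0$ presupposes $\phi\in C^2$ near $p^*$, whereas a priori $\phi\in X^{2,p}_0\subset C^{1,\alpha}_{loc}$ only; this is repaired by noting that the right-hand side of the rearranged Lichnerowicz equation is $C^{0,\alpha}_{loc}$ (since $\tau$, $\sigma$ and $\phi$ are H\"older and $\phi$ is bounded below by $\inf\phi_->0$), so interior Schauder estimates bootstrap $\phi$ to $C^{2,\alpha}_{loc}$. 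Second, you invoke the second bullet of Proposition~\ref{propSubCritBehav} for the full range $\epsilon\in[0,1)$, while it is nominally stated for $\epsilon\in(0,1)$; inspection of its proof (and the remark following Proposition~\ref{propSubCritBehav2}) shows it does extend to $\epsilon=0$, but this should be said, since the case $\epsilon=0$ is exactly what is needed to prove compactness of the set of solutions.
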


Remark that in this proposition, we allow the value $\epsilon = 0$. This will be useful to prove compactness of the set of solutions of the constraint equations.\\

The proof consists in several steps. We first note the following inequality that we will need at some point: for any nonnegative $a$ and $b$ with $a \geq b$, and any $q \geq 1$, we have \[(a-b)^q \leq a^q - b^q.\] To prove this fact, simply remark that the inequality holds when $a = b$, and integrate the obvious inequality \[q (a-b)^{q-1} \leq q a^{q-1}\] with respect to $a$. Denote $\sigma = \sigma_0 + \lie \psi$ and let $\phi_-$ be the solution of the prescribed scalar curvature \eqref{eqPrescScal}. The main difficulties in the non-compact case compared to the compact case (see \cite{DahlGicquaudHumbert}) are that the manifold $(M, g)$ has infinite volume, and that one needs to handle weights properly. We first derive two important lemmas:

\begin{lemma}\label{lmEstimatePhi} Let $q \geq 0$ be arbitrary and $\beta'$ be such that $\beta' \leq \beta$ and $\beta' < - \frac{n-1}{2}$. Let $\phibar = \frac{\phi - \phi_-}{\gammatil_\beta^\alpha}$, where $\alpha = \frac{1}{2\kappa+4}$, and $\sigmatil = \gammatil_\beta^{-\frac{1}{2}} \sigma$, then there exists a constant $C_{\beta'}$ which depends only on $\beta'$ such that if $\phi$ is a solution of the Lichnerowicz equation, then  the following inequality holds
\begin{equation}
\label{eqLpIneq1}
-\frac{C_{\beta'}}{\gammatil_{\beta}^{\alpha \kappa}} \left(\int_M \rho^{-2\beta'} \phibar^{2\kappa+4+q} d\mu_g \right)^{\frac{\kappa+4+q}{2\kappa+4+q}}
+ n(n-1) \tau_0^2 \int_M \rho^{-2 \beta'} \phibar^{2\kappa+4+q} d\mu_g \leq \int_M \rho^{-2\beta'} \left|\sigmatil\right|_g^2 \phibar^q d\mu_g,
\end{equation}
where $\tau_0 = \min_M \tau > 0$.
\end{lemma}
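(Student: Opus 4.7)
I would subtract Equation~\eqref{eqPrescScal} from the Lichnerowicz equation~\eqref{eqLichnerowicz}, test against $\rho^{-2\beta'}(\phi-\phi_-)^{q+\kappa+3}$, integrate by parts, and then rescale to $\phibar$, $\sigmatil$. Setting $u:=\phi-\phi_-\geq 0$ and using the pointwise inequality $(\phi-\phi_-)^{\kappa+1}\leq \phi^{\kappa+1}-\phi_-^{\kappa+1}$ (which holds since $\phi\geq\phi_->0$ and $\kappa+1\geq 1$) together with $\scal=-n(n-1)$, one obtains the differential inequality
\[-\tfrac{4(n-1)}{n-2}\,\Delta u - n(n-1)\,u + n(n-1)\tau^2 u^{\kappa+1}\leq |\sigma|^2_g\,\phi^{-\kappa-3}.\]
Multiplying by $\rho^{-2\beta'} u^{q+\kappa+3}$ and integrating, the right-hand side is dominated by $\int_M\rho^{-2\beta'}|\sigma|^2_g u^q\,d\mu_g$, because $\phi\geq u\geq 0$ gives $\phi^{-\kappa-3}u^{q+\kappa+3}\leq u^q$. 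On the left-hand side, integration by parts on the Laplacian term (justified on the non-compact $M$ by the a priori decay $\phi-\phi_-\in X^{2,p}_\delta$ provided by Proposition~\ref{propSubCritBehav2}) produces
\[\tfrac{4(n-1)(q+\kappa+3)}{n-2}\int_M \rho^{-2\beta'}u^{q+\kappa+2}|\nabla u|_g^2\,d\mu_g\;-\;\tfrac{8(n-1)\beta'}{n-2}\int_M \rho^{-2\beta'-1}u^{q+\kappa+3}\langle\nabla\rho,\nabla u\rangle_g\,d\mu_g.\]
The first summand is nonnegative (it will eventually be discarded); the second is estimated via the asymptotically hyperbolic bound $|\nabla\rho|_g\leq C\rho$ together with Young's inequality $u^{q+\kappa+3}|\nabla u|_g\leq \tfrac{\eta}{2}u^{q+\kappa+2}|\nabla u|_g^2+\tfrac{1}{2\eta}u^{q+\kappa+4}$. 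Choosing $\eta$ small enough, its gradient part is absorbed into the positive first summand, leaving only a term of the form $-C_2(\beta')\int_M\rho^{-2\beta'}u^{q+\kappa+4}\,d\mu_g$. Combined with the zeroth-order terms of the differential inequality, this yields
\[-C_3(\beta')\int_M \rho^{-2\beta'}u^{q+\kappa+4}\,d\mu_g + n(n-1)\tau_0^2\int_M \rho^{-2\beta'}u^{q+2\kappa+4}\,d\mu_g\;\leq\;\int_M \rho^{-2\beta'}|\sigma|^2_g\,u^q\,d\mu_g.\]

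To conclude, I would substitute $u=\gammatil_\beta^{\alpha}\phibar$ and $\sigma=\gammatil_\beta^{1/2}\sigmatil$. The identity $\alpha(q+2\kappa+4)=1+\alpha q$ means that after dividing by $\gammatil_\beta^{1+\alpha q}$ the $u^{q+2\kappa+4}$-term and the right-hand side become scale-invariant, while the $u^{q+\kappa+4}$-term acquires exactly the factor $\gammatil_\beta^{\alpha(q+\kappa+4)-1-\alpha q}=\gammatil_\beta^{-\alpha\kappa}$ appearing in \eqref{eqLpIneq1}. The final step is H\"older's inequality with exponents $\tfrac{q+2\kappa+4}{q+\kappa+4}$ and $\tfrac{q+2\kappa+4}{\kappa}$, which gives
\[\int_M \rho^{-2\beta'}\phibar^{q+\kappa+4}\,d\mu_g\leq \left(\int_M \rho^{-2\beta'}\phibar^{q+2\kappa+4}\,d\mu_g\right)^{\tfrac{q+\kappa+4}{q+2\kappa+4}} \left(\int_M \rho^{-2\beta'}\,d\mu_g\right)^{\tfrac{\kappa}{q+2\kappa+4}}.\]
The last factor is finite precisely because $\beta'<-(n-1)/2$ (recall that the volume element is comparable to $\rho^{-n}d\mu_{\gbar}$ near infinity), and this encodes the dependence of $C_{\beta'}$ on $\beta'$.

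The main obstacle I anticipate is handling the weight term arising from integration by parts: it is this single contribution that forces the assumption $\beta'<-(n-1)/2$, through the interplay of the estimate $|\nabla\rho|_g\lesssim \rho$, Young's absorption into the good gradient term, and the H\"older step used to replace $\int\rho^{-2\beta'}\phibar^{q+\kappa+4}$ by a power of $\int\rho^{-2\beta'}\phibar^{q+2\kappa+4}$. A secondary, more routine point is the rigorous justification of the integration by parts on the non-compact manifold $M$, for which the a priori decay of $\phi-\phi_-$ at infinity supplied by Proposition~\ref{propSubCritBehav2} is sufficient to make the boundary contributions vanish.
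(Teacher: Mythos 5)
Your proposal is correct and follows the same overall strategy as the paper: subtract the prescribed scalar curvature equation \eqref{eqPrescScal}, multiply by $\rho^{-2\beta'}(\phi-\phi_-)^{\kappa+3+q}$, use the elementary pointwise inequalities $(\phi-\phi_-)^{\kappa+1}\leq\phi^{\kappa+1}-\phi_-^{\kappa+1}$ and $\phi^{-\kappa-3}(\phi-\phi_-)^{\kappa+3+q}\leq(\phi-\phi_-)^q$, integrate by parts, bound the resulting lower-order $\int\rho^{-2\beta'}(\phi-\phi_-)^{\kappa+4+q}$ contribution via H\"older against $\int\rho^{-2\beta'}(\phi-\phi_-)^{2\kappa+4+q}$ (using $\int_M\rho^{-2\beta'}d\mu_g<\infty$, which is where $\beta'<-\tfrac{n-1}{2}$ enters), and finally rescale by powers of $\gammatil_\beta$. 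The one place you deviate is in treating the weight term from the integration by parts: you bound $\langle\nabla\rho^{-2\beta'},\nabla u\rangle_g$ directly via $|\nabla\rho|_g\leq C\rho$ and a Young inequality whose gradient half is absorbed into the positive $\int\rho^{-2\beta'}u^{q+\kappa+2}|\nabla u|_g^2$ term, whereas the paper performs one more integration by parts, converting the cross term into $-\tfrac{1}{\kappa+4+q}\int(\phi-\phi_-)^{\kappa+4+q}\Delta\rho^{-2\beta'}\,d\mu_g$, then uses the pointwise bound $|\Delta\rho^{-2\beta'}|\leq C_{\beta'}\rho^{-2\beta'}$ and simply discards the nonnegative square term. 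Both routes produce the same $-C_{\beta'}\int\rho^{-2\beta'}u^{\kappa+4+q}$ error; the paper's is marginally cleaner because it needs no absorption and no tracking of $q$-dependence through the Young parameter $\eta$, although, as you can check, that dependence is indeed uniform for $q\geq 0$. Your explicit appeal to Proposition~\ref{propSubCritBehav2} to justify the vanishing of boundary contributions is a point the paper leaves implicit, and is worth keeping.
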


Remark that this inequality ensures that if $\int_M \rho^{-2\beta'} \left|\sigmatil\right|_g^2 \phibar^q d\mu_g$ is bounded then $\int_M \rho^{-2\beta'} \phibar^{2\kappa+4+q} d\mu_g$ cannot be arbitrary large since $\gammatil_\beta$ is bounded from below.

\begin{proof}
We first rewrite the Lichnerowicz equation \eqref{eqLichnerowicz} as
\[-\frac{4(n-1)}{n-2} \Delta \left(\phi - \phi_-\right) - n (n-1) \left(\phi - \phi_-\right) + n(n-1) \tau^2 \left(\phi^{\kappa+1} - \phi_-^{\kappa+1}\right) - \left|\sigma\right|_g^2 \phi^{-\kappa-3} = 0.\] Multiplying this equation by $\rho^{- 2 \beta'} (\phi-\phi_-)^{\kappa+3+q}$, and integrating over $M$ we obtain
\begin{align*}
0 & = \frac{4(n-1)}{n-2} \int_M \left\langle d  \left(\rho^{-2\beta'} (\phi-\phi_-)^{\kappa+3+q}\right), d  (\phi-\phi_-)\right\rangle_g d\mu_g
		- n(n-1) \int_M \rho^{-2\beta'} (\phi-\phi_-)^{q + \kappa + 4} d\mu_g\\
	& + n(n-1) \int_M \tau^2 \rho^{-2\beta'} \left(\phi^{\kappa+1} - \phi_-^{\kappa+1}\right) (\phi-\phi_-)^{\kappa+3+q} d\mu_g
		- \int_M \rho^{-2\beta'} \left|\sigma\right|_g^2 \phi^{-\kappa-3} (\phi-\phi_-)^{\kappa+3+q} d\mu_g,
\end{align*}
hence
\begin{align*}
0 & \geq \frac{4(n-1)}{n-2} \int_M \left\langle d  \left(\rho^{-2\beta'} (\phi-\phi_-)^{\kappa+3+q}\right), d  (\phi-\phi_-)\right\rangle_g d\mu_g
		- n(n-1) \int_M \rho^{-2\beta'} (\phi-\phi_-)^{q + \kappa + 4} d\mu_g\\
	& + n(n-1) \int_M \tau^2 \rho^{-2\beta'} (\phi-\phi_-)^{2\kappa+4+q} d\mu_g - \int_M \rho^{-2\beta'} \left|\sigma\right|_g^2 (\phi-\phi_-)^q d\mu_g.
\end{align*}
Expanding the first term, we get
\begin{align*}
	&   \frac{4(n-1)}{n-2} \int_M \left\langle d  \left(\rho^{-2\beta'} (\phi-\phi_-)^{\kappa+3+q}\right), d  (\phi-\phi_-)\right\rangle_g d\mu_g\\
	& = \frac{4(n-1)}{n-2} \int_M \left(\rho^{-2\beta'} \left\langle d  (\phi-\phi_-)^{\kappa+3+q}, d  (\phi-\phi_-)\right\rangle_g
		+ \frac{1}{\kappa+4+q} \left\langle d  \rho^{-2\beta'}, d  (\phi-\phi_-)^{\kappa+4+q}\right\rangle_g \right)d\mu_g\\
	& = \frac{4(n-1)}{n-2} \int_M \left(\rho^{-2\beta'} \frac{\kappa+3+q}{\left(\frac{\kappa+4+q}{2}\right)^2} \left| d  (\phi-\phi_-)^{\frac{\kappa+4+q}{2}}\right|^2_g
		- \frac{1}{\kappa+4+q} (\phi-\phi_-)^{\kappa+4+q} \Delta \rho^{-2\beta'} \right)d\mu_g\\
	& \geq -\frac{4(n-1)}{n-2} \frac{1}{\kappa+4+q} \int_M (\phi-\phi_-)^{\kappa+4+q} \Delta \rho^{-2\beta'} d\mu_g\\
	& \geq - C_{\beta'} \int_M (\phi-\phi_-)^{\kappa+4+q} \rho^{-2 \beta'} d\mu_g,
\end{align*}
where $C_{\beta'} > 0$ is a constant that depends only on $\beta'$ (but that can vary from line to line). As a consequence, we have proved that
\begin{align*}
0 & \geq - \left[n(n-1) + C_{\beta'}\right] \int_M \rho^{-2\beta'} (\phi-\phi_-)^{q + \kappa + 4} d\mu_g\\
	& + n(n-1) \int_M \tau^2 \rho^{-2\beta'} (\phi-\phi_-)^{2\kappa+4+q} d\mu_g - \int_M \rho^{-2\beta'} \left|\sigma\right|_g^2 (\phi-\phi_-)^q d\mu_g.
\end{align*}

Let $v = \frac{2 \kappa + 4 + q}{\kappa + 4 + q}$, $u = \frac{2 \kappa + 4 + q}{\kappa}$ and $\mu = \frac{\kappa}{2 \kappa + 4 + q} = \frac{1}{u}$. Then $\frac{1}{u} + \frac{1}{v} = 1$. We estimate the first term by using the H\"older inequality:
\begin{align*}
\int_M \rho^{-2\beta'} (\phi-\phi_-)^{q + \kappa + 4} d\mu_g
	&  = \int_M \rho^{-2\beta'\mu} \rho^{-2\beta'(1-\mu)} (\phi-\phi_-)^{q + \kappa + 4} d\mu_g\\
	& \leq \left(\int_M \left(\rho^{-2\beta'\mu}\right)^u d\mu_g \right)^{\frac{1}{u}} \left(\int_M \left(\rho^{-2\beta'(1-\mu)} (\phi-\phi_-)^{q + \kappa + 4}\right)^v d\mu_g \right)^{\frac{1}{v}}\\
	& \leq \left(\int_M \rho^{-2\beta'} d\mu_g \right)^{\frac{\kappa}{2 \kappa + 4 + q}} \left(\int_M \rho^{-2\beta'} (\phi-\phi_-)^{2 \kappa + 4 + q} d\mu_g \right)^{\frac{\kappa+4+q}{2\kappa+4+q}}.
\end{align*}
Substituting this inequality into the previous one, we get
\[
\begin{array}{c}
		 \displaystyle{-C_{\beta'} \left(\int_M \rho^{-2\beta'} (\phi-\phi_-)^{2 \kappa + 4 + q} d\mu_g \right)^{\frac{\kappa+4+q}{2\kappa+4+q}}
		 + n(n-1) \int_M \tau^2 \rho^{-2\beta'} (\phi-\phi_-)^{2\kappa+4+q} d\mu_g}\\
\leq \displaystyle{\int_M \rho^{-2\beta'} \left|\sigma\right|_g^2 (\phi-\phi_-)^q d\mu_g.}
	\end{array}
\]
Inequality \eqref{eqLpIneq1} follows.
\end{proof}

\begin{lemma}[Elliptic regularity]\label{lmModifiedSchauder}
Let $p, q \in (1; \infty)$ and let $\delta, \delta' \in \bR$ be such that $\delta' > \delta + \frac{n-1}{p}$. Let $\psi \in L^q_{\delta'}(M, T^* M)$ be such that $\Delta_L \psi \in L^p_\delta(M, T^*M)$. Then $\psi \in W^{2, p}_\delta(M, T^*M)$, and there exists a constant $C > 0$ independent of $\psi$ such that
\[\left\|\psi\right\|_{W^{2, p}_\delta} \leq C \left(\left\|\Delta_L \psi\right\|_{L^p_\delta} + \left\|\psi\right\|_{L^q_{\delta'}}\right).\]
\end{lemma}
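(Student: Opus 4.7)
The plan is to reduce the global weighted estimate to uniform local elliptic estimates on Möbius charts, then reassemble via a uniformly locally finite Möbius cover of $M$ as in the proof of Lemma \ref{lmDensity}.

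First, on each Möbius ball $B^{(i)}_r = (\Phi^r_{x_i})^{-1}(B_r) \subset M$ the pull-back of $g$ is uniformly equivalent in $C^{l,\beta}$ to the hyperbolic metric on $B_r \subset \mathbb{H}^n$, independently of the center $x_i$. Hence $\Delta_L$ transfers to a uniformly elliptic second-order operator on $B_2$ with uniformly bounded coefficients, and the classical Calderón--Zygmund interior estimate for elliptic systems gives
$$\|\psi\|_{W^{2,p}(B_1^{(i)})} \leq C\bigl(\|\Delta_L\psi\|_{L^p(B_2^{(i)})} + \|\psi\|_{L^q(B_2^{(i)})}\bigr)$$
with a constant $C>0$ independent of $i$, valid for any fixed $q \in (1,\infty)$.

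I would then pick a uniformly locally finite cover $\{B^{(i)}_1\}_i$ with $\rho_i := \rho(x_i)$, so that $\rho \sim \rho_i$ on each $B^{(i)}_2$. Raising the local estimate to the $p$-th power, multiplying by $\rho_i^{-\delta p}$ and summing, the finite-overlap property identifies the left-hand side with $\|\psi\|_{W^{2,p}_\delta}^p$ up to a constant, and the $\Delta_L\psi$ contribution with $C\|\Delta_L\psi\|_{L^p_\delta}^p$. The remaining piece is
$$S := \sum_i \rho_i^{-\delta p}\|\psi\|_{L^q(B_2^{(i)})}^p = \sum_i \rho_i^{(\delta'-\delta)p}\, b_i^{p/q},\qquad b_i := \rho_i^{-\delta' q}\|\psi\|_{L^q(B_2^{(i)})}^q,$$
where the same finite-overlap argument applied to the $L^q_{\delta'}$-norm yields $\sum_i b_i \leq C\|\psi\|_{L^q_{\delta'}}^q$.

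The main obstacle is to prove $S \leq C\|\psi\|_{L^q_{\delta'}}^p$, and this is exactly what forces the hypothesis $\delta' > \delta + \frac{n-1}{p}$. When $p \leq q$, I would apply Hölder with conjugate exponents $q/p$ and $q/(q-p)$ to obtain
$$S \leq \Bigl(\sum_i b_i\Bigr)^{p/q}\Bigl(\sum_i \rho_i^{(\delta'-\delta)pq/(q-p)}\Bigr)^{(q-p)/q};$$
the Möbius-cover comparison $\sum_i \rho_i^\gamma \sim \int_M \rho^\gamma\, d\mu_g \sim \int_{\overline{M}} \rho^{\gamma-n}\, d\mu_{\overline g}$ shows the last factor to be finite exactly when $\gamma = (\delta'-\delta)\frac{pq}{q-p} > n-1$, which follows from $\delta' > \delta + \frac{n-1}{p}$ since $\tfrac{q}{q-p}>1$. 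When $p > q$, the elementary bound $b_i^{p/q} \leq \bigl(\sum_j b_j\bigr)^{p/q-1} b_i$ together with $\sup_i \rho_i^{(\delta'-\delta)p} < \infty$ (valid since $\delta' > \delta$ and $\rho$ is bounded on $\overline{M}$) gives the same conclusion. Combining the two summations yields the claimed estimate $\|\psi\|_{W^{2,p}_\delta} \leq C\bigl(\|\Delta_L\psi\|_{L^p_\delta} + \|\psi\|_{L^q_{\delta'}}\bigr)$.
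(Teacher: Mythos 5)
Your proof is correct, and it establishes the lemma by a route that differs from the paper's in the key summation step. Both arguments start identically: cover $M$ by a uniformly locally finite family of unit Möbius balls $B_i$ with enlargements $\tilde B_i$, invoke the uniform interior Calder\'on--Zygmund estimate
\[
\|\psi\|_{W^{2,p}(B_i)} \leq C\left(\|\Delta_L\psi\|_{L^p(\tilde B_i)} + \|\psi\|_{L^q(\tilde B_i)}\right),
\]
raise to the $p$-th power, multiply by $\rho_i^{-\delta p}$, and sum. The divergence is in how the term $\sum_i \rho_i^{-\delta p}\|\psi\|_{L^q(\tilde B_i)}^p$ is controlled. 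The paper uses the crude bound $\rho_i^{-q\delta'}\int_{\tilde B_i}|\psi|^q\,d\mu_g \leq c\int_M\rho^{-q\delta'}|\psi|^q\,d\mu_g$, i.e.\ it replaces each local $L^q$ integral by the full $L^q_{\delta'}$-norm, so the sum factors as $c\|\psi\|_{L^q_{\delta'}}^p\sum_i\rho_i^{p(\delta'-\delta)}$ and converges precisely because $p(\delta'-\delta)>n-1$ (by the tail-sum estimate from Lemma~\ref{lmYoung2}). You instead exploit the finite-overlap property to keep the sharper bound $\sum_i b_i\lesssim\|\psi\|_{L^q_{\delta'}}^q$ with $b_i=\rho_i^{-\delta'q}\|\psi\|_{L^q(\tilde B_i)}^q$, and then redistribute the weights $\rho_i^{(\delta'-\delta)p}$ via H\"older (when $p\leq q$) or the elementary bound $b_i^{p/q}\leq(\sum_j b_j)^{p/q-1}b_i$ plus $\sup_i\rho_i^{(\delta'-\delta)p}<\infty$ (when $p>q$). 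Your convergence condition $(\delta'-\delta)\tfrac{pq}{q-p}>n-1$ in the case $p<q$ is strictly weaker than the paper's $p(\delta'-\delta)>n-1$, so your argument actually proves a marginally sharper statement; the price is the case analysis. Both variants are clean and either would do.
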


\begin{proof}
By \cite[Lemma 2.2]{LeeFredholm}, $M$ can be covered by a countable number of M\"obius charts $B_i = \left(\Phi_{x_i}\right)^{-1} (B_1)$ centered at $x_i$ and of ``radius one'', and there exists a constant $N < \infty$ such that for each $i$ the set $\{j\, \vert\, \tilde{B}_i \bigcap \tilde{B}_j \neq \emptyset\}$ contains less than $N$ elements, where $\tilde{B}_i = \left(\Phi_{x_i}\right)^{-1} (B_2)$ is the M\"obius chart centered at $x_i$ and of ``radius two''. Set $\rho_i = \rho(x_i)$. By classical interior estimates, there exists a constant $C > 0$ which does not depend on $i$ and $\psi$, and  such that (thereafter, $C$ can vary from line to line):
\[\left\|\psi\right\|_{W^{2, p}(B_i)}   \leq C \left(\left\|\Delta_L \psi\right\|_{L^p(\tilde{B}_i)} + \left\|\psi\right\|_{L^q(\tilde{B}_i)}\right).\]
Hence
\[\left\|\psi\right\|_{W^{2, p}(B_i)}^p \leq C \left(\left\|\Delta_L \psi\right\|_{L^p(\tilde{B}_i)}^p + \left\|\psi\right\|_{L^q(\tilde{B}_i)}^p\right).\]
Multiplying by $\rho_i^{-\delta p}$ and summing over $i$, by \cite[Lemma 3.5]{LeeFredholm} we get
\[\left\|\psi\right\|_{W^{2, p}_\delta}^p \leq C \left(\left\|\Delta_L \psi\right\|_{L^p_\delta}^p + \sum_i \rho_i^{-p\delta} \left\|\psi\right\|_{L^q(\tilde{B}_i)}^p\right).\]

However,
\begin{align*}
\sum_i \rho_i^{-p\delta} \left\|\psi\right\|_{L^q(\tilde{B}_i)}^p 
	&  =   \sum_i \left(\rho_i^{-q \delta} \int_{\tilde{B}_i} \left|\psi\right|_g^q d\mu_g \right)^{\frac{p}{q}}\\
	& \leq c \sum_i \left(\rho_i^{-q (\delta-\delta')} \int_M \rho^{-q \delta'} \left|\psi\right|_g^q d\mu_g \right)^{\frac{p}{q}}\\
	& \leq c \left\|\psi\right\|^p_{L^q_{\delta'}} \sum_i \rho_i^{-p (\delta-\delta')},
\end{align*}
where $c$ is a constant that is independent of $\psi$. By our assumption on $p$, $\delta$ and $\delta'$, this last sum converges (see the proof of Lemma \ref{lmYoung2}). Thus we obtain
\[\left\|\psi\right\|_{W^{2, p}_\delta}^p \leq C \left(\left\|\Delta_L \psi\right\|_{L^p_\delta}^p + \left\|\psi\right\|^p_{L^q_{\delta'}}\right),\]
which proves the lemma.
\end{proof}

We now return to the proof of Proposition \ref{propBoundPhi}. We recall the notations $\sigmatil=\gammatil_{\beta} ^{-\frac{1}{2}}\sigma$, $\alpha = \frac{1}{2\kappa+4}$ and introduce $\psitil=\gammatil_{\beta} ^{-\frac{1}{2}} \psi$ and $\phitil=\frac{\phi}{\gammatil _{\beta} ^{\alpha}}$. In the course of the proof, ``bounded'' will mean ``bounded by a constant that does not depend on $(\epsilon, \phi, \psi)$''.\\

\noindent $\bullet$ {\sc Step 1:} Bound for $\phibar^{\kappa+2}$ in the $L^2_\beta$-norm.\\

Setting $q = 0$ and $\beta' = \beta$ in inequality \eqref{eqLpIneq1} of Lemma \ref{lmEstimatePhi}, we obtain that  $\phibar^{\kappa+2}$ is bounded in the $L^2_\beta$-norm by some constant depending only on $\int_M \rho^{-2 \beta} \left|\sigmatil\right|_g^2 d\mu_g \leq 2 \int_M \rho^{-2 \beta} \left|\sigmatil_0\right|_g^2 d\mu_g + 2 \int_M \rho^{-2 \beta} \left|\lie \psitil\right|_g^2 d\mu_g \leq \frac{C(\sigma_0, \beta)}{\gammatil_\beta} + 2$, by definition of $\psitil$ and since $\gammatil_\beta \geq \gamma_\beta$.\\

\noindent $\bullet$ {\sc Step 2:} Bound for $\psitil$.\\

The equation for $\psitil$ reads \[\Delta_L \psitil = (n-1) \gammatil_\beta^{\frac{-\epsilon}{2\kappa+4}} \phitil^{\kappa+2-\epsilon} d\tau.\] 
We denote $\eta = \min_M \phi_- > 0$ and remark that $\phitil \geq \gammatil_\beta^{-\frac{1}{2\kappa+4}} \phi_- \geq \gammatil_\beta^{-\frac{1}{2\kappa+4}} \eta$. Hence
\begin{align*}
\left|\Delta_L \psitil\right|_g
	&  =   (n-1) \gammatil_\beta^{\frac{-\epsilon}{2\kappa+4}} \phitil^{-\epsilon} \phitil^{\kappa+2} \left|d\tau\right|_g\\
	& \leq (n-1) \eta^{-\epsilon} \phitil^{\kappa+2} \left|d\tau\right|_g.\\
\end{align*}

Since $\eta > 0$ and $\epsilon \in [0; 1)$, we conclude that there exists a constant $C > 0$ independent of $\phi$ and $\psi$ such that

\[\left|\Delta_L \psitil\right|_g \leq C \phitil^{\kappa+2} \left|d\tau\right|_g.\]

Let $r_0$ be such that $\frac{1}{r_0} = \frac{1}{2} + \frac{1}{p}$ and $\beta'$ be such that $\beta' + \frac{n-1}{r_0} = \beta + \frac{n-1}{2} < 0 < \delta$. Note that $r_0 > 1$. We claim that $\phitil^{\kappa+2}d\tau$ is bounded in $L^{r_0}_{\beta'}$. Indeed,

\[\phitil^{\kappa+2} = \left(\gammatil_{\beta}^{-\alpha}\phi_-+\phibar\right)^{\kappa+2}\leq 2^{\kappa+2} \left( \gammatil_\beta^{-\frac{1}{2}} \phi_-^{\kappa+2} + \phibar^{\kappa+2} \right).\]

Since $\phi_-$ is bounded and due to our choice of $r_0$ and $\beta'$, by Lemma \ref{lmEmbeddingXIntoL} we obtain $\gammatil_\beta^{-\frac{1}{2}} \phi_-^{\kappa+2} |d\tau|_g \in L^{r_0}_{\beta'}$. Similarly, by Lemma \ref{lmYoung2} we deduce that $\phibar^{\kappa+2} |d\tau|_g$ is bounded in $ L^{r_0}_{\beta'}$. We conclude that $\Delta_L \psitil$ is bounded in $L^{r_0}_{\beta'}$. Finally, $\beta' + \frac{n-1}{r_0} - \frac{n-1}{2}=\beta \in \left(-\frac{n+1}{2}; \frac{n+1}{2}\right)$, and by Proposition \ref{propIsomVectLaplacian} we see that $\psitil$ is bounded in $W^{2, r_0}_{\beta'}$.\\

\noindent $\bullet$ {\sc Step 3:} Induction step.\\

Recall our definition of $r_0$ and $\beta'$ made on the previous step and set $p_0 = 2$, $\beta_0 = \beta$, $\beta'_0 = \beta'$, and $\beta''_0=\beta_0$. In Step 1 and Step 2 we have shown that $\phibar^{\kappa+2}$ is bounded in the $L_{\beta''_0}^{p_0}$ norm and $\psitil$ is bounded in the $W_{\beta'_0}^{2,r_0}$ norm.\\

In what follows we will inductively construct sequences $p_i$, $r_i$, $\beta_i$, $\beta'_i$ such that $\phibar^{\kappa+2}$ is bounded in the $L_{\beta''_i}^{p_i}$, where $\beta''_i=\frac{2\beta_i}{p_i}$, and $\psitil$ is bounded in $W_{\beta'_i}^{2,r_i}$ norm.\\

If $r_i<n$  we define $r^*_i$ and $q_i$ by the formulas $\frac{1}{r_i^*} = \frac{1}{r_i} - \frac{1}{n}$ and $\frac{2}{r_i^*} + \frac{q_i}{(\kappa+2) p_i} = 1$ respectively. Then $p_{i+1}$, $r_{i+1}$, $\beta_{i+1}$, and $\beta'_{i+1}$ are computed as follows:

\begin{equation}\label{IndDefs}
\begin{split}
p_{i+1}															& = 2 + \frac{q_i}{\kappa+2},\\
\frac{1}{r_{i+1}}										& = \frac{1}{p}+ \frac{1}{p_{i+1}},\\
\beta_{i+1}													& = \beta'_i + \frac{q_i}{2\kappa+4} \beta''_i,\\
\beta'_{i+1} + \frac{n-1}{r_{i+1}}	& = \min \left\{ \beta''_{i+1} + \frac{n-1}{p_{i+1}}, \beta'_i - 1\right\}.
\end{split}
\end{equation}

We note that

\begin{equation}\label{IndFormulas}
\begin{split}
p_{i+1}			& = 		p_i \left(1+\frac{2}{n}-\frac{2}{p}\right)=2\left(1+\frac{2}{n}-\frac{2}{p}\right)^i,\\
\beta_{i+1}	& \leq  \beta_i \left(1+\frac{2}{n}-\frac{2}{p}\right)-\frac{n-1}{p}< \beta \left(1+\frac{2}{n}-\frac{2}{p}\right)^i,
\end{split}
\end{equation}
which is checked straightforwardly. Now assume by induction that $\phibar^{\kappa+2} \in L_{\beta''_i}^{p_i}$ and $\psitil \in W_{\beta'_i}^{2,r_i}$ are bounded.\\

By Proposition \ref{propRellich}, $\sigmatil_0\in L_{\delta}^\infty$. Moreover, by \eqref{IndDefs} and \eqref{IndFormulas} we have 

\begin{equation*}
\beta'_i+\frac{n-1}{r_i^*}<\beta''_i+\frac{n-1}{p_i}=\frac{\beta_i+\frac{n-1}{2}}{\frac{p_i}{2}}
	< \frac{\beta\left(1+\frac{2}{n}-\frac{2}{p}\right)^{i-1}+\frac{n-1}{2}}{\left(1+\frac{2}{n}-\frac{2}{p}\right)^{i-1}}
	< \beta+\frac{n-1}{2}
	< \delta.
\end{equation*}

Therefore, as a simple computation shows, $\sigmatil_0 \in L_{\beta'_i}^{r_i^*}$. Finally, by our choice of $r_i ^*$, it follows from Proposition \ref{propRellich} that $\lie \psitil$, and hence $\sigmatil$, is bounded in $L_{\beta'_i}^{r_i^*}$.\\

Since $\left|\sigmatil\right|_g^2$ is bounded in $L^{\frac{r_i^*}{2}}_{2\beta'_i}$ and $\phibar^{q_i}$ is bounded $L^{\frac{p_i (\kappa+2)}{q_i}}_{\frac{q_i \beta''_i}{\kappa+2}}$, by Lemma \ref{lmYoung1} we deduce that the integral \[\int_M \rho^{-2\beta_{i+1}} \left|\sigmatil\right|_g^2 \phibar^{q_i} d\mu_g\] is bounded. It then follows from \eqref{eqLpIneq1} that \[\int_M \rho^{-2 \beta_{i+1}} \phibar^{2\kappa+4+q_i} d\mu_g\] is bounded, which by \eqref{IndDefs} means that $\phibar^{\kappa+2}$ is bounded in $L^{p_{i+1}}_{\beta''_{i+1}}$.\\

By a calculation similar to the one done in Step 2, we get that $\phitil^{\kappa+2-\epsilon} d\tau$ is bounded in $L^{r_{i+1}}_{\beta'_{i+1}}$. Then it follows by the definition of $\beta'_{i+1}$ and Lemma \ref{lmModifiedSchauder} that $\psitil$ is bounded in $W^{2, r_{i+1}}_{\beta'_{i+1}}$.\\

\noindent $\bullet$ {\sc Step 4:} $L^\infty$ bound for $\phitil$.\\

Note that in Step 3 we have assumed that $r_i<n$, whereas it is obvious from \eqref{IndDefs} and \eqref{IndFormulas} that there exists $i_0$ such that $r_{i_0}\geq n$.\\

Suppose first that $r_{i_0}> n$. Since $\psitil\in W_{\beta'_{i_0}}^{2,r_{i_0}}$ and $\sigmatil _0 \in X_{\delta} ^{1,p}$, it follows from Proposition \ref{propRellich} that $\sigmatil \in C_{\gamma'}^{0, \alpha'}$ for some $\alpha' \in (0;1)$ and $\gamma'= \min (\beta'_{i_0},\delta)$. In particular, $\left|\sigmatil\right|_g^2$ is uniformly bounded on any compact subset. From Proposition \ref{propSubCritBehav} we know that there exists $\mu>0$ (independent of $\phi$) such that  $\phi$ attains its maximal value on the compact subset $K_\mu \subset M$. The same is true for $\phitil$, which satisfies the rescaled Lichnerowicz equation

\begin{equation}
\label{eqRescaledLichnerowicz}
\frac{1}{\gammatil_{\beta}^{\alpha\kappa}}\left(- \frac{4(n-1)}{n-2}\Delta\phitil - n(n-1)~\phitil\right) + n(n-1)\tau^2 \phitil^{\kappa+1} = \left|\sigmatil\right|_g^2 \phitil^{-3-\kappa}.
\end{equation}

Let $\lambda$ be such that $\frac{1}{\gammatil^\alpha} < \lambda < \sup_M \phitil$. We multiply equation \eqref{eqRescaledLichnerowicz} by $\phitil^{\kappa+3} \max\left(\phitil-\lambda, 0\right)^\epsilon$, which obviously has compact support, and integrate by parts. We obtain

\begin{align*}
\int_M \max\left(\phitil-\lambda, 0\right)^\epsilon \left|\sigmatil\right|_g^2 d\mu_g
	& = \frac{4(n-1)}{n-2} \frac{1}{\gammatil_{\beta}^{\alpha\kappa}}\int_M \left\langle d \phitil, d \left(\phitil^{\kappa+3} \max\left(\phitil-\lambda, 0\right)^\epsilon\right)\right\rangle_g d\mu_g\\
	& - n(n-1) \frac{1}{\gammatil_{\beta}^{\alpha\kappa}}\int_M \phitil^{\kappa+4} \max\left(\phitil-\lambda, 0\right)^\epsilon d\mu_g\\
	& + n(n-1) \int_M \tau^2 \phitil^{2\kappa+4} \max\left(\phitil-\lambda, 0\right)^\epsilon d\mu_g.	
\end{align*}

It can be easily seen that the first term of the right hand side is nonnegative since the integrand is of the form $\langle d \phitil, d  (f\circ\phitil)\rangle$ for some monotonically increasing function $f$. Hence 

\begin{align*}
\int_M \max\left(\phitil-\lambda, 0\right)^\epsilon \left|\sigmatil\right|_g^2 d\mu_g
	& \geq - n(n-1) \frac{1}{\gammatil_{\beta}^{\alpha\kappa}}\int_M \phitil^{\kappa+4} \max\left(\phitil-\lambda, 0\right)^\epsilon d\mu_g\\
	&			 + n(n-1) \int_M \tau^2 \phitil^{2\kappa+4} \max\left(\phitil-\lambda, 0\right)^\epsilon d\mu_g.	
\end{align*}

We now let $\epsilon \to 0^+$. In this case $\max\left(\phitil-\lambda, 0\right)^\epsilon$ converges to $\chi_{\{\phitil > \lambda\}}$, therefore

\[
\int_{\{\phitil>\lambda\}} \left|\sigmatil\right|_g^2 d\mu_g \geq - n(n-1) \frac{1}{\gammatil_{\beta}^{\alpha\kappa}}\int_{\{\phitil>\lambda\}} \phitil^{\kappa+4} d\mu_g
+ n(n-1) \int_{\{\phitil>\lambda\}
} \tau^2 \phitil^{2\kappa+4} d\mu_g.
\]

As a consequence, we have

\[
- \frac{n(n-1)}{\gammatil_{\beta}^{\alpha\kappa}} (\sup_{K_\mu}\phitil)^{\kappa+4} \mu_g\{\phitil>\lambda\} + n(n-1) \tau_0^2 \lambda^{2\kappa+4} \mu_g\{\phitil>\lambda\}
\leq \sup_{K_\mu} \left|\sigmatil\right|_g^2 \mu_g \{\phitil > \lambda\}.
\]

We divide the above inequality by $\mu_g\{\phitil>\lambda\}$ and let $\lambda$ tend to $\sup \phitil$ to obtain

\[- \frac{n(n-1)}{\gammatil_{\beta}^{\alpha\kappa}} (\sup_{K_\mu}\phitil)^{\kappa+4} + n(n-1) \tau_0^2 (\sup_{K_\mu}\phitil)^{2\kappa+4} \leq \sup_{K_\mu} \left|\sigmatil\right|_g^2.\]

Finally, since $\gammatil_\beta \geq 1$ and $\tau_0 > 0$, we deduce that $\phitil$ is bounded.\\

We conclude the proof by noting that the situation when $r_i=n$ can be transformed into the case $r_{i+1}>n$, which has just been considered. Indeed, recall that $W_{\beta'_i}^{2,r_i}\hookrightarrow W_{\widetilde{\beta'}_i}^{2,\widetilde{r}_i}$ for $\widetilde{r}_i < r_i$ and $\widetilde{\beta'}_i + \frac{n-1}{\widetilde{r}_i} < \beta'_i + \frac{n-1}{r_i}$. This means that we can reset $r_i=n-\epsilon'$ for some $\epsilon'>0$ and retain the already computed values of $p_i$, $\beta_i$ and $\beta'_i$. If we now repeat the computations of Step 3 we get $\psitil\in W_{\beta'_{i+1}} ^{2,r_{i+1}}$ for $r_{i+1}>n$, provided that $\epsilon'>0$ is chosen to be small enough.\\

We next prove that if the energy remains bounded as $\epsilon$ tends to zero, then there exists a solution of the constraint equations:

\begin{prop}[Existence of a solution when the energy is bounded]\label{propExistenceSolution}
Assume that there exists a sequence $\{(\epsilon_i, \phi_i, \psi_i)\}$, where $\epsilon_i \geq 0$, $\epsilon_i \to 0$, and $(\phi_i, \psi_i)$ are solutions of the $\epsilon_i$-constraint equations \eqref{eqSubCritConstraint} with $\psi_i \in X^{2, p}_\delta(M, T^*M)$, $\phi_i > 0$, and $\phi_i - 1 \in X^{2, p}_+(M, \bR)$, such that their energies \[\gammatil_i = \gammatil_\beta (\phi_i, \psi_i)\] are uniformly bounded by some constant $\lambda$. Then there exists a subsequence of $\{(\epsilon_i, \phi_i, \psi_i)\}$, denoted by the same name, such that $\phi_i$ converge in the $X^{2, p}_0$ norm to $\phi_\infty \in X^{2, p}_0$, and $\psi_i$ converge in the  $X^{2, p}_\delta$ norm to $\psi_\infty \in X^{2, p}_\delta$, where $(\phi_\infty,\psi_\infty)$ is a solution of the constraint equations \eqref{eqLichnerowicz} and \eqref{eqVector}.
\end{prop}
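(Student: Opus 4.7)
The plan is to use the uniform energy bound to extract strong $L^\infty$ and $X^{2,p}_\delta$ bounds on the $(\phi_i,\psi_i)$, then to use Rellich compactness to pass to a limit, and finally to upgrade convergence and verify the limit solves the critical system.

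First I would invoke Proposition \ref{propBoundPhi} to obtain the uniform bound
\[\|\phi_i\|_{L^\infty} \leq C \gammatil_i^{1/(2\kappa+4)} \leq C \lambda^{1/(2\kappa+4)},\]
together with the uniform lower bound $\phi_i \geq \phi_- \geq \eta > 0$ coming from the existence part of Proposition \ref{propLichEq}. The right-hand side of the $\epsilon_i$-vector equation $\Delta_L \psi_i = (n-1) \phi_i^{\kappa+2-\epsilon_i} d\tau$ is then uniformly bounded in $X^{0,p}_\delta$ since $\phi_i^{\kappa+2-\epsilon_i}$ lies in a bounded set of $L^\infty$ and $d\tau \in X^{0,p}_\delta$. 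The isomorphism theorem for the vector Laplacian (Proposition \ref{propIsomVectLaplacian}), applicable since $\delta \in (0;n)$, yields a uniform bound $\|\psi_i\|_{X^{2,p}_\delta} \leq C'$.

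Next, by the compact embedding part of Proposition \ref{propRellich}, the inclusion $X^{2,p}_\delta \hookrightarrow X^{1,p}_{\delta'}$ is compact for $\delta' < \delta$, and this space embeds continuously into $L^\infty$ (since $p>n$). Passing to a subsequence, I may assume $\psi_i \to \psi_\infty$ strongly in $X^{1,p}_{\delta'}$ (hence in $L^\infty$), with $\psi_\infty \in X^{2,p}_\delta$. Setting $\sigma_i := \sigma_0 + \lie \psi_i$ and $\sigma_\infty := \sigma_0 + \lie \psi_\infty$, we get $\sigma_i \to \sigma_\infty$ in $L^\infty$. The Lipschitz continuity of the map $\sigma \mapsto \phi$ on bounded subsets of $L^\infty$ (Proposition \ref{propLichEq}) then gives $\phi_i \to \phi_\infty$ in $L^\infty$, where $\phi_\infty$ is the unique positive bounded solution of the Lichnerowicz equation with source $\sigma_\infty$.

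It remains to upgrade the convergence and pass to the limit in the vector equation. Because $\phi_i$ is uniformly bounded above and below and $\phi_i \to \phi_\infty$ in $L^\infty$, together with $\epsilon_i \to 0$, we have $\phi_i^{\kappa+2-\epsilon_i} \to \phi_\infty^{\kappa+2}$ uniformly, so $\phi_i^{\kappa+2-\epsilon_i} d\tau \to \phi_\infty^{\kappa+2} d\tau$ in $X^{0,p}_\delta$. Applying Proposition \ref{propIsomVectLaplacian} once more gives $\psi_i \to \psi_\infty$ in $X^{2,p}_\delta$, and passing to the limit shows that $(\phi_\infty, \psi_\infty)$ solves the critical vector equation \eqref{eqVector}. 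For the $X^{2,p}_0$ convergence of $\phi_i$, I would write the Lichnerowicz equation in the form $-\frac{4(n-1)}{n-2}\Delta \phi_i + V_i \phi_i = 0$ where $V_i$ involves $\scal$, $\tau$, $\sigma_i$ and powers of $\phi_i$, so that the linearization of the equation along $(\phi_i - \phi_\infty)$ has a positive zeroth-order coefficient bounded away from zero (thanks to the uniform lower bound on $\phi_i, \phi_\infty$); combining Lemma \ref{lmAPrioriX2p}-type estimates with standard interior elliptic regularity in M\"obius charts and the uniform decay estimates of Propositions \ref{propSubCritBehav} and \ref{propSubCritBehav2} then gives $\phi_i \to \phi_\infty$ in $X^{2,p}_0$.

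The main obstacle is the last step: upgrading the $L^\infty$-convergence of $\phi_i$ to convergence in $X^{2,p}_0$, since the Lichnerowicz equation is nonlinear and the elliptic theory must be applied carefully on a non-compact AH manifold. The uniform pointwise decay $|\phi_i - 1| \lesssim \rho^{\delta'}$ from Proposition \ref{propSubCritBehav} confines the essential part of the analysis to a fixed compact set, at which point the argument reduces to standard elliptic regularity for the (linearized) Lichnerowicz operator whose zeroth-order coefficient stays uniformly positive.
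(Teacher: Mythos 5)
Your approach tracks the paper's own proof closely: bound $\|\phi_i\|_{L^\infty}$ via Proposition \ref{propBoundPhi}, bound $\|\psi_i\|_{X^{2,p}_\delta}$ via the isomorphism theorem for $\Delta_L$, extract a subsequence by Rellich compactness, and pass to the limit in both equations using Proposition \ref{propLichEq} and Proposition \ref{propIsomVectLaplacian}. The scheme is the right one.

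There is, however, one genuine gap in the middle step. You extract a subsequence $\psi_i \to \psi_\infty$ strongly in $X^{1,p}_{\delta'}$ and then assert $\sigma_i = \sigma_0 + \lie\psi_i \to \sigma_\infty$ in $L^\infty$. But strong convergence of $\psi_i$ in $X^{1,p}_{\delta'}$ only gives strong convergence of the first covariant derivatives, hence of $\lie\psi_i$, in $X^{0,p}_{\delta'}$, and $X^{0,p}_{\delta'}$ does \emph{not} embed in $L^\infty$ (the embedding $X^{k,p}_\delta \hookrightarrow C^{0,\alpha}_{\delta'}$ of Proposition \ref{propRellich} requires $k - n/p \geq \alpha \geq 0$, so $k \geq 1$ when $p > n$). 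The convergence of $\psi_i$ itself in $L^\infty$ is useless here because $\sigma_i$ involves one derivative of $\psi_i$. The fix is immediate: since $p > n$ one has $2 - n/p > 1$, so the compact embedding $X^{2,p}_\delta \hookrightarrow C^{1,\alpha}_0$ holds for small $\alpha > 0$, and this gives $\lie\psi_i \to \lie\psi_\infty$ in $C^{0,\alpha}_0 \subset L^\infty$. The paper's own route is even more economical: it observes that $\sigma_i$ is bounded in $X^{1,p}_\delta$ and applies the compact embedding $X^{1,p}_\delta \hookrightarrow L^\infty$ to $\sigma_i$ directly.

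Your last paragraph is also over-engineered. Once $\phi_i \to \phi_\infty$ in $L^\infty$ (from the Lipschitz continuity in Proposition \ref{propLichEq}) and $\sigma_i \to \sigma_\infty$ in $L^\infty$, all the zeroth-order nonlinear terms in the Lichnerowicz equation converge in $L^\infty$, so $\Delta(\phi_i - \phi_\infty) \to 0$ in $L^\infty \subset X^{0,p}_0$; one application of the interior M\"obius elliptic estimate (Lemma \ref{lemmaEllipticRegularity1}) then yields $\phi_i \to \phi_\infty$ in $X^{2,p}_0$. There is no need for a linearization, no need for a uniqueness-type estimate as in Lemma \ref{lmAPrioriX2p}, and no need for the decay estimates of Propositions \ref{propSubCritBehav} and \ref{propSubCritBehav2}: the $L^\infty$ convergence of $\phi_i$ is already global, so no compact-exhaustion argument is required.
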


\begin{proof}
According to Proposition \ref{propBoundPhi}, the functions $\phi_i$ are uniformly bounded. As a consequence, by \eqref{eqSubCritConstraint} and Proposition \ref{propIsomVectLaplacian}, we obtain that the 1-forms $\psi_i$ are uniformly bounded in $X^{2, p}_\delta$, hence the sequence $\sigma_i = \sigma_0 + \lie \psi_i$ is bounded in $X^{1, p}_\delta$. By the Rellich theorem (Proposition \ref{propRellich}), up to extracting a subsequence of the $(\epsilon_i, \phi_i, \psi_i)$, we can assume that the sequence $\sigma_i$ converges in the $L^\infty$ norm to some $\sigma_\infty \in L^\infty$. It then follows by Proposition \ref{propLichEq}, that the functions $\phi_i$ converge to a function $\phi_\infty \in X^{2, p}_0$. The inclusion $X^{2, p}_0 \subset L^\infty$ being continuous, we also conclude that the sequence $\psi_i$ converges to some $\psi_\infty\in X^{2, p}_\delta$. It is then easily seen that $\sigma_\infty = \sigma_0 + \lie \psi_\infty$, so the pair $(\phi_\infty, \psi_\infty)$ is a solution of the constraint equations \eqref{eqLichnerowicz}-\eqref{eqVector}.
\end{proof}

\begin{prop}[Existence of a solution of the limit equation when the energy is unbounded]\label{propUnboundedEnergy}
Assume that there exists a sequence $\{(\epsilon_i, \phi_i, \psi_i)\}$, where $\epsilon_i \geq 0$, $\epsilon_i \to 0$, and $(\phi_i, \psi_i)$ solve the $\epsilon_i$-constraint equations, such that the energies $\gammatil_i = \gammatil_\beta(\phi_i, \psi_i) \to \infty$. Then there exists a subsequence of $\psitil_j = \gammatil_j^{-\frac{1}{2}} \psi_j$ which converges in the $L^\infty$-norm to a non-zero solution $\psitil_\infty \in X^{2, p}_\delta$ of the \textbf{limit equation}:

\begin{equation}
\Delta_L \psi = \lambda \sqrt{\frac{n-1}{n}} \left| \lie \psi \right|_g \frac{d \tau}{\tau}
\label{eqLimit}
\end{equation}
for some $\lambda \in [\eta; 1]$, where $\eta$ is the constant appearing in Proposition \ref{propSubCritBehav}.
\end{prop}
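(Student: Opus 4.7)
The plan is to rescale all quantities by the diverging energy so that the Lichnerowicz equation degenerates into an algebraic relation between $\phi$ and $\sigma$, which, once substituted into the rescaled vector equation, produces \eqref{eqLimit}.

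Set $\alpha := 1/(2\kappa+4)$ and introduce $\psitil_i := \gammatil_i^{-1/2}\psi_i$, $\sigmatil_i := \gammatil_i^{-1/2}\sigma_0 + \lie \psitil_i$, and $\phitil_i := \gammatil_i^{-\alpha}\phi_i$. A direct computation rewrites the $\epsilon_i$-vector equation as
\[\Delta_L \psitil_i = (n-1)\,\phitil_i^{\kappa+2}\,\phi_i^{-\epsilon_i}\,d\tau,\]
while the Lichnerowicz equation takes the rescaled form \eqref{eqRescaledLichnerowicz}. By Proposition \ref{propBoundPhi}, $\phitil_i$ is uniformly bounded in $L^\infty$, and Proposition \ref{propSubCritBehav} yields the pointwise estimate $\eta \leq \phi_i^{-\epsilon_i} \leq (\min_M\phi_-)^{-\epsilon_i}$, uniformly bounded since $\epsilon_i\in[0,1)$ and $\phi_-$ is uniformly positive. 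Thus the right-hand side of the rescaled vector equation is bounded in $X^{0,p}_\delta$, and Proposition \ref{propIsomVectLaplacian} yields a uniform bound on $\psitil_i$ in $X^{2,p}_\delta$.

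By the compact Rellich embedding (Proposition \ref{propRellich}), a subsequence of $\psitil_i$ converges in $L^\infty$ to some $\psitil_\infty \in X^{2,p}_\delta$, so that $\sigmatil_i \to \sigmatil_\infty := \lie\psitil_\infty$ in $L^\infty$. To pass to the limit in \eqref{eqRescaledLichnerowicz}, one integrates against a compactly supported test function and transfers the Laplacian onto it; the prefactor $\gammatil_i^{-\alpha\kappa}\to 0$ together with the $L^\infty$ bound on $\phitil_i$ forces the leading-order differential term to disappear in the limit, and after extracting a sub-subsequence along which $\phitil_i$ converges weakly to some $\phitil_\infty$, the remaining terms pass to the limit. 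This yields, in the distributional sense, the algebraic identity
\[n(n-1)\tau^2\phitil_\infty^{\kappa+1} = |\sigmatil_\infty|_g^2\,\phitil_\infty^{-\kappa-3},\]
equivalent to $\phitil_\infty^{\kappa+2} = |\sigmatil_\infty|_g/(\tau\sqrt{n(n-1)})$, which uniquely determines $\phitil_\infty$ from $\sigmatil_\infty$.

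Up to a further extraction, $\gammatil_i^{-\alpha\epsilon_i} \to \lambda \in [0,1]$; at every point where $\phitil_\infty > 0$ one has $\phitil_i^{-\epsilon_i} \to 1$, hence $\phi_i^{-\epsilon_i} = \phitil_i^{-\epsilon_i}\gammatil_i^{-\alpha\epsilon_i} \to \lambda$. Passing to the limit in the rescaled vector equation and substituting the algebraic identity for $\phitil_\infty^{\kappa+2}$ yields
\[\Delta_L\psitil_\infty = (n-1)\lambda\,\phitil_\infty^{\kappa+2}\,d\tau = \lambda\sqrt{\tfrac{n-1}{n}}\,|\lie\psitil_\infty|_g\,\tfrac{d\tau}{\tau},\]
which is precisely \eqref{eqLimit}. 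Non-triviality of $\psitil_\infty$ follows from the identity $\int_M \rho^{-2\beta}|\lie\psitil_i|_g^2\,d\mu_g = 1$, valid for $i$ large enough that $\gammatil_i = \gamma_i$; the compact embedding $X^{1,p}_\delta \hookrightarrow L^2_\beta$ (which is available because $\beta + (n-1)/2 < 0 < \delta$) lets this integral pass to the limit, giving $\int_M \rho^{-2\beta}|\lie\psitil_\infty|_g^2\,d\mu_g = 1$. Finally, $\lambda \leq 1$ because $\gammatil_i\geq 1$ and $\epsilon_i\geq 0$, while $\lambda \geq \eta$ is the limit of the pointwise bound $\phi_i^{-\epsilon_i}\geq \eta$ at any point where $\phitil_\infty > 0$. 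The main obstacle is precisely the control of $\phi_i^{-\epsilon_i}$ as $\epsilon_i\to 0$ while $\phi_i$ may blow up: only the refined upper bound $\phi_i\leq \eta^{-1/\epsilon_i}$ of Proposition \ref{propSubCritBehav} confines this factor to a compact interval $[\eta,C]$ and permits its limit to be identified with the constant $\lambda$ appearing in \eqref{eqLimit}.
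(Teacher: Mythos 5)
Your rescaling, your use of Propositions~\ref{propBoundPhi} and~\ref{propSubCritBehav}, and your observation that $\lambda\geq\eta$ can be read off the pointwise bound $\phi_i^{-\epsilon_i}\geq\eta$ at any point where $\phitil_\infty>0$ (which is slicker than the paper's estimate $\gammatil_i\leq C'\eta^{2-(2\kappa+4)/\epsilon_i}$) are all consistent with, and in the last case an improvement on, the paper's argument. But the central step of your proof does not go through as written.

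You assert that after extracting a subsequence along which $\phitil_i$ converges \emph{weakly}, one can pass to the distributional limit in the rescaled Lichnerowicz equation \eqref{eqRescaledLichnerowicz} and obtain the pointwise algebraic identity $n(n-1)\tau^2\phitil_\infty^{\kappa+1}=|\sigmatil_\infty|_g^2\phitil_\infty^{-\kappa-3}$. This is not justified. The equation is nonlinear in $\phitil_i$: weak convergence of $\phitil_i$ does not imply weak convergence of $\phitil_i^{\kappa+1}$ to $\phitil_\infty^{\kappa+1}$, let alone convergence of the negative-power term $\phitil_i^{-\kappa-3}$, whose only lower bound from Proposition~\ref{propLichEq} is $\phitil_i\geq\gammatil_i^{-\alpha}\phi_-$, which degenerates to zero as $\gammatil_i\to\infty$. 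There is no a priori reason that $|\sigmatil_i|_g^2\phitil_i^{-\kappa-3}$ remains locally integrable uniformly in $i$, nor that the product converges. Moreover, you later need $\phitil_i\to\phitil_\infty$ \emph{uniformly} (to get $\phitil_i^{-\epsilon_i}\to 1$ where $\phitil_\infty>0$, to pass the right-hand side of the vector equation to the limit in $X^{0,p}_\delta$ and conclude $\psitil_\infty\in X^{2,p}_\delta$ via Proposition~\ref{propIsomVectLaplacian}, and to use the continuity of $\gamma$ in Lemma~\ref{lmContinuityQuadrForm} for non-triviality), so weak convergence is far from enough. The paper sidesteps the distributional-limit issue entirely: it \emph{defines} the candidate $\phitil_\infty=\bigl(|\sigmatil_\infty|_g^2/(n(n-1)\tau^2)\bigr)^{1/(2\kappa+4)}$ directly from the already-converged $\sigmatil_\infty$, and then proves $\|\phitil_i-\phitil_\infty\|_{L^\infty}\to 0$ by constructing explicit $C^2$ super- and sub-solutions $u_\epsilon,v_\epsilon$ close to $\phitil_\infty$ and showing, via the maximum principle as in Proposition~\ref{propLichEq} and the fact that the prefactor $\gammatil_i^{-\alpha\kappa}\to 0$ kills the second-order terms, that $\phitil_i$ eventually lies between them. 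That barrier argument is the essential technical ingredient you have skipped, and without it the rest of your limiting argument is unsupported.
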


Before giving the proof of the proposition, we prove a simple lemma:

\begin{lemma}\label{lmContinuityQuadrForm}
The quadratic form $\gamma : \psi \in X^{2, p}_\delta \mapsto \int_M \rho^{-2\beta} \left|\lie \psi\right|^2_g d\mu_g$ is continuous.
\end{lemma}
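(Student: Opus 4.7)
The plan is to factor $\gamma$ as the composition $\psi \mapsto \lie \psi \mapsto \|\lie \psi\|^2_{L^2_\beta}$ and verify that each arrow is continuous. Since $\lie$ is a first-order linear differential operator whose coefficients are controlled by the geometry of $(M,g)$, it sends $X^{2,p}_\delta$ continuously into $X^{1,p}_\delta$, and therefore (trivially) into $X^{0,p}_\delta$.

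Next, I would apply Lemma \ref{lmEmbeddingXIntoL} with $k=0$, $q=2$ (which is allowed since $p > n \geq 3$), and $\delta' = \beta$: the hypothesis $\delta' + \frac{n-1}{q} < \delta$ reduces to $\beta + \frac{n-1}{2} < \delta$, which is precisely the standing assumption made on $\beta$ at the start of Section \ref{secBlowUp}. This provides a continuous embedding $X^{0,p}_\delta \hookrightarrow L^2_\beta$, and hence an estimate of the form $\|\lie \psi\|_{L^2_\beta} \leq C\,\|\psi\|_{X^{2,p}_\delta}$ for some constant $C$ depending only on $(M,g)$, $p$, $\beta$ and $\delta$.

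To conclude, I would use the polarization identity
\begin{equation*}
\gamma(\psi_1) - \gamma(\psi_2) = \int_M \rho^{-2\beta} \left\langle \lie(\psi_1-\psi_2), \lie(\psi_1+\psi_2) \right\rangle_g d\mu_g,
\end{equation*}
together with the Cauchy--Schwarz inequality in $L^2_\beta$, which combined with the bound of the previous paragraph yields
\begin{equation*}
\left| \gamma(\psi_1) - \gamma(\psi_2) \right| \leq C^2 \left\| \psi_1 - \psi_2 \right\|_{X^{2,p}_\delta} \bigl( \left\| \psi_1 \right\|_{X^{2,p}_\delta} + \left\| \psi_2 \right\|_{X^{2,p}_\delta} \bigr).
\end{equation*}
This proves local Lipschitz continuity of $\gamma$ on $X^{2,p}_\delta$, hence continuity. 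No real obstacle is expected: the only delicate point is controlling the weight $\rho^{-2\beta}$ at infinity, and this is handled entirely by the relation $\beta + \frac{n-1}{2} < \delta$ through the embedding lemma.
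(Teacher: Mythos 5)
Your proof is correct and takes essentially the same route as the paper: both hinge on Lemma \ref{lmEmbeddingXIntoL} under the standing assumption $\beta + \tfrac{n-1}{2} < \delta$ to land in a weighted $L^2$ (resp.\ $W^{2,2}_\beta$) space where $\gamma$ is manifestly continuous. The paper embeds $X^{2,p}_\delta \hookrightarrow W^{2,2}_\beta$ at the level of $\psi$ and leaves the final continuity implicit, whereas you apply the embedding to $\lie\psi \in X^{0,p}_\delta$ and make the conclusion explicit via polarization and Cauchy--Schwarz; this is a minor repackaging, not a different argument.
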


\begin{proof}
Since by the assumption we have $\beta + \frac{n-1}{2} < \delta$ and $p > n > 2$, one needs a continuous injection $X^{2, p}_\delta(M, T^*M) \into W^{2, 2}_\beta(M, T^*M)$. $\gamma$ is obviously continuous when seen as a map $\gamma : W^{2, 2}_\beta (M, T^*M) \to \bR$. Composing these maps, we see that  $\gamma : X^{2, p}_\delta(M, T^*M) \to \bR$ is continuous.
\end{proof}

We can now return to the proof of the proposition:

\begin{proof}[Proof of Proposition \ref{propUnboundedEnergy}]
By Proposition \ref{propBoundPhi}, the functions $\phitil_i = \gammatil_i^{-\frac{1}{2\kappa+4}} \phi_i$ are uniformly bounded in the $L^\infty$-norm. Hence the sequence $\psitil_i = \gammatil_i^{-\frac{1}{2}} \psi_i$ is bounded in $X^{2, p}_\delta$ by the second line of Equation \eqref{eqSubCritConstraint}. By the Rellich theorem (Proposition \ref{propRellich}), up to extracting a sub-sequence, we can assume that the 1-forms $\psitil_i$ converge in the $C^{1, \alpha}_0$-norm, where $\alpha \in \left(0; 1-\frac{n}{p}\right)$ is arbitrary, to some 1-form $\psitil_\infty$. The tensors $\sigmatil_i = \gammatil_i^{-\frac{1}{2}} \sigma_0 + \lie \psitil_i$ converge to $\sigmatil_\infty = \lie \psitil_\infty \in C^{0,\alpha}_0$. Let $\phitil_\infty = \left(\frac{\left|\sigmatil_\infty\right|^2_g}{n(n-1) \tau^2}\right)^{\frac{1}{2\kappa+4}}$. We have to show that $\phitil_i \to \phitil_\infty$ in the $L^\infty$-norm. Indeed, in this case, the second line of Equation \eqref{eqSubCritConstraint} applied to $(\epsilon_i, \phi_i, \psi_i)$ can be rewritten

\begin{equation}\label{eqRescaledVector}
\Delta_L \psitil_i = (n-1) \gammatil_i^{-\frac{\epsilon_i}{2\kappa+4}} \phitil_i^{\kappa+2-\epsilon_i} d\tau,
\end{equation}

\noindent and $\phitil_i^{\kappa+2-\epsilon_i}$ converges to $\phitil_\infty^{\kappa+2}$. We claim that $\gammatil_i^{-\frac{\epsilon_i}{2\kappa+4}} \leq 1$ and that $\liminf \gammatil_i^{-\frac{\epsilon_i}{2\kappa+4}} \geq \eta$. Then, up to extracting a subsequence, we can assume that $\gammatil_i^{-\frac{\epsilon}{2\kappa+4}}$ converges to $\lambda \in [\eta; 1]$. So the right hand side of Equation \eqref{eqRescaledVector} converges to $(n-1) \lambda \phitil_\infty^{\kappa+2} d\tau$. By the isomorphism theorem (Proposition \ref{propIsomVectLaplacian}) applied to Equation \eqref{eqRescaledVector}, we then obtain that the sequence $\psitil_i$ converges in the $X^{2, p}_\delta$ norm to the solution $\psitil_\infty \in X^{2, p}_\delta$ of \[\Delta_L \psitil_\infty = (n-1) \lambda \phitil_\infty^{\kappa+2} d\tau.\] We deduce from this that $\psitil_\infty$ is a solution of the limit equation \eqref{eqLimit}. Note also that, if $i$ is large enough, the 1-forms $\psitil_i$ satisfy $\gamma(\psitil_i) = 1$, where $\gamma$ is as in Lemma \ref{lmContinuityQuadrForm}. Then it follows from Lemma \ref{lmContinuityQuadrForm} that $\gamma(\psitil_\infty) = 1$, which proves that $\psitil_\infty \neq 0$.\\

We first prove that $\gammatil_i^{-\frac{\epsilon_i}{2\kappa+4}} \in [\mu^{\epsilon_i} \eta; 1]$. If $\epsilon_i = 0$, we are done. Otherwise, since $\gammatil_i \geq 1$, $\gammatil_i^{-\frac{\epsilon_i}{2\kappa+4}} \leq 1$.  We also know from Proposition \ref{propSubCritBehav} that $\phi_i \leq \eta^{-\frac{1}{\epsilon_i}}$. By the isomorphism theorem (Proposition \ref{propIsomVectLaplacian}), \[\left\|\psi_i\right\|_{X^{2, p}_\delta} \leq C \left\|\phi_i^{\kappa+2-\epsilon_i} d\tau\right\|_{X^{0, p}_\delta} \leq C \left\|d\tau\right\|_{X^{0, p}_\delta} \eta^{1-\frac{\kappa+2}{\epsilon_i}}.\] So the energy satisfies the estimate $\gammatil_i = \gamma_\beta(\phi_i, \psi_i) \leq C' \eta^{2-\frac{2\kappa+4}{\epsilon_i}}$ for some constant $C' > 0$. Hence, \[\gammatil_i^{-\frac{\epsilon_i}{2\kappa+4}} \geq (C' \eta^2)^{-\frac{\epsilon_i}{2\kappa+4}} \eta.\]

We next show the convergence of the $\phitil_i$ to $\phitil_\infty$. We first prove that for any $\epsilon > 0$ and $i$ large enough we have $\phitil_i \leq \phitil_\infty + \epsilon$. Let $\epsilon > 0$ be arbitrary. Select a function $u_\epsilon \in C^{2,0}_0(M, \bR)$ such that $\phitil_\infty + \frac{\epsilon}{2} \leq u_\epsilon \leq \phitil_\infty + \epsilon$. Such a function exists since $\phitil_\infty$ is H\"older continuous with H\"older exponent $\frac{\alpha}{\kappa+2}$; namely, $u_\epsilon$ can be constructed by a regularization procedure. The functions $\phitil_i$ satisfy the rescaled Lichnerowicz equation \eqref{eqRescaledLichnerowicz}. We now claim that if $\gammatil_i$ is large enough, then $u_\epsilon$ is a super-solution of this equation. Indeed, by the definition of $u_\epsilon$, we have

\begin{align*}
n(n-1) \tau^2 u_\epsilon^{2\kappa+4}
	& \geq n(n-1) \tau^2 \left[\phitil_\infty^{2\kappa+4} + \left(\frac{\epsilon}{2}\right)^{2\kappa+4} \right]\\
	& = \left|\sigmatil_\infty\right|_g^2 + n(n-1) \tau^2 \left(\frac{\epsilon}{2}\right)^{2\kappa+4}.
\end{align*}
Selecting $i_0$ large enough, we can assume that $\left|\sigmatil_\infty\right|_g^2 \geq \left|\sigmatil_i\right|_g^2 - \frac{1}{2} n(n-1) \tau^2 \left(\frac{\epsilon}{2}\right)^{2\kappa+4}$ on $M$ for $i\geq i_0$, which yields
\[ n(n-1) \tau^2 u_\epsilon^{2\kappa+4} \geq \left|\sigmatil_i\right|_g^2 + \frac{1}{2} n(n-1) \tau^2 \left(\frac{\epsilon}{2}\right)^{2\kappa+4}.\]
Since $u_\epsilon$ is bounded from above by $\sup_M \phitil_\infty + \epsilon$, we deduce that, for some $\epsilon' > 0$,
\[ n(n-1) \tau^2 u_\epsilon^{\kappa+1} \geq \left|\sigmatil_i\right|_g^2 u_\epsilon^{-\kappa-3} + \epsilon'.\]
Now recall that the second order derivatives of $u_\epsilon$ are bounded on $M$, and that, by increasing $i_0$, the energy can be made as large as needed. Then  it is easy to see that if $i \geq i_0$ then $u_\epsilon$ is a super-solution of the rescaled Lichnerowicz equation \eqref{eqRescaledLichnerowicz}:
\[\frac{1}{\gammatil_i^{\alpha\kappa}}\left(- \frac{4(n-1)}{n-2}\Delta u_\epsilon - n(n-1)~u_\epsilon\right) + n(n-1)\tau^2 u_\epsilon^{\kappa+1} \geq \left|\sigmatil_i\right|^2 u_\epsilon^{-3-\kappa}.\]
Arguing as in the proof of Proposition \ref{propLichEq}, we conclude that if $i \geq i_0$, then $\phitil_i \leq u_\epsilon \leq \phitil_\infty + \epsilon$.\\

The lower bound is slightly more difficult to prove because there is to take into consideration the fact that $\phitil_\infty$ can vanish. We indicate the differences. Select a function $v_\epsilon$ such that $v_\epsilon \in C^{2,0}_0(M, \bR)$ such that $\phitil_\infty - \epsilon \leq v_\epsilon \leq \phitil_\infty - \frac{\epsilon}{2}$. Then it can be proved (exactly as for $u_\epsilon$) that for $i \geq i_0$ for some $i_0$ large enough, $v_\epsilon$ is a sub-solution of the rescaled Lichnerowicz equation \eqref{eqRescaledLichnerowicz} whenever $v_\epsilon > 0$. Note that we cannot apply directly the proof of Proposition \ref{propLichEq} since it presupposes taking the logarithm of $v_\epsilon$. However, remark that $\phitil_i \geq \gammatil_i^{\frac{1}{2\kappa+4}} \phi_-$ so we can restrict ourselves to the open subset $\Omega = \left\{ v_\epsilon > \gammatil_i^{\frac{1}{2\kappa+4}} \phi_- \right\}$. If $\Omega$ is empty then we are done, otherwise since $\Omega$ is relatively compact then the proof of Proposition \ref{propLichEq} applied on $\Omega$ works.
\end{proof}

\begin{rk}
 If the sequence is such that $\epsilon_i = 0$ for all $i$, then $\gamma_i^{-\frac{\epsilon_i}{2\kappa+4}} = 1$ so $\lambda = 1$.
\end{rk}

We immediately obtain the following corollary:

\begin{cor}\label{corFundamentalCorollary}
Let $\tau-1 \in X^{1, p}_\delta$, where $p \in (n; \infty)$ and $\delta \in (0; n)$, be such that the limit equation admits no non-zero solution in $W^{2,2}_0(M,T^*M)$. Then the constraint equations \eqref{eqLichnerowicz}-\eqref{eqVector} admit a solution. Furthermore, the set of pairs $(\phi,\psi) \in X^{2, p}_0(M, \bR) \times X^{2, p}_0(M, T^*M)$ solving the constraint equations \eqref{eqLichnerowicz}-\eqref{eqVector} is compact.
\end{cor}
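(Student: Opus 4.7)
The plan is a clean dichotomy assembling the three main results of Section \ref{secBlowUp}; essentially nothing new is required. I would fix a sequence $\epsilon_i \in (0; 1)$ with $\epsilon_i \to 0^+$ and use Proposition \ref{propSubCrit} to produce, for each $i$, a solution $(\phi_i, \psi_i) \in X^{2, p}_0 \times X^{2, p}_\delta$ of the $\epsilon_i$-sub-critical equations \eqref{eqSubCritConstraint}. The argument then splits according to the behavior of the energies $\gammatil_i = \gammatil_\beta(\phi_i, \psi_i)$: if $\limsup \gammatil_i < \infty$, Proposition \ref{propExistenceSolution} extracts a subsequence converging in $X^{2, p}_0 \times X^{2, p}_\delta$ to a solution of the critical constraint equations \eqref{eqLichnerowicz}-\eqref{eqVector}, giving existence; if $\gammatil_i \to \infty$ along a subsequence, Proposition \ref{propUnboundedEnergy} yields a non-zero solution $\psitil_\infty \in X^{2, p}_\delta$ of the limit equation \eqref{eqLimit} for some $\lambda \in [\eta; 1] \subset (0; 1]$, which will contradict the hypothesis once we record that $\psitil_\infty \in W^{2, 2}_0$.

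For compactness, I would apply the same dichotomy to an arbitrary sequence $(\phi_i, \psi_i) \in X^{2, p}_0 \times X^{2, p}_0$ of solutions of the critical equations, viewed as $\epsilon_i = 0$ solutions of \eqref{eqSubCritConstraint}. All the propositions of Section \ref{secBlowUp} were written to accommodate $\epsilon = 0$ (see Proposition \ref{propBoundPhi} and the remarks following Propositions \ref{propBoundPhi} and \ref{propUnboundedEnergy}), so no adaptation is needed. The bounded-energy alternative delivers a subsequential $X^{2, p}_0 \times X^{2, p}_\delta$-limit by Proposition \ref{propExistenceSolution}, and since $X^{2, p}_\delta \hookrightarrow X^{2, p}_0$ for $\delta \geq 0$ this is the desired compactness in $X^{2, p}_0 \times X^{2, p}_0$. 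The unbounded-energy alternative would force $\lambda = 1$ (remark following Proposition \ref{propUnboundedEnergy}), producing again a forbidden solution of the limit equation.

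The only genuine verification, and thus the ``main obstacle'' in this otherwise automatic assembly, is to show that the solution $\psitil_\infty \in X^{2, p}_\delta$ produced by Proposition \ref{propUnboundedEnergy} in fact lies in the space $W^{2, 2}_0$ mentioned in the hypothesis. This is exactly the content of Lemma \ref{lmEmbeddingXIntoL} applied with $q = 2$ and $\delta' = 0$, provided the weight $\delta$ is large enough; in the general case one combines that lemma with the Sobolev part of Proposition \ref{propRellich} (using $p > n$ and $\delta \in (0; n)$) to land the limiting 1-form inside $W^{2, 2}_0$. Once this embedding is recorded, the corollary follows immediately from the dichotomy described above.
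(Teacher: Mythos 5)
Your assembly of the dichotomy and the invocations of Propositions \ref{propSubCrit}, \ref{propExistenceSolution}, and \ref{propUnboundedEnergy} match the paper's proof and are fine, including the observation that compactness in the unbounded-energy branch forces $\lambda = 1$. But the step you flag as the ``only genuine verification'' --- getting the limiting 1-form $\psitil_\infty \in X^{2,p}_\delta$ into $W^{2,2}_0$ --- is exactly where your argument has a real gap. You correctly note that Lemma \ref{lmEmbeddingXIntoL} (with $q = 2$, $\delta' = 0$) requires $\delta > \frac{n-1}{2}$, but for the complementary range $\delta \in \bigl(0;\,\frac{n-1}{2}\bigr]$ you propose to ``combine that lemma with the Sobolev part of Proposition \ref{propRellich}.'' This cannot work: every embedding in Proposition \ref{propRellich}, and in Lemma \ref{lmEmbeddingXIntoL}, only allows the weight to stay the same or \emph{decrease} ($\delta' \leq \delta$, or $\delta' + \frac{n-1}{q} < \delta$). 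No amount of Sobolev bookkeeping turns a $\rho^\delta$ decay rate with small $\delta$ into an $L^2$ membership, because the obstruction is the divergence of $\int \rho^{2\delta - n}\, d\rho$ near $\rho = 0$ when $\delta \leq \frac{n-1}{2}$, a purely weight-theoretic issue that embeddings preserve.

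What is actually needed --- and what the paper does --- is a bootstrap on the decay rate using the limit equation itself. Since $\psitil_\infty$ solves $\Delta_L \psitil_\infty = \lambda \sqrt{\tfrac{n-1}{n}}\, |\lie\psitil_\infty|_g\, \tfrac{d\tau}{\tau}$, and since both $\lie\psitil_\infty$ and $\tfrac{d\tau}{\tau}$ decay like $\rho^\delta$, the right-hand side belongs to $X^{0,p}_{2\delta}$. Proposition \ref{propIsomVectLaplacian} (the isomorphism theorem for $\Delta_L$, valid for weights up to $\delta' < n$) then upgrades $\psitil_\infty$ to $X^{2,p}_{\min\{2\delta,\,\delta'\}}$. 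Iterating this finitely many times reaches a weight $\delta^{(k)} > \frac{n-1}{2}$, at which point Lemma \ref{lmEmbeddingXIntoL} finally applies. Without this self-improvement via the equation, your proof does not cover all $\delta \in (0;n)$.
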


\begin{proof}
Set $\epsilon_i = \frac{1}{i}$. The $\epsilon_i$-sub-critical constraint equations \eqref{eqSubCritConstraint} admit a solution $(\phi_i, \psi_i)$, where $\phi_i \in X^{2, p}_+(M, \bR)$ and $\psi_i \in X^{2, p}_\delta(M, T^* M)$. If the energy $\gammatil_i = \gammatil_\beta(\phi_i, \psi_i)$ is not bounded, then Proposition \ref{propUnboundedEnergy} asserts that there exists a non-zero solution $\psi \in X^{2, p}_\delta(M, T^*M)$ of the limit equation. However, we remark that any solution $\psi \in X^{2, p}_\delta(M, T^* M)$ of the limit equation \eqref{eqLimit} belongs to $W^{2, 2}_0(M, T^*M)$. Indeed, by the Sobolev embedding theorem, $\lie \psi \in C^{0, 0}_\delta$, hence $\left|\lie\psi\right|_g \frac{d\tau}{\tau} \in X^{0, p}_{2\delta}$. Let $\delta'$ be such that $\frac{n-1}{2} < \delta' < n$. By Proposition \ref{propIsomVectLaplacian} (isomorphism theorem for the vector Laplacian), we obtain that $\psi \in X^{2, p}_{\delta''}(M, T^*M)$, where $\delta'' = \min \left\{2\delta, \delta'\right\}$. Upon iterating this argument, we obtain that $\psi \in X^{2, p}_{\delta^{(k)}} (M, T^* M) \subset W^{2, 2}_0(M, T^*M)$ for some $\delta^{(k)} > \frac{n-1}{2}$ by Lemma \ref{lmEmbeddingXIntoL}. We get a contradiction with the assumption of the corollary. This proves that the energy $\gammatil_i$ is bounded, and we conclude by Proposition \ref{propExistenceSolution} that there exists a solution $(\phi_\infty, \psi_\infty)$ of the constraint equations.\\

We now prove that the set of solutions is compact. First remark that there exists a constant $C > 0$ such that for any pair $(\phi,\psi) \in X^{2, p}_0(M, \bR) \times X^{2, p}_0(M, T^*M)$ we have $\gammatil_\beta(\phi, \psi) \leq C$. Indeed, if such a constant does not exist, there exist a sequence $(\phi_i, \psi_i)$ such that $\gammatil_\beta(\phi_i, \psi_i) \to \infty$. In this case Proposition \ref{propUnboundedEnergy} asserts the existence of a non-zero solution of the limit equation \eqref{eqLimit}, which is a contradiction. Let now $(\phi_i,\psi_i) \in X^{2, p}_0(M, \bR) \times X^{2, p}_0(M, T^*M)$ be an arbitrary sequence of solutions of the constraint equations \eqref{eqLichnerowicz}-\eqref{eqVector}. Proposition \ref{propExistenceSolution} then gives that some subsequence of $(\phi_i,\psi_i)$ converges to $(\phi_\infty,\psi_\infty) \in X^{2, p}_+(M, \bR) \times X^{2, p}_\delta(M, T^*M)$, a solution of the constraint equations.
\end{proof}

\section{End of the proof of Theorem \ref{thmMainTheorem}}\label{secHigherRegularity}

In this section, we complete the proof of Theorem \ref{thmMainTheorem} by studying the regularity of the solutions of the constraint equations. We assume that the limit equation \eqref{eqLimit} does not admit a non-trivial solution. Higher regularity will be gained by iteration. We start by proving that the solutions of the Lichnerowicz equation enjoy better regularity:

\begin{prop}\label{propRegLichnerowicz}
Assume that $(M, g)$ is a $C^{l, \beta}$-asymptotically hyperbolic manifold with constant scalar curvature.
\begin{itemize}
	\item If $\tau - 1 \in X^{k-1, p}_\delta(M, \bR)$ and $\sigma \in X^{k-1, p}_\delta$ for some $2\leq k \leq l$, $p \in (n; \infty)$ and $\delta \in (0; n)$, then the solution $\phi$ of the Lichnerowicz equation \[-\frac{4(n-1)}{n-2} \Delta \phi - n(n-1) \phi + n(n-1) \tau^2 \phi^{\kappa+1} = \left|\sigma\right|^2_g \phi^{-\kappa-3}\] is such that $\phi - 1 \in X^{k, p}_\delta(M, \bR)$. Further, the map $X^{k-1, p}_\delta \ni \sigma \mapsto \phi-1 \in X^{k, p}_\delta$ is continuous.
	\item If $\tau - 1 \in C^{k-1, \alpha}_\delta(M, \bR)$ and $\sigma \in C^{k-1, \alpha}_\delta$ for some $k\geq 2$ such that $k + \alpha \leq l + \beta$ and $\delta \in (0; n)$, then the solution $\phi$ of the Lichnerowicz equation is such that $\phi - 1 \in C^{k, \alpha}_\delta(M, \bR)$. Further, the map $C^{k-1, \alpha}_\delta \ni \sigma \mapsto \phi-1 \in C^{k, \alpha}_\delta$ is continuous.
\end{itemize}
\end{prop}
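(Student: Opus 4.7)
Setting $u = \phi - 1$, the Lichnerowicz equation rewrites as
\[
Lu := -\frac{4(n-1)}{n-2}\Delta u + n(n-1)\kappa\, u = R(u, \tau - 1, \sigma),
\]
where $R(u,\tau-1,\sigma) = n(n-1)\bigl[(1+u) - \tau^2(1+u)^{\kappa+1} + \kappa u\bigr] + |\sigma|^2_g (1+u)^{-\kappa-3}$ is analytic in $u$ (well-defined because $\phi$ is bounded away from zero by Proposition \ref{propLichEq}), and vanishes to second order at the origin apart from terms linear in $(\tau^2 - 1)$ or quadratic in $\sigma$ which already inherit the $\rho^\delta$ decay from the data. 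Computing the action of $L$ on $\rho^s$ in the hyperbolic model gives indicial roots $s = -1$ and $s = n$; hence a Fredholm argument strictly parallel to Proposition \ref{propIsomVectLaplacian} shows that $L$ is an isomorphism $X^{m+2, p}_\delta \to X^{m, p}_\delta$ and $C^{m+2, \alpha}_\delta \to C^{m, \alpha}_\delta$ for every $\delta \in (0, n)$ and every admissible $m$.

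The plan is then to bootstrap. As a base case, we would first upgrade $u \in X^{2, p}_0$ (known from Proposition \ref{propLichEq}) to $u \in X^{2, p}_\delta$, exploiting the pointwise bound $|u| \leq C\rho^\delta$ near $\partial M$ supplied by Proposition \ref{propSubCritBehav2} (applied with $\epsilon = 0$) together with Lemmas \ref{lmBanachAlgebra} and \ref{lmYoung1} to verify that each of the three pieces of $R$ lies in $X^{0, p}_\delta$: $|\sigma|^2_g(1+u)^{-\kappa-3}$ sits in $X^{1,p}_{2\delta} \subset X^{0,p}_\delta$, the term involving $(\tau^2-1)(1+u)^{\kappa+1}$ sits in $X^{1,p}_\delta$, and the purely $u$-nonlinear piece $(1+u)^{\kappa+1} - 1 - (\kappa+1)u$ decays like $\rho^{2\delta}$ thanks to the pointwise estimate. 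Inverting $L$ then yields $u \in X^{2,p}_\delta$.

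For the inductive step, assume $u \in X^{m, p}_\delta$ for some $2 \leq m \leq k-1$. Since $mp > n$, $X^{m, p}_\delta$ is a Banach algebra (Lemma \ref{lmBanachAlgebra}) that is stable under composition with smooth functions of $\phi = 1 + u$ uniformly bounded from below; combined with the hypothesis $\tau - 1,\, \sigma \in X^{k-1, p}_\delta \subset X^{m, p}_\delta$, this shows $R \in X^{m, p}_\delta$. The isomorphism theorem for $L$ then promotes $u$ to $X^{m+2, p}_\delta$, and iterating the argument terminates at $u \in X^{k, p}_\delta$. The H\"older case is entirely analogous, using the Banach algebra property of $C^{m, \alpha}_\delta$ and the H\"older isomorphism above. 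Continuous dependence $\sigma \mapsto \phi - 1$ at each regularity is then read off from the Lipschitz statement of Proposition \ref{propLichEq} combined with the bootstrap, or equivalently by applying the implicit function theorem to $(u,\sigma) \mapsto Lu - R(u,\tau-1,\sigma)$ at a solution, whose linearization in $u$ is a compact perturbation of $L$.

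\textbf{Main obstacle.} The technical heart of the argument is the isomorphism theorem for the non-geometric operator $L$ on the weighted local Sobolev and H\"older spaces; this is not covered directly by Proposition \ref{propIsomVectLaplacian}, and we will need to combine the indicial analysis above with the non-geometric Fredholm theory developed in Appendix \ref{secFredholm} (and with the method alluded to in the remark following Proposition \ref{propLichEq}). A secondary subtlety is the base step, where one must convert the purely pointwise decay of $u$ given by Proposition \ref{propSubCritBehav2} into weighted Sobolev/H\"older decay; this crucially uses the quadratic vanishing of $R$ in its first argument to bootstrap the decay rate from $\rho^0$ to $\rho^\delta$.
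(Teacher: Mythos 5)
Your plan is essentially the paper's: linearize the Lichnerowicz equation around $\phi=1$, show the constant-coefficient linearization is an isomorphism on the weighted local Sobolev/H\"older spaces using the Fredholm theory of Appendix~\ref{secFredholm} (indicial roots $-1$ and $n$, hence radius $\tfrac{n+1}{2}$, which is exactly what the paper computes), bootstrap regularity and decay, and obtain continuity from the implicit function theorem after recognizing the full linearization as a compact perturbation. The cosmetic differences are that you work with $u=\phi-1$ throughout, whereas the paper uses the auxiliary solution $\phi_-$ of \eqref{eqPrescScal} and subtracts its equation to set up the weighted elliptic estimate, and then switches to $u=\log\phi$ for the implicit-function-theorem step. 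Both parametrizations lead to the same structure, and your decomposition of the nonlinearity $R$ into pieces with the right weight is sound.

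Your ``main obstacle'' paragraph, however, rests on a misconception. The operator $L=-\tfrac{4(n-1)}{n-2}\Delta+n(n-1)\kappa$ is geometric: it is a constant multiple of $-\Delta+n$ acting on scalar functions (it is precisely the operator $P_0$ in the paper's proof). Theorem~\ref{thmFredholmLocalSobolev} therefore applies to it directly, once one checks the $L^2$ estimate at infinity (immediate from integration by parts since the zeroth-order coefficient is positive) and the triviality of the $L^2$ kernel (maximum principle, or Lemma~\ref{lmAPrioriX2p}). There is no ``non-geometric'' obstacle here, and no need to invoke a separate argument as you feared. What genuinely fails to be geometric is the linearization $DH_u$ (in your notation $L-\partial_u R$), whose coefficients depend on $u$, $\tau$, $\sigma$; this is exactly what the paper treats as $P_0+P_1$ with $P_1$ compact, thanks to the $\rho^\delta$ decay of $u$, $\tau^2-1$, and $|\sigma|_g^2$. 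You do correctly flag that the linearization is a compact perturbation of $L$, but you should locate the ``non-geometric'' issue there rather than in $L$ itself. Note also that Appendix~\ref{secFredholm} is stated precisely for geometric operators, so describing it as ``the non-geometric Fredholm theory'' is inaccurate; the remark after Proposition~\ref{propLichEq} points to Section~\ref{secHigherRegularity} (this very proposition's proof) as the model for handling the non-geometric linearization by a compact-perturbation argument, not for a freestanding non-geometric Fredholm theorem. Finally, your alternative route for continuity (Lipschitz in $L^\infty$ ``combined with the bootstrap'') is too vague to stand on its own: the bootstrap does not automatically produce continuous dependence in the $X^{k,p}_\delta$-norm, and the implicit function theorem route that you also mention (and that the paper carries out) is the one to use.
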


\begin{proof}
We first give the proof for local Sobolev spaces. The first step is to prove that $\phi-1 \in X^{k, p}_\delta(M, \bR)$ (see also \cite[Theorem 3.3]{Gicquaud}). From a naive application of elliptic regularity in M\"obius charts, see e.g. \cite[Lemma 4.8]{LeeFredholm}, we get that $\phi \in X^{k, p}_0(M, \bR)$. We rewrite the Lichnerowicz equation \eqref{eqLichnerowicz} as follows:

\[ -\frac{4(n-1)}{n-2} \Delta \left(\phi - \phi_-\right) + n(n-1) \left[\tau^2 (\kappa+1) \int_0^1 \left(\lambda \phi + (1-\lambda) \phi_-\right)^\kappa d\lambda - 1\right] \left(\phi - \phi_-\right) = \left|\sigma\right|^2_g \phi^{-\kappa-3}\]
where we have subtracted the equation \eqref{eqPrescScal} satisfied by $\phi_-$ from the Lichnerowicz equation and used a standard integration trick. From our naive estimate, we have that the coefficient in brackets belongs to $X^{k-1, p}_0(M, \bR) \subset C^{k-2, 0}_0(M, \bR)$. Applying once more elliptic regularity in M\"obius charts, and using the $L^\infty$-estimate $\phi - 1 = O(\rho^\delta)$ (Proposition \ref{propSubCritBehav2}), we infer that $\phi-1 \in X^{k, p}_\delta(M, \bR)$.\\

The second step is to prove the continuous dependence of $\phi-1 \in X^{k,p}_\delta(M, \bR)$ with respect to $\sigma \in X^{k-1, p}_\delta$. The proof is based on the implicit function theorem. Set $u = \log \phi$. Then $u$ satisfies the following equation (see also Proposition \ref{propLichEq} and recall that $g$ has constant scalar curvature): \[-\frac{4(n-1)}{n-2}\left(\Delta u + \left|d u\right|_g^2\right) + n(n-1)\left(\tau^2 e^{\kappa u} - 1\right) - \left|\sigma\right|_g^2 e^{-(\kappa+4) u}= 0.\] From the first step, $u \in X^{k,p}_\delta(M, \bR)$. Remark that the map
\[
\begin{array}{rccc}
H : & X^{k, p}_\delta(M, \bR) & \to & X^{k-2, p}_\delta(M, \bR)\\
		& v & \mapsto &  -\frac{4(n-1)}{n-2}\left(\Delta v + \left|d v\right|_g^2\right) + n(n-1)\left[\tau^2 \left(e^{\kappa v} - 1\right) + \tau^2 - 1\right] - \left|\sigma\right|_g^2 e^{-(\kappa+4) v}
\end{array}
\]
is well defined and everywhere $C^1$. This comes from the following facts:
\begin{itemize}
	\item from the fact that $k \geq 2$, $p > n$, $d v \in X^{k-1, p}_\delta(M, T^*M)$ and that $\otimes : X^{k-1, p}_\delta(M, T^*M) \times X^{k-1, p}_\delta(M, T^*M) \to X^{k-1, p}_\delta(M, \left(T^*M\right)^{\otimes 2})$ is a continuous bilinear map, hence $v \mapsto |dv|_g^2$ is $C^1$;
	\item $X^{k, p}_\delta(M, \bR)$ is a (non unital) Banach algebra, so the map $u \mapsto e^{\kappa u} - 1 = \sum_{i=1}^\infty \frac{\kappa^i u^i}{i!}$ is analytic;
	\item similarly, $X^{k, p}_0(M, \bR) \supset X^{k, p}_\delta(M, \bR)$ is a unital Banach algebra, so $v \mapsto e^{-(\kappa+4) v}$ is analytic as a map from $X^{k, p}_0(M, \bR)$ to itself. Furthermore, multiplication by $\left|\sigma\right|_g^2 \in X^{k-2, p}_{2 \delta}(M, \bR)$ yields  a continuous linear mapping from $X^{k, p}_0(M, \bR) \subset W^{k-2, \infty}_0(M, \bR)$ to $X^{k-2, p}_\delta(M, \bR)$.
\end{itemize}

The differential $DH$ of $H$ at $u$ is given by

\[DH_u(v) = -\frac{4(n-1)}{n-2}\left(\Delta v + 2 \left\langle du, dv \right\rangle_g\right) + n(n-1)\kappa \tau^2 e^{\kappa u} v + (\kappa+4) \left|\sigma\right|_g^2 e^{-(\kappa+4) u} v.\]

We prove that $DH_u$ is an isomorphism. Indeed, let
\[
\begin{array}{rcccl}
P_0 : & X^{k, p}_\delta(M, \bR) & \to & X^{k-2, p}_\delta(M, \bR)\\
			& v & \mapsto &  -\frac{4(n-1)}{n-2} \Delta v + n(n-1) \kappa v & = \frac{4(n-1)}{n-2} \left(-\Delta v + n v\right),
\end{array}
\]
then  it is easy to show that $P_0$ satisfies the assumptions of Theorem \ref{thmFredholmLocalSobolev}. We compute as in the proof of Proposition \ref{propSubCritBehav}:
\begin{align*}
-\Delta \rho^\delta + n \rho^\delta
	& = \left[- \delta (\delta - n + 1) + n\right] \rho^\delta + O(\rho^{\delta+1})\\
	& = - (\delta+1)(\delta-n) \rho^\delta + O(\rho^{\delta+1}).
\end{align*}
Hence the indicial radius of $P_0$ (i.e. half the difference between the two roots of the equation $(\delta+1)(\delta-n) = 0$)  is $\frac{n+1}{2}$. Therefore, by Theorem \ref{thmFredholmLocalSobolev}, $P_0$ is a Fredholm operator with zero index for any $p \in (1, \infty)$, $2 \leq k \leq l$ and $\left|\delta - \frac{n-1}{2}\right| < \frac{n+1}{2}$, i.e. for any $\delta \in (-1; n)$. Let

\[
\begin{array}{rccc}
P_1 : & X^{k, p}_\delta(M, \bR) & \to & X^{k-2, p}_\delta(M, \bR)\\
			& v & \mapsto & - \frac{4(n-1)}{n-2}\left(2 \left\langle du, dv \right\rangle_g + n \left(e^{\kappa u}\tau^2-1\right) v\right) + (\kappa+4) \left|\sigma\right|_g^2 e^{-(\kappa+4) u} v.
\end{array}
\]

We claim that $P_1$ is a compact operator, thus $DH_u = P_0 + P_1$ is a Fredholm operator with zero index. Indeed, by Lemma \ref{lmBanachAlgebra}, $X^{k, p}_\delta \ni v \mapsto \langle du, dv\rangle_g \in X^{k, p-1}_{2\delta}$ is continuous. Then it follows by Proposition \ref{propRellich} that  the map $X^{k, p}_\delta \ni v \mapsto \langle du, dv\rangle_g \in X^{k, p-2}_\delta$ is compact. The proof is similar for the other two terms. From Lemma \ref{lmAPrioriX2p} we immediately get that the $X^{k, p}_\delta$-kernel of $DH_u$ is zero for any $\delta \in (0; n)$. This completes the proof of the fact that $DH_u$ is an isomorphism.\\

Applying the implicit functions theorem, we get that the map $\sigma \mapsto u$ is continuous (in fact analytic).
\end{proof}

We first prove Theorem \ref{thmMainTheorem} for the $X^{k, p}_\delta$-spaces. Let us denote 
\[S = \{(h, \psi) | (\phi = 1+h, \psi) \text{ solves \eqref{eqLichnerowicz}-\eqref{eqVector}}\} \subset X^{2, p}_+(M, \bR) \times X^{2, p}_0(M, T^*M).\] From Corollary \ref{corFundamentalCorollary}, $S$ is non-empty and compact. Let $L \subset X^{2, p}_+(M, \bR)$ and $V \subset X^{2, p}_0(M, T^*M)$ be the projections onto the first and second factor. Both $L$ and $V$ are compact. From Proposition \ref{propVectEq} combined with the fact the the image of a compact subset under a continuous map is compact, we infer that $V \subset X^{2, p}_\delta(M, T^*M)$ is compact. We apply next Proposition \ref{propRegLichnerowicz} and obtain that $L \subset X^{2, p}_\delta(M, \bR)$ is compact. This process can be continued inductively and we get that $V \subset X^{k, p}_\delta(M, T^*M)$ and $L \subset X^{k, p}_\delta(M, \bR)$ are compact. Hence $S \subset L \times V \subset X^{k, p}_\delta(M, \bR) \times X^{k, p}_\delta(M, T^*M)$ is precompact. Closedness of $S \subset X^{k, p}_\delta(M, \bR) \times X^{k, p}_\delta(M, T^*M)$ is a consequence of the fact that $S$ is the set of solutions of the constraint equations (hence the zero set of a certain continuous map $\cC: X^{k, p}_\delta(M, \bR) \times X^{k, p}_\delta(M, T^*M) \to X^{k-2, p}_\delta(M, \bR) \times X^{k-2, p}_\delta(M, T^*M)$).\\

We now turn our attention to the case of H\"older spaces. Let $p = \frac{n}{1-\alpha}$. The previous argument shows that the set $S$ is a compact subset of $X^{k, p}_\delta(M, \bR) \times X^{k, p}_\delta(M, T^*M)$. From Proposition \ref{propRellich} (Sobolev injection), we get that $S \subset\subset C^{k-1, \alpha}_\delta(M, \bR) \times C^{k-1, \alpha}_\delta(M, T^*M)$. Arguing as in the case of local Sobolev spaces, we conclude from Proposition \ref{propVectEq} and Proposition \ref{propRegLichnerowicz} that $S \subset C^{k, \alpha}_\delta(M, \bR) \times C^{k, \alpha}_\delta(M, T^*M)$ is compact.

\section{The limit system}\label{secLimit}

In this section we study the properties of the limit equation \eqref{eqLimit}: \[\Delta_L \psi = \lambda \sqrt{\frac{n-1}{n}} \left| \lie \psi \right|_g \frac{d \tau}{\tau},\] where $\lambda \in (0; 1]$, $\tau > 0$, $\tau - 1 \in X^{1, p}_\delta$ for some $p > n$ and $0<\delta<n$. In particular, we try to answer the following question: How large is the set of functions $\tau$ such that this equation admits no non-zero solution? The example constructed in \cite{DahlGicquaudHumbert} shows that this study may be complicated. We state here two propositions regarding the near-CMC case (Propositions \ref{propLnNearCMC} and \ref{propLinftyNearCMC}). In particular, the second one gives an explicit large upper bound for the $L^\infty$-norm of $\frac{d\tau}{\tau}$.

\begin{prop}[The $L^n$-near CMC case]\label{propLnNearCMC}
Let $C_g$ be the constant defined in Appendix \ref{secNoL2CKV} (Equation \ref{eqSobolevConstant}). If $\left\|\frac{d\tau}{\tau}\right\|_{L^n} < \frac{1}{2} \sqrt{\frac{n}{n-1}} C_g$, then the limit equation \eqref{eqLimit} does not admit any non-zero solution $\psi \in W^{1, 2}_0$.
\end{prop}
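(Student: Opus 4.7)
The plan is to test the limit equation against $\psi$ itself, integrate by parts, and use the Sobolev-type inequality associated with $C_g$ to produce an a priori inequality that forces $\lie\psi=0$; the absence of $L^2$ conformal Killing vector fields on $(M,g)$ established in Appendix~\ref{secNoL2CKV} will then conclude that $\psi=0$.

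First I would compute the $L^2$-pairing of both sides of \eqref{eqLimit} with $\psi$. Since $(\Delta_L\psi)_j=\nabla^i(\lie\psi)_{ij}$ and $\lie\psi$ is symmetric and traceless, integration by parts together with the identity
\[
(\lie\psi)_{ij}\nabla^i\psi^j=\tfrac{1}{2}|\lie\psi|_g^2
\]
yields
\[
\int_M\inn{\Delta_L\psi}{\psi}_g\,d\mu_g=-\frac{1}{2}\int_M|\lie\psi|_g^2\,d\mu_g.
\]
For $\psi\in W^{1,2}_0$ this identity requires a density argument, approximating $\psi$ by compactly supported smooth 1-forms in the $W^{1,2}_0$-norm; the boundary terms vanish in the limit because the approximants are compactly supported.

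Next, the right-hand side paired with $\psi$ is controlled via Cauchy--Schwarz pointwise and then a three-term Hölder inequality with exponents $\left(2,n,\tfrac{2n}{n-2}\right)$:
\[
\Bigl|\lambda\sqrt{\tfrac{n-1}{n}}\int_M|\lie\psi|_g\inn{\tfrac{d\tau}{\tau}}{\psi}_g d\mu_g\Bigr|
\leq \lambda\sqrt{\tfrac{n-1}{n}}\,\|\lie\psi\|_{L^2}\,\Bigl\|\tfrac{d\tau}{\tau}\Bigr\|_{L^n}\,\|\psi\|_{L^{2n/(n-2)}}.
\]
At this point I invoke the Sobolev inequality $C_g\|\psi\|_{L^{2n/(n-2)}}\leq\|\lie\psi\|_{L^2}$ whose validity (with positive $C_g$) is the content of the corollary announced at the end of Appendix~\ref{secNoL2CKV}. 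Combining this with the integration-by-parts identity gives
\[
\frac{1}{2}\|\lie\psi\|_{L^2}^2\leq \lambda\sqrt{\tfrac{n-1}{n}}\,C_g^{-1}\,\Bigl\|\tfrac{d\tau}{\tau}\Bigr\|_{L^n}\,\|\lie\psi\|_{L^2}^2.
\]
If $\lie\psi\not\equiv 0$, the hypothesis $\|d\tau/\tau\|_{L^n}<\tfrac{1}{2}\sqrt{\tfrac{n}{n-1}}\,C_g$ combined with $\lambda\leq 1$ produces a strict inequality $\tfrac{1}{2}<\tfrac{1}{2}$, a contradiction. Hence $\lie\psi\equiv 0$, i.e.\ $\psi^\sharp$ is a conformal Killing vector field; since $\psi\in W^{1,2}_0\subset L^2$, the no-$L^2$-conformal-Killing-vector result from Appendix~\ref{secNoL2CKV} forces $\psi=0$.

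The main technical obstacle I anticipate is the justification of the integration by parts for forms in the borderline regularity class $W^{1,2}_0$ on a non-compact manifold: I will need to check that smooth compactly supported 1-forms are dense in $W^{1,2}_0$, that both terms in the Green formula pass to the limit in $L^2$, and that the pairing with $d\tau/\tau\in L^n$ is well defined (which follows from $\psi\in L^{2n/(n-2)}$ via the Sobolev embedding). Everything else reduces to the two black-box ingredients provided by the appendix, namely the positivity of $C_g$ and the rigidity statement for conformal Killing fields.
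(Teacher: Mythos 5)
Your proposal is correct and follows the same route as the paper's proof: test the limit equation against $\psi$, use $\int_M\langle\Delta_L\psi,\psi\rangle_g\,d\mu_g=-\tfrac{1}{2}\int_M|\lie\psi|_g^2\,d\mu_g$, apply the three-term H\"older with exponents $(2,n,2^*)$, and invoke the definition and positivity of $C_g$ to reach a contradiction. You are slightly more careful in one respect — you keep the $\lambda\in(0;1]$ factor explicitly and dispose of the residual case $\lie\psi\equiv 0$ via the no-$L^2$-conformal-Killing-field result, whereas the paper absorbs that case directly into the strict positivity of $C_g$ — but these are cosmetic variants of the same argument.
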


\begin{proof}
Indeed, assume that $\psi$ is a non-zero solution of the limit equation \eqref{eqLimit}. Then, multiplying \eqref{eqLimit} by $\psi$ and integrating by parts, we get:
\begin{align*}
- \frac{1}{2} \int_M \left|\lie \psi\right|_g^2 d\mu_g
	& =		 \sqrt{\frac{n-1}{n}} \int_M \left|\lie \psi\right|_g \left\langle \frac{d\tau}{\tau}, \psi\right\rangle_g d\mu_g,\\
\frac{1}{2} \int_M \left|\lie \psi\right|_g^2 d\mu_g
	& \leq \sqrt{\frac{n-1}{n}} \left(\int_M \left|\lie \psi\right|_g^2 d\mu_g\right)^{\frac{1}{2}} \left(\int_M \left|\frac{d\tau}{\tau}\right|_g^n d\mu_g\right)^{\frac{1}{n}}
				 \left(\int_M \left|\psi\right|_g^{2^*} d\mu_g\right)^{\frac{1}{2^*}},\\
\frac{1}{2} \left(\int_M \left|\lie \psi\right|_g^2 d\mu_g\right)^{\frac{1}{2}}
	& \leq \sqrt{\frac{n-1}{n}} \left(\int_M \left|\frac{d\tau}{\tau}\right|_g^n d\mu_g\right)^{\frac{1}{n}} \left(\int_M \left|\psi\right|_g^{2^*} d\mu_g\right)^{\frac{1}{2^*}}.\\
\end{align*}
Consequently,
\[\frac{1}{2} C_g \leq \frac{1}{2} \frac{\left(\int_M \left|\lie \psi\right|_g^2 d\mu_g\right)^{\frac{1}{2}}}{\left(\int_M \left|\psi\right|_g^{2^*} d\mu_g\right)^{\frac{1}{2^*}}} \leq \sqrt{\frac{n-1}{n}} \left(\int_M \left|\frac{d\tau}{\tau}\right|_g^n d\mu_g\right)^{\frac{1}{n}},\]
which contradicts the assumption of the proposition.
\end{proof}

\begin{prop}[The $L^\infty$-near CMC case]\label{propLinftyNearCMC}
Let $f : M \to \bR$ be a continuous function such that $\ric \leq -f g$. If $\left|\frac{d\tau}{\tau}\right|_g^2 < \frac{n}{n-1} f$, then the limit equation \eqref{eqLimit} admits no non-zero solution $\psi \in W^{1, 2}_0$ for any $\lambda \in (0; 1]$.
\end{prop}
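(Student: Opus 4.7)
The plan is to test \eqref{eqLimit} against $\psi$ itself, as in the proof of Proposition~\ref{propLnNearCMC}, and then combine the resulting identity with the Bochner/Weitzenb\"ock formula so as to exploit the pointwise curvature hypothesis $\ric \leq -fg$. Suppose, for contradiction, that $\psi \in W^{1,2}_0(M,T^*M)$ is a non-zero solution of \eqref{eqLimit} for some $\lambda \in (0,1]$.

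First I would pair \eqref{eqLimit} with $\psi$ and integrate by parts: the symmetry and tracelessness of $\lie\psi$ give, exactly as in the proof of Proposition~\ref{propLnNearCMC}, the identity
\[
\frac{1}{2}\int_M |\lie\psi|_g^2 d\mu_g = -\lambda\sqrt{\frac{n-1}{n}}\int_M |\lie\psi|_g \left\langle \frac{d\tau}{\tau}, \psi\right\rangle_g d\mu_g.
\]
Applying Cauchy--Schwarz pointwise to the inner product, then using the strict pointwise bound $|d\tau/\tau|_g < \sqrt{\tfrac{n}{n-1}\,f}$ coming from the hypothesis, and finally an $L^2$ Cauchy--Schwarz, yields the upper estimate
\[
\int_M |\lie\psi|_g^2 d\mu_g < 4\lambda^2 \int_M f |\psi|_g^2 d\mu_g.
\]
The complementary lower estimate is read off the Bochner/Weitzenb\"ock identity for one-forms,
\[
\frac{1}{2}\int_M |\lie\psi|^2 d\mu_g = \int_M |\nabla\psi|^2 d\mu_g + \Bigl(1-\frac{2}{n}\Bigr)\int_M (\nabla\cdot\psi)^2 d\mu_g - \int_M \ric(\psi,\psi) d\mu_g,
\]
which under $\ric \leq -fg$ produces $\int_M|\lie\psi|^2 d\mu_g \geq 2\int_M f|\psi|^2 d\mu_g$, together with the non-negative gradient and divergence contributions that will be needed to tighten this bound.

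The main difficulty is to extract from the Bochner identity enough extra room to close the two estimates into a contradiction uniformly in $\lambda \in (0,1]$: a direct comparison of the two displayed inequalities only suffices for $\lambda \leq 1/\sqrt{2}$. I expect the remaining range $\lambda \in (1/\sqrt{2},1]$ to be handled by retaining the $|\nabla\psi|^2$ and $(\nabla\cdot\psi)^2$ terms in the Bochner formula and combining them with the Sobolev/spectral estimate for the vector Laplacian on asymptotically hyperbolic manifolds whose positivity is established in Appendix~\ref{secNoL2CKV}. This should upgrade the lower bound to $\int_M |\lie\psi|^2 d\mu_g \geq 4\int_M f|\psi|^2 d\mu_g$, contradicting the strict upper estimate above for every $\lambda \in (0,1]$ and forcing $\psi \equiv 0$.
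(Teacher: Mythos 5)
Your proposal stalls exactly where you flag the difficulty: combining the two displayed inequalities only gives a contradiction for $\lambda \le 1/\sqrt{2}$, and the sketched remedy for $\lambda \in (1/\sqrt{2},1]$ does not close the gap. In particular, the proposed upgrade $\int_M |\lie\psi|^2\,d\mu_g \ge 4\int_M f|\psi|^2\,d\mu_g$ would require $\int_M\bigl(|\nabla\psi|^2 + (1-\tfrac2n)(\divg\psi)^2\bigr)\,d\mu_g \ge \int_M f|\psi|^2\,d\mu_g$, and neither the Bochner identity nor the Sobolev constant $C_g$ from Appendix~\ref{secNoL2CKV} yields this: $C_g$ controls the $L^2$ norm of $\lie\psi$ against the $L^{2^*}$ norm of $\psi$, not against an $f$-weighted $L^2$ norm, and on an asymptotically hyperbolic manifold there is no uniform spectral gap of this strength (the function $f$ can take values close to $n-1$ at infinity, which is exactly where the first eigenvalue of the vector Laplacian is attained in the continuous spectrum). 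So the second half of the proposal is not a finishable route.

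The loss of the factor $\sqrt{2}$ occurs in Step 2, where you apply $L^2$ Cauchy--Schwarz with the fixed weight $\sqrt{f}$. The paper avoids this by \emph{not} collapsing $|\lie\psi|$ to an $L^2$ quantity and instead keeping the $|\nabla\psi|^2$ term alive so it can cancel. Concretely, one pairs the limit equation with $\psi$ and writes the left side via the Bochner formula \eqref{eqBochnerVectLaplacian}, obtaining
\[
\int_M \Bigl(|\nabla\psi|_g^2 + \bigl(1-\tfrac{2}{n}\bigr)|\divg\psi|^2 - \ric(\psi,\psi)\Bigr)\,d\mu_g
= \lambda\sqrt{\tfrac{n-1}{n}}\int_M |\lie\psi|_g\,\bigl\langle\tfrac{d\tau}{\tau},\psi\bigr\rangle_g\,d\mu_g
\le 2\lambda\sqrt{\tfrac{n-1}{n}}\int_M |\nabla\psi|_g\,\bigl|\tfrac{d\tau}{\tau}\bigr|_g\,|\psi|_g\,d\mu_g.
\]
Then, pointwise, one applies the weighted AM--GM inequality $2ab \le \epsilon^2 a^2 + \epsilon^{-2}b^2$ with the \emph{pointwise-chosen} weight $\epsilon^2 = \bigl(\lambda\sqrt{\tfrac{n-1}{n}}\bigl|\tfrac{d\tau}{\tau}\bigr|_g\bigr)^{-1}$, giving
\[
2\lambda\sqrt{\tfrac{n-1}{n}}\,|\nabla\psi|_g\,\bigl|\tfrac{d\tau}{\tau}\bigr|_g\,|\psi|_g
\le |\nabla\psi|_g^2 + \lambda^2\tfrac{n-1}{n}\bigl|\tfrac{d\tau}{\tau}\bigr|_g^2\,|\psi|_g^2.
\]
The $|\nabla\psi|^2$ terms cancel exactly, leaving $\int_M \bigl(-\ric(\psi,\psi) - \lambda^2\tfrac{n-1}{n}|\tfrac{d\tau}{\tau}|^2_g|\psi|_g^2\bigr)\,d\mu_g \le 0$, and the hypotheses $\ric \le -fg$, $\tfrac{n-1}{n}|\tfrac{d\tau}{\tau}|_g^2 < f$, $\lambda\le 1$ force $\psi\equiv 0$ for the full range $\lambda\in(0,1]$ in one stroke. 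Your first three steps are correct as far as they go; the missing idea is to delay the $L^2$ estimate and instead use the pointwise AM--GM with a weight depending on $\bigl|\tfrac{d\tau}{\tau}\bigr|_g$, which is what removes the spurious factor of two.
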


\begin{rk}\label{rkEinstein}
 Assume that the manifold $(M, g)$ is Einstein, i.e. $\ric = -(n-1) g$, then Proposition \ref{propLinftyNearCMC} applies with $f = n-1$. In this case, we get that the limit equation has no solution if $\left|\frac{d\tau}{\tau}\right|^2_g < n$.
\end{rk}

\begin{proof}
We start from the Bochner-type formula (Equation \ref{eqBochnerVectLaplacian}) given in Appendix \ref{secNoL2CKV}:
\[\Delta_L \psi_i = \Delta \psi_i + \left(1 - \frac{2}{n}\right) \nabla_i \divg \psi + \ricud{j}{i} \psi_j.\]
Inserting this formula in the limit equation \eqref{eqLimit}, contracting with $\psi$ and integrating by parts, we get:

\begin{align*}
\int_M \left(\left|\nabla \psi\right|_g^2 + \left(1 - \frac{2}{n}\right) \left|\divg \psi\right|^2 - \ric(\psi, \psi)\right) d\mu_g
	&  =   \lambda \sqrt{\frac{n-1}{n}} \int_M \left|\lie \psi\right|_g \left\langle \frac{d\tau}{\tau}, \psi \right\rangle d\mu_g\\
	& \leq 2 \lambda \sqrt{\frac{n-1}{n}} \int_M \left|\nabla \psi\right|_g \left|\frac{d\tau}{\tau}\right|_g \left|\psi\right|_g d\mu_g
\end{align*}
At any point where $d\tau \neq 0$, we set $\epsilon^2 = \frac{1}{\lambda \sqrt{\frac{n-1}{n}}} \left|\frac{d\tau}{\tau}\right|_g^{-1}$ and use the standard inequality \[2 \left|\nabla \psi\right|_g \left|\psi\right|_g \leq \epsilon^2 \left|\nabla \psi\right|_g^2 + \epsilon^{-2} \left|\psi\right|_g^2\] and obtain that \[2 \lambda \sqrt{\frac{n-1}{n}} \left|\nabla \psi\right|_g \left|\frac{d\tau}{\tau}\right|_g \left|\psi\right|_g \leq \left|\nabla \psi\right|^2 + \lambda^2 \frac{n-1}{n} \left|\frac{d\tau}{\tau}\right|_g^2 \left|\psi\right|_g^2.\] Remark that this inequality trivially holds at points where $d\tau = 0$. Hence, \[\int_M \left(\left|\nabla \psi\right|_g^2 + \left(1 - \frac{2}{n}\right) \left|\divg \psi\right|^2 - \ric(\psi, \psi)\right) d\mu_g \leq \int_M \left(\left|\nabla \psi\right|_g^2 + \lambda^2 \frac{n-1}{n} \left|\frac{d\tau}{\tau}\right|_g^2 \left|\psi\right|_g^2\right) d\mu_g.\]
In particular, the following inequality holds: \[\int_M \left(-\ric(\psi, \psi) - \lambda^2 \frac{n-1}{n} \left|\frac{d\tau}{\tau}\right|_g^2 \left|\psi\right|_g^2\right) d\mu_g \leq 0.\]
Since we assumed that $\ric \leq -f g$, this implies that \[\int_M \left(f - \lambda^2 \frac{n-1}{n} \left|\frac{d\tau}{\tau}\right|_g^2\right) \left|\psi\right|_g^2 d\mu_g \leq 0.\]
It follows that $\psi = 0$.
\end{proof}

Remark that we can assume that $f \to n-1$ at infinity. Since $\frac{d\tau}{\tau}$ tends to zero at infinity, the assumption of the proposition is fulfilled outside a relatively compact open subset $\Omega$. In view of \cite[Theorem 4.1]{Lohkamp}, there exists metrics which coincide with $g$ outside $\Omega$, are arbitrarily close to $g$ for the $C^0$-norm in $\Omega$ and have arbitrarily low Ricci curvature inside $\Omega$. Hence, we easily get the following corollary:

\begin{cor}\label{corDensity}
Let $(M, g_0)$ be a $C^{l, \beta}$-asymptotically hyperbolic manifold with $l+\beta \geq 2$. Given $\tau \in C^1$, $\tau > 0$ such that $\left|\frac{d\tau}{\tau}\right|_g = O(\rho^\delta)$ for some $\delta > 0$, there exist metrics $g$ such that $g-g_0$ is arbitrarily $C^0$-small, compactly supported and such that the limit equation \eqref{eqLimit} (defined with respect to the metric $g$ and $\tau$) admits no non-zero solution.
\end{cor}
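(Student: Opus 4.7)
The strategy is to apply Proposition \ref{propLinftyNearCMC} after deforming $g_0$ via Lohkamp's theorem so that the Ricci curvature becomes arbitrarily negative on the compact region where the pointwise near-CMC inequality may fail, while leaving the metric unchanged at infinity.

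First I would localize the failure of the pointwise inequality. Since $(M, g_0)$ is asymptotically hyperbolic, $\ric_{g_0} = -(n-1) g_0 + o(1)$ near $\partial M$, so one can pick a continuous function $f_0 : M \to \mathbb{R}$ with $f_0 \to n-1$ at infinity and $\ric_{g_0} \leq -f_0 \, g_0$ on $M$. Combined with the decay $|d\tau/\tau|_{g_0} = O(\rho^\delta)$, the inequality $|d\tau/\tau|_{g_0}^2 < \frac{n}{n-1} f_0$ holds outside some relatively compact open set $\Omega \subset M$. Enlarging $\Omega$ slightly if needed, one may assume the failure set is contained well inside $\Omega$.

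Next I invoke \cite[Theorem 4.1]{Lohkamp}: for any $\varepsilon > 0$ and any $K > 0$, there exists a metric $g$ on $M$ that coincides with $g_0$ on $M \setminus \Omega$, satisfies $\|g - g_0\|_{C^0(M, g_0)} < \varepsilon$, and has $\ric_g \leq -K g$ on $\Omega$. Because $g = g_0$ outside a compact set, $g$ is still $C^{l, \beta}$-asymptotically hyperbolic with the same defining function $\rho$, so the limit equation \eqref{eqLimit} for the pair $(g, \tau)$ is well defined. The essential observation is that since $\overline{\Omega}$ is compact and $\tau$ is fixed, the quantity $|d\tau/\tau|_g^2$ is uniformly bounded on $\overline{\Omega}$ by a constant depending only on $\tau$, $g_0$, $\Omega$ and $\varepsilon$, and in particular \emph{independent of $K$}. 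Hence one may first fix $\varepsilon$, thereby fixing the bound, and then choose $K$ larger than $\frac{n-1}{n}$ times this bound.

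Finally, I would define a continuous function $f$ on $M$ by taking $f = f_0$ on $M \setminus \Omega$ and $f = K$ on the interior region where the Lohkamp estimate is really needed, with a continuous interpolation in the intermediate shell (which can be arranged by performing the Lohkamp perturbation on a slightly larger open set than the original failure set). One then has $\ric_g \leq -f g$ and $|d\tau/\tau|_g^2 < \frac{n}{n-1} f$ everywhere on $M$, so Proposition \ref{propLinftyNearCMC} applies and yields that the limit equation for $(g, \tau)$ admits no non-zero $W^{1, 2}_0$ solution for any $\lambda \in (0; 1]$. The only subtle point is the apparently circular choice of $K$; the key is that the bound on $|d\tau/\tau|_g^2$ over $\overline{\Omega}$ is independent of $K$ once $\varepsilon$ is fixed, which breaks the circularity and allows $K$ to be chosen large enough a posteriori.
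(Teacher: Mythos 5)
Your proposal is correct and follows exactly the route the paper sketches in the paragraph preceding the corollary: use Lohkamp's theorem to make the Ricci curvature arbitrarily negative in a compact region while keeping the metric $C^0$-close and unchanged near infinity, then apply Proposition \ref{propLinftyNearCMC}. Your explicit observation that the bound on $|d\tau/\tau|_g^2$ over $\overline{\Omega}$ depends only on $\varepsilon$ and not on $K$ is precisely the point that justifies the paper's phrase ``we easily get the following corollary.''
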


\appendix
\section{Fredholm theorem for the local Sobolev spaces}\label{secFredholm}

The aim of this section is to prove the analog of \cite[Theorem C]{LeeFredholm} for the weighted local Sobolev spaces:
\begin{theorem}[Fredholm Theorem]\label{thmFredholmLocalSobolev}
Let $(M,g)$ be a connected asymptotically hyperbolic $n$-manifold of class $C^{l,\beta}$, with $n\geq 1$, $l\geq 2$, and $0\leq \beta< 1$, and let $E\rightarrow M$ be a geometric tensor bundle over $M$. Suppose that $P:C^\infty(M,E)\rightarrow C^\infty(M,E)$ is an elliptic, formally self-adjoint, geometric partial differential operator of order $m$, $0<m\leq l$, and assume that there exists a compact set $K\subset M$ and a positive constant $C$ such that 
\begin{equation}\label{EstimateInfinity}
\|u\|_{L^2}\leq C\,\|Pu\|_{L^2} \text{ for all } u\in C^\infty_c(M\setminus K;E).
\end{equation}
Let $R$ be the indicial radius of $P$.  If $1<p<\infty$ and $m\leq k\leq l$ then the natural extension 
\begin{equation*}
P:X^{k,p}_{\delta}(M,E)\rightarrow X^{k-m,p}_\delta(M,E)
\end{equation*}
is Fredholm if and only if $|\delta-\frac{n-1}{2}|<R$. In that case, its index is zero, and its kernel is equal to the $L^2$ kernel of $P$.
\end{theorem}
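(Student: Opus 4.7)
The proof would proceed in the spirit of Lee's proof of Theorem C in \cite{LeeFredholm}, with the $L^p$-sums over M\"obius charts replaced by $L^\infty$-suprema throughout. Because $P$ is a geometric operator, when pulled back by a M\"obius chart $\Phi^2_{p_0}:(\Phi_{p_0})^{-1}(B_2)\to B_2$ its coefficients are uniformly bounded in $C^{l-m,\beta}(B_2)$, so the standard interior elliptic estimate gives
\[\|u\|_{W^{k,p}(B_1)}\leq C\bigl(\|Pu\|_{W^{k-m,p}(B_2)}+\|u\|_{L^p(B_2)}\bigr)\]
with constant $C$ independent of $p_0$. Multiplying by $\rho(p_0)^{-\delta}$ and taking the supremum over $p_0\in M$ yields the basic local estimate
\[\|u\|_{X^{k,p}_\delta(M,E)}\leq C\bigl(\|Pu\|_{X^{k-m,p}_\delta(M,E)}+\|u\|_{X^{0,p}_\delta(M,E)}\bigr).\]

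The crucial step would be to prove an analog of Lee's decay estimate \cite[Theorem 5.9]{LeeFredholm}: under the hypothesis $|\delta-\tfrac{n-1}{2}|<R$, there is a compact subset $K_0\subset M$ and a constant $C$ such that
\[\|u\|_{X^{k,p}_\delta(M,E)}\leq C\,\|Pu\|_{X^{k-m,p}_\delta(M,E)}\]
for every $u$ supported in $M\setminus K_0$. I would follow Lee's strategy: straighten the boundary, freeze coefficients near a boundary point, reduce to the indicial operator on a half-space model, and invert it via the Mellin transform in the $\rho$-direction combined with Fourier analysis in the transversal direction. The only modification compared to Lee is that at each level of the reduction the $W^{k,p}_\delta$-norm must be replaced by the corresponding $X^{k,p}_\delta$-seminorm, which is pointwise in the ``chart centre'' variable. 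Since the Mellin analysis is pointwise in the transverse coordinate anyway, this replacement is compatible, provided one keeps track of uniform constants along the chart covering from \cite[Lemma 2.2]{LeeFredholm}.

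Combining the two previous estimates with the $L^2$-estimate \eqref{EstimateInfinity} applied on $M\setminus K$ (transferred to $X^{k,p}_\delta$-norms using Lemma \ref{lmEmbeddingXIntoL} and the Rellich theorem on a slightly enlarged compact set $K'$), I would obtain the semi-Fredholm estimate
\[\|u\|_{X^{k,p}_\delta(M,E)}\leq C\bigl(\|Pu\|_{X^{k-m,p}_\delta(M,E)}+\|u\|_{L^p(K')}\bigr).\]
Since $X^{k,p}_\delta\hookrightarrow L^p(K')$ compactly (Proposition \ref{propRellich}), standard functional analysis yields closed range and finite-dimensional kernel. Applying the same argument to the formal adjoint $P^*=P$ acting on the topologically dual spaces, one obtains a finite-dimensional cokernel, hence Fredholmness. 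A careful check of the condition $|\delta-\tfrac{n-1}{2}|<R$ confirms that it is invariant under passage to the dual weight, so the index is zero by symmetry. Finally, any element of $\ker P\subset X^{k,p}_\delta$ is smooth by elliptic regularity and lies in $L^2$ by Lemma \ref{lmEmbeddingXIntoL} applied to a suitable range of exponents, while conversely any $L^2$-kernel element is promoted to $X^{k,p}_\delta$ by chart-by-chart elliptic regularity.

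The main obstacle I expect is the decay estimate at infinity, which is the heart of Lee's theorem. Transporting the Mellin-transform analysis of the indicial operator from the global $W^{k,p}_\delta$-setting to the ``supremum-over-charts'' $X^{k,p}_\delta$-setting requires verifying that the inversion of the indicial operator, which naturally produces an $L^p$-type bound, can be localized to individual M\"obius charts without losing the sharp threshold $|\delta-\tfrac{n-1}{2}|<R$. One technical point is that density of smooth compactly supported sections in $X^{k,p}_\delta$ fails in general (only in the subspace $X^{k,p}_{\delta^+}$, see Proposition \ref{propDensity}), so approximation arguments must be carried out on the smaller space and then extended by continuity of $P$; this is the same phenomenon Lee handles via his functional-analytic setup, and should carry over.
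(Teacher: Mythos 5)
Your proposal correctly identifies the starting ingredients (interior elliptic estimates in M\"obius charts, a decay estimate on the model space, a combination with Rellich compactness) but diverges from the paper at the two most delicate points, and at both points the route you sketch has a genuine gap.

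First, the decay estimate on hyperbolic space. You propose to ``freeze coefficients near a boundary point, reduce to the indicial operator on a half-space model, and invert it via the Mellin transform in the $\rho$-direction.'' That is not Lee's method, and it is not what the paper does either. Lee's Theorem 5.9 and its local-Sobolev analogue here (Theorem \ref{theoremIsomHypSpace}, built on Proposition \ref{propEstimateInverseLocalSobolev}) are proven by taking the Green kernel $K(\xi,\eta)$ of the model operator on $\mathbb{H}^n$, using the pointwise decay $|K(\xi,\eta)|\leq C\,\rho(\xi,\eta)^{(n-1)/2+R-\epsilon}$ for $d_{\breve g}(\xi,\eta)\geq 1$, and running a Schur-test/Young-inequality argument over the uniformly locally finite cover by M\"obius balls. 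To adapt this to $X^{k,p}_\delta$ the paper splits the integral defining $P^{-1}f$ into a near part (handled by Young on a fixed ball) and a far part (handled by the kernel decay and \cite[Lemma 5.4]{LeeFredholm}); this requires separate bookkeeping because the $X$-norm is a supremum over charts rather than an $\ell^p$-sum. Your Mellin-transform plan is an entirely different machine, and you yourself flag that you have not checked it transports the sharp threshold $|\delta-(n-1)/2|<R$ into the sup-over-charts setting. As written, that step is unproven.

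Second, and more seriously, the cokernel. You propose to ``apply the same argument to the formal adjoint $P^*=P$ acting on the topologically dual spaces'' and conclude index zero ``by symmetry.'' This argument works for $W^{k,p}_\delta$ because $(L^p_\delta)^*\cong L^{p'}_{-\delta}$, so the dual problem lives in the same scale of spaces. It fails for $X^{k,p}_\delta$: these are $L^\infty$-type spaces over the family of M\"obius charts (more precisely $\ell^\infty(L^p)$-type), so their topological duals are $\ell^1$-type objects, not $X$-spaces, and the formal adjoint does not land in the same scale. Density of $C^\infty_c$ in $X^{k,p}_\delta$ also fails (Proposition \ref{propDensity} only gives it in the strictly smaller subspace $X^{k,p}_{\delta^+}$), which further obstructs the usual pairing argument. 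The paper sidesteps duality for $X$-spaces entirely: it restricts the operators $G$ and $H$ from Lee's structure theorem for $W^{k,p}_{\delta'}$ to $X^{k,p}_\delta$, and the real work of Theorem \ref{StructureTheorem} is a bootstrapping argument (with separate treatment of the cases $p>n$, $p<n$, $p=n$, and of the regimes $2R\leq(n-1)/n$ vs.\ $2R>(n-1)/n$, iterating the parametrix identity $u=(\Qtil P-\id)^iu+\Ttil^iu$ to gain weight one step at a time) to show that $G$ actually maps $X^{k,p}_\delta$ into itself. Once that is done, the decomposition $X^{k-m,p}_\delta=Y^{k-m,p}_\delta\oplus Z$ gives closed range, finite cokernel isomorphic to the fixed finite-dimensional $L^2$-kernel $Z$, and index zero directly, with no duality argument for the sup-type spaces. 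Your proposal omits this bootstrapping and replaces it with a duality claim that does not hold, so it does not close.

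A smaller remark: your semi-Fredholm step (combining the decay estimate, the interior estimate, and compactness of $X^{k,p}_\delta\hookrightarrow L^p(K')$ via Peetre's lemma) is sound and does give closed range and finite-dimensional kernel; it is also essentially how one reads off the kernel identification, which you get right. The structure-theorem route in the paper is doing extra work precisely because the cokernel cannot be reached by dualising.
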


The proof of this statement is very similar to that of \cite[Theorem C]{LeeFredholm}, which deals with the case of the weighted Sobolev and H\"older spaces, and follows essentially the same line.\\


First, note that arguing as in \cite[Lemma 4.6]{LeeFredholm} one easily shows that for $\delta\in \bR$, $1<p<\infty$, and $m\leq k\leq l$, $P$ extends naturally to a bounded mapping $P:X_\delta ^{k,p}(M,E)\rightarrow X_\delta^{k-m,p}(M,E)$.  Moreover, arguing as in the proof of \cite[Lemma 4.8]{LeeFredholm}, it is straightforward to prove the following consequence of elliptic regularity applied in M\"obius charts:

\begin{lemma}\label{lemmaEllipticRegularity1}
Let $P:C^\infty(M,E)\rightarrow C^\infty(M,E)$ be a geometric elliptic operator of order $m$. If $u\in X_{\delta}^{0,p}(M,E)$ for $\delta\in\bR$, $1<p<\infty$, $m\leq k\leq l$, and $Pu\in X_\delta^{k-m,p}(M,E)$, then $u\in X_\delta^{k,p}(M,E)$ and
\begin{equation*}
\|u\|_{X^{k,p}_\delta}\leq C\left(\|Pu\|_{X^{k-m,p}_\delta}+\|u\|_{X_\delta^{0,p}}\right).
\end{equation*}
\end{lemma}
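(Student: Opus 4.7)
The plan is to transfer the estimate to M\"obius charts, where $P$ becomes a uniformly elliptic operator with uniformly bounded coefficients, apply the standard Euclidean interior elliptic $W^{k,p}$ estimate, and then reassemble the local estimates into the weighted local Sobolev norm.

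First, I would recall the two basic features of M\"obius charts (see \cite[Chapter 2]{LeeFredholm}): for every $p_0 \in M$, the pullback of $g$ by $\Phi^r_{p_0}$ is uniformly $C^{l,\beta}$-close to the hyperbolic metric $\gbrev$ on $B_r$ (and hence uniformly equivalent to the Euclidean metric on any fixed bounded set), with bounds that do \emph{not} depend on $p_0$. Since $P$ is a \emph{geometric} operator of order $m$ built from covariant derivatives and curvature of $g$, its coefficients in these charts are therefore uniformly bounded in $C^{l-m,\beta}(B_2)$; in particular, the constants in the Euclidean Calder\'on--Zygmund interior estimate applied to $\bigl(\Phi^2_{p_0}\bigr)^* P$ on $B_2$ can be chosen \emph{independent of $p_0$}. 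Applying this estimate to $\tilde u := \bigl((\Phi^2_{p_0})^{-1}\bigr)^* u$ on the pair $(B_1, B_2)$ and pulling back $Pu$ similarly, we obtain
\[
\bigl\| \bigl((\Phi^1_{p_0})^{-1}\bigr)^* u \bigr\|_{W^{k,p}(B_1)}
\;\leq\; C \Bigl( \bigl\| \bigl((\Phi^2_{p_0})^{-1}\bigr)^*(Pu)\bigr\|_{W^{k-m,p}(B_2)} + \bigl\| \bigl((\Phi^2_{p_0})^{-1}\bigr)^* u\bigr\|_{L^p(B_2)} \Bigr),
\]
with $C$ independent of $p_0$.

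Next, I would observe that on the hyperbolic ball $B_2$ (centered at the base-point of the M\"obius chart) the defining function $\rho$ varies only by a uniformly bounded factor, i.e. there is $c \geq 1$ such that $c^{-1} \rho(p_0) \leq \rho(p) \leq c\,\rho(p_0)$ for every $p \in \bigl(\Phi^2_{p_0}\bigr)^{-1}(B_2)$; this follows from the explicit formula for $\Phi^r_{p_0}$ (rescaling by $\rho_0$) together with the fact that a hyperbolic ball of fixed radius has bounded Euclidean diameter in the half-space model. Consequently, the norm on the right-hand side computed over $B_2$ is controlled by a $p_0$-independent multiple of the weight $\rho^{-\delta}(p_0)$ times a local $W^{k-m,p}$ (resp.\ $L^p$) norm. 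Finally, covering $B_2$ by a uniformly bounded number of M\"obius charts of ``radius one'' centered at nearby points (as in Lemma \ref{lmDensity}), I would bound each of the two terms above by the corresponding $X^{k-m,p}_\delta$ and $X^{0,p}_\delta$ norms. Multiplying through by $\rho^{-\delta}(p_0)$ and taking the supremum over $p_0 \in M$ yields
\[
\|u\|_{X^{k,p}_\delta} \leq C \bigl( \|Pu\|_{X^{k-m,p}_\delta} + \|u\|_{X^{0,p}_\delta}\bigr),
\]
as claimed.

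The only substantive point is the uniformity (in $p_0$) of the interior elliptic constant, which reduces to the uniform $C^{l,\beta}$ control of the pulled-back metric in M\"obius charts; once this is in hand everything else is bookkeeping. The argument in no way uses ellipticity \emph{at infinity}, self-adjointness, or the indicial radius, which is appropriate since this lemma is purely an interior-regularity statement and will be combined later with boundary/large-scale arguments to yield the Fredholm theorem.
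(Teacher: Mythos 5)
Your proposal is correct and is essentially the paper's own argument: the paper proves this lemma by invoking elliptic regularity in M\"obius charts exactly as in \cite[Lemma 4.8]{LeeFredholm}, i.e.\ uniform interior Calder\'on--Zygmund estimates for the pulled-back operator (uniform because the pulled-back metric, hence the coefficients of the geometric operator $P$, are uniformly controlled in the charts), combined with the comparability of $\rho$ to $\rho(p_0)$ on each chart and the locally finite covering by unit M\"obius balls. The bookkeeping you describe, including taking the supremum over base points after weighting by $\rho^{-\delta}(p_0)$, is exactly what is needed.
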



The next step is to study the model case, i.e. the situation when $P$ is an operator on hyperbolic space satisfying the estimate \eqref{EstimateInfinity}. For this we will use the Poincar\'e ball model, identifying hyperbolic space with the unit ball $\bB\in \bR^n$ with coordinates $(\xi_1,\dots,\xi_n)$, and with the hyperbolic metric $\gbrev=4(1-|\xi|)^{-2}\sum_i(d\xi_i)^2$. The Green kernel of $P$ will be denoted by $K$. We refer to  \cite[Chapter 5]{LeeFredholm} for estimates of $K$ and other facts about $K$.

We set
\begin{equation*}
P^{-1}f(\xi)=\int_{\bB}K(\xi,\eta)f(\eta)d\mu_{\gbrev}(\eta)
\end{equation*}
and prove the following estimate:

\begin{prop}\label{propEstimateInverseLocalSobolev}
If $1<p<\infty$, and $|\delta-\frac{n-1}{2}|<R$, then there exists a constant $C>0$ such that 
\begin{equation*}
\|P^{-1}f\|_{X_\delta^{0,p}}\leq C\|f\|_{X_\delta^{0,p}}
\end{equation*}
for all $f\in X_{\delta}^{0,p}(\bB,E)$. 
\end{prop}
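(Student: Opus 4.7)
The strategy is to mimic the proof of the analogous weighted Sobolev result in \cite[Chapter 5]{LeeFredholm}, but replacing global integration by a supremum over M\"obius balls. By \cite[Lemma 2.2]{LeeFredholm}, cover $\bB$ by a uniformly locally finite collection of M\"obius balls $B_i=(\Phi_{x_i})^{-1}(B_1)$ centered at $x_i$ with $\rho_i:=\rho(x_i)$, and note that the $X^{0,p}_\delta$ norm is equivalent to $\sup_i \rho_i^{-\delta}\|f\|_{L^p(B_i)}$. Fix a subordinate partition of unity $\{\chi_j\}$ with $\supp\chi_j\subset \tilde B_j$ and decompose $f=\sum_j f_j$ with $f_j=\chi_j f$, so that $P^{-1}f=\sum_j P^{-1}f_j$. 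The estimate will follow from a Schur-type argument, once we bound
\[
\rho_i^{-\delta}\|P^{-1}f_j\|_{L^p(B_i)}\leq C\, a_{ij}\, \rho_j^{-\delta}\|f_j\|_{L^p(\tilde B_j)},
\]
with a matrix $(a_{ij})$ satisfying $\sup_i\sum_j a_{ij}<\infty$.

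I would split the bound on $a_{ij}$ into a \emph{near} and a \emph{far} regime according to the hyperbolic distance $d(x_i,x_j)$. In the near regime $d(x_i,x_j)\le C_0$: pull $P$ back via the M\"obius chart $\Phi_{x_j}$, obtaining on a fixed domain of $\bH^n$ an elliptic operator of order $m$ with coefficients bounded uniformly in $j$ (since $P$ is geometric); by standard interior $L^p$ elliptic regularity applied to $P(P^{-1}f_j)=f_j$ together with Lemma \ref{lemmaEllipticRegularity1}, one gets the desired local bound with $a_{ij}$ bounded. In the far regime, I would use the pointwise estimates on the Green kernel from \cite[Chapter 5]{LeeFredholm}, namely $|K(\xi,\eta)|\lesssim e^{-(R+s-\epsilon)d(\xi,\eta)}$ for any $s$ with $0<s<R$ and $\epsilon>0$ small (more precisely, the kernel decays like $e^{-R d}$ with a correction by the volume growth factor $e^{(n-1)d/2}$, giving decay at rate $R+(n-1)/2$ on one side in the appropriate geometric formulation). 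Combined with Young's/H\"older's inequality applied in each M\"obius ball of fixed hyperbolic diameter, this produces $a_{ij}\le C e^{-\mu\, d(x_i,x_j)}$ for some $\mu>0$ depending on $R$, $\delta$, and $n$.

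To complete the Schur test, I would use that in the Poincar\'e ball, $-\log\rho(x)$ is comparable to the hyperbolic distance from $x$ to a fixed basepoint, so $\rho_j/\rho_i\asymp e^{\pm d(x_i,x_j)}$ up to bounded error. Thus
\[
\sum_j (\rho_j/\rho_i)^{\delta}\,a_{ij}\le C\sum_j e^{(\delta-(n-1)/2)\,d(x_i,x_j)}\,e^{-(R+s)\,d(x_i,x_j)}\cdot(\text{volume factor}),
\]
and grouping the $x_j$ according to distance shells around $x_i$ and using the hyperbolic volume growth $e^{(n-1)d}$, the sum converges uniformly in $i$ precisely under the hypothesis $|\delta-(n-1)/2|<R$. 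This yields $\sup_i\sum_j(\rho_j/\rho_i)^{\delta}a_{ij}<\infty$, which gives the claimed estimate $\|P^{-1}f\|_{X^{0,p}_\delta}\le C\|f\|_{X^{0,p}_\delta}$.

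The main obstacle is bookkeeping the exponential weights so that the combination of kernel decay, volume growth, and the weight ratio $(\rho_j/\rho_i)^\delta$ yields convergence exactly at the indicial threshold. All the pointwise kernel estimates and the identification of the correct exponent are available in \cite[Chapter 5]{LeeFredholm}; what has to be checked carefully is that the sup-norm structure of $X^{0,p}_\delta$ does not worsen the allowed range of $\delta$ compared to the integral $L^p_\delta$ theory, which it does not precisely because the M\"obius cover has bounded geometric overlap and the Schur kernel $(\rho_j/\rho_i)^\delta a_{ij}$ is symmetric under the substitution $\delta\mapsto n-1-\delta$, reflecting the symmetry of the indicial roots around $(n-1)/2$.
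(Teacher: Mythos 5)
Your overall architecture mirrors the paper's: cover $\bB$ by uniformly locally finite M\"obius balls, split the estimate of $P^{-1}f$ on each ball into a near part and a far part, use the pointwise Green kernel decay from \cite[Chapter 5]{LeeFredholm} for the far part, and then sum. The paper does this by splitting the domain of integration of the kernel operator on each fixed $B_i$ rather than by a partition-of-unity decomposition of $f$, but that difference is cosmetic. There are, however, two genuine gaps in your argument.

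First, your near-regime bound is circular. Interior $L^p$ elliptic regularity applied to $P(P^{-1}f_j)=f_j$ gives
$\|P^{-1}f_j\|_{W^{m,p}(B_i)}\leq C\left(\|f_j\|_{L^p(2B_i)}+\|P^{-1}f_j\|_{L^p(2B_i)}\right)$,
and Lemma \ref{lemmaEllipticRegularity1} has exactly the same structure: it presupposes an a priori $X^{0,p}_\delta$ bound on $P^{-1}f_j$, which is the very thing being proved. You cannot drop the $\|P^{-1}f_j\|_{L^p}$ term on the right by elliptic regularity alone. The paper's fix is different and essential: since $K_0=K(0,\cdot)\in L^1_{loc}$, and since M\"obius transformations act transitively with uniform constants, one gets a uniform $L^1$ bound of the kernel restricted to any fixed-radius M\"obius ball, and then the generalized Young/Schur inequality gives the near-part $L^p\to L^p$ bound directly, with no circularity.

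Second, your treatment of the far regime leaves the crucial convergence statement unjustified. Once you have $a_{ij}\lesssim \rho(\xi_i,\xi_j)^{\frac{n-1}{2}+R-\epsilon}$ (with $\rho(\xi,\eta)=(\cosh d)^{-1}$), the claim that $\sum_j(\rho_j/\rho_i)^\delta a_{ij}$ converges for $|\delta-\frac{n-1}{2}|<R$ is not obvious from a naive shell count: the pointwise bound $(\rho_j/\rho_i)^\delta\leq e^{|\delta|d(x_i,x_j)}$ together with the volume growth $e^{(n-1)d}$ gives convergence only for $|\delta|<R-\frac{n-1}{2}$, which is the wrong range. The correct range comes from the fact that the weight ratio $\rho_j/\rho_i$ is strongly correlated with direction, and the decisive estimate is the convolution inequality (Lee's Lemma 5.4): $\int_{\bH}\rho(\xi,\eta)^{\frac{n-1}{2}+a}\rho(\eta)^\delta\,d\mu(\eta)\leq C\rho(\xi)^\delta$ whenever $a>|\delta-\frac{n-1}{2}|$. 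Your appeal to a ``symmetry $\delta\mapsto n-1-\delta$'' is a heuristic for why the range is symmetric about $(n-1)/2$, not a proof of convergence. The paper sidesteps the Schur bookkeeping by passing back from the discrete sum to an integral over $\bB\setminus 2B_i$ and invoking Lemma 5.4 directly; you should do the same (or prove a discrete analogue of that lemma). With these two repairs your proof would be essentially the paper's.
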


\begin{proof}
Let us choose a countable uniformly locally finite covering of $\bB$ by M\"obius charts as in \cite[Lemma 2.2]{LeeFredholm}, and note that in this particular situation $B_i=(\Phi_{\xi_i})^{-1}(B_1)$ are geodesic balls centered at $\xi_i$ of radius 1. Below we will denote by $\lambda B_i$ the geodesic ball centered at $\xi_i$ of radius $\lambda$.

We will use the decomposition
\begin{equation*}
\int_{\bB}K(\xi,\eta)f(\eta)d\mu_{\breve{g}}(\eta)=\int_{4B_i}K(\xi,\eta)f(\eta)d\mu_{\breve{g}}(\eta)+\int_{\bB\setminus 4B_i}K(\xi,\eta)f(\eta)d\mu_{\breve{g}}(\eta),
\end{equation*}
to estimate
\begin{align*}
\rho_i^{-\delta p}\int_{B_i}|P^{-1}f|^p_{\breve{g}}d\mu_{\breve{g}}(\xi) & = \rho_i^{-\delta p}\int_{B_i}\left|\int_{\bB}K(\xi,\eta)f(\eta)d\mu_{\breve{g}}(\eta)\right|^p_{\breve{g}}d\mu_{\breve{g}}(\xi)\\
& \leq 2^p\rho_i^{-\delta p}\int_{B_i}\left| \int_{4B_i}K(\xi,\eta)f(\eta)d\mu_{\breve{g}}(\eta) \right|^p_{\gbrev} d\mu_{\gbrev}(\xi)\\ & + 2^p\rho_i^{-\delta p} \int_{B_i}\left| \int_{\bB\setminus 4B_i}K(\xi,\eta)f(\eta)d\mu_{\breve{g}}(\eta) \right|^p_{\gbrev} d\mu_{\gbrev}(\xi).
\end{align*}

First, recall that $K_0=K(0,\cdot)$ is in $L^1_{loc}$ \cite[proof of Lemma 5.5]{LeeFredholm}. Hence, if $\xi\in B_i$ is arbitrary, and $\phi$ is any M\"obius transformation sending $\xi$ to $0$, then the change of variables $\eta'=\phi(\eta)$ yields
\begin{equation*}
\int_{4B_i}|K(\xi,\eta)|d\mu_{\breve{g}}(\eta)=\int_{\phi(4B_i)}|K(0,\eta')|d\mu_{\breve{g}}(\eta')\leq C,
\end{equation*}
where $C>0$ does not depend on $i$ and $\xi$. Consequently, we have a uniform estimate
\begin{equation*}
\sup_{\xi\in B_i} \int_{4B_i}|K(\xi,\eta)|d\mu_{\gbrev}(\eta)+\sup_{\eta\in 4B_i}\int_{B_i}|K(\xi,\eta)|d\mu_{\gbrev}(\xi)\leq C 
\end{equation*}
and it follows by Young inequality \cite[Theorem 0.3.1]{Sogge} that there exists $C$ such that for all $i$
\begin{equation*}
\left\|\int_{4B_i}K(\xi,\eta)f(\eta)d\mu_{\breve{g}}(\eta)\right\|_{L^p(B_i)}\leq C\|f\|_{L^p(4B_i)}.
\end{equation*}

We use this inequality to estimate the first integral. Let $I_i=\{j:B_j\cap 4B_i\neq\emptyset\}$. Note that the number of elements in $I_i$ is uniformly bounded, and that if $j\in I_i$ then $C^{-1}\rho_i\leq \rho_j\leq C\rho_i$ for $C>0$ sufficiently large, but independent of $i$ and $j$. We have
\begin{align*}
\rho_i^{-p \delta}\int_{B_i}\left| \int_{4B_i}K(\xi,\eta)f(\eta)d\mu_{\breve{g}}(\eta) \right|^p_{\gbrev} d\mu_{\gbrev}(\xi) & \leq C\rho_i^{-p\delta}\|f\|^p_{L^p(4B_i)} \\ & \leq C \rho_i^{-p \delta} \sum _{j\in I_i}\|f\|^p_{L^p(B_j)} \\ & \leq C\sum_{j\in I_i}\rho_j^{-p\delta}\|f\|^p_{L^p(B_j)} \\ & \leq C\|f\|^p_{X_\delta^{0,p}}.
\end{align*}

We turn to the second integral. It is obvious that
\begin{equation*}
\int_{\bB\setminus 4B_i}|K(\xi,\eta)||f(\eta)|_{\gbrev}d\mu_{\gbrev}(\eta)\leq \sum_{j\in J_i}\int_{B_j}\left|K(\xi,\eta)\right|\left|f(\eta)\right|_{\gbrev}d\mu_{\gbrev}(\eta),
\end{equation*}
where $J_i=\{j\,:\, (\bB \setminus 4B_i)\cap B_j \neq \emptyset\}$. Moreover, if $j\in J_i$ then $\xi_j\in \bB\setminus 3B_i$, hence $B_j\subset \bB\setminus 2B_i$, and thus $d_{\gbrev}(\xi,\eta)\geq 1$ for $\xi \in B_i$ and $\eta \in B_j$. Therefore we can apply \cite[Proposition 5.2]{LeeFredholm}, which says that for every $\epsilon>0$ there is a constant $C$ such that $|K(\xi,\eta)|\leq C\rho(\xi,\eta)^{\frac{n-1}{2}+R-\epsilon}$, where $\rho(\xi,\eta)=(\cosh d_{\gbrev}(\xi,\eta))^{-1}$, provided that  $d_{\gbrev}(\xi,\eta)\geq1$. Remark also that $C^{-1}\rho(\xi,\xi_j)\leq\rho(\xi,\eta)\leq C \rho(\xi,\xi_j)$ for $\xi \in B_i$ and $\eta \in B_j$, where $j\in J_i$. Let us now choose $\epsilon>0$ such that $\frac{n-1}{2}-R+\epsilon<\delta<\frac{n-1}{2}+R-\epsilon$ and apply H\"older inequality:
\begin{align*}
\int_{B_j}|K(\xi,\eta)||f(\eta)|_{\gbrev}d\mu_{\gbrev}(\eta) & \leq \|K(\xi,\cdot)\|_{L^{p'}(B_j)}\|f\|_{L^p(B_j)}\\ & \leq \|K(\xi,\cdot)\|_{L^{p'}(B_j)}\rho_j^{\delta}\|f\|_{X_{\delta}^{0,p}}\\ & \leq C\rho(\xi,\xi_j)^{\frac{n-1}{2}+R-\epsilon}\rho_j^{\delta}\|f\|_{X_{\delta}^{0,p}},
\end{align*}
where $\frac{1}{p}+\frac{1}{p'}=1$. 
Summing over  $j \in J_i$ and applying \cite[Lemma 5.4]{LeeFredholm}, we obtain
\begin{align*}
\sum_j \int_{B_j}\left|K(\xi,\eta)\right|\left|f(\eta)\right|_{\gbrev}d\mu_{\gbrev}(\eta)
&  \leq C\sum_j \rho(\xi,\xi_j)^{\frac{n-1}{2}+R-\epsilon}\rho_j^{\delta}\|f\|_{X_{\delta}^{0,p}}\\
& \leq C\int_{\bB\setminus 2B_i} \rho(\xi,\eta)^{\frac{n-1}{2}+R-\epsilon}\rho(\eta)^{\delta}d\mu_{\gbrev}(\eta)\|f\|_{X_{\delta}^{0,p}}\\
& \leq C\rho (\xi)^{\delta}\|f\|_{X_{\delta}^{0,p}}.
\end{align*}
Finally, 
\begin{align*}
\rho_i^{-p\delta}\int_{B_i}\left|\int_{\bB \setminus 4B_i}K(\xi,\eta)f(\eta)d\mu_{\gbrev}(\eta)\right|_{\gbrev}^p d\mu_{\gbrev}(\xi)
& \leq C\rho_i^{-p\delta}\int_{B_i}\rho(\xi)^{p\delta}d\mu_{\gbrev}(\xi)\,\|f\|^p_{X_{\delta}^{0,p}} \\ & \leq C\|f\|^p_{X_{\delta}^{0,p}}.
\end{align*}

Consequently,
\begin{equation*}
\|P^{-1}f\|^p_{X_\delta^{0,p}}=\sup_i \rho_i^{-\delta p}\int_{B_i}|P^{-1}f|^p_{\breve{g}}d\mu_{\breve{g}}(\xi) \leq C\|f\|^p_{X^{0,p}_\delta},
\end{equation*}
and the statement is proved.
\end{proof}

\begin{theorem}\label{theoremIsomHypSpace}
Let $P$ be a formally self-adjoint geometric elliptic operator satisfying \eqref{EstimateInfinity}. If $k\geq m$, $1<p<\infty$, $|\delta-\frac{n-1}{2}|<R$ then $P:X_\delta^{k,p}(\bB,E)\rightarrow X_\delta^{k-m,p}(\bB,E)$ is an isomorphism.
\end{theorem}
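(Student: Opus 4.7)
The strategy follows Lee's proof of the weighted Sobolev analog in \cite[Chapter 5]{LeeFredholm}: construct a bounded two-sided inverse of $P$ via the Green kernel $K$, using Proposition \ref{propEstimateInverseLocalSobolev} and the interior elliptic regularity of Lemma \ref{lemmaEllipticRegularity1}, then reduce injectivity to the classical $L^2$ theory on hyperbolic space.

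For surjectivity, define
\[Qf(\xi) = \int_\bB K(\xi,\eta) f(\eta)\, d\mu_{\gbrev}(\eta).\]
Given $f \in X_\delta^{k-m,p}(\bB,E) \hookrightarrow X_\delta^{0,p}(\bB,E)$, Proposition \ref{propEstimateInverseLocalSobolev} provides $Qf \in X_\delta^{0,p}(\bB,E)$ with $\|Qf\|_{X_\delta^{0,p}} \leq C\|f\|_{X_\delta^{0,p}}$. The defining property of $K$ yields $PQf = f$ (see below), and Lemma \ref{lemmaEllipticRegularity1} applied with source $f \in X_\delta^{k-m,p}$ promotes $Qf$ to $X_\delta^{k,p}$ with
\[\|Qf\|_{X_\delta^{k,p}} \leq C\bigl(\|f\|_{X_\delta^{k-m,p}} + \|Qf\|_{X_\delta^{0,p}}\bigr) \leq C'\|f\|_{X_\delta^{k-m,p}},\]
making $Q$ a bounded right inverse of $P$. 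For injectivity, let $u \in X_\delta^{k,p}(\bB,E)$ satisfy $Pu = 0$. The hypothesis $|\delta - \frac{n-1}{2}| < R$ makes $(-R,\, \delta - \frac{n-1}{2})$ a non-empty interval; pick $\delta'$ in it. By Lemma \ref{lmEmbeddingXIntoL} (directly with $q = 2$ when $p \geq 2$, otherwise after first raising integrability within the $X$ scale via Proposition \ref{propRellich}, exploiting $k \geq m \geq 1$), $u$ lies in $W_{\delta'}^{k,2}(\bB,E)$ with $|\delta'| < R$. The $p = 2$ case of \cite[Theorem C]{LeeFredholm} on hyperbolic space then forces $u = 0$, whence $P$ is an isomorphism.

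\textbf{Main obstacle.} The delicate step is justifying $PQf = f$ for general $f \in X_\delta^{k-m,p}$, since $C_c^\infty(\bB,E)$ need not be dense there (cf.\ Proposition \ref{propDensity}). I would handle this distributionally: for any $\varphi \in C_c^\infty(\bB,E)$, formal self-adjointness of $P$ and symmetry of $K$ give
\[\langle PQf, \varphi\rangle = \langle Qf, P\varphi\rangle = \langle f, QP\varphi\rangle = \langle f, \varphi\rangle,\]
where the last step uses $QP\varphi = \varphi$, valid pointwise for $\varphi \in C_c^\infty$ since $K$ is a genuine fundamental solution. Lemma \ref{lemmaEllipticRegularity1} then upgrades this distributional identity to one in $X_\delta^{k-m,p}$. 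Absolute convergence of $Qf$ for $f \in X_\delta^{0,p}$ is already built into the proof of Proposition \ref{propEstimateInverseLocalSobolev}.
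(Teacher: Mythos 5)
Your surjectivity argument is exactly the paper's: define $Qf$ by integrating against the Green kernel, deduce $Qf\in X^{0,p}_\delta$ from Proposition \ref{propEstimateInverseLocalSobolev}, verify $PQf=f$ distributionally via Fubini, the symmetry $K(\eta,\xi)=K(\xi,\eta)^*$, and the identity $QP\varphi=\varphi$ for $\varphi\in C_c^\infty$, then promote to $X^{k,p}_\delta$ by Lemma \ref{lemmaEllipticRegularity1}. For injectivity, however, the paper does not detour through $L^2$: it chooses $\delta'$ with $-R+\frac{n-1}{2}-\frac{n-1}{p}<\delta'<\delta-\frac{n-1}{p}$, so that $X^{k,p}_\delta\subset W^{k,p}_{\delta'}$ by Lemma \ref{lmEmbeddingXIntoL} applied with $q=p$, and then cites the isomorphism theorem for $W^{k,p}_{\delta'}(\bH,\Ebrev)$ directly (\cite[Theorem 5.7]{LeeFredholm}), which is valid for any $p\in(1,\infty)$ and gives injectivity with no case split.

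Your $L^2$ route works when $p\geq 2$, but your treatment of $p<2$ has a genuine gap. You propose to ``raise integrability within the $X$-scale'' via Proposition \ref{propRellich}, ``exploiting $k\geq m\geq 1$''. But passing from $X^{k,p}_\delta$ to some $X^{m,q}_\delta$ with $q\geq 2$ in a single Sobolev step requires $k-\frac{n}{p}\geq m-\frac{n}{q}$, i.e. a derivative budget of at least $n\bigl(\frac{1}{p}-\frac{1}{q}\bigr)$, which fails when $k=m$ and $p$ is close to $1$ (and more generally whenever $k-m$ is small relative to $n$). Fixing it requires an iteration: first upgrade $u$ to $X^{l,p}_\delta$ by Lemma \ref{lemmaEllipticRegularity1} (using $Pu=0\in X^{l-m,p}_\delta$), then boost $p$ by Sobolev embedding, reapply elliptic regularity, and repeat until $q\geq 2$. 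That can be made to work, but it is not what you wrote, and it is strictly more elaborate than the one-line embedding the paper uses. The simpler fix is to drop the $L^2$ detour entirely and embed $X^{k,p}_\delta\subset W^{k,p}_{\delta'}$ with the same $p$.
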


\begin{proof}
We argue as in \cite[Theorem 5.9]{LeeFredholm}. 

We first prove injectivity. Choose $\delta'$ such that
\begin{equation*}
-R+\frac{n-1}{2}-\frac{n-1}{p}<\delta'< \delta-\frac{n-1}{p}.
\end{equation*} 
Then, since $\delta'+\frac{n-1}{p}< \delta$, we have $X_\delta^{k,p}\subset W_{\delta'}^{k,p}$. Moreover, our choice of $\delta'$ implies $|\delta'+\frac{n-1}{p}-\frac{n-1}{2}|<R$. Since $P$ is injective on $W_{\delta'}^{k,p}$ for such $\delta'$ by \cite[Theorem 5.7]{LeeFredholm}, it is also injective on the  smaller space  $X_\delta^{k,p}$.

Now let $f\in X_{\delta}^{k,p}$ be arbitrary and set $u=P^{-1}f$. Then $u\in X_\delta^{0,p}$ by Proposition \ref{propEstimateInverseLocalSobolev}. We will show that $Pu=f$ holds in the distributional sense. From Proposition \ref{propEstimateInverseLocalSobolev} it is obvious that $\int_{\bB} |K(\xi,\eta)||f(\eta)|_{\gbrev}d\mu_{\gbrev}(\eta)\in W^{0,1}_{loc}$ hence for any $\phi\in C^\infty_0$ we have
\begin{equation*}
\int_{\bB}\int_{\bB}|\left\langle K(\xi,\eta)f(\eta),(P\phi)(\xi)\right\rangle_{\gbrev}|d\mu_{\gbrev}(\eta)d\mu_{\gbrev}(\xi)<\infty.
\end{equation*}
Consequently, one can apply Fubini theorem and the fact that $K$ satisfies the symmetry condition $K(\eta,\xi)=K(\xi,\eta)^*$ to compute
\begin{align*} 
\int_{\bB}\left\langle\int_{\bB}K(\xi,\eta)f(\eta)d\mu_{\gbrev}(\eta),(P\phi)(\xi)\right\rangle_{\gbrev}d\mu_{\gbrev}(\xi) & = 
\int_{\bB}\int_{\bB}\left\langle K(\xi,\eta)f(\eta),(P\phi)(\xi)\right\rangle_{\gbrev} d\mu_{\gbrev}(\eta)d\mu_{\gbrev}(\xi)\\ & = 
\int_{\bB}\int_{\bB}\left\langle f(\eta),K(\eta,\xi)(P\phi)(\xi) \right\rangle_{\gbrev}d\mu_{\gbrev}(\xi)d\mu_{\gbrev}(\eta)\\ & = 
\int_{\bB}\left\langle f(\eta),\int_{\bB}K(\eta,\xi)(P\phi)(\xi) d\mu_{\gbrev}(\xi)\right\rangle_{\gbrev}d\mu_{\gbrev}(\eta)\\ & = 
\int_{\bB}\left\langle f(\eta),\int_{\bB}(\delta_\xi(\eta)\id_{E_\xi})(\phi)d\mu_{\gbrev}(\xi)\right\rangle_{\gbrev}d\mu_{\gbrev}(\eta)\\ & = 
\int_{\bB}\left\langle f(\eta),\phi(\eta)\right\rangle_{\gbrev} d\mu_{\gbrev}(\eta).
\end{align*}
Hence $u\in X_\delta ^{k,p}$ by Lemma \ref{lemmaEllipticRegularity1}.

The continuity of the inverse follows from Lemma \ref{lemmaEllipticRegularity1} and Proposition \ref{propEstimateInverseLocalSobolev}.
\end{proof}


Following \cite{LeeFredholm} we use the resulting inverse map for $P:X_\delta^{k,p}(\bB,E)\rightarrow X_\delta^{k-m,p}(\bB,E)$ to construct a parametrix in the general case $P:X_\delta^{k,p}(M,E)\rightarrow X_\delta^{k-m,p}(M,E)$. For this we first recall how the boundary M\"obius coordinates $(\rho,\thetatil)$, which are defined all the way up to $\partial M$, are constructed. Recall the notations $M_\mu=\rho^{-1}(0;\mu)$, and $K_\mu=M\setminus M_\mu$, and that $(\bH, \gbrev)$ denotes the upper half-space model of the hyperbolic space. 

For sufficiently small $c$ each $\phat\in \partial M$ has a neighborhood $\Omega (\phat)$ on which the background coordinates $(\rho,\theta)$ are defined on the set $\{(\rho,\theta): 0\leq\rho<c,\, |\theta-\theta(\phat)|<c\}$. Recall that $|d\rho|_{\gbar}^2=1$ along $\partial M$ and choose a $C^{l,\beta}$ orthonormal frame $\overline{w} = (\overline{w}^0,\overline{w}^1,\ldots,\overline{w}^{n-1})$ on $\Omega(\phat)$ with respect to the metric $\gbar$, such that $\overline{w}^0 = d\rho$ on $\partial M \cap \Omega(\phat)$. We use the coefficients of the expansion 
\begin{equation*}
\overline{w}^{\alpha}_{\phat}=A_{\beta}^\alpha d\theta_{\phat}^\beta+B^\alpha d\rho_{\phat}
\end{equation*}
to define the boundary M\"obius coordinates $(\rho,\thetatil)$ by setting
\begin{equation*}
\thetatil^\alpha=A_{\beta}^\alpha \theta^\beta+B^\alpha \rho.
\end{equation*}
It is obvious that in these new coordinates at $\phat$ we have $\gbar_{ij}=\delta_{ij}$. We proceed by defining for $0<r<c$ and $a>0$ open subsets
\begin{align*}
Y_a &= \{(y,x)\in \bH:0<y<a,\,|x|<a\}\subset\bH,\\
Z_r(\phat) &= \{(\rho,\thetatil)\in\Omega(\phat):0<\rho<r,\,|\thetatil|<r\}\subset\Omega(\phat)\subset M,
\end{align*}
and a boundary M\"obius chart $\Psi_{\phat,r}:Z_r(\phat)\rightarrow Y_1$ by 
\begin{equation*}
(y,x)=\Psi_{\phat,r}(\rho,\thetatil)=\left(\frac{\rho}{r},\frac{\thetatil}{r}\right).
\end{equation*}

An important property of the boundary M\"obius coordinates to be used in the parametrix construction is stated in \cite[Lemma 6.1]{LeeFredholm}. Namely, there exists $C>0$ such that for any $\phat\in \partial M$ and sufficiently small $r>0$ we have
\begin{equation}\label{ClosenessMetric}
\|(\Psi^{-1}_{\phat,r})^*g-\gbrev\|_{C^{l,\beta}(Y_1)}\leq rC.
\end{equation} 

Although $y$ cannot be used as a defining function for $\bH$ since it blows up at infinity, one can construct using a partition of unity a smooth defining function $\rho'$ for $\bH$ such that $\rho'=y$ on $Y_1$. Then for $0<r<c$ we have $(\Psi_{\phat,r}^{-1})^*\rho=y=\rho'$, and it is therefore straightforward to check that for the weighted local Sobolev norms it is true that
\begin{equation}\label{ScalingBehavior}
C^{-1}r^{-\delta}\|(\Psi_{\phat,r}^{-1})^*u\|_{X_{\delta}^{k,p}(Y_1)}\leq\|u\|_{X_{\delta}^{k,p}(Z_r(\phat))}\leq Cr^{-\delta}\|(\Psi_{\phat,r}^{-1})^*u\|_{X_{\delta}^{k,p}(Y_1)}.
\end{equation}
Note also the similar scaling behaviour of H\"older and Sobolev norms:
\begin{align}
C^{-1}r^{-\delta}\|(\Psi_{\phat,r}^{-1})^*u\|_{C_{\delta}^{k,\alpha}(Y_1)}\leq\|u\|_{C_{\delta}^{k,\alpha}(Z_r(\phat))}\leq Cr^{-\delta}\|(\Psi_{\phat,r}^{-1})^*u\|_{C_{\delta}^{k,\alpha}(Y_1)},\label{ScalingBehaviorHolder}\\
C^{-1}r^{-\delta}\|(\Psi_{\phat,r}^{-1})^*u\|_{W_{\delta}^{k,p}(Y_1)}\leq\|u\|_{W_{\delta}^{k,p}(Z_r(\phat))}\leq Cr^{-\delta}\|(\Psi_{\phat,r}^{-1})^*u\|_{W_{\delta}^{k,p}(Y_1)}.\label{ScalingBehaviorSobolev}
\end{align}

Now choose a specific smooth bump function $\psi:\bH\rightarrow [0,1]$ supported on $Y_1$ and equal to 1 on $Y_{1/2}$, and define the functions 
\begin{equation*}
\psi_{\phat,r}(\rho,\thetatil)=\Psi_{\phat,r}^*\psi=\psi\left(\frac{\rho}{r},\frac{\thetatil}{r}\right)
\end{equation*} 
supported on $Z_r(\phat)$ and equal to 1 on $Z_{r{/2}}(\phat)$.
Note  that the functions $\psi_{\phat,r}$ are uniformly bounded in $C^{l,\beta}(M)$ by some constant independent of $\phat$ and $r$. Moreover, arguing as in \cite[Lemma 2.2]{LeeFredholm} one can show that there exists a number $N$ such that for any $r>0$ we can choose finitely many points $\{\phat_1,\dots\phat_m\}\in \partial M$ such that $M_{r/2}$ is covered by the sets $Z_{r/2}(\phat_i)$ and at most $N$ of the sets $Z_r(\phat_i)$ intersect nontrivially at any point. Let $\psi_{0,r}$ denote a smooth bump function equal to 1 on $K_{r/2}$ and supported in $K_{r/4}$, and introduce the notation $\Psi_{i,r}=\Psi_{\phat_i,r}$ and $\psi_{i,r}=\psi_{\phat_i,r}$. Then the functions 
\begin{equation*}
\phi_{i,r}=\frac{\psi_{i,r}}{\left(\sum_{j=0}^m\psi_{j,r}^2\right)^{\frac{1}{2}}}
\end{equation*}
constitute a partition of unity for $\overline{M}$ subordinate to the cover $\{K_{r/4},Z_r(\phat_i)\}$. It is also easy to see that these functions are uniformly bounded in $C^{l,\beta}_{(0)}(\overline{M})$.

We will now construct a bundle $\Ebrev \to \bH$ which is in a certain sense similar to $E$ together with an isomorphism $\Upsilon_{i, r}: \Ebrev|_{Y_1} \to E|_{Z_r(\phat)}$. Let $T^{r_1}_{r_2} = \left(\bR^n\right)^{r_2} \otimes \left(\bR^{n*}\right)^{r_1}$ be the standard tensor space. It comes with a natural $O(n)$-action and is such that $T^{r_1}_{r_2} M = F \times_{O(n)} T^{r_1}_{r_2}$, where $F$ is the orthonormal frame bundle of $TM$ with respect to the metric $g$. Since $E$ is a geometric tensor bundle, there exists an invariant subspace $\vec{E}$ such that $E = F \times_{O(n)} \vec{E}$. Let $\breve{F}$ (resp. $F_{i,r}$) be the (oriented) orthonormal frame bundle of the hyperbolic metric $\gbrev$ on $\bH$ (resp. $g_{i, r} = \left(\Psi_{i, r}^{-1}\right)^* g$ on $Y_1$). Define $\Ebrev = \breve{F} \times_{O(n)} \vec{E}$. Let $\wbrev = (\frac{dy}{y}, \frac{dx^1}{y}, \ldots, \frac{dx^{n-1}}{y})$ be the standard orthonormal frame on the upper half-space model $(\bH,\gbrev)$ of the hyperbolic space and let $w=(w_0,w_1,\dots,w_{n-1})$, where  $w^0 = \frac{1}{\rho} \overline{w}^0, \ldots, w^{n-1} = \frac{1}{\rho} \overline{w}^{n-1}$, be the orthonormal coframe for the metric $g$ associated to $\overline{w}$. We finally define $w_{i, r} = \left(\Psi_{i, r}^{-1}\right)^* w$ and $\gbar_{i, r} = \left(\Psi_{i, r}^{-1}\right)^* \gbar$. By the definition of $\thetatil$, we have $w_{i, r} = \wbrev$ at $0 = \Psi_{i, r}(\phat)$. Then there exists a unique equivariant map $\Xi_{i,r}: \breve{F}|_{Y_1} \to F_{i, r}$ such that $\Xi_{i, r}(\breve{w}) = w_{i, r}$. Moreover, $\Xi_{i,r}$ descends to a bundle map which will be denoted by the same name: \[\Xi_{i, r}: \Ebrev|_{Y_1} \to \left(\Psi_{i, r}^{-1}\right)^* \left(E|_{Z_r(\phat)}\right).\] From the fact that $w_{i, r}(0) = \wbrev(0)$, we get
\begin{equation}\label{ClosednessVielbein}
\left\| \Xi_{i, r} - Id\right\|_{C^{l, \beta}_0(Y_1, End(T^{r_1}_{r_2} Y_1))} \leq C r,
\end{equation}
where $\Xi_{i, r}$ denotes the bundle endomorphism of $T^{r_1}_{r_2} Y_1$. We finally set \[\Upsilon_{i, r} = \Xi_{i, r} \circ \left(\Psi_{i, r}^{-1}\right)^*: \Ebrev\vert_{Y_1} \to E\vert_{Z_r(\phat)}.\] 

Let $\Pbrev$ be the operator on hyperbolic space with the same local coordinate expression as $P$. For each $i$ consider the operator $P_{i,r}:C^\infty(Y_1,\Ebrev)\rightarrow C^\infty(Y_1,\Ebrev)$ defined by 
\begin{equation*}
P_{i,r}u= \Upsilon_{i,r}^{-1} P(\Upsilon_{i,r}u).
\end{equation*}
Since $P$ is a geometric operator, by \eqref{ClosenessMetric} and \eqref{ClosednessVielbein}, we conclude that for every $u\in X_\delta ^{k,p}(Y_1,\Ebrev)$,
\begin{equation}\label{ClosenessOperator}
\|P_{i,r}u-\Pbrev u\|_{X_\delta^{k-m,p}}\leq Cr \|u\|_{X_\delta^{k,p}}. 
\end{equation} 

Now suppose that $P$ satisfies \eqref{EstimateInfinity}. Using \eqref{ScalingBehaviorSobolev} it is easy to show that $P_{i,r}$ also satisfies \eqref{EstimateInfinity}.  Consequently, if $r$ is small enough it follows by \eqref{ClosenessOperator} and \cite[Lemma 4.8 (a)]{LeeFredholm} that $\Pbrev$ satisfies an analogous estimate (perhaps with a different constant) for all $u\in C_c^\infty(Y_1,\Ebrev)$. The same estimate holds globally on $\bH$ since for an arbitrary  $u\in C_c^\infty(\bH,\Ebrev)$ there is a M\"obius transformation taking $\supp u$ into $Y_1$. By Theorem \ref{theoremIsomHypSpace} we conclude that the operator $\Pbrev$ is invertible on $X_\delta ^{k,p}(\bH,\Ebrev)$ provided that $|\delta-\frac{n-1}{2}|<R$.

For any sufficiently small $r>0$ we define operators $Q_r,\,S_r,\,T_r:C_c^\infty(M,E)\rightarrow C_c^\infty(M,E)$ by
\begin{align*}
Q_r (u) &=\sum_i\phi_{i,r}\Upsilon_{i,r} \Pbrev^{-1} \Upsilon_{i,r}^{-1} (\phi_{i,r}u),\\
S_r (u) &=\sum_i\phi_{i,r}\Upsilon_{i,r} \Pbrev^{-1}(P_{i,r}-\Pbrev) \Upsilon_{i,r}^{-1} (\phi_{i,r}u),\\
T_r (u) &=\sum_i\phi_{i,r}\Upsilon_{i,r} \Pbrev^{-1} \Upsilon_{i,r}^{-1} ([\phi_{i,r},P]u).
\end{align*}
\begin{prop}
Let $P:C^\infty(M,E)\rightarrow C^\infty(M,E)$ satisfy \eqref{EstimateInfinity}. If $|\delta-\frac{n-1}{2}|<R$ and $1<p<\infty$ then $Q_r$, $S_r$, and $T_r$ extend to bounded maps as follows:
\begin{align*}
Q_r &: X_\delta^{0,p}(M,E)\rightarrow X_\delta^{m,p}(M,E),\\
S_r &: X_\delta^{m,p}(M,E)\rightarrow X_\delta^{m,p}(M,E),\\
T_r &: X_\delta^{m-1,p}(M,E)\rightarrow X_{\delta_1}^{m,p}(M,E),
\end{align*}
for any $\delta _1$ such that $\delta\leq\delta_1\leq\delta+1$ and $|\delta_1-\frac{n-1}{2}|<R$. Moreover, there exists $r_0>0$ such that if $u\in X_\delta^{m,p}(M,E)$ is supported in $M_r$ for $0<r<r_0$ then 
\begin{equation}\label{ParametrixPrelim}
Q_rPu=u+S_ru+T_ru
\end{equation}
and 
\begin{equation}\label{SrBounded}
\|S_ru\|_{X_\delta^{m,p}}\leq Cr\|u\|_{X_\delta^{m,p}}.
\end{equation}
\end{prop}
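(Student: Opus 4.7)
The strategy is to reduce everything to the model operator $\Pbrev$ on $\bH$ via the boundary M\"obius charts $\Psi_{i,r}$ and the bundle isomorphisms $\Upsilon_{i,r}$, invoke the isomorphism property of $\Pbrev$ established in Theorem \ref{theoremIsomHypSpace}, and assemble chart-wise estimates thanks to the uniform local finiteness of the cover $\{Z_r(\phat_i)\}$ and the uniform bounds on $\{\phi_{i,r}\}$.

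For the boundedness of $Q_r$, I would note that for each $i$ the section $\phi_{i,r}u$ is supported in $Z_r(\phat_i)$, so $\Upsilon_{i,r}^{-1}(\phi_{i,r}u)$ lives in $X^{0,p}_\delta(Y_1,\Ebrev)$. Combining the scaling identity \eqref{ScalingBehavior} (which shows that the factors $r^{\pm\delta}$ cancel between input and output), the boundedness of $\Pbrev^{-1}:X^{0,p}_\delta(\bH,\Ebrev)\to X^{m,p}_\delta(\bH,\Ebrev)$, and the uniform $C^{l,\beta}_0$-bounds on $\phi_{i,r}$ and $\Upsilon_{i,r}$ coming from \eqref{ClosednessVielbein}, I obtain a chart-wise estimate of the form $\|\phi_{i,r}\Upsilon_{i,r}\Pbrev^{-1}\Upsilon_{i,r}^{-1}(\phi_{i,r}u)\|_{X^{m,p}_\delta(Z_r(\phat_i))}\leq C\|\phi_{i,r}u\|_{X^{0,p}_\delta(Z_r(\phat_i))}$ with $C$ independent of $i$ and $r$. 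Summing over $i$ with the uniform local finiteness bound $N$ yields the desired estimate. The boundedness of $S_r:X^{m,p}_\delta\to X^{m,p}_\delta$ together with the estimate \eqref{SrBounded} is proved along the same lines, where the intermediate operator $P_{i,r}-\Pbrev$ contributes an extra factor $Cr$ by \eqref{ClosenessOperator}.

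The parametrix identity \eqref{ParametrixPrelim} is then a direct algebraic verification. For $u$ supported in $M_r$ with $r<r_0$ small enough (so that $\psi_{0,r}\equiv 0$ on $\supp u$ and hence $\sum_i\phi_{i,r}^2\equiv 1$ on $\supp u$), I would write $\phi_{i,r}Pu=P(\phi_{i,r}u)+[\phi_{i,r},P]u$, use the relation $\Upsilon_{i,r}^{-1}P=P_{i,r}\Upsilon_{i,r}^{-1}=(\Pbrev+(P_{i,r}-\Pbrev))\Upsilon_{i,r}^{-1}$, apply $\Pbrev^{-1}$, and identify the resulting three terms with $\phi_{i,r}^2 u$ (after re-multiplying by $\phi_{i,r}\Upsilon_{i,r}$ and summing), $S_ru$, and $T_ru$ respectively.

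The main obstacle I expect is the boundedness of $T_r:X^{m-1,p}_\delta\to X^{m,p}_{\delta_1}$ with the improved weight $\delta_1\leq\delta+1$. The commutator $[\phi_{i,r},P]$ is a differential operator of order $m-1$ whose coefficients are uniformly bounded in the $C^{l-m+1,\beta}_0$-topology independently of $i,r$, so $u\in X^{m-1,p}_\delta$ implies $[\phi_{i,r},P]u\in X^{0,p}_\delta$ with uniform control. Mapping across $\Pbrev^{-1}$ gains $m$ derivatives and keeps the weight $\delta$; the improvement up to $\delta_1=\delta+1$ is the delicate point and should be extracted from a refined analysis of the Green kernel $K$ of $\Pbrev$: using the off-diagonal decay $|K(\xi,\eta)|\leq C\rho(\xi,\eta)^{(n-1)/2+R-\epsilon}$ from \cite[Proposition 5.2]{LeeFredholm} together with Young-type estimates analogous to those in the proof of Proposition \ref{propEstimateInverseLocalSobolev}, one can show that $\Pbrev^{-1}$ actually sends $X^{0,p}_\delta$ into $X^{m,p}_{\delta_1}$ whenever both $|\delta-\tfrac{n-1}{2}|<R$ and $|\delta_1-\tfrac{n-1}{2}|<R$ hold and $\delta_1\leq\delta+1$ (the constraint $\delta_1-\delta\leq 1$ is precisely what ensures convergence of the analogue of the series $\sum_i\rho_i^{-p(\delta_1-\delta)}\rho_i^{p\cdot}$ coming up when one replicates the chart-summation argument with the shifted weight). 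Once this weighted-kernel estimate is in place, the summation over the boundary M\"obius cover is handled exactly as for $Q_r$ and delivers the desired bound.
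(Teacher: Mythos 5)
Your treatment of $Q_r$, $S_r$, the estimate \eqref{SrBounded}, and the algebraic verification of \eqref{ParametrixPrelim} all match the paper's approach (which simply follows Lee's Proposition 6.2 with the usual chart-sum argument, the uniform $C^{l,\beta}_{(0)}(\overline M)$ bound on the $\phi_{i,r}$, and the scaling identity \eqref{ScalingBehavior}), so that part is fine.

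However, your argument for the improved weight in $T_r$ contains a genuine gap. You propose to show that $\Pbrev^{-1}$ sends $X^{0,p}_\delta$ into $X^{m,p}_{\delta_1}$ for $\delta<\delta_1\le\delta+1$ by a refined kernel estimate. This cannot work: Theorem \ref{theoremIsomHypSpace} already says $\Pbrev:X^{m,p}_\delta\to X^{0,p}_\delta$ is an \emph{isomorphism}, so the range of $\Pbrev^{-1}$ is \emph{all} of $X^{m,p}_\delta$ and not a proper subspace $X^{m,p}_{\delta_1}$ with $\delta_1>\delta$. (If $\Pbrev^{-1}$ did improve the weight you could iterate and obtain arbitrarily fast decay, which is absurd.) The constraint $\delta_1\le\delta+1$ that you try to explain via a convergence bound on the chart sum is in fact not a feature of $\Pbrev^{-1}$ at all.

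The actual source of the weight gain is the commutator $[\phi_{i,r},P]$, not the model inverse. Each term in the local expression of $[\phi_{i,r},P]$ contains at least one covariant derivative of $\phi_{i,r}$ (that is, $t\ge1$ in the count $q+s+t\le m$), and since the $\phi_{i,r}$ are uniformly bounded in $C^{l,\beta}_{(0)}(\overline M)$, their covariant derivatives are uniformly bounded in $C^{0,\beta}_1(M)$ by \cite[Lemma 3.7]{LeeFredholm}. Hence $[\phi_{i,r},P]:X^{m-1,p}_\delta\to X^{0,p}_{\delta+1}\subset X^{0,p}_{\delta_1}$ with uniform norm, and then one applies $\Pbrev^{-1}$ at the shifted weight $\delta_1$ (which preserves the weight, requiring only $|\delta_1-\frac{n-1}{2}|<R$) and re-assembles with the chart sum exactly as for $Q_r$. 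This is also precisely why the hypothesis $\delta\le\delta_1\le\delta+1$ appears in the statement: $\delta_1-\delta\le1$ because a single derivative of $\phi_{i,r}$ buys exactly one power of $\rho$. You should replace the proposed refined kernel lemma by this commutator-decay observation.
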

\begin{proof}
The proof is identical to that of \cite[Proposition 6.2]{LeeFredholm}. In particular, the fact that \eqref{ParametrixPrelim} holds in $M_r$ is verified by a straightforward computation. Further, recall that the functions $\phi_{i,r}$ are uniformly bounded in $C^{l,\beta}_{(0)}(\overline{M})\subset C^{l,\beta}(M)$. Then it is easy to check that multiplication by $\phi_{i,r}$ is a bounded map from $X_\delta^{j,p}(Z_r(\phat_i),E)$ into itself for each $i$ and all $0\leq j\leq l$, with norm bounded uniformly in $i$ and $r$. Using this fact, \eqref{ScalingBehavior}, \eqref{ClosednessVielbein}, and \eqref{ClosenessOperator} we can estimate ($C$ might vary from line to line):
\begin{align*}
\|S_r u\|&=\sum_i\|\phi_{i,r}\Upsilon_{i,r} \Pbrev^{-1}(P_{i,r}-\Pbrev) \Upsilon_{i,r}^{-1} (\phi_{i,r}u)\|_{X_\delta^{m,p}}\\
       &\leq C \sum_i\|\Upsilon_{i,r} \Pbrev^{-1}(P_{i,r}-\Pbrev) \Upsilon_{i,r}^{-1} (\phi_{i,r}u)\|_{X_\delta^{m,p}}\\
        &\leq C r^{\delta}\sum_i\|\Pbrev^{-1}(P_{i,r}-\Pbrev) \Upsilon_{i,r}^{-1} (\phi_{i,r}u)\|_{X_\delta^{m,p}}\\ 
        &\leq C r^{\delta}\sum_i\|(P_{i,r}-\Pbrev) \Upsilon_{i,r}^{-1} (\phi_{i,r}u)\|_{X_\delta^{m,p}}\\ 
        &\leq C r^{\delta+1}\sum_i\|\Upsilon_{i,r}^{-1} (\phi_{i,r}u)\|_{X_\delta^{m,p}}\\ 
        &\leq C r\sum_i\|\phi_{i,r}u\|_{X_\delta^{m,p}}\\ 
        &\leq C r\|u\|_{X_\delta^{m,p}}.
\end{align*}
since $\Pbrev$ is invertible and the cover $\{K_{r/4},Z_r(\phat_i)\}$ is locally finite.

The mapping properties of $T_r$ are easily verified by a similar argument, once it is shown that the commutator $[\phi_{i,r},P]$ maps $X_{\delta}^{m-1,p}$ to $X_{\delta_1}^{0,p}$. To prove the later statement we note that $[\phi_i,P]$ is a partial differential operator of order $m-1$, each term in its coordinate expression being the product of a polynomial in $g$, $(\det g)^{-1/2}$, and their derivatives up to order $q$, an $s$-th covariant derivative of $u$, and a $t$-th covariant derivative of $\phi_{i,r}$, where $q+s+t\leq m$, and $t\geq 1$. Recall that $\phi_{i,r}$ are uniformly bounded in $C^{l,\beta}_{(0)}(\overline{M})$ hence $t$-th covariant derivatives of $\phi_{i,r}$ are uniformly bounded in $C^{0,\beta}_1(M,E)$ by \cite[Lemma 3.7]{LeeFredholm}. Consequently, $[\phi_i,P]u \in X_{\delta_1}^{0,p}(M,E)$ for any $u\in X_\delta^{m-1,p}(M,E)$.
\end{proof}

\begin{cor}\label{corParametrix}
Let $P:C^\infty(M,E)\rightarrow C^\infty(M,E)$ satisfy \eqref{EstimateInfinity}. If $|\delta-\frac{n-1}{2}|<R$, $|\delta_1-\frac{n-1}{2}|<R$, $\delta\leq\delta_1\leq\delta+1$, and $1< p< \infty$, then there exists $r>0$ and bounded operators 
\begin{align*}
\Qtil &: X_\delta^{0,p}(M,E)\rightarrow X_\delta^{m,p}(M,E),\\
\Ttil &: X_\delta^{m-1,p}(M,E)\rightarrow X_{\delta_1}^{m,p}(M,E)
\end{align*}
such that 
\begin{equation}\label{parametrix}
\Qtil Pu=u+\Ttil u
\end{equation}
if $u\in X_\delta ^{k,p}(M;E)$ is supported in $M_r.$
\end{cor}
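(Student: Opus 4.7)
The plan is to absorb the error term $S_r$ in the identity $Q_r P u = u + S_r u + T_r u$ via a Neumann series argument, using the smallness estimate \eqref{SrBounded}. First I would fix $r_0 > 0$ as in the preceding proposition and choose any $r \in (0, r_0)$ small enough so that the operator norm of $S_r$ acting on $X^{m,p}_\delta(M,E)$ is strictly less than $1$, say $Cr < 1/2$. Then $I + S_r$ is invertible on $X^{m,p}_\delta(M,E)$ with bounded inverse, given by the convergent Neumann series $\sum_{j=0}^\infty (-S_r)^j$.

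Next I would observe that the preceding proposition applies equally well with $\delta$ replaced by $\delta_1$, since by hypothesis $|\delta_1 - \frac{n-1}{2}| < R$. Hence $S_r$ also extends to a bounded operator $X^{m,p}_{\delta_1}(M,E) \to X^{m,p}_{\delta_1}(M,E)$ whose norm is controlled by $Cr$; shrinking $r$ if necessary, $I + S_r$ is also invertible on $X^{m,p}_{\delta_1}(M,E)$. By the uniqueness of $(I+S_r)^{-1}$ on the larger space $X^{m,p}_\delta \supset X^{m,p}_{\delta_1}$, these two inverses agree on their common domain. With this understood, define
\[
\Qtil = (I + S_r)^{-1} Q_r, \qquad \Ttil = (I + S_r)^{-1} T_r.
\]
Boundedness of $\Qtil: X^{0,p}_\delta(M,E) \to X^{m,p}_\delta(M,E)$ follows by composing the bounds on $Q_r$ and $(I+S_r)^{-1}$, and boundedness of $\Ttil: X^{m-1,p}_\delta(M,E) \to X^{m,p}_{\delta_1}(M,E)$ follows from the mapping property of $T_r$ together with the action of $(I+S_r)^{-1}$ on $X^{m,p}_{\delta_1}$.

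Finally I would verify the identity \eqref{parametrix}. Given $u \in X^{k,p}_\delta(M,E)$ supported in $M_r$, by elliptic regularity (in particular $u \in X^{m,p}_\delta$ since $k\geq m$), \eqref{ParametrixPrelim} yields $Q_r P u = (I+S_r) u + T_r u$ in $X^{m,p}_\delta$. Applying $(I+S_r)^{-1}$ on both sides gives $\Qtil P u = u + \Ttil u$, as required. The main (mild) obstacle I anticipate is the book-keeping needed to make sure the two instances of $(I+S_r)^{-1}$ on $X^{m,p}_\delta$ and on $X^{m,p}_{\delta_1}$ are genuinely compatible, i.e.\ that defining $\Ttil$ via the inverse on the smaller space $X^{m,p}_{\delta_1}$ is consistent with the identity computed in the ambient space $X^{m,p}_\delta$; this is handled by the uniqueness argument above, which relies on the inclusion $X^{m,p}_{\delta_1} \hookrightarrow X^{m,p}_\delta$ guaranteed by $\delta \leq \delta_1$.
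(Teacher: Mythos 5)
Your proof is correct and follows essentially the same route as the paper's, whose entire argument is to pick $r$ with $Cr<\frac{1}{2}$ in \eqref{SrBounded}, set $\Qtil = (\id+S_r)^{-1}Q_r$ and $\Ttil = (\id+S_r)^{-1}T_r$, and appeal to \eqref{ParametrixPrelim}. Your additional observation that $(\id+S_r)^{-1}$ is also bounded on $X^{m,p}_{\delta_1}$ (with the two inverses agreeing on the common domain $X^{m,p}_{\delta_1}\hookrightarrow X^{m,p}_\delta$) makes explicit a point the paper leaves implicit, and it is indeed exactly what is needed to establish the stated mapping property $\Ttil: X^{m-1,p}_\delta \to X^{m,p}_{\delta_1}$.
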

\begin{proof}
Choose $r$ so that $Cr<\frac{1}{2}$ in \eqref{SrBounded}. Then $(\id+S_r)^{-1}$ is a bounded operator.  That 
\begin{align*}
\Qtil &= (\id+S_r)^{-1}\circ Q_r,\\
\Ttil &= (\id+S_r)^{-1}\circ T_r
\end{align*}
satisfy \eqref{parametrix} is a consequence of \eqref{ParametrixPrelim}.
\end{proof}


Using the above parametrix construction we can improve the elliptic regularity result of Lemma \ref{lemmaEllipticRegularity1} (cf. \cite[Lemma 6.4 and Proposition 6.5]{LeeFredholm}).

\begin{lemma}\label{lemmaEllipticRegularity2}
Assume that $P:C^\infty(M,E)\rightarrow C^\infty(M,E)$ satisfies \eqref{EstimateInfinity}. If for $1< p< \infty$ and $m\leq k \leq l$ we have $u\in X_\delta ^{0,p}(M,E)$, where $|\delta-\frac{n-1}{2}|<R$, and $Pu\in X_{\delta'}^{k-m,p}(M,E)$, where $|\delta'-\frac{n-1}{2}|<R$, then $u\in X_{\delta'}^{k,p}(M,E)$.
\end{lemma}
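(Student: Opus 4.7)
The plan is to combine Lemma \ref{lemmaEllipticRegularity1} with a bootstrap argument based on the parametrix of Corollary \ref{corParametrix}: the parametrix allows us to gain up to one unit of decay at a time, so after finitely many iterations we shall reach the desired weight $\delta'$.

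Because $X^{k,p}_{\delta_1}\subset X^{k,p}_{\delta_2}$ whenever $\delta_1\geq\delta_2$, one may reduce to the case $\delta<\delta'$; otherwise $Pu\in X^{k-m,p}_{\delta'}\subset X^{k-m,p}_\delta$ and $u\in X^{0,p}_\delta$, so Lemma \ref{lemmaEllipticRegularity1} already gives $u\in X^{k,p}_{\delta'}$. So assume $\delta<\delta'$, and set $\delta_0=\delta$, $\delta_{j+1}=\min(\delta_j+1,\delta')$, which stabilizes to $\delta'$ in $N=\lceil\delta'-\delta\rceil$ steps. All $\delta_j$ lie in the closed interval $[\delta,\delta']$, hence satisfy $|\delta_j-\tfrac{n-1}{2}|<R$. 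Choose $r>0$ small enough that Corollary \ref{corParametrix} applies for every pair $(\delta_j,\delta_{j+1})$, and fix a cutoff $\chi\in C^\infty(M)$ with $\chi\equiv 1$ on $M_{r/2}$, $\supp\chi\subset M_r$.

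The key step is the induction: assuming $u\in X^{k,p}_{\delta_j}$, we show $u\in X^{k,p}_{\delta_{j+1}}$. Split $u=\chi u+(1-\chi)u$. The compactly supported part $(1-\chi)u$ lies in $X^{k,p}_{\delta''}$ for every $\delta''$, in particular for $\delta_{j+1}$. For the far part, apply the parametrix identity \eqref{parametrix} (valid because $\chi u$ is supported in $M_r$):
\[
\chi u=\Qtil\,P(\chi u)-\Ttil(\chi u).
\]
Expand $P(\chi u)=\chi Pu+[P,\chi]u$. The first summand lies in $X^{k-m,p}_{\delta'}$ because $Pu$ does and multiplication by the bounded function $\chi$ is continuous on $X^{k-m,p}_{\delta'}$; the commutator $[P,\chi]$ is a differential operator of order $m-1$ whose coefficients are derivatives of $\chi$, hence supported in the set $\overline{M_r}\setminus M_{r/2}$, which is compact in $M$. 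Thus $[P,\chi]u$ has compact support and belongs to $X^{k-m+1,p}_{\delta''}$ for every admissible $\delta''$. Consequently $P(\chi u)\in X^{0,p}_{\delta'}$, and by Theorem \ref{theoremIsomHypSpace} (applied at weight $\delta'$, which is admissible) together with the boundedness of $(\id+S_r)^{-1}$ on $X^{m,p}_{\delta'}$, we obtain $\Qtil P(\chi u)\in X^{m,p}_{\delta'}$. The term $\Ttil(\chi u)$ is controlled by the mapping property of $\Ttil$: since $\chi u\in X^{k,p}_{\delta_j}\subset X^{m-1,p}_{\delta_j}$ (using $k\geq m$), the corollary gives $\Ttil(\chi u)\in X^{m,p}_{\delta_{j+1}}$. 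Adding the two contributions and using $\delta_{j+1}\leq\delta'$,
\[
\chi u\in X^{m,p}_{\delta'}+X^{m,p}_{\delta_{j+1}}\subset X^{m,p}_{\delta_{j+1}},
\]
and therefore $u\in X^{m,p}_{\delta_{j+1}}$. Finally, since $Pu\in X^{k-m,p}_{\delta'}\subset X^{k-m,p}_{\delta_{j+1}}$ and $u\in X^{0,p}_{\delta_{j+1}}$, Lemma \ref{lemmaEllipticRegularity1} upgrades this to $u\in X^{k,p}_{\delta_{j+1}}$, completing the induction. After $N$ steps one obtains $u\in X^{k,p}_{\delta'}$.

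The main obstacle is conceptual rather than technical: one must notice that $\Qtil$ not only maps $X^{0,p}_\delta\to X^{m,p}_\delta$ for the reference weight $\delta$ built into the parametrix, but actually preserves any admissible weight (because $\breve P^{-1}$ is an isomorphism $X^{0,p}_{\delta''}\to X^{m,p}_{\delta''}$ for every admissible $\delta''$ by Theorem \ref{theoremIsomHypSpace}). This is what lets the ``$\Qtil$-part'' automatically inherit the better decay of $Pu$ while the ``$\Ttil$-part'' gains one unit of decay at each iteration. The remaining verifications—compact support of $[P,\chi]u$, stability of the indicial-radius condition along the interpolating weights $\delta_j$, and uniformity of $r$ across the finitely many iterations—are straightforward.
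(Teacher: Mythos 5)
Your argument is correct and follows essentially the same route as the paper: split $u$ into a compactly supported piece and a piece $\chi u$ supported near infinity, apply the parametrix of Corollary \ref{corParametrix} to $\chi u$, and iterate to gain one unit of decay per step, finishing with Lemma \ref{lemmaEllipticRegularity1}; your extra observations (the explicit chain of weights $\delta_j$, the commutator $[P,\chi]u$, and the fact that $\Qtil$ is bounded at every admissible weight because $\breve P^{-1}$ is) are all implicitly used in the paper's version. One small slip in the reduction at the start: when $\delta'\leq\delta$ the inclusion goes the other way, $X^{k-m,p}_{\delta'}\supset X^{k-m,p}_\delta$; the correct justification there is that $u\in X^{0,p}_\delta\subset X^{0,p}_{\delta'}$, so Lemma \ref{lemmaEllipticRegularity1} applied directly at weight $\delta'$ gives $u\in X^{k,p}_{\delta'}$.
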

\begin{proof}
If $\delta'\leq \delta$ then $u\in X_{\delta'}^{k,p}(M,E)$ by Proposition \ref{propRellich} and Lemma \ref{lemmaEllipticRegularity1}. Assume therefore $\delta'>\delta$. We will show that $u\in X_{\delta'}^{0,p}$. We choose $r>0$ as in Corollary \ref{corParametrix} and a smooth bump function $\psi$ supported on $M_r$ and equal to 1 on $M_{r/2}$. Then we can represent $u$ as $u=u_0+u_\infty$ where $u_0=(1-\psi)u$ is supported in $K_{r/2}$ and $u_\infty=\psi u$ is supported on $M_r$. That $u_0\in X_{\delta'}^{k,p}$ is a consequence of local elliptic regularity. As for $u_\infty$, we note that $u_\infty=u$ outside a compact set, hence $Pu_\infty\in X_{\delta'}^{k-m,p}$. By Corollary \ref{corParametrix} we conclude that $u_\infty=\Qtil Pu_\infty-\Ttil u_\infty\in X_{\delta_2}^{0,p}$ where $\delta_2=\min\{\delta',\delta_1\}$, $\delta\leq\delta_1\leq\delta+1$, $|\delta_1-\frac{n-1}{2}|<R$. Iterating this argument finitely many times we conclude that $u_\infty \in X_{\delta'}^{0,p}$, and the statement follows by Lemma \ref{lemmaEllipticRegularity1}.
\end{proof}

\begin{prop}\label{propEllipticRegularity2}
Suppose that $P:C^\infty(M,E)\rightarrow C^\infty(M,E)$ satisfies \eqref{EstimateInfinity}. Assume that $u$ is either in $X_{\delta_0}^{0,p_0}(M,E)$ or in $C_{\delta_0}^{0,0}$ for $|\delta_0-\frac{n-1}{2}|<R$ and $1<p_0<\infty$. If $Pu\in X_\delta^{k-m,p}(M,E)$ for $|\delta-\frac{n-1}{2}|<R$ and $1<p<\infty$ then $u\in X_\delta^{k,p}(M,E)$. 
\end{prop}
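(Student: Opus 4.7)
The plan is to reduce both hypotheses to a single setting, perform a bootstrap in integrability via interior elliptic regularity in M\"obius charts while holding the weight fixed, and then invoke Lemma \ref{lemmaEllipticRegularity2} to hit the correct decay.

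First, the H\"older alternative reduces to the Sobolev one. If $u \in C^{0,0}_{\delta_0}$, the pointwise bound $|u|\leq C\rho^{\delta_0}$ together with the uniform hyperbolic volume of M\"obius balls gives $\|u\|_{L^{p_0}(B_i)} \leq C\rho_i^{\delta_0}$ and therefore $u \in X^{0,p_0}_{\delta_0}$ for every $p_0 \in (1,\infty)$. So we may assume $u \in X^{0,p_0}_{\delta_0}$ throughout.

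Next, choose an auxiliary weight $\delta'$ with $\delta' \leq \min(\delta,\delta_0)$ and $|\delta'-\tfrac{n-1}{2}|<R$; this is possible because $\{\delta : |\delta-\tfrac{n-1}{2}|<R\}$ is an open interval containing both $\delta$ and $\delta_0$. Monotonicity of $X^{k,p}_{\delta}$ in $\delta$ yields $u \in X^{0,p_0}_{\delta'}$ and $Pu \in X^{k-m,p}_{\delta'}$.

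The heart of the argument is to upgrade the integrability of $u$ from $p_0$ to $p$ while keeping the weight $\delta'$ fixed. If $p_0 \geq p$, Proposition \ref{propRellich} already yields $X^{0,p_0}_{\delta'} \hookrightarrow X^{0,p}_{\delta'}$. Otherwise, we iterate as follows: since $Pu \in X^{k-m,p}_{\delta'} \hookrightarrow X^{k-m,p_0}_{\delta'}$, Lemma \ref{lemmaEllipticRegularity1} applied with exponent $p_0$ gives $u \in X^{k,p_0}_{\delta'}$; Proposition \ref{propRellich} then upgrades this to $u \in X^{0,q_1}_{\delta'}$ with $q_1 = np_0/(n-kp_0)$ if $kp_0<n$, or to $u \in L^\infty_{\delta'}$ otherwise. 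Since $1/q_{i+1} = 1/q_i - k/n$, the sequence of exponents surpasses $p$ after finitely many steps, and one last application of Lemma \ref{lemmaEllipticRegularity1} produces $u \in X^{k,p}_{\delta'}$.

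At this point $u \in X^{0,p}_{\delta'}$ with $|\delta'-\tfrac{n-1}{2}|<R$ and $Pu \in X^{k-m,p}_{\delta}$ with $|\delta-\tfrac{n-1}{2}|<R$, so Lemma \ref{lemmaEllipticRegularity2} applies directly and yields $u \in X^{k,p}_\delta$. The only delicate point is that the interior estimates used in the bootstrap must be uniform across M\"obius charts; this uniformity is guaranteed by the geometric nature of $P$ together with the uniform equivalence of M\"obius charts to a fixed model, and is in fact already encoded in Lemma \ref{lemmaEllipticRegularity1} and Proposition \ref{propRellich}. I do not foresee a substantive obstacle beyond setting up the bookkeeping correctly.
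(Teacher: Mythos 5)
Your proof is correct, and it follows the same overall bootstrap philosophy as the paper (alternate between gaining $k$ derivatives via elliptic regularity and trading them for integrability via the weighted Sobolev embedding, Proposition~\ref{propRellich}), but the bookkeeping is organized differently. You fix a single auxiliary weight $\delta'\leq\min(\delta,\delta_0)$ once and for all, run the entire integrability bootstrap at that fixed weight using only the elementary Lemma~\ref{lemmaEllipticRegularity1} (which requires matching weights for $u$ and $Pu$), and invoke Lemma~\ref{lemmaEllipticRegularity2} exactly once at the very end to move the weight from $\delta'$ to $\delta$. The paper instead tracks the set $\mathcal{P}$ of achievable integrability exponents and applies Lemma~\ref{lemmaEllipticRegularity2} inside each bootstrap step, so the weight snaps to $\delta$ after the first iteration; it also caps each step at $p_2\leq\tfrac{n+1}{n}p_1$ rather than taking the full Sobolev gain $q_{i+1}=nq_i/(n-kq_i)$ as you do, but both schedules terminate in finitely many steps. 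Your decoupling of the weight adjustment from the integrability bootstrap is arguably cleaner, at the mild cost of invoking both regularity lemmas rather than just Lemma~\ref{lemmaEllipticRegularity2}. One small notational point: the paper never defines $L^\infty_{\delta'}$; in the case $kp_0\geq n$ you should say $u\in C^{0,0}_{\delta'}$ (or just note that $X^{k,p_0}_{\delta'}\hookrightarrow X^{0,q}_{\delta'}$ for every $q<\infty$), which is what is actually needed.
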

\begin{proof}
Since $C_{\delta_0}^{0,0}\subset X_{\delta_0}^{0,p_0}$ for any $1<p_0<\infty$ it suffices to prove the statement under the assumption $u\in X_{\delta_0}^{0,p_0}$. Suppose that $Pu\in X_\delta^{k-m,p}$ and consider the set
\begin{equation*}
\mathcal{P}=\left\{p'\in (1,\infty):\,u\in X_{\delta'}^{0,p'} \text{ for some } \delta' \text{ such that } |\delta'-(n-1)/2|<R\right\}.
\end{equation*}
It is obvious that $p_0\in\mathcal{P}$. It is also clear that if $p_1\in\mathcal{P}$ then $(1,p_1]\subset\mathcal{P}$ since $X_{\delta'}^{0,p_1}\subset X_{\delta'}^{0,p'}$ for any $p'<p_1$. Moreover, if $p_1\in\mathcal{P}$ and $p_2$ is such that $p_1<p_2\leq \min\left(p,\frac{n+1}{n}p_1\right)$ then $p_2\in \mathcal{P}$. Indeed, in this case $u\in X_{\delta_1}^{0,p_1}$ where $|\delta_1-\frac{n-1}{2}|<R$, and $Pu\in X_\delta^{k-m,p}\subset X_\delta^{k-m,p_1}$ with $|\delta-\frac{n-1}{2}|<R$, hence $u\in X_\delta^{k,p_1}$ by Lemma \ref{lemmaEllipticRegularity2}. Note also that 
\begin{equation*}
\frac{n}{p_1}\leq\frac{n+1}{p_2}\leq\frac{n}{p_2}+k,
\end{equation*}
thus $X_\delta ^{k,p_1}\subset X_\delta^{0,p_2}$ by Proposition \ref{propRellich}. Consequently $p_2\in \mathcal{P}$, and after finitely many iterations we conclude that $u\in X_{\delta}^{0,p}$. Finally, Lemma \ref{lemmaEllipticRegularity2} (or Lemma \ref{lemmaEllipticRegularity1}) yields $u\in X_\delta^{k,p}$.
\end{proof}

Now assume that $P:C^\infty(M,E)\rightarrow C^\infty(M,E)$ satisfies the hypotheses of Theorem \ref{thmFredholmLocalSobolev}, in particular that  \eqref{EstimateInfinity} holds. We let $Z=\Ker P\cap L^2$ and note that $Z=\Ker P\cap W^{m,2}$ by \cite[Proposition 6.5]{LeeFredholm}. By \cite[Lemma 4.10]{LeeFredholm} $P:W^{m,2}(M,E)\rightarrow W^{0,2}(M,E)$ is Fredholm, hence $Z$ is finite dimensional. 

An important observation is that the kernel of the natural extension $P:X_{\delta}^{k,p}(M,E)\rightarrow X_\delta ^{k-m,p}(M,E)$, where $m\leq k\leq l$, $1<p<\infty$, and $|\delta-\frac{n-1}{2}|<R$, is equal to $Z$. Indeed, on the one hand, if $u\in Z$ then, by \cite[Proposition 6.5]{LeeFredholm}, $u\in C^{k,\alpha}_\delta\subset X_\delta ^{k,p}$. On the other hand, if $u\in X_{\delta}^{k,p}$ is such that $Pu=0$, then it follows by Proposition \ref{propEllipticRegularity2} that $u\in X_{\delta'}^{k,q}$ for any $q\in(1,\infty)$ and $\delta'$ such that $|\delta'-\frac{n-1}{2}|<R$. Let us choose $\delta'$ and $q$ so that $0<\delta'-\frac{n-1}{2}<R$ and $q\geq 2$, then $u\in L^2$ by Lemma \ref{lmEmbeddingXIntoL}.

If we choose $\delta_0$ so that $\delta>\delta_0+\frac{n-1}{p}>\frac{n-1}{2}-R$ then  $X_\delta ^{k,p}\subset W_{\delta_0}^{0,p}$ by Lemma \ref{lmEmbeddingXIntoL}. Note also that $|-\delta_0+\frac{n-1}{p'}-\frac{n-1}{2}|=|\delta_0+\frac{n-1}{p}-\frac{n-1}{2}|<R$ where $\frac{1}{p}+\frac{1}{p'}=1$. Consequently, it follows from the discussion preceeding \cite[Theorem 6.6]{LeeFredholm} that
\begin{equation*}
Z\subset W_{-\delta_0}^{0,p'}\subset( W^{0,p}_{\delta_0})^*\subset (X_\delta^{k,p})^*.
\end{equation*} 
We conclude that 
\begin{equation*}
Y_\delta^{k,p}=\{u\in X_\delta^{k,p}\,:\, (u,v)=0 \text{ for all $v\in Z$}\}
\end{equation*}
is a well-defined closed subspace.

\begin{theorem}[Structure Theorem for elliptic operators acting on local Sobolev spaces]\label{StructureTheorem}
Suppose that $P:C^\infty(M,E)\rightarrow C^\infty(M,E)$ satisfies the hypotheses of Theorem \ref{thmFredholmLocalSobolev}. If $1<p<\infty$, $0\leq k \leq l$, and $|\delta-\frac{n-1}{2}|<R$, then there exist bounded operators $G,H:X_\delta^{k,p}(M,E)\rightarrow X_\delta^{k,p}(M,E)$ such that $G(X_\delta^{k-m,p}(M,E))\subset X_\delta^{k,p}(M,E)$ for $k\geq m$, and 
\begin{align}
Y_\delta^{k,p} &= \Ker H, \label{Ker} \\
Z &= \im H, \label{Im} \\
u &= GPu+Hu \text{ for } u\in X_\delta ^{k,p}(M,E),\, m\leq k\leq l, \label{structure1} \\
u &= PGu+Hu \text{ for } u\in X_\delta ^{k,p}(M,E),\, 0\leq k\leq l. \label{structure2}
\end{align}
\end{theorem}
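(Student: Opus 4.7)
The plan is to construct $H$ as the $L^2$-orthogonal projection onto the finite-dimensional space $Z=\Ker P\cap L^2$ and $G$ as a bounded inverse of $P$ on $Y_\delta^{k,p}=\Ker H$, using the global parametrix obtainable from Corollary \ref{corParametrix} and interior elliptic theory, together with the formal self-adjointness of $P$.

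To define $H$, fix an $L^2$-orthonormal basis $\{e_1,\dots,e_N\}$ of $Z$. By the discussion preceding the theorem, $Z\subset (X_\delta^{k,p})^*$, so one may set
\[
Hu := \sum_{i=1}^N (u,e_i)\, e_i,
\]
where $(u,e_i)$ is the continuous extension of the $L^2$-pairing. Then $H$ is a bounded, finite-rank, idempotent operator on $X_\delta^{k,p}$ with image $Z$ and kernel exactly $Y_\delta^{k,p}$, yielding \eqref{Ker} and \eqref{Im} at once.

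To construct $G$, I first establish that $P:X_\delta^{k,p}\to X_\delta^{k-m,p}$ is Fredholm with $\Ker P = Z$ and $\im P = Y_\delta^{k-m,p}$. For Fredholmness, I glue the boundary parametrix $\Qtil$ of Corollary \ref{corParametrix} with a standard interior elliptic parametrix on the compact set $K_{r/4}$ via a partition of unity subordinate to $\{M_r, K_{r/4}\}$. This produces a bounded operator $G_1:X_\delta^{k-m,p}\to X_\delta^{k,p}$ with $G_1 P = I + K_1$ and $PG_1 = I + K_2$, where $K_1, K_2$ are compact: the simultaneous regularity and weight gain of $\Ttil$ (from $X_\delta^{m-1,p}$ into $X_{\delta_1}^{m,p}$ with $\delta_1>\delta$) combined with the compact Rellich embedding of Proposition \ref{propRellich} handles the boundary contribution, while the cutoff-commutator terms are compactly supported and smoothing. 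The kernel equals $Z$ as noted in the excerpt. For the image, I use the formal self-adjointness of $P$: for any $u\in X_\delta^{k,p}$ and $e_i\in Z$, a cutoff-and-limit argument yields $(Pu,e_i)_{L^2} = (u,Pe_i)_{L^2}=0$, so $\im P\subset Y_\delta^{k-m,p}$; since $Y_\delta^{k-m,p}$ has codimension $\dim Z = \dim \Ker P$ in $X_\delta^{k-m,p}$, Fredholm index counting forces equality.

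By the open mapping theorem, the restriction $P_0:Y_\delta^{k,p}\to Y_\delta^{k-m,p}$ is a Banach-space isomorphism with bounded inverse $P_0^{-1}$. Define $Gf := P_0^{-1}(f-Hf)$, a bounded operator $X_\delta^{k-m,p}\to X_\delta^{k,p}$; for $0\leq k<m$ the map $G:X_\delta^{k,p}\to X_\delta^{k,p}$ is obtained from the $k=0$ case via the continuous inclusion $X_\delta^{k,p}\hookrightarrow X_\delta^{0,p}$. Identity \eqref{structure2} is immediate: $PGf = PP_0^{-1}(f-Hf) = f-Hf$. For \eqref{structure1}, $HPu=0$ because $Pu\in Y_\delta^{k-m,p}$, so $GPu = P_0^{-1}(Pu)$; the unique element of $Y_\delta^{k,p}$ mapping to $Pu$ under $P$ is $u-Hu$ (since $Hu\in Z\subset \Ker P$), hence $GPu = u-Hu$. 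The main obstacle is justifying the pairing identity $(Pu,e_i)_{L^2}=0$ for $u$ not in $L^2$: this requires a careful cutoff argument balancing the decay of $u$ (controlled by $\delta$) against that of $e_i$ (which is strictly better than the borderline $\tfrac{n-1}{2}$, leaving just enough margin to absorb the weight loss from the derivatives of the cutoff).
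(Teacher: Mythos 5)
Your proposal takes a genuinely different route from the paper. The paper does not construct $G$ and $H$ from scratch; it takes the operators $G,H$ already provided by Lee's Structure Theorem (\cite[Theorem 6.6 (a)]{LeeFredholm}) on the weighted Sobolev spaces $W^{k,p}_{\delta'}$ (for a $\delta'$ chosen so that $X^{k,p}_\delta\subset W^{k,p}_{\delta'}$) and then shows that their restrictions map $X^{k,p}_\delta$ into itself. For $H$ this is trivial since $H$ is a projection onto $Z\subset X^{k,p}_\delta$. The technical content is entirely in the bootstrap argument showing $G$ preserves $X^{k,p}_\delta$: this is easy for $p>n$ but for $p\leq n$ requires an intricate iteration using Corollary \ref{corParametrix}, \cite[Corollary 6.3 (a)]{LeeFredholm}, and Proposition \ref{propRellich}, with separate treatment of the regimes $2R\leq\frac{n-1}{n}$ and $2R>\frac{n-1}{n}$. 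By contrast, you build $H$ directly as the $L^2$-orthogonal projection onto $Z$ and propose to obtain $G$ as $P_0^{-1}(\,\cdot - H\cdot)$ after first establishing Fredholmness of $P$ on the $X$-spaces via a glued parametrix. The identities \eqref{structure1}--\eqref{structure2} then hold for free, whereas in the paper they are inherited by restriction.

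However, your argument has a genuine gap at the step ``Fredholm index counting forces equality.'' What you have at that point is: (i) $\Ker P = Z$, so $\dim\Ker P = \dim Z$; and (ii) $\im P\subset Y^{k-m,p}_\delta$, which has codimension $\dim Z$ in $X^{k-m,p}_\delta$. This only yields $\operatorname{codim}\im P\geq \dim Z$, i.e.\ $\operatorname{ind} P\leq 0$. To conclude $\im P=Y^{k-m,p}_\delta$ you need $\operatorname{ind} P=0$, and your parametrix construction does not supply this: a two-sided parametrix with compact remainders proves Fredholmness but says nothing about the value of the index. The usual device for self-adjoint operators --- identifying the cokernel with the kernel of the adjoint on the dual space and matching weights --- is delicate here because $X^{k,p}_\delta$ is not reflexive (its unit ball is not weakly compact, being essentially an $\ell^\infty$ of local $W^{k,p}$-norms), so the duality $(X^{k,p}_\delta)^*$ is not another $X$-space and there is no clean ``$P^*=P$ on $X^{k,p}_{-\delta}$'' argument. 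The paper sidesteps this entirely: index zero is already known for $P$ on $W^{k,p}_{\delta'}$ from Lee's work, and the structure operators are simply restricted. If you want to keep your more direct approach, you must add an argument for $\operatorname{ind} P=0$ on $X$-spaces --- for instance, by establishing it first on a weighted Sobolev space where Lee's result applies, then transferring it via the parametrix and the fact that compact perturbations preserve the index, or by the restriction argument the paper uses. As written, surjectivity of $P$ onto $Y^{k-m,p}_\delta$ is not established, and with it $G$ is not well-defined. A secondary, smaller point: you flag the pairing identity $(Pu,e_i)_{L^2}=0$ as requiring a cutoff-and-limit argument but do not carry it out; this is indeed doable (elements of $Z$ decay like $\rho^{\frac{n-1}{2}+R-\epsilon}$, leaving room against $u\in X^{k,p}_\delta$ with $\delta>\frac{n-1}{2}-R$), but the margin is tight and the estimate should be written out to confirm the cutoff terms vanish.
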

\begin{proof}

Let $\delta'$ be such that $-R+\frac{n-1}{2}<\delta'+\frac{n-1}{p}<\delta$. Then $X_{\delta}^{k,p}\subset W_{\delta'}^{k,p}$ and $P:W_{\delta'}^{k,p}(M,E)\rightarrow W_{\delta'}^{k-m,p}(M,E)$ satisfies the hypotheses of \cite[Theorem 6.6 (a)]{LeeFredholm} which is a Structure Theorem for elliptic operators acting on Sobolev spaces. Recall that $Z=\Ker P\cap W_{\delta'}^{k,p}=\Ker P \cap X_\delta^{k,p}$ (see \cite[p. 53]{LeeFredholm}), and that $H$ constructed in the proof of \cite[Theorem 6.6 (a)]{LeeFredholm} is just a projection on $Z$. Since $Z$ is a subset of $X_\delta^{k,p}$ it is obvious that the restriction of $H$ to $X_\delta^{k,p}$ maps $X_\delta^{k,p}$ to itself.

The operator $G$ can also be defined as the restriction of $G:W_{\delta'}^{k,p}(M,E)\rightarrow W_{\delta'}^{k,p}(M,E)$ constructed in \cite[Theorem 6.6 (a)]{LeeFredholm} on $X_\delta^{k,p}$. However, the proof of the fact that $G$ maps $X_\delta^{k,p}$ to itself requires some work. The main complication is that in general we do not have $|\delta'-\frac{n-1}{2}|<R$, otherwise the result would directly follow from Proposition \ref{propEllipticRegularity2}.

To begin with, let us consider the easier case when $p>n$. It is readily seen that if $u\in X_\delta^{k,p}$ then $PGu=u-Hu\in X_\delta^{k,p}$. Since $p>n$, by Proposition \ref{propRellich} we have $X_\delta^{k,p}\subset C_\delta^{k-1,\alpha}$, thus $Gu\in C^{k,\alpha}_\delta\subset C_\delta^{0,0}$ by  \cite[Theorem 6.6 (b)]{LeeFredholm}. Then $Gu\in X_\delta^{k,p}$ by Proposition \ref{propEllipticRegularity2}.

Now suppose that $p<n$. For simplicity we introduce the notation $v=Gu$, where $v\in W_{\delta'}^{k,p}$. As in the proof of Lemma \ref{lemmaEllipticRegularity2}, write $v=v_0+v_\infty$. Here both $v_0$ and $v_\infty$ are in $W_{\delta'}^{k,p}$, $v_\infty$ agrees with $v$ outside of the compact set, and $v_0$ has compact support. Since it is obvious that $v_0\in X_\delta^{k,p}$, in what follows we focus on $v_\infty$. 

First assume that $2R\leq\frac{n-1}{n}$. By \cite[Corollary 6.3 (a)]{LeeFredholm} and Corollary \ref{corParametrix} we have 
\begin{equation*}
v_\infty=\Qtil P v_\infty-\Ttil v_\infty,
\end{equation*}
where $\Qtil P v_\infty \in X_\delta^{0,p}$ and $\Ttil v_\infty \in W_{\delta_1}^{1,p}$ for any $\delta_1$ such that $\delta'\leq \delta_1\leq \delta'+1$ and $|\delta_1+\frac{n-1}{p}-\frac{n-1}{2}|<R$. Assume that initially $\delta'$ was chosen to be sufficiently small. Since $2R<1$, we see that the above requirements are satisfied by
\begin{equation}\label{delta1}
\delta_1=R-\rho-\frac{n-1}{p}+\frac{n-1}{2}
\end{equation}
for some $\rho$ such that $R-\rho=1/N$ for sufficiently big $N>0$. Moreover, it follows by Proposition \ref{propRellich} that $\Ttil v_\infty\in W_{\delta_1}^{0,p_1}$ for any $p_1$ such that $p\leq p_1 \leq \frac{np}{n-p}$. Since $\delta_1+\frac{(n-1)(n-p)}{pn}-\frac{n-1}{2}<-R$, it is obvious that $p_1$ such that $\delta_1+\frac{n-1}{p_1}-\frac{n-1}{2}=-R+\rho$
satisfies both the above condition and the condition $|\delta_1+\frac{n-1}{p_1}-\frac{n-1}{2}|<R$. A straightforward computation shows that 
\begin{equation}\label{p1}
\frac{n-1}{p_1}=-2(R-\rho)+\frac{n-1}{p}.
\end{equation}

We have shown that $\Ttil v_\infty \in W_{\delta_1}^{0,p_1}$, where $\delta_1$ and $p_1$, defined by \eqref{delta1} and \eqref{p1} respectively, satisfy $|\delta_1+\frac{n-1}{p_1}-\frac{n-1}{2}|<R$. Again, by \cite[Corollary 6.3 (a)]{LeeFredholm} we can write 
\begin{equation*}
(\Qtil P-\id)^2v_\infty=\Ttil(\Ttil v_\infty),
\end{equation*}
where $(\Qtil P-\id)^2 v_\infty-v_\infty \in X_\delta ^{0,p}$. Arguing as above, we see that $\Ttil(\Ttil v_\infty) \in W_{\delta_2}^{1,p_1}$, where 
\begin{equation*}
\delta_2=R-\rho-\frac{n-1}{p_1}+\frac{n-1}{2}=3(R-\rho)-\frac{n-1}{p}+\frac{n-1}{2}
\end{equation*}
satisfies $\delta_1<\delta_2<\delta_1+1$. Next, we define $p_2>p_1$ by  
\begin{equation*}
\frac{n-1}{p_2}=-R+\rho-\delta_2+\frac{n-1}{2}=-4(R-\rho)+\frac{n-1}{p}.
\end{equation*}
Note that in the case $p_1<n$ we have $p_2<\frac{np_1}{n-p_1}$. Hence $\Ttil(\Ttil v_\infty) \in W_{\delta_2}^{0,p_2}$ by Proposition \ref{propRellich} and $|\delta_2+\frac{n-1}{p_2}-\frac{n-1}{2}|<R$.

Proceeding in a similar fashion, for $i=1,\dots,l$ we inductively construct an increasing sequence of $\delta_i$, defined by 
\begin{equation*}
\delta_i=R-\rho-\frac{n-1}{p_{i-1}}+\frac{n-1}{2}=(2i-1)(R-\rho)-\frac{n-1}{p}+\frac{n-1}{2}, 
\end{equation*}
and an increasing sequence of $p_i$, defined by
\begin{equation*}
\frac{n-1}{p_i}=-R+\rho-\delta_i+\frac{n-1}{2}=-2i(R-\rho)+\frac{n-1}{p}, 
\end{equation*}
such that $|\delta_i+\frac{n-1}{p_i}-\frac{n-1}{2}|<R$, and 
\begin{equation}\label{bootstrap1}
(\Qtil P-\id)^i v_\infty=\Ttil^i v_\infty\in W_{\delta_i}^{0,p_i}
\end{equation}
for $i=1,\dots,l$. Note also that
\begin{equation}\label{bootstrap2}
(\Qtil P-\id)^i v_\infty+(-1)^{i+1}v_\infty\in X_\delta ^{0,p}.
\end{equation}
Suppose that $l$ is such that $1-NR+\frac{N(n-1)}{p}<2l<\frac{N(n-1)}{p}$. In particular, this implies $-2i(R-\rho)+\frac{n-1}{p}>0$ for $i=1,\dots,l$, hence the above sequences are well-defined. Moreover, 
\begin{equation*}
-R<\frac{2l-1}{N}-\frac{n-1}{p}=\delta_l-\frac{n-1}{2}<-1/N<R,
\end{equation*}
hence $|\delta_l-\frac{n-1}{2}|<R$. Consequently, it follows by \eqref{bootstrap1} and \eqref{bootstrap2} that $v_\infty\in X_{\delta}^{0,p}+X_{\delta_l}^{0,p_l}\subset X_{\delta_0}^{0,p}$ for some $|\delta_0-\frac{n-1}{2}|<R$. Obviously, the same is true for $Gu=v$. Applying Proposition \ref{propEllipticRegularity2} one completes the proof for the case $2R\leq\frac{n-1}{n}$. 

Now assume that $2R>\frac{n-1}{n}$. In this case, there exists $\delta_1$ such that $\delta'<\delta_1\leq \delta'+1$ and 
\begin{equation*}
-R-\frac{(n-1)(n-p)}{np}+\frac{n-1}{2}<\delta_1<R-\frac{n-1}{p}+\frac{n-1}{2}.
\end{equation*}
Then $\Ttil v_\infty \in W_{\delta_1}^{1,p}\subset W_{\delta_1}^{0,p_1}$, where $p_1=\frac{pn}{n-p}$. It is also easy to see that $|\delta_1+\frac{n-1}{p_1}-\frac{n-1}{2}|<R$. We proceed by induction. Suppose that we have constructed $p_i$ and $\delta_i$ such that $\Ttil^i v_\infty\in W_{\delta_i}^{0,p_i}$ with $|\delta_i+\frac{n-1}{p_i}-\frac{n-1}{2}|<R$. If $p_i<n$ then we can choose $\delta_{i+1}$ so that $\delta_i<\delta_{i+1}\leq \delta_i+1$ and 
\begin{equation*}
-R-\frac{(n-1)(n-p_i)}{np_i}+\frac{n-1}{2}<\delta_{i+1}<R-\frac{n-1}{p_i}+\frac{n-1}{2},
\end{equation*} 
which is obviously possible, and set $p_{i+1}=\frac{np_i}{n-p_i}$. Assume that $p_i$ defined by $p_i=\frac{np_{i-1}}{n-p_{i-1}}$ satisfies $p_i<n$ for any $i$. Then the sequence $\{p_i\}$ converges, but the formula $p_i=p_{i-1}\left(1+\frac{p_{i-1}}{n-p_{i-1}}\right)$ implies that the limit must be $0$, which is a contradiction. Consequently, there exists $l$ such that $p_l\geq n$. Let us choose $\delta_l<\delta_{l+1}\leq \delta_l+1$ so that 
\begin{equation*}
-R-\frac{(n-1)(n-p_l)}{np_l}+\frac{n-1}{2}<\delta_{l+1}<R-\frac{n-1}{p_l}+\frac{n-1}{2}.
\end{equation*} 
Then $\Ttil^l v_\infty\in W_{\delta_{l+1}}^{1,p_l}$, where
\begin{equation*}
\delta_{l+1}>-R-\frac{(n-1)(n-p_l)}{np_l}+\frac{n-1}{2}=-R+\frac{n-1}{n}-\frac{n-1}{p_l}+\frac{n-1}{2}\geq-R+\frac{n-1}{2}.
\end{equation*}
We conclude that $\Ttil^l v_\infty \in X_{\delta_{l+1}}^{0,p_l}$ where $|\delta_{l+1}-\frac{n-1}{2}|<R$. The rest of the proof is the same as in the case $2R\leq \frac{n-1}{n}$.

Finally, note that $G$ maps $X_\delta^{k,p}$ into itself in the case $p=n$ as well, which is a consequence of the already established result for $p<n$ and Proposition \ref{propEllipticRegularity2}. Remark also that \eqref{Ker}--\eqref{structure2} are automatically satisfied by restriction, as a corollary of \cite[Theorem 6.6]{LeeFredholm}.
\end{proof}

We are finally able to prove the main result of this section, Theorem \ref{thmFredholmLocalSobolev}. 

\begin{proof}[Proof of Theorem \ref{thmFredholmLocalSobolev}] 
We first prove that $P:X_\delta^{k,p}(M,E)\rightarrow X_\delta^{k-m,p}(M,E)$ is Fredholm provided that $|\delta-\frac{n-1}{2}|<R$. The proof is based on the construction of Theorem \ref{StructureTheorem} and is identical to the respective part in the proof of Theorem C in \cite{LeeFredholm}.

It was already noted in the discussion preceeding Theorem \ref{StructureTheorem} that the kernel of $P: X_\delta^{k,p}(M,E)\rightarrow X_\delta^{k-m,p}(M,E)$ is finite dimensional, since it coincides with the kernel $Z$ of the Fredholm operator $P:W^{m,2}(M,E)\rightarrow W^{0,2}(M,E)$. We have also shown that $Z$ is the same as the $L^2$ kernel of $P$.

Suppose that $f=Pu\in P(X_\delta^{k,p}(M,E))$. Then $(f,v)=(Pu,v)=(u,Pv)=0$ for any $v\in Z$, which means that $f\in Y_\delta^{k-m,p}$. On the other hand, by \eqref{Ker} and \eqref{structure2} any $f\in Y_\delta^{k-m,p}$ can be written as $f=PGf+Hf=PGf$ and is therefore in  $P(X_\delta^{k,p}(M,E))$. We see that the range of $P$ is equal to $Y_\delta ^{k-m,p}$, which is closed. 

Finally, we can represent any $f\in X_\delta^{k-m,p}(M,E)$ as $f=PGf+Hf$, where $PGf\in P(X_\delta^{k,p}(M,E))=Y_\delta^{k-m,p}$ and $Hf\in Z$. Therefore $X_\delta^{k-m,p}(M,E)=Y_\delta^{k-m,p}\oplus Z$, and 
\begin{equation*}
\frac{X_\delta^{k-m,p}(M,E)}{P(X_\delta^{k,p}(M,E))}=\frac{Y_\delta^{k-m,p}\oplus Z}{Y_\delta^{k-m,p}}\cong Z,
\end{equation*} 
which shows both that the cokernel of $P$ is finite dimensional and that the index of $P$ is zero.

It still remains to show that $P:X_\delta^{k,p}(M,E)\rightarrow X_\delta^{k-m,p}(M,E)$ is not Fredholm for all $\delta$ which do not satisfy $|\delta-\frac{n-1}{2}|<R$. Let us first consider the case $\delta\leq -R+\frac{n-1}{2}$. It was shown in the proof of Theorem C in \cite{LeeFredholm} that for such $\delta$ the kernel of $P:C^{k,\alpha}_\delta(M,E)\rightarrow C^{k-m,\alpha}_\delta (M,E)$ is infinite dimensional. Since $C_\delta^{k,\alpha}(M,E)\subset X_\delta^{k,p}(M,E)$, the same is true for $P:X^{k,p}_\delta(M,E)\rightarrow X^{k-m,p}_\delta(M,E)$. We conclude that in this case $P$ is not Fredholm.

Next consider the situation when $\delta>R+\frac{n-1}{2}$. Choose $\delta'$ such that $R+\frac{n-1}{2}<\delta'+\frac{n-1}{p}<\delta$. Then $X_\delta^{k-m,p}(M,E)\subset W_{\delta'}^{0,p}(M,E)$, hence $W^{0,p'}_{-\delta'}(M,E)= (W_{\delta'}^{0,p}(M,E))^*\subset (X^{k-m,p}_\delta(M,E))^*$. Moreover, it is easy to check that $-\delta'<\frac{n-1}{2}-\frac{n-1}{p'}-R$, where $\frac{1}{p'}+\frac{1}{p}=1$, which implies that $P^*=P:W_{-\delta'}^{m,p'}(M,E)\rightarrow W_{-\delta'}^{0,p'}(M,E)$ has infinite dimensional kernel (see the proof of Theorem C in \cite{LeeFredholm}). It is obvious that each element $v\in \Ker P \cap W_{-\delta'}^{m,p'}$ denotes a continuous functional on $X_\delta ^{k-m,p}(M,E)$ by $u\mapsto (u,v)$, and each such functional annihilates $P(X_\delta^{k,p}(M,E))$. We conclude that if $\delta>R+\frac{n-1}{2}$ then  $P:X_\delta^{k,p}(M,E)\rightarrow X_\delta ^{k-m,p}(M,E)$ has infinite dimensional cokernel, thus it is not Fredholm.   


Finally, assume that $\delta=R+\frac{n-1}{2}$. In this case the image of $P:C^{k,\alpha}_{\delta}(M,E)\rightarrow C^{k-m,\alpha}_\delta(M,E)$ is not closed, see the proof of Theorem C in \cite{LeeFredholm}. Consider a sequence of $u_n\in C^{k,\alpha}_\delta(M,E)\subset X_\delta^{k,p}(M,E)$ such that $Pu_n\rightarrow f\in C_\delta ^{k-m,\alpha}$ as $n\rightarrow \infty$ and the equation $Pu=f$ does not admit a solution $u\in C_\delta^{k,\alpha}$. Suppose that $Pu=f$ for some $u\in X_\delta^{k,p}$ and choose $\delta'$ such that $-R+\frac{n-1}{2}<\delta'+\frac{n-1}{p}<\delta=R+\frac{n-1}{2}$. Then $u\in W_{\delta'} ^{k,p}$ where $|\delta'+\frac{n-1}{p}-\frac{n-1}{2}|<R$. Consequently, $u\in C^{k,\alpha}_\delta$ by \cite[Proposition 6.5]{LeeFredholm}, which is a contradiction. Hence $Pu=f$ does not admit a solution $u\in X_\delta ^{k-m,p}$ either. We conclude that if $\delta=R+\frac{n-1}{2}$ then the image of $P:X_\delta^{k,p}(M,E)\rightarrow X_\delta^{k-m,p}(M,E)$ is not closed, hence it is not Fredholm.
\end{proof}

\section{Non-existence of $L^2$-conformal Killing vectors and an isomorphism theorem for the vector Laplacian}\label{secNoL2CKV}
In this appendix, we prove Proposition \ref{propIsomVectLaplacian}. Remark that $\Delta_L$ is a geometric operator, hence according to \cite[Theorem C]{LeeFredholm} and Theorem \ref{thmFredholmLocalSobolev}, we only have to prove the $L^2$-estimate at infinity and that the $L^2$-kernel of $\Delta_L$ is $\{0\}$. The proof of these facts is taken from \cite[Lemma 6.2 and 6.7]{Gicquaud}. The computation of the indicial radius $R = \frac{n+1}{2}$ can be found in \cite[Lemma 6.1]{Gicquaud} or \cite[Proposition G]{LeeFredholm}.

\begin{itemize}
\item \textbf{$L^2$-estimate at infinity:} Since the sectional curvature of $(M, g)$ tends to $-1$ at infinity, there exists a compact subset $K \subset M$ such that $\ric_g \leq - \frac{n-1}{2} g$ on $M \setminus K$. If $\psi \in C^2$ is any 1-form compactly supported in $M \setminus K$, then
\begin{align}
\Delta_L \psi_k & = g^{ij}\left( \nabla_i \nabla_j \psi_k + \nabla_i \nabla_k \psi_j - \frac{2}{n} g_{jk} \nabla_i \nabla^l \psi_l \right)\nonumber\\
		   & = g^{ij} \nabla_i \nabla_j \psi_k + g^{ij}\left( \nabla_k \nabla_i \psi_j - \riemuddd{l}{j}{i}{k} \psi_l\right)
		 	  - \frac{2}{n} \nabla_k \nabla^l\psi_l \nonumber\\
		   & = g^{ij} \nabla_i \nabla_j \psi_k + \ricud{l}{k} \psi_l + \left(1-\frac{2}{n}\right) \nabla_k \left(\nabla^l \psi_l\right) \nonumber\\
\Delta_L \psi_k & = \Delta \psi_k + \left(1 - \frac{2}{n}\right) \nabla_k (\divg \psi) - \ricud{l}{k} \psi_l \label{eqBochnerVectLaplacian}
\end{align}

Hence

\begin{align*}
\left\| \Delta_L \psi \right\|_{L^2} \left\| \psi \right\|_{L^2}
	& \geq - \int_M \psi^k \Delta_L \psi_k\\
	& \geq - \int_M \psi^k \left(g^{ij} \nabla_i \nabla_j \psi_k + \ricud{l}{k} \psi_l + \left(1-\frac{2}{n}\right) \nabla_k \left(\nabla^l \psi_l\right)\right)\\
	& \geq \int_M \left[\left(\nabla_i \psi_j \right) \left(\nabla^i \psi^j \right)+ \left(1-\frac{2}{n}\right) \left(\nabla^k\psi_k\right)^2 - \ricuu{k}{l}\psi_k \psi_l \right]\\
	& \geq \frac{n-1}{2} \left\| \psi \right\|_{L^2}^2,	
\end{align*}
which gives the required estimate:
\[\left\| \Delta_L \psi \right\|_{L^2} \geq \frac{n-1}{2} \left\| \psi \right\|_{L^2}.\]

\item \textbf{$\ker_{L^2} \left(\Delta_L\right) = \{0\}$:} First remark that $\psi \in W^{2, 2}_0$. A similar integration-by-parts argument leads to the observation that if $\psi \in \ker_{L^2} \left(\Delta_L\right)$, then $\psi$ is the ($g$-)dual of a conformal Killing vector field $X$ of $g$. But $X$ is also a conformal Killing vector field for $\gbar$ and a simple argument using elliptic regularity \cite[Proposition 6.5]{LeeFredholm} together with \cite[Lemma 3.6 and 3.7]{LeeFredholm} enables us to deduce that $X \in C^0(\Mbar)$. However, since $X \in L^2(M, g)$ we conclude that $X = 0$ on $\partial M$. The operator $Y \mapsto \overline{\lie} Y$ is (overdetermined) elliptic in the sense of \cite{ADN}, thus $X \in W^{2, p}(\Mbar)$ for any $p > n$. Now let $p \in \partial M$ and select any chart $(\rho, x^1, \ldots, x^{n-1})$ in the neighborhood of $p$. Since $W^{2, p}(\Mbar) \into C^1(\Mbar)$, we see that $X \in C^1$ and $\overline{\lie} X = 0$ up to the boundary. Decomposing the conformal Killing equation into the normal and tangential components, we see that all the partial derivatives of $X$ on $\partial M$ vanish. Further, enlarge $\Mbar$ by allowing $\rho$ to take negative values (so this defines an exterior region of $\Mbar$ in a neighborhood of $p$), extend $\gbar$ to a $C^2$-metric, and define $X$ by zero on this extended region. This gives a vector field $X \in W^{2, p}$ in the ``two-sided" neighborhood of $p$ which vanishes on the exterior side. Finally, it follows by the low-regularity unique continuation result of D. Maxwell \cite[Lemma 7]{Maxwell} that $X$ has to vanish everywhere.
\end{itemize}

As a corollary of the non-existence of $L^2$-conformal Killing vector field, we can also prove that the constant $C_g$ defined as

\begin{equation}\label{eqSobolevConstant}
C_g = \inf_{\psi \in W^{1, 2}_0, \psi \neq 0} \frac{\left(\int_M \left|\lie \psi\right|_g^2 d\mu_g\right)^{\frac{1}{2}}}{\left(\int_M \left|\psi\right|_g^{2^*}d\mu_g\right)^{\frac{1}{2^*}}}
\end{equation}
is strictly positive on an asymptotically hyperbolic manifold. This constant appears in Proposition \ref{propLnNearCMC}.

\begin{prop}\label{propPositivityCg}
Let $(M, g)$ be a $C^{l, \beta}$-asymptotically hyperbolic manifold with $l + \beta \geq 2$. Then the constant $C_g$ is positive.
\end{prop}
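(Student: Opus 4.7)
The plan is to argue by contradiction. If $C_g = 0$, there exists a sequence $\psi_n \in W^{1,2}_0(M, T^*M)$ with $\|\psi_n\|_{L^{2^*}} = 1$ and $\|\lie \psi_n\|_{L^2} \to 0$. The key ingredient will be a \emph{coercivity estimate}
\[\|\psi\|_{W^{1,2}}^2 \leq C_1 \|\lie \psi\|_{L^2}^2 + C_2 \|\psi\|_{L^2(K')}^2\qquad\forall\, \psi \in W^{1,2}_0,\]
for some compact subset $K' \subset M$. Combined with the Sobolev embedding $W^{1,2}_0(M) \hookrightarrow L^{2^*}(M)$ (which holds on any manifold of bounded geometry, in particular on an AH manifold), a Rellich compactness argument, and the non-existence of $L^2$ conformal Killing vector fields established earlier in this appendix, this estimate will yield the contradiction.

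To derive the coercivity estimate, expand $|\lie \psi|_g^2$ pointwise and integrate by parts to obtain the Bochner-type identity
\[\int_M |\lie \psi|_g^2\, d\mu_g = 2\int_M |\nabla \psi|_g^2\, d\mu_g + 2\left(1 - \tfrac{2}{n}\right)\int_M (\divg \psi)^2\, d\mu_g - 2\int_M \ric(\psi, \psi)\, d\mu_g.\]
Since $\ric$ is bounded on $M$, the inequality $\|\nabla \psi\|_{L^2}^2 \leq \tfrac12\|\lie \psi\|_{L^2}^2 + C\|\psi\|_{L^2}^2$ follows at once. Since $\sec_g \to -1$ at infinity, one can choose a compact $K \subset M$ such that $\ric \leq -\tfrac{n-1}{2}g$ on $M \setminus K$, which applied to a 1-form supported outside $K$ gives $\|\psi\|_{L^2}^2 \leq \tfrac{1}{n-1}\|\lie \psi\|_{L^2}^2$. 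For arbitrary $\psi \in W^{1,2}_0$, introduce a smooth cutoff $\chi$ with $\chi = 0$ on $K$ and $\chi = 1$ outside a slightly larger compact $K'$. Applying the preceding estimate to $\chi\psi$ and using
\[\lie(\chi \psi)_{ij} = \chi\,(\lie \psi)_{ij} + (d\chi)_i \psi_j + (d\chi)_j \psi_i - \tfrac{2}{n}\langle d\chi, \psi\rangle_g\, g_{ij},\]
with $|d\chi|$ uniformly bounded and supported in $K' \setminus K$, produces $\|\psi\|_{L^2}^2 \leq C_3 \|\lie \psi\|_{L^2}^2 + C_4 \|\psi\|_{L^2(K')}^2$, which combined with the bound on $\|\nabla \psi\|_{L^2}$ gives the coercivity estimate.

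Now assume by contradiction that $C_g = 0$. By H\"older's inequality on $K'$, $\|\psi_n\|_{L^2(K')} \leq |K'|_g^{1/n}$, and the coercivity estimate shows that $\{\psi_n\}$ is bounded in $W^{1,2}_0$. After extracting a subsequence, $\psi_n \rightharpoonup \psi_\infty$ weakly in $W^{1,2}_0$ and $\psi_n \to \psi_\infty$ strongly in $L^2(K')$ by Rellich--Kondrachov. Lower semicontinuity of $\|\lie(\cdot)\|_{L^2}$ forces $\lie \psi_\infty = 0$ in the distributional sense. By elliptic regularity for the overdetermined operator $\lie$, $\psi_\infty$ is the $g$-dual of an $L^2$ conformal Killing vector field on $(M,g)$. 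The non-existence result established above in the proof of Proposition \ref{propIsomVectLaplacian} then forces $\psi_\infty = 0$, hence $\|\psi_n\|_{L^2(K')} \to 0$. Plugging this back into the coercivity estimate together with $\|\lie \psi_n\|_{L^2} \to 0$ yields $\|\psi_n\|_{W^{1,2}} \to 0$, and the Sobolev embedding produces $\|\psi_n\|_{L^{2^*}} \to 0$, contradicting $\|\psi_n\|_{L^{2^*}} = 1$.

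The main technical obstacle is the cutoff step in deriving the coercivity estimate: the Ricci-curvature bound at infinity yields a clean $L^2$-estimate only for forms compactly supported in $M \setminus K$, and extending it to arbitrary $\psi \in W^{1,2}_0$ requires carefully absorbing the commutator terms produced by $[\lie, \chi\cdot]$ into the compact piece $\|\psi\|_{L^2(K')}^2$. Beyond that, one has to confirm that the Sobolev embedding $W^{1,2}_0(M) \hookrightarrow L^{2^*}(M)$ is indeed available on $(M,g)$, which can be seen either by invoking the bounded-geometry framework or, in the spirit of this paper, by using M\"obius charts to patch the Euclidean Sobolev inequality.
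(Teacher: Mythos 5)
Your proof is correct and follows essentially the same strategy as the paper's: the Bochner formula for $\Delta_L$, the Ricci-curvature bound $\ric \leq -\frac{n-1}{2}g$ at infinity, a cutoff function to isolate the compact contribution, Rellich compactness, and the non-existence of $L^2$ conformal Killing fields. The only technical variation lies in the cutoff step: the paper picks $\chi$ so that $-\ric + \chi^2 g \geq \frac{n-1}{2}g$ holds \emph{globally}, which lets it simply add $2\int_M\chi^2|\psi|_g^2\,d\mu_g$ to the Bochner inequality with no commutator terms to absorb, whereas you cut off $\psi$ itself by $\chi\psi$ and then control the commutator $[\lie, \chi\cdot]$, which is supported in $K'\setminus K$ and hence falls into the compact piece; both routes are sound, but the paper's choice is slightly slicker precisely because it bypasses the commutator computation.
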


\begin{proof}
From the Sobolev embedding (Proposition \ref{propRellich}), we know that there exists a constant $\mu > 0$ such that for all $\psi \in W^{1, 2}_0$, $\mu \left\|\psi\right\|_{L^{2^*}_0} \leq \left\|\psi\right\|_{W^{1,2}_0}$, hence, \[C_g = \inf_{\psi \in W^{1, 2}_0, \psi \neq 0} \frac{\left(\int_M \left|\lie \psi\right|_g^2 d\mu_g\right)^{\frac{1}{2}}}{\left\|\psi\right\|_{W^{1,2}_0}} \frac{\left\|\psi\right\|_{W^{1, 2}_0}} {\left\|\psi\right\|_{L^{2^*}_0}} \geq \mu \inf_{\psi \in W^{1, 2}_0, \psi \neq 0} \frac{\left(\int_M \left|\lie \psi\right|_g^2 d\mu_g\right)^{\frac{1}{2}}}{\left\|\psi\right\|_{W^{1,2}_0}}.\] Thus we only have to prove that there exists a constant $\alpha > 0$ such that \[\int_M \left|\lie \psi\right|_g^2 d\mu_g \geq \alpha \left\|\psi\right\|_{W^{1,2}_0}^2 = \alpha \int_M \left( \left|\nabla \psi\right|_g^2 + \left|\psi\right|_g^2\right) d\mu_g\] for any $\psi \in W^{1, 2}_0$. Note that we can restrict our attention to $C^2$ compactly supported vector fields $\psi$. For such a vector field, we have that

\begin{align*}
\int_M \left|\lie \psi\right|_g^2 d\mu_g
	& = -2 \int_M \left\langle \psi, \Delta_L \psi\right\rangle_g\\
	& = -2 \int_M \left\langle \psi, \Delta \psi + \left(1-\frac{2}{n}\right)\nabla \divg \psi\right\rangle_g d\mu_g -2 \int_M \ric(\psi, \psi) d\mu_g\\
	& = 2 \int_M \left|\nabla\psi\right|_g^2 + \left(1-\frac{2}{n}\right) (\divg \psi)^2 d\mu_g -2 \int_M \ric(\psi, \psi) d\mu_g\\
	& \geq 2 \int_M \left(\left|\nabla\psi\right|_g^2 - \ric(\psi, \psi)\right) d\mu_g.
\end{align*}

Let $\chi$ be some smooth compactly supported function such that $-\ric + \chi^2 g \geq \frac{n-1}{2} g$, then
\begin{align*}
\int_M \left|\lie \psi\right|_g^2 d\mu_g + 2 \int_M \chi^2 |\psi|_g^2 d\mu_g
	& \geq 2 \int_M \left(\left|\nabla\psi\right|_g^2 + (-\ric + \chi^2 g)(\psi, \psi)\right) d\mu_g\\
	& \geq 2 \int_M \left(\left|\nabla\psi\right|_g^2 + \frac{n-1}{2} \left|\psi\right|_g^2\right) d\mu_g\\
	& \geq 2 \left\|\psi\right\|_{W^{1, 2}_0}^2.
\end{align*}

We claim that there exists $\beta > 0$ such that for any $\psi \in W^{1, 2}_0$ \[\int_M \left|\lie \psi\right|_g^2 d\mu_g \geq \beta \int_M \chi^2 |\psi|_g^2 d\mu_g.\] Indeed, if such a constant does not exist, then it is possible to construct a sequence $\psi_k \neq 0$ ($k \geq 1$) such that \[\int_M \left|\lie \psi_k\right|_g^2 d\mu_g \leq \frac{1}{k} \int_M \chi^2 |\psi_k|_g^2 d\mu_g.\] By renormalizing $\psi_k$, we can assume that $\int_M \chi^2 \left|\psi_k\right|_g^2 d\mu_g = 1$. From the previous calculation, we deduce that the sequence $(\psi_k)_k$ is bounded in $W^{1, 2}_0$ so up to extracting a subsequence, we can assume that the sequence $(\psi_k)_k$ converges weakly to some $\psi_\infty \in W^{1, 2}_0$. Remark also that since $\chi \psi_k$ has support included in $\supp \chi$, the Rellich theorem applies, so we can also assume that $(\chi\psi_k)_k$ converges for the $L^2_0$-norm. We have that the $L^2$-limit of $(\chi \psi_k)_k$ coincide with $\chi \psi_\infty$. In particular, this shows that \[\int_M \left|\chi \psi_\infty\right|^2_g d\mu_g = 1,\] so $\psi_\infty \neq 0$. However, \[\int_M \left|\lie \psi_\infty\right|_g^2 d\mu_g = 0,\] so $\psi_\infty$ is the dual of a non-zero $L^2$-conformal Killing vector. This is a contradiction. Thus for some constant $\beta > 0$ and any $\psi \in W^{1, 2}_0$, \[\int_M \left|\lie \psi\right|_g^2 d\mu_g \geq \beta \int_M \chi^2 |\psi|_g^2 d\mu_g.\]

As a consequence, for all $\psi \in W^{1, 2}_0$,
\begin{align*}
\left(1 + \frac{2}{\beta}\right)\int_M \left|\lie \psi\right|_g^2 d\mu_g
	& \geq \int_M \left|\lie \psi\right|_g^2 d\mu_g + 2 \int_M \chi^2 |\psi|_g^2 d\mu_g\\
	& \geq 2 \left\|\psi\right\|_{W^{1, 2}_0}^2.
\end{align*}

This ends the proof of the positivity of $C_g$.
\end{proof}

\bibliographystyle{amsalpha}
\bibliography{biblio}

\end{document}